%% file: main-writeup.tex
\documentclass[11pt]{article}

\usepackage[utf8]{inputenc}
\usepackage{amsmath,amssymb}
\usepackage{amsthm}
\usepackage{bbm}
\usepackage{bm}
\usepackage{latexsym}
\usepackage[noadjust]{cite}
\usepackage{graphicx}
\usepackage{hyperref}
\usepackage{xcolor}
\usepackage{dirtytalk}
\usepackage{mathtools}
\usepackage[margin=1in]{geometry}
\usepackage{thm-restate}
\usepackage{enumitem}
\usepackage{tikz-cd}
\usepackage[linesnumbered,ruled]{algorithm2e}
\usepackage{comment}
\usepackage{float}
\theoremstyle{plain}
\newtheorem{theorem}{Theorem}[section]
\newtheorem{lemma}[theorem]{Lemma}

\newtheorem{proposition}[theorem]{Proposition}

\newtheorem{corollary}[theorem]{Corollary}

\newtheorem{remark}[theorem]{Remark}
\newtheorem{definition}[theorem]{Definition}



\usepackage[nameinlink]{cleveref}

\newcommand{\D}{\mathcal{D}}

\newcommand{\ep}{\varepsilon}

\newcommand{\eps}{\varepsilon}

\title{Quality Control Algorithms for Pattern Counting}
\author{Cassandra Marcussen\thanks{School of Engineering and Applied Sciences, Harvard University, Cambridge, Massachusetts, USA. Email: cmarcussen@g.harvard.edu. This material is based upon work supported by the Air Force Office of Scientific Research under award number FA9550-23-F-0014 (NDSEG Fellowship). Supported in part by NSF Award 2152413 and a Simons Investigator Award to Madhu Sudan.} \\ \and Ronitt Rubinfeld\thanks{Computer Science and Artificial Intelligence Laboratory, MIT, Cambridge, Massachusetts, USA. Email: ronitt@csail.mit.edu. Supported by the NSF TRIPODS program (award DMS-2022448) and CCF-2310818.} \\ \and Madhu Sudan\thanks{School of Engineering and Applied Sciences, Harvard University, Cambridge, Massachusetts, USA. Email: madhu@cs.harvard.edu. Supported in part by a Simons Investigator Award, NSF Award CCF 2152413, and AFOSR award FA9550-25-1-0112.}}
\date{}

\begin{document}
\pagenumbering{gobble}
\maketitle

\begin{abstract}

In recent work, Marcussen, Rubinfeld, and Sudan introduced the notion of quality control problems, which aim to capture the task of determining if a given input is truly random (i.e., from a specified distribution). Formally, their goal is to accept typical inputs from the specified distribution while rejecting every input whose value of a specified statistic is far from the distributional baseline. This formalism captures a commonly applied empirical practice of using some collection of specified statistics as a proxy for the quality of randomness. Empirical algorithms, however, have not exploited the asymmetry in the definition of quality control problems, which require soundness guarantees in the worst-case while only seeking average-case completeness. Their work abstracted a problem definition emphasizing this asymmetry and used it to give efficient quality control algorithms for assessing the randomness of graphs. 

In this work, we introduce and study quality control problems over sequences, where the goal is to distinguish a sequence of i.i.d. characters from sequences where some specified pattern appears too often (or too infrequently) as a subsequence. We consider this problem in both the finite-alphabet setting and for real-valued sequences. We refer to the former setting as the pattern counting problem. In the latter case, the natural notion of a pattern is to consider the relative ordering of the characters in the subsequence, and we refer to this as the permutation pattern counting problem. Algorithms to approximately count (permutation) patterns of length $k$ in a worst-case sequence of length $n$ can provably require exponential in $k$ queries into the sequence (assuming random access to the sequence). In contrast, we show that by taking advantage of the asymmetry in the definition of quality control, we give algorithms that run in poly$(k)$ time (with dependence on $k$ at most $k^8$ up to logarithmic factors in $k$) to solve these problems. We also prove that any quality control algorithm (over some natural distributions) requires superlinear queries in $k$. 

Our algorithms build on and introduce new notions of quasirandomness of sequences and permutations, a field that aims to identify interesting deterministic criteria for ``certifying'' randomness of objects and prove equivalences between such criteria. Our work adds to this field notions suitable for probabilistic analysis, including exponentially strong concentration of pattern occurrences. Our work also defines a notion analogous to Thomason's notion of graph jumbledness for permutations and develops tools that allow algorithmic applications in our setting.
\end{abstract}

\newpage 

\tableofcontents

\newpage

\pagenumbering{arabic}
\setcounter{page}{1}
\section{Introduction}

 Determining whether a corpus of data looks random is a central goal throughout statistics, average-case algorithms, and property testing. In many of these settings, an empirically adopted strategy is to decide whether the dataset \textit{possesses some property} held by random instances (for example, has a statistic matching random instances), not actually whether the data was indeed generated randomly. Thus the focus turns to designing efficient algorithms to test the property. Recent work of Marcussen, Rubinfeld, and Sudan \cite{qualitycontrol}, however, points out that it suffices to solve a weaker problem than that of testing the property. Namely, while the testing algorithm should reject data that do not have the property, it does not need to accept \textit{every} input satisfying the property. It suffices to accept \textit{most} inputs satisfying the property.

 Specifically, the formalization of Marcussen, Rubinfeld, and Sudan \cite{qualitycontrol} is as follows: Given a statistic $\rho$ and a distribution $\D$ for which $\rho \approx 1$ for instances from $\D$ with high probability, the goal of a $(\D, \rho)$-quality control algorithm is to accept all but a small fraction of instances drawn from $\D$ (completeness) and reject any input such that $\rho \not \approx 1$ (soundness). Importantly, there is a gap between completeness and soundness, which allows the algorithm to reject a small fraction of instances that are drawn from $\D$, even if $\rho \approx 1$ for such instances. In \cite{qualitycontrol}, Marcussen, Rubinfeld, and Sudan focused on the setting where $\rho$ captures the counts of small motifs (such as $ k$-cliques) in graphs drawn from the Erd\H{o}s--R\'enyi random graph distribution. They considered the sublinear-time complexity of solving these problems, proving that $(\D, \rho)$-quality control can be provably superpolynomially more efficient than computing $\rho$ on worst-case instances.

 We study quality control for patterns over sequences also in the setting of sublinear algorithms. This is a natural and previously unexplored context for quality control. It captures the setting where an individual has access to a fixed, ordered sequence of scalar-valued data points, and wants to certify whether the data possesses some property held by i.i.d. data points. While we cannot verify whether a fixed set of data points is exactly i.i.d., via the quality control framework we can analyze whether the data points possess some statistical properties held by i.i.d. sequences. A reasonable choice is the counts of small subsequences --- either in the form of relative orderings or exact values, depending on whether the distribution is non-atomic\footnote{A non-atomic distribution is a probability distribution where no single specific outcome has a nonzero probability mass. For example, the uniform distribution on $[0, 1]$ is non-atomic.} or over a finite alphabet. We call these \textit{permutation patterns} for non-atomic distributions and \textit{patterns} for sequences over finite alphabets.

 We set up the corresponding quality control problem as follows. First, for a real-valued sequence $\sigma:[n] \to \mathbb{R}$, permutation pattern $\tau: [k] \to [k]$, and a non-atomic distribution $p$ over $\mathbb{R}$, if $\sigma \sim p^{\otimes n}$ (i.e., drawn from the product distribution over $n$ elements from $p$), then the number of copies of $\tau$ in $\sigma$\footnote{Here, a subset $S=\{i_1,\ldots,i_k\} \subseteq [n]$ with $i_1 < i_2 < \cdots < i_k$, is supposed to contain a copy of $\tau$ if for every $1 \leq j < \ell \leq k$ we have $\sigma(i_j) < \sigma(i_\ell) \Leftrightarrow \tau(j) < \tau(\ell)$.} is close to $\binom{n}{k} \frac{1}{k!}$ with high probability. Therefore, let $\rho(\sigma)$ be the number of copies of $\tau$ in $\sigma$, divided by $\binom{n}{k} \frac{1}{k!}$. Given these definitions, $(\D, \rho)$-quality control is now well-defined. We can set up a similar problem in the case where $\sigma:[n] \to [m]$ is a sequence over a finite domain and we count subsequences $P$ relative to a distribution $p$ over $[m]$.

 These problems are interesting because the (permutation) pattern count statistics are both natural but expensive to even approximate. For worst-case sequences, we can show that approximating each of the two statistics can require $\exp(k)$ queries (see \Cref{thm:worstcasepermutation} and \Cref{thm:worst-case-subsequences}), where $k$ is the length of the pattern or subsequence we approximately count. For permutation patterns, even detecting a single occurrence is NP-hard \cite{DBLP:journals/ipl/BoseBL98}. 
 
 We circumvent these exponential bounds (on the query complexity as well as the runtime) in the quality control framework. In the continuous case, given a permutation pattern $\tau$ of length $k$, we give a $\text{poly}(k)$-time quality control algorithm in the sublinear-time model distinguishing between typical sequences drawn from $p^{\otimes n}$ and sequences that have an atypical number of copies of $\tau$. Similarly, in the discrete case, for subsequences and distributions $p$ over $[m]$ with minimum probability $p_{\min}$, quality control algorithms require only $\text{poly}(mk/p_{\min})$ queries and time. 
 
 This is the first setting demonstrating that quality control can turn a problem requiring exponential time in the worst case into one solvable in fully-polynomial time. For quality control over sequences, this parameter $k$ is of primary concern, and achieving a polynomial dependence on it requires new techniques. Our query complexity bounds are obtained by adopting a notion of exponentially robust quasirandomness from \cite{qualitycontrol} and applying it to patterns in sequences. Our algorithms are nonadaptive. Achieving a $\text{poly}(k)$ runtime requires us to develop a strengthened notion of quasirandomness for permutations, which we call \textit{permutation jumbledness}. As opposed to existing notions of permutation quasirandomness \cite{cooper2006permutation, kral2013quasirandom, sixpatterns}, which are asymptotic, permutation jumbledness controls the count of a pattern at a multiplicative-error scale that is needed for polynomial-size samples. This strengthens permutation quasirandomness in a similar way to how Thomason's notion of graph jumbledness \cite{thomason1987pseudo} strengthens asymptotic notions of graph quasirandomness \cite{chung1989quasi}. 
 
 We now move to defining quality control, stating our results, and giving the ideas behind the results.

\subsection{Definitions and our results}

We begin by formally defining the query access model that our algorithms will have.

\begin{definition}[Query access model]
    On input $n \in \mathbb{N}$, an algorithm $\mathcal{A}$ has query access to a length-$n$ sequence $x$ of the following form: On a query $i \in [n]$, the algorithm receives the response $x_i$, the value at the $i$th index of $x$. Such query access is denoted by $\mathcal{A}^x(n)$.
\end{definition}

We now define quality control algorithms of \cite{qualitycontrol}. Quality control is defined with respect to a family of quality functions and a family of distributions.

\begin{definition}[Quality control algorithms \cite{qualitycontrol}]\label{def:quality control}
    Let $\mathcal{X} = \{\mathcal{X}_n\}_{n \in \mathbb{N}}$ be a family of input domains, and let
    $\rho = \{\rho_n\}_{n \in \mathbb{N}}$ be a family of \emph{quality functions}
    $\rho_n : \mathcal{X}_n \to \mathbb{R}_{\geq 0}$. Fix a constant $\ep > 0$ and a
    family of distributions $\mathcal{D} = \{\mathcal{D}_n\}_{n \in \mathbb{N}}$, where each
    $\mathcal{D}_n$ is a distribution over $\mathcal{X}_n$ satisfying
    \[
        \Pr_{x \sim \mathcal{D}_n}\big[\, |\rho_n(x) - 1| \leq \ep \,\big] \geq 1 - o(1).
    \]
    A randomized algorithm $\mathcal{A}$ that, on input $n \in \mathbb{N}$ and given query
    access to an element $x \in \mathcal{X}_n$, outputs
    $\mathcal{A}^{x}(n) \in \{\textsc{Accept}, \textsc{Reject}\}$ is a
    \emph{$(\mathcal{D}, \rho, \ep)$-quality control algorithm} iff:
    \begin{enumerate}
        \item (\textbf{Completeness}) For all sufficiently large $n \in \mathbb{N}$, with probability $1 - o(1)$ over
        the draw $x \sim \mathcal{D}_n$ and the internal randomness of $\mathcal{A}$,
        $\mathcal{A}^{x}(n) = \textsc{Accept}$.
        \item (\textbf{Soundness}) For every $n \in \mathbb{N}$ and every $x \in \mathcal{X}_n$
        with $|\rho_n(x) - 1| > \ep$, with probability at least $2/3$ over the internal
        randomness of $\mathcal{A}$, $\mathcal{A}^{x}(n) = \textsc{Reject}$.
    \end{enumerate}
    The \emph{$(\mathcal{D}, \rho)$-quality control problem} is that of designing a
    $(\mathcal{D}, \rho, \ep)$-quality control algorithm for every $\ep > 0$. $(\mathcal{D}, \rho)$-quality control is
    \emph{solvable in query complexity $q(n)$ and time complexity $t(n)$} if, for every
    $\ep > 0$, there is a $(\mathcal{D}, \rho, \ep)$-quality control algorithm that on
    input $n$ makes $O_{\ep}(q(n))$ queries to $x$ and runs in $O_{\ep}(t(n))$ time.
\end{definition}

For a family of quality functions $\rho = \{\rho_n\}_n$ and a fixed $\ep>0$, we call a family of distributions $\mathcal{D} = \{\mathcal{D}_n \}$ $(\rho,\ep)$-good when, for all sufficiently large $n \in \mathbb{N}$, $\mathbb{P}_{x \sim \mathcal{D}_n}[|\rho_n(x) - 1| \leq \ep] \geq 1 - o(1)$. (We call such a family of distributions ``good'' when $\rho$ and $\ep$ are clear from context.) Additionally, as in \cite{qualitycontrol}, throughout the paper, $o(1)$ in the runtime of an algorithm means a quantity that approaches $0$ along every sequence of parameter settings for which the stated runtime bound approaches infinity. Equivalently, for every $\delta>0$, the completeness error can be made at most $\delta$ by increasing the runtime by a factor that may depend on $\delta$. The $o(1)$ notation does not necessarily refer to the limit $n\to\infty$ with all other parameters fixed.

One of our main objects of study is the count of permutation patterns in real-valued sequences. We define permutation patterns and their counts formally below.

\begin{definition}[Permutation Pattern]
    Given a real-valued sequence $f: [n] \to \mathbb{R}$ and permutation (``permutation pattern'') $\tau: [k] \to [k]$, a {\em copy} of $\tau$ in $f$ is any subset of $k$ indices $i_1 < i_2 < \cdots < i_k$ such that, for $a, b \in [k]$, $f(i_a) < f(i_b)$ iff $\tau(a) < \tau(b)$.
    
    Let the count of $\tau$ in length-$n$ $f$ be $C_{\tau, n}(f)$. Define the {\em permutation pattern count quality} $Q_{\tau, n}$ of a length-$n$ sequence to be $$Q_{\tau, n}(f) = \frac{C_{\tau, n}(f)}{\binom{n}{k} \cdot \frac{1}{k!}}.$$
\end{definition}

For i.i.d. sequences, the count of any length-$k$ permutation pattern $\tau$ is around $\binom{n}{k} \cdot \frac{1}{k!}$ so $Q_{\tau, n} \approx 1$. To determine whether a worst-case sequence has $\approx \binom{n}{k} \cdot \frac{1}{k!}$ copies of a permutation pattern, any algorithm must make $\exp(k)$ queries (see \Cref{thm:worstcasepermutation}). On the other hand, we prove that quality control needs only $\text{poly}(k)$ queries!

\begin{theorem}[Permutation Pattern Counting]\label{thm:any-pattern-intro}
    Let $p$ be a non-atomic distribution over $\mathbb{R}$, $\tau$ be a length-$k$ permutation pattern, and $\ep > 0$. Define $\D = \{p^{\otimes n}\}_n$ and $Q_{\tau} = \{Q_{\tau, n}\}_n$. Then, $(\D, Q_\tau)$-quality control is achievable in $\text{poly}(k/\ep)$ queries and time. Additionally, for every length-$k$ permutation pattern $\tau$, the $(\D, Q_\tau)$-quality control problem requires $k^{1.5 - \delta}$ queries for each $\delta > 0$.
\end{theorem}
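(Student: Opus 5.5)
Both halves rest on the same reduction. Because $p$ is non-atomic and $Q_{\tau,n}$ depends only on relative order, we may assume $p$ is uniform on $[0,1]$ by applying the CDF of $p$. Then $\sigma\sim p^{\otimes n}$ is a sequence of i.i.d.\ uniforms, and its relative order is a uniformly random permutation.

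For the \textbf{upper bound} I will test a deterministic quasirandomness property rather than estimate $C_{\tau,n}$ directly. The natural choice is a \emph{permutation jumbledness} condition on $\sigma$. Call $\sigma$ $\gamma$-jumbled if for every pair of index intervals $I\subseteq[n]$ and value intervals $J\subseteq[0,1]$ we have
\[
\bigl|\#\{i\in I:\sigma(i)\in J\}-|I|\,|J|\bigr|\le \gamma n .
\]
A finer version would quantify over products of $k$ index blocks and value blocks. The plan has three steps.

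\begin{enumerate}
\item \textbf{Jumbledness controls the count.} Show that if $\sigma$ is $\gamma$-jumbled with $\gamma\le \mathrm{poly}(\ep/k)$, then $|Q_{\tau,n}(\sigma)-1|\le\ep$. Partition the index range and the value range into $L=\mathrm{poly}(k/\ep)$ blocks each. Jumbledness pins down every block-cell density to within a $(1\pm\ep/k^2)$ multiplicative factor. The count of $\tau$ is then a sum over ordered $k$-tuples of cells, and each product of densities is accurate to within $(1\pm\ep/k^2)^k\approx 1\pm\ep/k$. Tuples with two points in the same index block or the same value block carry only an $O(k^2/L)$ fraction of the mass, which is negligible. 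Summing reproduces $\binom nk/k!$ up to a $1\pm\ep$ factor.
\item \textbf{Testing jumbledness with few queries.} Sample $s=\mathrm{poly}(k/\ep)$ uniform random indices and query them. For every index interval and value threshold, the empirical two-dimensional discrepancy (a Kolmogorov--Smirnov/DKW-type statistic) concentrates around the true one uniformly, by a VC argument over rectangles. Accept iff the empirical discrepancy is at most $\gamma/2$.
\begin{itemize}
\item \emph{Soundness:} any $\sigma$ with $|Q-1|>\ep$ is not $\gamma$-jumbled by step 1, so it has discrepancy above $\gamma$, and the sample detects this with probability at least $2/3$.
\item \emph{Completeness:} a random $\sigma$ has discrepancy $O(\sqrt{\log n/n})\ll\gamma$ with high probability.
\end{itemize}
Sorting the sample gives $\mathrm{poly}(k/\ep)$ time.
\end{enumerate}

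\textbf{Main obstacle.} Step 1 is the hard part. Existing asymptotic quasirandomness notions give only additive $o(1)$ error on pattern densities, but here we need multiplicative error on a quantity of size $1/k!$. If the naive rectangle discrepancy turns out too weak for this, I would fall back to requiring the block-cell densities to be correct multiplicatively, which is a cell-level version of jumbledness, and test that instead. The number of cells is $L^2=\mathrm{poly}(k/\ep)$, so sampling still suffices.

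For the \textbf{lower bound} $k^{1.5-\delta}$, I would build a bad family: a sequence that is random except that the values inside short index windows are locally rearranged (for example, locally sorted or reversed). Placing many such windows shifts the count of $\tau$ by a constant factor. A nonadaptive algorithm with $q$ queries distinguishes this family from random only if it hits two indices in a common window, and bounding that collision probability is a birthday-paradox argument. The window sizes must be tuned so that the distortion changes $Q_\tau$ by more than $\ep$ while windows stay sparse enough to be missed. This requires computing how a local rearrangement shifts pattern probabilities. The shift is roughly (number of pairs of pattern points landing in a window)$\times$(bias per pair), and balancing it against the collision probability should give the $k^{1.5}$ exponent. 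Adaptivity can be handled by the standard argument that, until a collision occurs, every answer is distributed as an i.i.d.\ uniform.
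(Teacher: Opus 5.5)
\textbf{Upper bound.} Your upper bound is correct, but it takes a genuinely different route from the paper. Your Step~1 is the paper's counting argument from \Cref{lem:soundness}, applied directly to the \emph{global} sequence, with value blocks fixed in advance through the CDF $F$ of $p$. Copies spread over $k$ distinct index blocks and $k$ distinct value blocks contribute exactly a sum of products of $k$ cell counts. The configurations $C_\tau(r,c)$ with $(r,c)\neq(k,k)$ cost only an $O(k^2/L)$ fraction. With $L=\Theta(k^2/\ep)$ and per-cell multiplicative accuracy $\eta=\Theta(\ep/k)$ the argument goes through. So additive $\gamma\le\eta/L^2$ is strong enough, and testing only the $L^2$ cells already suffices. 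Those cell densities are probabilities under a uniformly random index, so Chernoff plus a union bound estimates them from $O(L^2\eta^{-2}\log L)$ samples.

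The paper never tests a global property. It checks jumbledness of the \emph{sampled} sub-permutation in relative ranks. It therefore needs exponentially robust quasirandomness (induction on subpattern length via \Cref{thm:combes}) to show that a wrong global count of some subpattern of $\tau$ survives in a $\mathrm{poly}(k/\ep)$-size sample. Your route bypasses that machinery and is shorter. The paper's test, using only ranks inside the sample, has the advantage that it does not need to know $p$.

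\textbf{Lower bound gap.} Sorting or reversing i.i.d.\ values inside windows cannot give $k^{1.5-\delta}$. Such a rearrangement only changes the relative order of pattern points that lie in a \emph{common} window. Suppose windows have length $w$ and cover a fraction $f$ of $[n]$. A random $k$-subset then has about $k^2fw/n\le k^2f^2$ such pairs in expectation, and each changes the probability of forming $\tau$ by an $O(1)$ factor. So shifting $Q_\tau$ by a constant forces $f\gtrsim 1/k$.

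Moreover, the windows are contiguous, so detection is not a birthday event. An algorithm querying adjacent pairs at random locations puts a pair inside a window with probability about $qf$. Hence this family certifies at most $\Omega(k)$ queries. Balancing ``pairs $\times$ bias'' against collisions cannot produce the $3/2$ exponent.

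\textbf{Missing idea.} You need planted points that are each boosted by a polynomial-in-$k$ factor, by placing them at the right \emph{values} as well as the right \emph{positions}. The paper plants $t$ constant runs of length $s/t$ for a middle rank $r$. The $i$-th run has value $\frac{r-1+i}{k-1}$ and sits at a random offset in the first half of position block $\tau^{-1}(r+i)$. Every $t$-tuple with one point per run then completes to a copy with probability $\gtrsim 1/(\sqrt{k}\,(k-t)!)$ (\Cref{lem:analysis-construction-1}). Counting completions across the $b_{\tau,t}\le t+1$ gaps (\Cref{def:b-P}) shows the count exceeds $2\binom{n}{k}/k!$ with probability $9/10$ once $s=\Theta(n/k^{2-(b_{\tau,t}+2)/(2t)})$.

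Indistinguishability is then a single-hit argument rather than a collision argument. Queries that miss the planted set see exactly i.i.d.\ uniforms, and each query hits it with probability at most $8s/n$. This gives $q=\Omega(n/s)=\Omega(k^{3/2-3/(2t)})$.
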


See \Cref{thm:qc-ub} and \Cref{thm:any-pattern-lb-in-section-1} for the statements of the upper and lower bounds, and \Cref{thm:any-pattern-lb-in-section-2} for the stronger lower bound for the increasing pattern.

When $p$ is a distribution over a finite domain, many indices of a sequence from $p$ can possess the same value, and thus it makes more sense to study subsequences (which we call ``patterns'') instead of permutation patterns. We define the count of patterns below.

\begin{definition}[Pattern]
    Given a sequence $f: [n] \to [m]$ and sequence $P: [k] \to [m]$, a {\em  copy} of $P$ in $f$ is any subset of $k$ indices $i_1 < i_2 < \cdots < i_k$ such that, for each $a \in [k]$, $f(i_a) = P(a)$.
    
    Let the count of $P$ in length-$n$ $f$ be $C_{P, n}(f)$. Define the {\em pattern count quality} $R_{P, n}$ of a length-$n$ sequence with respect to the uniform distribution over $[m]$ to be $$R_{P, n}(f) = \frac{C_{P, n}(f)}{\binom{n}{k} \cdot \frac{1}{m^k}}.$$
\end{definition}

Let $\mathcal{U}_m$ be the uniform distribution over $[m]$ and $P$ be a length-$k$ pattern. A sequence $x \sim \mathcal{U}_m^{\otimes n}$ has approximately $\binom{n}{k} \cdot \frac{1}{m^k}$ copies of $P$ with high probability, so $R_{P, n}(x) \approx 1$. Determining whether a worst-case sequence has $\approx \binom{n}{k} \cdot \frac{1}{m^k}$ copies of $P$ can require $\Omega(\ep m^k)$ queries for any algorithm (see \Cref{thm:worst-case-subsequences}). On the other hand, quality control for this problem requires only $\text{poly}(mk)$ queries:

\begin{theorem}[Pattern Counting]\label{thm:any-subsequence-intro}
    Let $\mathcal{U}_m$ be the uniform distribution over $[m]$, $P$ be a length-$k$ sequence, and $\ep > 0$. Define $\D = \{\mathcal{U}_m^{\otimes n}\}_n$ and $R_P = \{R_{P, n}\}_n$. Then, $(\D, R_P)$-quality control is achievable in $\text{poly}(mk/\ep)$ queries and time. Additionally, there exists a length-$k$ sequence $P$ such that $(\D, R_P)$-quality control requires $\Omega(k^2 m/\ep^2)$ queries.
\end{theorem}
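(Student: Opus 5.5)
The plan has two parts: an upper bound built on a quasirandomness certificate that is cheap to check, and a lower bound via two indistinguishable distributions.

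\textbf{Upper bound.} The algorithm certifies quasirandomness instead of estimating $C_{P,n}$ directly. Let $\delta = \mathrm{poly}(\ep/(mk))$. The certificate is that, for every interval $I\subseteq[n]$ of length at least $\delta n$ and every symbol $a\in[m]$, the number of positions in $I$ with $f(i)=a$ is $(1\pm\delta)|I|/m$.

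\emph{Step 1 (certificate suffices).} A deterministic counting lemma should show that the certificate implies $R_{P,n}(f)\in 1\pm\ep$. Partition $[n]$ into $T=1/\delta$ consecutive blocks $B_1,\dots,B_T$. Copies of $P$ whose indices lie in distinct blocks are grouped by the nondecreasing map $[k]\to[T]$ sending each pattern position to its block. For a fixed map, the count is a product over blocks of counts of the corresponding sub-pattern inside that block.

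For a sub-pattern of length $j$ inside a single block, I would argue by induction on $j$, writing the count as $\sum_i \mathbb{1}[f(i)=P(1)]\cdot(\text{count in the suffix})$. The certificate, applied to suffixes at granularity $\delta n$, gives multiplicative error $(1\pm O(\delta))$ per pattern letter. Hence the total error is $(1\pm O(\delta))^{k}$, which is within $1\pm\ep$ once $\delta \ll \ep/k$. The main obstacle is controlling error accumulation: naive per-letter errors compound to $(1+\delta)^k$, and intervals shorter than $\delta n$ cause additive losses. The additive loss from copies with two consecutive letters at distance less than $\delta n$ is at most about $k^2\delta$ times the total, so choosing $\delta=\ep/(10k^2)$ handles it. I would first try the simplest route: express the count as an iterated Riemann–Stieltjes-type sum of the empirical CDFs $F_a(t)=|\{i\le tn: f(i)=a\}|/n$. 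These are uniformly within $\delta/m$ of $t/m$ (this is what the certificate gives), and the iterated integral is Lipschitz in sup-norm with constant $O(k\, m^{k-1}\cdot k!^{-1}\cdots)$ after normalization.

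\emph{Step 2 (test the certificate).} Sample $s=\mathrm{poly}(mk/\ep)$ uniform indices and check the empirical symbol frequencies on each of the $O(1/\delta^2)$ intervals with endpoints at multiples of $\delta n$. By Chernoff and a union bound, if $f$ violates the certificate with slack $2\delta$, the test rejects with probability at least $2/3$. This gives soundness: if $|R_{P,n}(f)-1|>\ep$, then by Step 1 the certificate must fail at tolerance $2\delta$. For completeness, a uniform random $f$ satisfies the certificate at tolerance $\delta/2$ with high probability, by Chernoff and a union bound over the $O(m/\delta^2)$ grid intervals, so the test accepts with probability $1-o(1)$. Queries and time are both $\mathrm{poly}(mk/\ep)$.

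\textbf{Lower bound.} Take $P=a^k$, so that $R_{P,n}(f)\approx(m\cdot\#a/n)^k$. Consider sequences that are i.i.d. with symbol $a$ having probability $(1\pm c\ep/k)/m$, and the other symbols rescaled accordingly. Then $R_P$ deviates by a factor $(1\pm c\ep/k)^k$, which moves it out of $1\pm\ep$ for a suitable constant $c$. With high probability such a sequence is a soundness instance, while completeness requires accepting most uniform sequences. Distinguishing a Bernoulli$(1/m)$ source from a Bernoulli$((1+\ep/k)/m)$ source requires $\Omega(m k^2/\ep^2)$ samples, by the standard KL or Hellinger bound: the KL divergence per sample is about $(\ep/k)^2/m$. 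Since queries to distinct positions of an i.i.d. sequence are i.i.d. samples, and adaptivity does not help, this gives the claimed $\Omega(k^2m/\ep^2)$ bound.
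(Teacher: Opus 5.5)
Your lower bound is essentially the paper's argument: a constant pattern, one symbol's mass raised by a relative $\Theta(\varepsilon/k)$, and a reduction to distinguishing two Bernoullis whose per-query KL divergence is $\Theta(\varepsilon^2/(k^2m))$. As in the paper, you need $n/m\gtrsim k^2/\varepsilon^2$ so that the perturbed sequence is a soundness instance with high probability.

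Your upper bound takes a genuinely different route. The paper samples $\widetilde{O}(k^5m^2/\varepsilon^2)$ indices and computes the aggregate subpattern counts $\textsc{Count}_\ell(P,\sigma|_S)$ for every $\ell$ by dynamic programming. It justifies the sample size by exponentially robust quasirandomness, an induction on $\ell$ driven by a high-probability bounded-differences inequality. You instead test a discrepancy certificate on the empirical symbol CDFs $F_a$, which needs only Chernoff and a union bound. All the difficulty then moves into a deterministic counting lemma. This avoids the quasirandomness machinery and, once repaired, even improves the dependence on $k$.

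\textbf{The gap: the counting lemma is asserted, not proved.} You flag it yourself as the main obstacle, and your sketches do not establish it.
\begin{itemize}
\item The block paragraph is internally inconsistent: a copy whose indices lie in distinct blocks has a single letter per block.
\item The claimed $k^2\delta$ loss from close pairs is unjustified. The certificate says nothing below scale $\delta n$, so such tuples can carry far more than their share of copies (up to a factor $m^r$ for $r$ clustered letters).
\item In the Stieltjes route, the whole question is whether the Lipschitz constant, relative to $\binom{n}{k}m^{-k}$, is polynomial, and you do not show this.
\end{itemize}

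To see where the Stieltjes route can fail, swap one letter at a time. Let $A_{j-1}(t)$ count copies of $P_1\cdots P_{j-1}$ among the first $t$ positions of $f$, and set $W_j(i)=A_{j-1}(i-1)\binom{n-i}{k-j}m^{-(k-j)}$. The $j$-th error term is $\sum_i(\mathbf{1}[f(i)=P_j]-1/m)\,W_j(i)$. If $|F_a(t)-t/m|\le\eta n$ for all $a,t$, summation by parts bounds this term by $\eta n\cdot\mathrm{TV}(W_j)$. The natural estimate $\mathrm{TV}(W_j)\le 2\sup A_{j-1}\cdot\binom{n-1}{k-j}m^{-(k-j)}$ loses a factor of order $\binom{k-1}{j-1}$, which is exponential in $k$.

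\textbf{The missing idea.} $\mathrm{TV}(W_j)$ is at most twice the hybrid count of a length-$(k-1)$ pattern: $P_1\cdots P_{j-1}$ counted in $f$, followed by $k-j$ letters of weight $1/m$. Inducting on $k$ over all such hybrid counts gives relative error $O(k^2m\eta)$, provided $n\ge 2k$ and $k^2m\eta$ is small. So $\eta=\Theta(\varepsilon/(k^2m))$ suffices.

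\textbf{A fix for Step 2.} Test exactly this additive CDF condition, at grid points of spacing $O(\eta n)$. Your multiplicative condition on all intervals of length at least $\delta n$ is not certified by checking only intervals with endpoints at multiples of $\delta n$. Rounding an interval of length about $\delta n$ to that grid can change its symbol counts by a constant factor. With these repairs, the tester uses $\widetilde{O}(k^4m^2/\varepsilon^2)$ queries.
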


\begin{remark} \begin{enumerate}
    \item \Cref{thm:any-subsequence-intro} can be extended to hold more generally for sequences sampled from distribution $p^{\otimes n}$ for arbitrary distributions $p$ over $[m]$. If the distribution has minimum probability $p_{\min}$ over any element, then the corresponding quality control problem for pattern counts has a $\text{poly}(k/(\ep p_{\min}))$ query complexity and time. See \Cref{thm:subsequences} and \Cref{thm:qc-lb-sequences-general1} for the formal statements of the upper and lower bounds.
    \item We remark that in both \Cref{thm:any-pattern-intro} and \Cref{thm:any-subsequence-intro} the algorithms given are uniform in that, given the distribution $p$, there is a single algorithm that takes the pattern as input and solves the quality control problem $(\{p^{\otimes n}\}_n, Q_\tau)$ and similarly for $(\{p^{\otimes n}\}_n, R_P)$.
\end{enumerate}
    
\end{remark}

\subsection{Proof overviews}

We now highlight the challenges and the main ideas that contribute to our results.

\subsubsection{Challenges}

We focus on the case of permutation patterns, as the challenges will be analogous for subsequences. Recall that our goal is to design an algorithm making $\text{poly}(k)$ queries and running in $\text{poly}(k)$ time that accepts most random sequences while rejecting sequences whose count of a length-$k$ permutation pattern $\tau$ is not $\approx \binom{n}{k} \frac{1}{k!}$. Roughly, our plan is to sample $\text{poly}(k)$ locations in the sequence, query them, and then decide whether to accept based on the responses. It is easy to see that the expected fraction of subsequences of our sample that take on a desired pattern is the same as in the global sequence.  However, there are two main barriers to obtaining a $\text{poly}(k)$ algorithm. We isolate each of them, as our main two ingredients will handle each one.

\paragraph{The variance barrier (queries).} 

We rule out natural sample-based strategies: First, a tester that samples size-$k$ subsets and checks whether each forms a copy of $\tau$ must draw $\exp(k)$ subsets before it even \textit{finds} a copy, since only a $1/k!$-fraction of $k$-subsets form a copy of $\tau$ in the accept case. 
Second, a tester that picks a random subsequence and uses its count of $\tau$ as an estimate of the global count inherits a variance that is so large that the subsequence must have length $\exp(k)$ for the estimate to concentrate. 
We prove indirectly that such testers cannot work by ruling out all worst-case algorithms: we show that determining whether a worst-case length-$n$ sequence has $\approx \binom{n}{k} \frac{1}{k!}$ copies of a length-$k$ permutation pattern $\tau$ requires $\exp(k)$ queries! (see \Cref{sec:worst-case-permutation}) 

We show
that it is possible to overcome this barrier by counting the number of occurrences of patterns other than just $\tau$, some possibly of other lengths, as we expand on later.

\paragraph{The runtime barrier (time).} Even with the conclusion above that it suffices to count the number of occurrences of some $\text{poly}(k)$ pattern $\tau'$ in $\text{poly}(k)$-sized subsequence, we are left with an algorithmic challenge. No method is known that counts length-$\text{poly}(k)$ patterns in a length-$\text{poly}(k)$ sequence in less than the naive $k^{\Omega(k)}$ time. Indeed, for permutation patterns, even distinguishing if the subsequence has zero copies of $\tau$ or a positive number is NP-hard \cite{DBLP:journals/ipl/BoseBL98}. Thus, a $\text{poly}(k)$ query bound alone does not suffice. We need to separately find a way to read the count of $\tau$ from the sample without ever enumerating patterns. Such a process should again use the asymmetry in the definition of quality control, since otherwise we would be solving an NP-complete problem in polynomial time.

\subsubsection{Permutation pattern upper bound}

 We develop two ingredients that each handle one of the barriers described above. First, we introduce \textit{exponentially robust quasirandomness}, which handles the query complexity by letting us certify the global count of $\tau$ from a random subsequence of length only $N = \text{poly}(k)$. This notion is inspired by \cite{qualitycontrol}, and our work shows how this concept may be broadly applicable. 
 Second, we define \textit{permutation jumbledness} to address the runtime. 
 Permutation jumbledness is an $O(N)$-time computable proxy for the pattern counts on a subsequence, which allows us to avoid explicit counting of patterns. This notion is inspired by work in quasirandomness of permutations \cite{cooper2005quasirandom}, 
 though our work needs significantly
 stronger quantitative bounds than in previous definitions --- and so we need to craft the definition carefully and analyze its properties from scratch.

\paragraph{Query complexity via exponentially robust quasirandomness.} To circumvent the variance barrier, we seek a deterministic property that is both inherited by random sequences and strong enough to imply that the count of $\tau$ on most $\text{poly}(k)$-length subsequences reflects the global count.

An idealized first candidate is the following: \textit{every} subsequence of the sequence $\sigma$ has a count of $\tau$ that matches what is expected from an i.i.d. sequence. This is exactly what we would need from an inductive argument on pattern lengths: control of counts at length $\ell$ should lead to control at length $\ell + 1$. It also identifies the right high-level quantity to track. However, crucially, this is not testable: there are exponentially many subsequences, and no small sample certifies all of them.

Thus, we relax this candidate to a high-probability, testable version, which we call \textit{exponentially robust quasirandomness} (see \Cref{def:exp-rob-qr}). A length-$n$ sequence $\sigma$ is exponentially robustly quasirandom with respect to a length-$k$ permutation pattern $\tau$ if, for each $s \approx \text{poly}(k)$, all but an $\exp(-\alpha)$ fraction of length-$s$ subsequences have a count of $\tau$ that matches what is expected from an i.i.d. sequence: $\approx \binom{s}{k} \frac{1}{k!}$. Here, the parameter $\alpha$ controls the exponentially small failure probability. Note that a random sequence is exponentially robustly quasirandom with high probability. Furthermore, if we could establish that $\sigma$ is exponentially robustly quasirandom, then we can accept $\sigma$. However, testing exponentially robust quasirandomness seems even harder than testing counts. But it turns out the strength of the definition allows an inductive test with $\text{poly}(k)$ query complexity. 
Roughly, we will aim to verify that 
$\sigma$ is exponentially robustly quasirandom with respect to every sub-permutation pattern $\tau'$ in $\tau$. For such a sequence $\sigma$, we apply a concentration bound (\Cref{lem:concentration-lemma}) to show that, with high probability over $S$, $\sigma|_S$ has the same fraction of occurrences of $\tau$ as the global fraction of occurrences of $\tau$. Thus, a simple test to verify exponentially robust quasirandomness with respect to $\tau$ given exponentially robust quasirandomness with respect to $\tau'$ is to count the number of occurrences of $\tau$ in $\sigma|_S$.

Crucially, however, this is only a reduction in the query complexity. Not only is counting the number of occurrences of $\tau$ in $\sigma|_S$ hard, but we even have the additional complication that we want to count the number of occurrences of $\tau'$ for exponentially many $\tau'$. The former issue is more challenging, and we explain our solution in the next paragraph. The latter can be dealt with by the same solution technique in the case of permutation patterns, but that approach does not work out in the case of (regular value-based) patterns; see \Cref{sec:pattern-overview}. This still leaves us with the former challenge, which we turn to next.

\paragraph{An efficiently checkable proxy: permutation jumbledness.} As asserted earlier, it is not feasible to count (or even approximate), for a worst-case choice of $\sigma|_S$, the number of occurrences of $\tau$ in $\sigma|_S$. In this section, we propose a proxy, i.e., a property $J$ (for jumbledness) such that most sequences satisfy $J$, while every sequence $\sigma|_S$ that satisfies $J$ has roughly the right count of $\tau$. Critically, it is algorithmically easy to determine if a sequence $\sigma|_S$ satisfies $J$. We describe the property $J$ next. 

Define $J$ as follows. View a permutation $\sigma$ over $[N]$ as the point set $\{(i, \sigma(i))\}_{i \in [N]} \subseteq [N] \times [N]$.  Tile $[N] \times [N]$ into a $(N/m) \times (N/m)$ ``coarse grid'' of $m \times m$ boxes, for $m = \Theta(N/k^2)$. For a random permutation, each box holds about $m^2/N$ points, with fluctuations of order $m/\sqrt{N}$, the standard deviation of the corresponding hypergeometric count. We call $\sigma$ $(m, \beta)$-jumbled if every box is within $\beta$ standard deviations of its mean. That is, $||\sigma(I_i) \cap I_j| - m^2/N| \leq \beta m/\sqrt{N}$ for the intervals $I_1, \dots, I_{N/m}$. In words, $\sigma$ spreads its points across the coarsened grid as evenly as a random permutation would.

Our notion is inspired by Cooper's notion of permutation quasirandomness \cite{cooper2005quasirandom}. Cooper's quasirandomness notion says that, for any intervals $I_1, I_2 \subseteq [N]$, $I_1 \times I_2$ should have $|I_1||I_2|/N$ points, up to a $o(N)$ deviation. Our definition is incomparable. On the one hand, it fixes a small number of non-overlapping boxes ($O(N)$ for a length-$N$ permutation, whereas Cooper's view would consider $\Omega(N^4)$ different boxes) and only requires the right count for these. On the other hand, we insist on a much smaller deviation.

Jumbledness turns out to be suitable for our needs. First, it is \textit{checkable in $O(N)$ time}, via binning the $N$ points into boxes and comparing each of the $(N/m)^2 \leq N$ box counts to $m^2/N$. Next, jumbledness \textit{holds for random permutations}, by hypergeometric tail bounds and a union bound over the $(N/m)^2$ boxes. Our main technical contribution here is to show that  it also \textit{certifies the counts}: for a suitable $m, \beta$ and $N = \text{poly}(k)$, $(m, \beta)$-jumbledness implies that the count of every subpattern of length $\leq k$ is within $(1 \pm \ep)$ of its value under i.i.d. sequences. We show this in \Cref{lem:soundness}. Our key insight is that the dominant contribution to the count comes from copies spread across $k$ distinct rows and $k$ distinct columns, while copies confined to fewer rows or columns contribute a  lower-order term. For this dominant term, jumbledness pins down each box's count tightly enough to control this contribution. We note that if we had worked with Cooper's notion of quasirandomness we would not be able to establish the third property and in particular the contribution from the dominant term would be harder to pin down.

\paragraph{Final tester.} Combining the two ingredients gives us a simple algorithm: sample a subsequence of $N = \text{poly}(k)$ indices, form the induced permutation, and test whether it is $(m, \beta)$-jumbled. Accept iff it is. Completeness is immediate, as the restriction of a random permutation to a random index set is again a random permutation, and thus jumbled with high probability. For soundness, suppose $Q_\tau(\sigma)$ is far from $1$. By the exponentially robust quasirandomness argument, there is a first length $\ell$ at which the length-$\ell$ subpattern count on the sample deviates from its i.i.d. value. But jumbledness would force all such counts to be correct, so the sample cannot be jumbled and the tester rejects. Thus, exponentially robust quasirandomness guarantees that a $\text{poly}(k)$-size sample retains the signal, and jumbledness extracts that signal in linear time. This circumvents both the variance barrier and the $\exp(k)$-time counting barrier.

\subsubsection{Permutation pattern lower bound}

Ideally, one would hope to bring the query complexity close to $k$, but our upper bound is only polynomial in $k$. Here we show that at least some superlinear query complexity is necessary: we obtain a nearly quadratic lower bound for the increasing pattern (\Cref{thm:any-pattern-lb-in-section-2}) and a nearly $k^{1.5}$ lower bound for every permutation pattern (\Cref{thm:any-pattern-lb-in-section-1}). Let us begin with the lower bound for the increasing pattern, before seeing how the construction can be extended to work for general patterns.

We consider the problem of distinguishing between $x \sim \mathcal{D}_1$ and $y \sim \mathcal{D}_2$, where $\mathcal{D}_1$ is i.i.d. and $\mathcal{D}_2$ has many more copies of the length-$k$ increasing pattern. For our choice of $\mathcal{D}_1$ and $\mathcal{D}_2$, we show that no deterministic algorithm making $o(k^{2 - \delta})$ queries (for any constant $\delta \in (0, 1)$) can distinguish $\mathcal{D}_1$ from $\mathcal{D}_2$. We elaborate on the two distributions below and then explain the idea behind the query complexity lower bound.

Since we are proving a lower bound for $(\{p^{\otimes n}\}_n, Q_\tau)$-quality control, where $\tau$ is the increasing pattern, we set $\mathcal{D}_1 = p^{\otimes n}$, which is the case the tester should accept.

We now describe $\mathcal{D}_2$. The goal is to make $\mathcal{D}_2$ look like $\mathcal{D}_1$ for the most part, so we design $\mathcal{D}_2$ to be a ``noisy'' copy of $\mathcal{D}_1$. Specifically, to sample $y \sim \mathcal{D}_2$, we first sample $y \sim \mathcal{D}_1$ and then modify the sequence randomly to create many copies of $\tau$ while still making the modifications hard to find. We define $\D_2$ as follows.

\paragraph{The distribution $\mathcal{D}_2$ (illustrative single-block case).}
    Fix a block length $s$ and a step size $\delta > 0$ small enough that all planted values lie in $(0,1)$ and occupy the intended rank range. A sequence $y \in \mathbb{R}^n$ is drawn from $\mathcal{D}_2$ as follows:
    \begin{enumerate}
        \item Sample $z \sim \mathcal{D}_1 = \text{Unif}(0,1)^{\otimes n}$.
        \item Sample an index $i \sim \text{Unif}([n - s + 1])$.
        \item For each $j \in \{0, 1, \dots, s - 1\}$, set $y[i + j] = \frac{i}{n} + j \delta$.
        \item For every position $\ell \notin \{i, i + 1, \dots, i + s - 1\}$, set $y[\ell] = z[\ell]$.
    \end{enumerate}

 $\mathcal{D}_2$ looks like an i.i.d. sequence except on one randomly located block of length $s$, inside which we plant a short (very slowly) increasing sequence. These planted values are chosen to play the role of the middle-ranked elements of the increasing pattern. We prove that sequences $y \sim \mathcal{D}_2$ have many copies of the length-$k$ increasing pattern. 
 
 The reason this gives a query lower bound is that the only way to reliably distinguish $\mathcal{D}_1$ from $\mathcal{D}_2$ is to find the planted block. Outside this block, the sequence is distributed exactly as under $\mathcal{D}_1$. Since the block has length only $s$ and is placed randomly, an algorithm needs $\Omega(n/s)$ queries to hit it with constant probability. We take $s$ as small as possible subject to $\mathcal{D}_2$ still having enough copies of $\tau$ (with high probability) to force rejection. For the increasing pattern this yields $s = \Theta\!\left(n/k^{2 - 2/t}\right)$ for an arbitrarily large constant $t$, and the $\Omega(k^{2 - \delta})$ bound of \Cref{thm:any-pattern-lb-in-section-2}.

A similar idea can be extended to general patterns $\tau$. See \Cref{sec:qc-lb-all}. The main change is that the $t$ planted middle-ranked elements no longer sit in a single contiguous block. The planted values form up to $t$ separate short runs rather than one long run. Consequently, the remaining $k-t$ ranks split into some number $b_{\tau,t}$ of nonempty intervals (\Cref{def:b-P}), and this quantity controls how many copies of $\tau$ the planting creates, and thus how large $s$ may be taken. The increasing pattern is the best case, with $b_{\tau,t}=2$, giving an exponent approaching $2$ as $t$ tends to infinity. In the worst case, $b_{\tau,t}=t+1$, giving an exponent approaching $\frac{3}{2}$ and resulting in the bound of \Cref{thm:any-pattern-lb-in-section-1}.

\subsubsection{Pattern upper bound}\label{sec:pattern-overview}

Now we turn to the setting of quality control of patterns over finite alphabets. Here the input $x$ and the pattern $P$ are sequences from $[m]^*$ and we want to distinguish between strings drawn from $p^{\otimes n}$, for some distribution $p$ over $[m]$, and those in which the number of occurrences of $P$ as subsequences is deviating significantly from the expected number for strings from $p^{\otimes n}$. While the template for our quality control algorithm is similar to the setting of permutation patterns, the details end up deviating significantly. We first describe the similarity before going into the differences.

As in the case of permutation patterns, we reduce quality control on the larger pattern to quality control of smaller subpattern counts within a sequence of length $s = \text{poly}(k/p_{\min})$, where $k$ is the length of the pattern $P$ and $p_{\min}$ is the probability of the minimum-probability element in $[m]$ under $p$.

Also as in the case of permutation patterns, we first define a stronger problem where we are required to reject even more sequences, though still a $o(1)$ fraction of sequences drawn from $p^{\otimes n}$. We then show that this stronger problem can be solved in polynomial time; and then show that it suffices to solve this stronger problem on a randomly chosen subsample of $s = \text{poly}(k/p_{\min})$ locations of the input $x$, where $k$ is the length of the pattern $P$.

The deviations from the permutation pattern setting start with the actual definition of the stronger problem. Here, we do not count the number of occurrences of all subpatterns $P'$ of $P$ in $x$ and instead count only one aggregate sum, for every $\ell \in [k]$, of the total number of occurrences of all subpatterns $P'$ of $P$ in the input $x$. This choice is affected by two considerations: (1) This is a problem for which we can get a worst-case polynomial-time algorithm using dynamic programming to compute the quality control function. (While this is stronger than what we need, it makes for a cleaner and more efficient result.) (2) While the information retrieved in this smaller problem is much less than that obtained in the stronger version used in the permutation pattern setting, we show that it suffices to carry out the exponentially robustly quasirandomness methodology, and thus sufficient to show that getting these counts on poly$(k)$ sized subsamples of $x$ suffice to solve the quality control problem on $x$. We open up our inductive proof hinted in the previous paragraph and notice that it suffices to know that, for every $\ell \in [k]$, the sum of occurrences of $\tau'$ of length $\ell$ in $\sigma|_S$ is roughly as in a random sequence for all but an exponentially small fraction of $S$. Thus, exponentially robust quasirandomness is now defined with respect to $\tau$ and $\ell$ (\Cref{def:exp-rob-qr}) and the induction step asserts that exponentially robust quasirandomness for level $\ell$ yields a simple test of exponentially robust quasirandomness at level $\ell+1$ (see the proof of \Cref{lem:exp-qr-main}).

\subsubsection{Pattern lower bound}

We prove an $\Omega(k^2/p_{\min})$ query lower bound, contrasting with our
$\widetilde{O}(k^5/p_{\min}^2)$ upper bound (\Cref{thm:subsequences}). To construct a lower bound for $(\{p^{\otimes n}\}_n, R_P)$-quality control, draw sequences $x \sim p^{\otimes n}$ and $y \sim q^{\otimes n}$ for some distribution $q \neq p$ over $[m]$ with many copies of some length-$k$ pattern $P$. We study the query complexity of distinguishing between these two cases.

First, let $i_{\min}$ be the minimum-probability element in $[m]$ under $p$. The pattern we consider is the constant pattern $P = i_{\min}^{\otimes k}$. Let $p_{\min} := p(i_{\min})$ be the probability of $i_{\min}$. Under $q$, let the probability weight of $i_{\min}$ be $q_{\min}$, where we choose $q_{\min} \gg p_{\min}$. Conditioned on selecting $[m] \setminus \{i_{\min}\}$, let $p$ and $q$ be identically distributed. Increasing the weight of $i_{\min}$ from $p_{\min}$ to $q_{\min}$ makes this count a constant factor larger under $q$, forcing rejection. We will choose $q_{\min}$ so that $q_{\min}/p_{\min} = 1 + \Theta(1/k)$, i.e., a relative perturbation of order $1/k$. This is the quantity that ultimately produces the $k^2$ in the bound.

To show that $p$ and $q$ are difficult to distinguish, we give a coupling of $x \sim p^{\otimes n}$ and $y \sim q^{\otimes n}$ as follows. For each position, sample $u \sim \text{Unif}(0, 1)$ together with an element $w$ from the conditional distribution of $p$ on $[m] \setminus \{i_{\min}\}$. In the $p^{\otimes n}$ sequence, place $i_{\min}$ if $u \leq p_{\min}$ and $w$ otherwise; in the $q^{\otimes n}$ sequence, place $i_{\min}$ if $u \leq q_{\min}$ and $w$ otherwise. Thus the two sequences agree except at positions where $p_{\min} < u \leq q_{\min}$, at which $q^{\otimes n}$ receives an extra copy of $i_{\min}$.

The copies of $i_{\min}$ appear according to Bern($p_{\min}$) under $p$ and Bern($q_{\min}$) under $q$. Thus, distinguishing $p$ from $q$ allows us to distinguish Bern($p_{\min}$) from Bern($q_{\min}$), so we need at least as many queries as needed to distinguish between these Bernoulli variables. This will yield the lower bound.

\subsubsection{Worst-case lower bounds} 
Our lower bounds for analyzing the count of permutation patterns and subsequences in worst-case sequences both rely on the same principle, and for now we discuss permutation patterns. Suppose we want to create a lower bound on detecting whether there are $(1 \pm \ep) \binom{n}{k} \frac{1}{k!}$ copies of the increasing pattern of length $k$, or if there are not. Let $Q$ be the increasing pattern of length $k-1$. Embed $Q$ into the worst-case sequence we are creating by dividing up the sequence into $k-1$ blocks and adding corresponding values to create many copies of $Q$. Such a sequence contains no copies of $\tau$. Then, we show that we can modify the sequence slightly and in a random fashion in order to create many copies of $\tau$. For permutation patterns, only a $1/\exp(k)$-fraction of indices need to be modified, and for subsequences, only a $1/m^k$ fraction. This yields the stated worst-case lower bounds.

\subsection{Related work}

\paragraph{Quality control:} As noted earlier our work builds on the work of Marcussen, Rubinfeld, and Sudan \cite{qualitycontrol}, who defined the notion of quality control algorithms and problems. They study quality control of randomness in graphs using motif counts as the quality function. Their notion relates to other previously studied notions in average-case complexity, property testing, and learning. We refer the reader to their work for a detailed description and comparisons. Our main contribution is to apply their definition to a setting other than graphs --- specifically to sequences. While many of the steps in our algorithms and analysis are inspired by their work, we note that each step requires careful definition and analysis. We also note that the quantitative benefit of working with quality control is clearer in this work --- understanding the benefits in their results requires a nuanced understanding of the growth of two parameters, whereas in our setting the difference is starkly clear: we go from $\exp(k)$ lower bounds on the worst case problem, to a $\text{poly}(k)$ upper bound in the quality control version.

\paragraph{Counting permutation patterns and subsequences:} There is a long line of research on both algorithms and hardness of exact counting \cite{DBLP:conf/isaac/AlbertAAH01, DBLP:journals/siamdm/AhalR08, DBLP:journals/moc/Kuszmaul18, DBLP:conf/isaac/0001G20, DBLP:conf/iwpec/JelinekOP21, DBLP:journals/algorithmica/BerendsohnKM21} and detection \cite{DBLP:journals/ipl/BoseBL98, fox2013stanley, DBLP:conf/soda/GuillemotM14, DBLP:conf/sosa/GawrychowskiR22} of patterns in sequences. Approximate counting of permutations has been more recently studied. Particularly, \cite{DBLP:conf/soda/Ben-EliezerMS26} obtains a deterministic near-linear time $(1 + \ep)$-approximation algorithm when $k \leq 5$. Recent work \cite{opler2026inapproximability} also establishes strong conditional inapproximability results for approximate counting of permutation patterns when $k$ grows. Our work instead studies the quality control setting for growing $k$.

Understanding whether the count of permutation patterns or subsequences matches the random case relates closely to notions of quasirandomness, described below.

\paragraph{Permutation quasirandomness:} 

Our notion of permutation jumbledness is inspired by the theory of quasirandomness of permutations. Cooper~\cite{cooper2006permutation} introduced quasirandom permutations and showed equivalences between several deterministic properties possessed by random permutations.  Since then, there has been continued and recent interest in minimal properties that imply permutation quasirandomness \cite{cooper2005quasirandom, hoppen2013limits, kral2013quasirandom, glebov2015finitely, chan2020characterization, kurevcka2022lower, kral2024forcing, sixpatterns}.

The literature on quasirandom permutations focuses on the asymptotic setting: they typically require pattern densities to converge to their random values as the permutation size tends to infinity. This yields results that are not at the right level of accuracy for our setting (but are in the same spirit). Thus we introduce the notion of ``permutation jumbledness.'' The jumbledness condition is efficiently testable and suffices to certify the relevant pattern counts on polynomial-size samples.

\subsection{Open problems}

The main open problem that remains is pinpointing the exact dependence on the parameters ($k, \ep, m, p_{\min}$) for both permutation patterns and subsequences. Our goal in this paper was to obtain fully-polynomial bounds for quality control of these problems, and now that this is achieved, the exact complexity becomes interesting as well.

Additionally, we believe that our notion of permutation jumbledness could be more widely useful in the study of quasirandom permutations and sequences. Exploring applications of permutation jumbledness is an interesting future direction.

\subsection{Paper organization}

The rest of the paper is organized as follows. In \Cref{sec:preliminaries}, we give preliminary propositions regarding bounded-difference concentration inequalities and properties of random permutations and sequences. In \Cref{sec:permutation-patterns}, we study quality-control upper and lower bounds for permutation patterns. We prove the $\text{poly}(k)$ upper bound (\Cref{thm:qc-ub}) in \Cref{sec:quality-control-permutation-ub} and the query lower bounds (\Cref{thm:any-pattern-lb-in-section-1,thm:any-pattern-lb-in-section-2}) in \Cref{sec:quality-control-permutation-lb}. In \Cref{sec:patterns}, we study quality-control upper and lower bounds for patterns (where values now matter instead of relative orderings), proving the $\text{poly}(mk)$ upper bound  (\Cref{thm:subsequences}) and lower bound (\Cref{thm:qc-lb-sequences-general1}). In \Cref{sec:worst-case-lower-bounds}, we prove $\exp(k)$ worst-case lower bounds for approximating the count of permutation patterns and patterns. In \Cref{sec:appendix}, we prove the propositions given in the preliminary section (\Cref{sec:preliminaries}).

\section{Preliminaries}\label{sec:preliminaries}

In this section, we state and prove various concentration inequalities and properties of random permutations and sequences. For the inequalities and properties that are not readily obtained in the existing literature, the proofs are included in \Cref{appendix:prelims}.

We begin with a few remarks and definitions we will rely on throughout our proofs.

\subsection{Remarks, definitions, and notation}

We remark about our treatment of floors and ceilings.

\begin{remark}
    Throughout this paper, we will ignore floors and ceilings in statements and proofs, as these will only contribute negligibly to the computations.
\end{remark}

We now define subsequences and subpatterns.

\begin{definition}\label{def:subsequence}
    Consider a length-$n$ sequence $\sigma$. $\tau$ of length $\ell$ is a subsequence of $\sigma$ (denoted $\tau \subseteq \sigma$) if there exists indices $i_1 < i_2 < \dots < i_{\ell} \subseteq [n]$ such that $\sigma_{i_j} = \tau_j$ for all $j \in [\ell]$.
\end{definition}

\begin{definition}\label{def:subpattern}
    If $\sigma$ is a pattern (being counted in a larger sequence), then any subsequence $\tau$ of $\sigma$ is called a \textit{subpattern} of $\sigma$, denoted $\tau \subseteq \sigma$. If $\sigma$ is a permutation pattern and $I=\{i_1<\cdots<i_\ell\}\subseteq[|\sigma|]$, then the subpattern induced by $I$ is the unique permutation $\operatorname{pat}_I(\sigma):[\ell]\to[\ell]$ such that
    $$\operatorname{pat}_I(\sigma)(a)<\operatorname{pat}_I(\sigma)(b) \Longleftrightarrow \sigma(i_a)<\sigma(i_b).$$
    We write $\tau\subseteq\sigma$ if $\tau=\operatorname{pat}_I(\sigma)$ for some $I$. When summing over subpatterns, different index sets $I$ are counted with multiplicity.
\end{definition}

\paragraph{Notation.} We use $p$ to refer to a probability distribution over a 1-dimensional space. We let $p^{\otimes n}$ represent the product distribution over $n$ entries distributed according to $p$. We let $\D = \{p^{\otimes n}\}_n$ be the family of product distributions over $n$ entries distributed according to $p$, for all $n \in \mathbb{N}$.

\subsection{Concentration inequalities}

We rely on a concentration inequality for functions that have a ``bounded difference'' property on a high-probability subset of the domain. We prove the following subset analogue of Proposition 2 of \cite{combes2024extension}, which is a high-probability bounded differences inequality.

\begin{proposition}[Subset analogue of Proposition 2 of \cite{combes2024extension}]\label{thm:combes}
    Let $f: \binom{[n]}{t} \to \mathbb{R}$ be a bounded, non-negative function, and $\mathcal{Y} \subseteq \binom{[n]}{t}$. Suppose the following ``bounded differences property'' holds for $f$ and $\mathcal{Y}$:

    For all $A, B \in \mathcal{Y}$,
    $$\left|f(A) - f (B)\right| \leq \frac{c}{2} |A \triangle B|$$
    for some constant $c$.

    Consider any $\delta > 0$. Let $S$ be a uniformly random element of $\binom{[n]}{t}$. Let $$q:= 1 - \mathbb{P}\left( S \in \mathcal{Y}\right),$$ and suppose $q \leq \frac{\delta}{2 \max(f)}.$

    Then, the following holds:
    $$\mathbb{P}\left( \left| f(S) - \mathbb{E}\left( f(S)\right)\right| \geq \delta\right) \leq q +  2 \exp\left(- \frac{\delta^2}{8 t c^2} \right).$$
\end{proposition}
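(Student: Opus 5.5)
The plan is to reduce to the classical bounded differences inequality for uniformly random $t$-subsets by replacing $f$ with a modified function $g$. The function $g$ agrees with $f$ on $\mathcal{Y}$ and is $\frac{c}{2}$-Lipschitz with respect to $|A \triangle B|$ on all of $\binom{[n]}{t}$. Concretely, I would take the McShane-type extension
$$g(A) = \min_{B \in \mathcal{Y}} \left( f(B) + \tfrac{c}{2}|A \triangle B| \right),$$
truncated to the interval $[0, \max(f)]$ (if $\mathcal{Y}=\emptyset$ then $q=1$ and the bound is trivial). The bounded differences hypothesis on $\mathcal{Y}$ gives $g = f$ on $\mathcal{Y}$. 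Each term in the minimum is $\frac{c}{2}$-Lipschitz in $A$, so $g$ is as well, and truncation preserves this. In particular, swapping one element of $A$ for one outside it changes $|A\triangle B|$ by at most $2$, so $g$ changes by at most $c$ per swap.

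Next I apply concentration for $g(S)$ with $S$ uniform in $\binom{[n]}{t}$. Uniform subsets can be generated by a martingale (Doob) exposure of the $t$ chosen elements sequentially without replacement. Using a coupling argument (swap one element), the martingale differences are bounded by $c$. Azuma then gives
$$\mathbb{P}\left(|g(S)-\mathbb{E} g(S)| \ge \delta/2\right) \le 2\exp\left(-\frac{(\delta/2)^2}{2tc^2}\right) = 2\exp\left(-\frac{\delta^2}{8tc^2}\right),$$
which matches the target exponent. Alternatively I would cite McDiarmid's inequality for sampling without replacement.

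Finally I transfer back to $f$. Both $f$ and $g$ take values in $[0,\max f]$ and agree off an event of probability $q$, so $|\mathbb{E} f(S) - \mathbb{E} g(S)| \le q\max(f) \le \delta/2$ by the assumption on $q$. Hence on the event $S\in\mathcal{Y}$, the deviation $|f(S)-\mathbb{E}f(S)|\ge\delta$ forces $|g(S)-\mathbb{E}g(S)|\ge \delta/2$. A union bound with $\mathbb{P}(S\notin\mathcal{Y}) = q$ yields the claimed bound $q + 2\exp(-\delta^2/(8tc^2))$.

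The main obstacle is the second step: getting the right constant in the bounded-differences inequality for sampling without replacement rather than for independent coordinates. I would handle it via the Doob martingale with the standard coupling showing that conditioning on the $i$-th chosen element changes the conditional expectation by at most $c$. The first and third steps are routine.
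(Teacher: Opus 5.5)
Your proposal is correct and follows the paper's proof: the same McShane extension $g$, the same comparison $|\mathbb{E}f(S)-\mathbb{E}g(S)|\le q\max(f)\le\delta/2$, and the same union bound with $\mathbb{P}(S\notin\mathcal{Y})=q$. The one difference in the concentration step is that you derive the bound for uniform $t$-subsets yourself (Doob martingale, swap coupling, Azuma with increments $c$ at deviation $\delta/2$), whereas the paper cites Pemantle--Peres for $1$-Lipschitz functions on bases of a balanced matroid, applied to $(2/c)g$.

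Your truncation of $g$ to $[0,\max(f)]$ is a worthwhile addition rather than a cosmetic one. Without it, $g$ can exceed $\max(f)$ off $\mathcal{Y}$, by as much as $ct$. The paper's asserted bound $|f(A)-g(A)|\le\max(f)\cdot 1(A\notin\mathcal{Y})$ needs exactly this clamping to hold.
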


The proof is in \Cref{appendix:prelims}.

We next give a Binomial concentration inequality.

\begin{proposition}\label{prop:binomial}
    Let $p \in [0, 1]$ be a constant. Then
    $$\mathbb{P}\left(\text{Binom}(N, p) = \lfloor Np \rfloor \right) \geq \frac{C_p}{\sqrt{N}}$$
    for some constant $C_p > 0$ depending only on $p$.
\end{proposition}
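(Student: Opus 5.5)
For $p\in\{0,1\}$ the variable $\text{Binom}(N,p)$ equals $\lfloor Np\rfloor$ with probability $1$, so any $C_p\le 1$ works. From now on I fix $p\in(0,1)$ and write $q=1-p$. My plan is to bound the probability mass function at $k:=\lfloor Np\rfloor$ directly, using two-sided Stirling bounds. Small $N$ will be absorbed into the constant.

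\textbf{Step 1 (small $N$).} Let $N_0=N_0(p)$ be large enough that $N\ge N_0$ implies $1\le k$ and $N-k\ge 2$; for instance $N_0=\lceil 2/p\rceil+\lceil 2/q\rceil$ works. For each $N<N_0$ the probability $\mathbb{P}(\text{Binom}(N,p)=k)=\binom{N}{k}p^kq^{N-k}$ is strictly positive. So the minimum of $\sqrt N\cdot\mathbb{P}(\text{Binom}(N,p)=k)$ over these finitely many $N$ is a positive constant depending only on $p$. This constant goes into $C_p$.

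\textbf{Step 2 (Stirling estimate for $N\ge N_0$).} I use the standard bounds $\sqrt{2\pi n}\,(n/e)^n\le n!\le e\sqrt n\,(n/e)^n$, valid for $n\ge1$. Applying the lower bound to $N!$ and the upper bound to $k!$ and $(N-k)!$ gives
\[
\binom{N}{k}p^kq^{N-k}\;\ge\;\frac{\sqrt{2\pi}}{e^2}\sqrt{\frac{N}{k(N-k)}}\left(\frac{Np}{k}\right)^{k}\left(\frac{Nq}{N-k}\right)^{N-k}.
\]
Since $k\le Np$ and $N-k\le N$, the prefactor satisfies $\sqrt{N/(k(N-k))}\ge \sqrt{1/(Np)}\ge 1/\sqrt N$. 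Since $k\le Np$, we also have $(Np/k)^k\ge 1$.

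\textbf{Step 3 (the last factor).} This is the only step needing care. Write $k=Np-\theta$ with $\theta\in[0,1)$. Then $N-k=Nq+\theta$, so
\[
\left(\frac{Nq}{N-k}\right)^{N-k}=\left(1-\frac{\theta}{N-k}\right)^{N-k}.
\]
Because $N-k\ge 2$ and $\theta<1$, the ratio $\theta/(N-k)$ is at most $1/2$. Using $1-x\ge e^{-2x}$ for $x\in[0,1/2]$, this factor is at least $e^{-2\theta}\ge e^{-2}$.

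Combining the three steps gives $\mathbb{P}(\text{Binom}(N,p)=\lfloor Np\rfloor)\ge \frac{\sqrt{2\pi}}{e^4}\cdot\frac1{\sqrt N}$ for all $N\ge N_0$. Taking $C_p$ to be the minimum of this constant and the one from Step 1 proves the proposition.
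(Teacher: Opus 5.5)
Your proof is correct and follows the same route as the paper, which simply applies Stirling's approximation to $\binom{N}{\lfloor Np\rfloor}p^{\lfloor Np\rfloor}(1-p)^{N-\lfloor Np\rfloor}$ and asserts the $C_p/\sqrt{N}$ bound. You supply the details the paper leaves implicit: explicit two-sided Stirling bounds, control of the floor via $(1-\theta/(N-k))^{N-k}\ge e^{-2}$, and absorption of the finitely many small $N$ into $C_p$. All of these steps check out.
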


\begin{proof}
    If $p \in \{0,1\}$, the probability is $1$. Otherwise,
    \begin{equation}\label{eq:binomial}
       \mathbb{P}\left(\text{Binom}(N, p) = \lfloor Np \rfloor \right) = \binom{N}{\lfloor Np\rfloor} p^{\lfloor Np\rfloor} (1 - p)^{N-\lfloor Np\rfloor}.
    \end{equation}
    Applying Stirling's approximation, there is some constant $C_p > 0$ depending only on $p$ such that \Cref{eq:binomial} is at least $\frac{C_p}{\sqrt{N}}$.
\end{proof}

\subsection{Properties of random permutations and sequences}

\begin{proposition}\label{prop:random-permutation-P-copies}
    Consider a random permutation $\sigma: [n] \to [n]$ and any length-$k$ permutation pattern $\tau$. If $n-k\geq 2k^3$, then $\sigma$ has $(1 \pm \ep) \binom{n}{k} \frac{1}{k!}$ copies of $\tau$ with probability at least $1-\frac{2k^3}{(n-k)\ep^2}$. In particular, for fixed $k$, $\tau$, and $\ep$, this probability is $1-o_n(1)$ as $n\to\infty$.
\end{proposition}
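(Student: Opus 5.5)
We apply Chebyshev's inequality to $X = C_{\tau,n}(\sigma)$, the number of copies of $\tau$ in $\sigma$. Write $X=\sum_{S\in\binom{[n]}{k}} X_S$, where $X_S$ indicates that $\sigma|_S$ is order-isomorphic to $\tau$. For a uniformly random permutation, the relative order on any fixed $k$-set is a uniformly random element of $S_k$. Hence $\mathbb{E}[X_S]=1/k!$ and $\mu:=\mathbb{E}[X]=\binom{n}{k}\frac{1}{k!}$. Chebyshev then gives
$$\mathbb{P}\left(|X-\mu|>\ep\mu\right)\le \frac{\mathrm{Var}(X)}{\ep^2\mu^2},$$
so it suffices to show $\mathrm{Var}(X)/\mu^2\le 2k^3/(n-k)$ under the hypothesis $n-k\ge 2k^3$.

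\textbf{Step 1: disjoint pairs are uncorrelated.} For $S\cap T=\emptyset$, the relative orders within $S$ and within $T$ are independent uniform permutations. This holds because the induced order on $S\cup T$ is uniform on $S_{2k}$, and conditioning on the interleaving of values leaves the internal orders of $S$ and $T$ independent and uniform. Therefore $\mathrm{Cov}(X_S,X_T)=0$.

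\textbf{Step 2: bound the overlapping pairs.} For $|S\cap T|\ge 1$ we use the crude bound $\mathrm{Cov}(X_S,X_T)\le \mathbb{E}[X_SX_T]\le \mathbb{E}[X_S]=1/k!$. This gives
$$\mathrm{Var}(X)\le \sum_{S}\frac{1}{k!}\cdot\#\{T: T\cap S\neq\emptyset\}=\mu\left(\binom{n}{k}-\binom{n-k}{k}\right).$$
The crude bound is too weak, because the factor $k!$ does not cancel. The remedy is to refine: for $|S\cap T|=j$, the event $X_SX_T$ forces $T$'s order to agree with $\tau$ on a set whose restriction to $T\setminus S$ is still uniform. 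Concretely, condition on the values on $S$. The relative order of $T$'s $k$ elements, given the order on the overlap, is uniform among the $k!/j!$ extensions, so $\mathbb{E}[X_SX_T]\le \frac{1}{k!}\cdot\frac{j!}{k!}$.

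\textbf{Step 3: sum over overlaps.} With this refinement,
$$\frac{\mathrm{Var}(X)}{\mu^2}\le \sum_{j\ge1}\frac{\binom{k}{j}\binom{n-k}{k-j}}{\binom{n}{k}}\, j!.$$
The $j=1$ term is about $k^2/(n-k)$, and consecutive terms have ratio about $\frac{(k-j)^2}{(j+1)(n-2k)}\cdot(j+1)=\frac{(k-j)^2}{n-2k}$, which is at most $1/2$ when $n-k\ge 2k^3$ (up to the slack this allows). The geometric sum is therefore at most twice the first term, giving $\mathrm{Var}(X)/\mu^2\le 2k^3/(n-k)$ with room to spare; I keep the stated $k^3$ to absorb the cruder estimates of $\binom{n-k}{k-1}/\binom{n}{k}$.

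\textbf{Main obstacle.} The delicate point is the conditional-uniformity claim in Step 2: that given $X_S=1$ (and the values on $S$), the order on $T$ is uniform among extensions of its restriction to $S\cap T$. This holds by exchangeability, since the values on $T\setminus S$ are i.i.d. and independent of those on $S$, and the order of $T$ is determined by ranks that are uniform given the overlap's order. If that argument resists a clean statement, I will instead bound $\mathbb{E}[X_SX_T]\le \mathbb{P}(\sigma|_{T\setminus S}\text{ has the right order})=1/(k-j)!$. This gives the term $\frac{\binom{k}{j}\binom{n-k}{k-j}}{\binom{n}{k}}\cdot\frac{k!}{(k-j)!}$ with term ratio about $k^3/(n-k)$ after the first term, and the hypothesis $n-k\ge 2k^3$ again yields geometric decay. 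In both versions the final display follows by Chebyshev, and the $1-o_n(1)$ statement is immediate for fixed $k$ and $\ep$.
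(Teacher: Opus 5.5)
There is a genuine error in your main route: the conditional-uniformity claim in Step 2 is false. Conditioning on $X_S=1$ fixes the relative order of all of $S$, not just of $S\cap T$. Where the overlap elements sit inside that order biases where the elements of $T\setminus S$ fall relative to them.

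Take $k=2$, $\tau=12$, $S=\{1,3\}$, $T=\{2,3\}$ (so $j=1$). Then $X_SX_T=1$ iff $\sigma(3)$ is the largest of $\sigma(1),\sigma(2),\sigma(3)$, which has probability $1/3$. Your refined bound claims $\mathbb{E}[X_SX_T]\le \frac{1}{k!}\cdot\frac{j!}{k!}=\frac14$. Equivalently, $\mathbb{P}(X_T=1\mid X_S=1)=2/3\neq j!/k!$.

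The failure grows with $k$. For the increasing pattern, take an overlap element that is the last index of both $S$ and $T$. Conditioning on its value $u$ gives $\mathbb{E}[X_SX_T]=\int_0^1 \frac{u^{2k-2}}{((k-1)!)^2}\,du=\frac{k^2}{(2k-1)(k!)^2}$, which exceeds your claimed $(k!)^{-2}$ by a factor of about $k/2$.

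Your exchangeability argument would be valid only if you conditioned on the order of $S\cap T$ alone, not on the event $X_S=1$. So Step 3 as written is unsupported: the geometric sum with ratio $(k-j)^2/(n-2k)$ and the implied $O(k^2/n)$ variance bound both rest on this false bound.

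Your fallback is the correct argument, and it is exactly the paper's proof. As written, though, it drops a factor: $\mathbb{E}[X_SX_T]\le 1/(k-j)!$ on its own would leave an extra $k!$ in every term. What you need, and what your displayed term actually uses, is $X_SX_T\le X_S\cdot\mathbf{1}[\sigma|_{T\setminus S}\text{ has the order prescribed by }\tau]$. Since $S$ and $T\setminus S$ are disjoint, the independence from your Step 1 then gives $\mathbb{E}[X_SX_T]\le \frac{1}{k!\,(k-j)!}$.

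The rest goes through as in the paper:
\begin{itemize}
\item There are $\binom{n}{k}\binom{k}{j}\binom{n-k}{k-j}$ pairs with overlap $j$.
\item Using $\binom{k}{j}\le k^j$, $k!/(k-j)!\le k^j$, and $\binom{n-k}{k-j}/\binom{n}{k}\le (k/(n-k))^j$, the $j$-th term of $\mathrm{Var}(X)/\mu^2$ is at most $(k^3/(n-k))^j$.
\item The hypothesis $n-k\ge 2k^3$ makes this sum at most $2k^3/(n-k)$.
\item Chebyshev then gives exactly the stated bound.
\end{itemize}
Drop the refined bound and commit to the fallback with the $1/k!$ factor restored.
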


See \Cref{appendix:prelims} for a proof.

We state a similar result for sequences.

\begin{proposition}\label{prop:min-weight-sequence-P-copies}
    Let $p$ be a distribution over $[m]$ such that the minimum probability of any element is $p_{\min}$. Consider any length-$k$ sequence $P = P_1 P_2 \dots P_k$ over $[m]$, and let $p_{i}$ be the weight that $p$ assigns $P_i$ for all $i \in [k]$. Let $\sigma \sim p^{\otimes n}$. For $n = \Omega\left(\frac{k^3}{ \ep^2 p_{\min}} \right)$, $\sigma$ has $(1 \pm \ep) \binom{n}{k} \prod_{i \in [k]} p_i$ copies of $P$ with probability at least $1 - \frac{1}{k}$. Furthermore, for fixed $k$, $P$, $p$, and $\ep$, this probability is $1-o_n(1)$ as $n \to \infty$.
\end{proposition}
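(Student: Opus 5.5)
Let $X = C_{P,n}(\sigma) = \sum_{S} \mathbf{1}_S$, where $S$ ranges over the $\binom{n}{k}$ index sets $i_1<\dots<i_k$ and $\mathbf{1}_S$ indicates that $\sigma(i_a)=P_a$ for all $a$. I will use the second-moment method, as in the proof of \Cref{prop:random-permutation-P-copies}: compute $\mu=\mathbb{E}X$ and bound $\mathrm{Var}(X)/\mu^2$ by $O(k^2/(n p_{\min}))$, then apply Chebyshev.

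\textbf{Step 1 (mean).} By independence of the entries, $\mathbb{E}\mathbf{1}_S=\prod_{i} p_i=:\pi$, so $\mu=\binom{n}{k}\pi$.

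\textbf{Step 2 (covariances).} Write $\mathrm{Var}(X)=\sum_{S,T}\mathrm{Cov}(\mathbf{1}_S,\mathbf{1}_T)$. Disjoint $S,T$ contribute zero. Suppose $|S\cap T|=j\ge1$. Then $\mathbb{E}[\mathbf{1}_S\mathbf{1}_T]\le \pi^2\prod_{\text{shared positions}}(1/p_{\text{value}})\le \pi^2 p_{\min}^{-j}$. This holds because a shared index forces one value, and its probability is counted once instead of twice; if the two patterns require different values at a shared index, the term is $0$. Hence the contribution of pairs with overlap $j$ is at most
\[
\binom{n}{k}\binom{k}{j}\binom{n-k}{k-j}\pi^2 p_{\min}^{-j}.
\]
Dividing by $\mu^2$ gives the ratio
\[
r_j=\binom{k}{j}\binom{n-k}{k-j}\Big/\binom{n}{k}\cdot p_{\min}^{-j}\le \Big(\tfrac{k^2}{n-k}\Big)^j\tfrac{1}{j!}\,p_{\min}^{-j}\cdot(1+o(1)),
\]
using $\binom{n-k}{k-j}/\binom{n}{k}\lesssim (k/(n-k))^{j}\cdot\binom{n}{k}^{0}$-type estimates (the same estimate as in the permutation case). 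So $\mathrm{Var}(X)/\mu^2\le\sum_{j\ge1}(x^j/j!)\le e^{x}-1\le 2x$, where $x=O(k^2/(n p_{\min}))\le1$.

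\textbf{Step 3 (Chebyshev).} We get $\Pr[|X-\mu|>\ep\mu]\le \mathrm{Var}(X)/(\ep^2\mu^2)=O(k^2/(\ep^2 n p_{\min}))$. With $n=\Omega(k^3/(\ep^2p_{\min}))$ and a suitable constant, this is at most $1/k$. For fixed $k,P,p,\ep$ the bound is $O(1/n)=o_n(1)$.

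\textbf{Main obstacle.} The delicate step is the binomial-ratio estimate $\binom{k}{j}\binom{n-k}{k-j}/\binom{n}{k}\le (k^2/(n-k))^j/j!$, which must hold uniformly in $j$. I would verify it by writing $\binom{n-k}{k-j}/\binom{n}{k}=\frac{k!}{(k-j)!}\cdot\frac{(n-k)!\,(n-k)!}{n!\,(n-2k+j)!}$ and bounding each factor. The correlation bound $p_{\min}^{-j}$ is the only place the distribution enters; it is crude but suffices.
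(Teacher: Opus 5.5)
Your proposal is correct and is essentially the paper's proof: the same second-moment argument, with the overlap-$j$ joint-probability bound $\pi^2 p_{\min}^{-j}$, the same count $\binom{n}{k}\binom{k}{j}\binom{n-k}{k-j}$ of overlapping pairs, and Chebyshev under $n-k \ge 2k^3/(\ep^2 p_{\min})$. The only cosmetic difference is that you keep the $1/j!$ coming from $\binom{k}{j}$ and sum an exponential series, whereas the paper uses the cruder $\binom{k}{j}\le k^j$ and a geometric series; your factorization does give $\binom{k}{j}\binom{n-k}{k-j}/\binom{n}{k} \le k^{2j}/(j!\,(n-k)^j)$ exactly, with no $(1+o(1))$ factor needed.
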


See \Cref{appendix:prelims} for the proof.

We observe the following corollary for the uniform distribution.

\begin{corollary}\label{cor:random-sequence-P-copies}
    Let $\sigma \sim (Unif([m]))^{\otimes n}$ and $P$ be a length-$k$ sequence over $[m]$. For $n = \Omega\left(\frac{m k^3}{ \ep^2} \right)$, $\sigma$ has $(1 \pm \ep) \binom{n}{k} \frac{1}{m^k}$ copies of $P$ with probability at least $1 - \frac{1}{k}$.
\end{corollary}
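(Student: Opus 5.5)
This corollary is a direct specialization of \Cref{prop:min-weight-sequence-P-copies} to the uniform distribution, so I would prove it by instantiating that proposition with $p = \mathrm{Unif}([m])$.

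First, I would check the hypotheses. Every element of $[m]$ has probability $1/m$ under the uniform distribution, so $p_{\min} = 1/m$. For any length-$k$ pattern $P = P_1 \cdots P_k$ over $[m]$, each weight is $p_i = 1/m$. Hence
\[
\prod_{i\in[k]} p_i = \frac{1}{m^k}.
\]

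Next, I would translate the sample-size requirement of \Cref{prop:min-weight-sequence-P-copies}. That proposition needs $n = \Omega\bigl(k^3/(\ep^2 p_{\min})\bigr)$, and substituting $p_{\min} = 1/m$ turns this into $n = \Omega\bigl(mk^3/\ep^2\bigr)$, which is exactly the assumption of the corollary. The conclusion of the proposition then states that $\sigma \sim (\mathrm{Unif}([m]))^{\otimes n}$ has $(1\pm\ep)\binom{n}{k}\frac{1}{m^k}$ copies of $P$ with probability at least $1 - \frac1k$, which is the claim.

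I expect no real obstacle here. The only care needed is to make sure the implicit constant in the $\Omega(\cdot)$ is the same one inherited from \Cref{prop:min-weight-sequence-P-copies}, which it is, since we substitute directly into its hypothesis.
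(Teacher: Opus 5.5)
Your proposal is correct and matches the paper, which gives no separate proof and simply presents this as the uniform-distribution instance of \Cref{prop:min-weight-sequence-P-copies}. Substituting $p_{\min}=1/m$ and $\prod_{i\in[k]}p_i = 1/m^k$, as you do, is all that is needed.
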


\section{Permutation patterns}\label{sec:permutation-patterns}

In this section, we study quality control of sequences with respect to the count of permutation patterns. Our main upper bound, \Cref{thm:qc-ub}, is stated at the start of \Cref{sec:quality-control-permutation-ub} and proved at its end, after we develop its two ingredients. The matching lower bounds, \Cref{thm:any-pattern-lb-in-section-1,thm:any-pattern-lb-in-section-2}, are proved in \Cref{sec:quality-control-permutation-lb}.

For all $n \in \mathbb{N}$, let the count of $\tau$ in a length-$n$ sequence $f$ be $C_{\tau, n}(f)$. Define the permutation pattern count quality $Q_{\tau, n}$ to be $$Q_{\tau, n}(f) = \frac{C_{\tau, n}(f)}{\binom{n}{k} \cdot \frac{1}{k!}}.$$ For a distribution $p$, let $p^{\otimes n}$ be the product distribution over $n$ i.i.d. copies of $p$. 

Let $\D = \{p^{\otimes n}\}_n$ and $Q_{\tau} = \{Q_{\tau, n}\}_n$. We are interested in $(\D, Q_{\tau})$-quality control.

\subsection{Quality control algorithm}\label{sec:quality-control-permutation-ub}

We now turn to quality control algorithms, proving the following theorem.

\begin{theorem}\label{thm:qc-ub}
    Let $\tau$ be a length-$k$ permutation pattern, $\ep > 0$, and $p$ be a non-atomic distribution over $\mathbb{R}$. Define $\D = \{p^{\otimes n}\}_n$. Then, $(\D, Q_\tau)$-quality control is achievable in $O\left(\frac{k^8}{\ep^5}\log\left(\frac{k}{\ep}\right)\right)$ queries and time. 
\end{theorem}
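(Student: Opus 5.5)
The plan is to follow the two-ingredient template from the proof overview. The algorithm samples a uniformly random index set $S \subseteq [n]$ of size $N = \Theta\left(\frac{k^4}{\ep^2}\log\frac{k}{\ep}\right)$; the exact polynomial will be tuned at the end so that $N \cdot \text{poly}(1/\ep)$ matches the stated $\tilde O(k^8/\ep^5)$ bound once the jumbledness test's bookkeeping is included. It queries $\sigma|_S$, forms the induced permutation $\pi$ on $[N]$, and accepts iff $\pi$ is $(m,\beta)$-jumbled with $m = \Theta(N/k^2)$ and $\beta = \Theta(\sqrt{\log(N/m)})$. Checking jumbledness is just binning the $N$ points into the $(N/m)^2$ boxes, so it takes $O(N)$ time. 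The remaining work is to establish completeness, to show that jumbledness certifies pattern counts, and to show that the sample inherits the global count.

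\textbf{Completeness.} Since $p$ is non-atomic, $\sigma \sim p^{\otimes n}$ has distinct values almost surely, and its rank permutation is uniform. Restricting to a uniformly random $S$ therefore gives a uniform permutation of $[N]$. Each box count is hypergeometric with mean $m^2/N$ and standard deviation about $m/\sqrt N$. A hypergeometric tail bound together with a union bound over the $(N/m)^2 = O(k^4)$ boxes makes the failure probability $o(1)$ once $\beta$ is a suitable multiple of $\sqrt{\log k}$.

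\textbf{Soundness, step 1 (jumbledness certifies counts).} This is the lemma the overview calls \Cref{lem:soundness}. Fix any subpattern $\tau'$ of $\tau$ of length $\ell \le k$. Split the copies of $\tau'$ in $\pi$ according to whether their $\ell$ points lie in $\ell$ distinct coarse rows and $\ell$ distinct coarse columns. For the dominant term, first choose the row indices $r_1 < \dots < r_\ell$ and the column indices, with the column order dictated by $\tau'$. Each such choice determines one box per point, and jumbledness gives each box $(m^2/N)(1 \pm \beta\sqrt N/m)$ points. Multiplying over the $\ell$ boxes and summing over all choices recovers $\binom{N}{\ell}/\ell!$ up to a factor $(1 \pm \beta\sqrt N/m)^{\ell}$. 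With $m = \Theta(N/k^2)$ this factor is $1 \pm O(\beta k^3/\sqrt N)$, so $N \gtrsim \beta^2 k^6/\ep^2$ makes it $1\pm\ep/4$. That requirement is probably what forces the larger polynomial, and I will recalibrate $N$ accordingly. Copies that repeat a row or column number at most $\binom{\ell}{2}\cdot\frac{N}{m}\cdot(\text{box load bound})$ times a lower-order count. This is an $O(k^2 m/N) = O(1/k^0)$ fraction, so the constant in $m$ must be chosen small, namely $m \le c\ep N/k^2$, to make it an $\ep/4$ fraction. I expect this step to be the main obstacle: the row-collision and column-collision error terms have to be bounded using only box counts, since the within-box arrangement is adversarial, and the multiplicative errors have to be made uniform over all lengths $\ell\le k$.

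\textbf{Soundness, step 2 (the sample inherits the global count).} Suppose $|Q_\tau(\sigma)-1|>\ep$. I will argue by contradiction using exponentially robust quasirandomness. Let $\ell$ be the smallest length such that some subpattern $\tau'$ of length $\ell$ fails to be exponentially robustly quasirandom at sample size $N$; such an $\ell \le k$ exists, because otherwise averaging over $S$ would give $Q_\tau(\sigma)\approx1$. Consider $f(S) =$ the normalized count of $\tau'$ in $\sigma|_S$, and let $\mathcal Y$ be the set of $S$ on which all length-$(\ell-1)$ subpatterns have correct counts. By minimality of $\ell$, $\mathcal Y$ has probability $1-e^{-\alpha}$. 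On $\mathcal Y$, swapping one index changes $f$ by at most the normalized count of $(\ell-1)$-subpatterns through that index, which is $O(\ell/N)$. \Cref{thm:combes} then shows that $f(S)$ concentrates around its mean, and that mean equals the global density of $\tau'$. Since $\tau'$ fails at level $\ell$, this density is off by more than $\ep$, so with probability at least $2/3$ the sample $\pi$ has a wrong count for $\tau'$. By step 1, $\pi$ is then not jumbled, and the algorithm rejects.
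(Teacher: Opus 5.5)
Your architecture is the paper's. You sample $N$ indices and accept iff the induced permutation is $(m,\beta)$-jumbled. Completeness comes from hypergeometric tails plus a union bound over the boxes. Soundness combines jumbledness certifying every pattern of length at most $k$ with an exponentially-robust-quasirandomness induction driven by \Cref{thm:combes}. Your step~1 is \Cref{lem:soundness}; the collision terms you flagged as the main obstacle are handled there crudely, by $C_\tau(r,c)\le\binom{t}{r}\binom{t}{c}\binom{k-1}{r-1}\binom{k-1}{c-1}(\max\text{ box load})^k$.

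Step~2, however, does not close as written, for two reasons. The first is the Lipschitz constant. On $\mathcal{Y}$ you control only the \emph{total} counts of the $(\ell-1)$-subpatterns, not the number of occurrences through the swapped index. Deleting $x$ maps the occurrences of $\tau'$ through $x$ injectively into occurrences of the $\ell$ one-point deletions $\tau'_{-j}$. In units of $\binom{N}{\ell}/\ell!$, each of those counts is $\Theta(\ell^2/N)$. So a swap can change $f$ by $O(\ell^3/N)$; your $O(\ell/N)$ is only the average over indices. This is exactly why the paper's exponent is $\Omega(\beta_\ell^2 N/\ell^6)$ and its sample requirement is $\widetilde O(k^7/\ep^2)$.

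The second reason is that the parameters must depend on the level. With a single $\alpha$, $\Pr[S\notin\mathcal{Y}]$ is bounded only by (number of events defining $\mathcal{Y}$)$\cdot e^{-\alpha}\ge e^{-\alpha}$. The bound $q+2e^{-E}$ from \Cref{thm:combes} is then not below the probability ($>e^{-\alpha}$) with which $\tau'$ fails. So ``hence the global density of $\tau'$ is off'' does not follow. You need $\alpha_{\ell-1}\ge\alpha_\ell+\log(\#\text{events})+O(1)$, with $\alpha_k\gtrsim k\log k+\log(1/\ep)$. The latter is also what your averaging step and the hypothesis $q\le\delta/(2\max f)$ require.

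The paper takes $\alpha_\ell=\alpha_{\ell+1}+\log 3$. It avoids the union bound by working with the aggregate $\textsc{Count}_\ell$, so $\mathcal{Y}$ is a single event per level. With individual subpatterns, restrict $\mathcal{Y}$ to the $\ell$ deletions $\tau'_{-j}$, at cost $\log(2\ell)$ per level. A union over all length-$(\ell-1)$ subpatterns can cost up to $k\log 2$ per level, giving $\alpha$ of order $k^2$ and $N\gtrsim k^8/\ep^2$, which is still affordable.

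Also, the deviation you obtain for $\tau'$ is $\beta_\ell-2\delta=\Theta(\ep)$, not $\ep$. The jumbledness test must therefore certify to a constant fraction of $\ep$; the paper runs it with $\ep/8$. With these repairs, $N=Ck^8\ep^{-5}\log(k/\ep)$ meets every constraint.

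Incidentally, your $\beta=\Theta(\sqrt{\log(k/\ep)})$ is legitimate for completeness and smaller than the paper's $\Theta(k\sqrt{\log(k/\ep)/\ep})$. The paper's $k^8/\ep^5$ comes from plugging its larger $\beta$ into $k\beta\sqrt{N}/m\lesssim\ep$. Your choice needs only $N\gtrsim k^6\log(k/\ep)/\ep^4$ in step~1.
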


We stress the lack of dependence on $n$ in the theorem statement, as well as the polynomial dependence on $k$. The lack of dependence on $n$ is relatively straightforward to see. One may sample length-$k$ subsequences roughly $\exp(k)$ times and estimate the frequency of the pattern $\tau$. However, the dependence on $k$ is not, a priori, clear, and achieving $\text{poly}(k)$ queries and time will require new ingredients suitable for the quality control framework.  

We first define a stronger target quality control problem, which is denoted $(\D, S_{\tau})$-quality control. 

\begin{definition}\label{def:S-tau}
    For each $N$, let $S_{\tau, N} = 1 + \max_{\tau' \subseteq \tau} (|1 - Q_{\tau', N}|)$.
    Let $S_\tau = \{S_{\tau, N}\}_N$.
\end{definition}

$S_{\tau, N} \approx 1$ certifies that the frequencies of \textit{all} subpatterns of $\tau$ are occurring with the right frequency in a given instance. Clearly, since $S_{\tau} \approx 1$ implies $Q_{\tau} \approx 1$, this is a harder problem to solve. However, the ability to reject instances that violate $Q_{\tau'} \approx 1$ for some subpattern $\tau' \subseteq \tau$ gives us room to avoid some instances that may otherwise offer obstacles to testing whether $Q_{\tau} \approx 1$ with small query complexity. Indeed, in the rest of this section, we give two ingredients that allow us to prove a strengthening of \Cref{thm:qc-ub}, which even solves the $(\D, S_{\tau})$-quality control problem with the desired $\text{poly}(k)$-query and time complexity. 

To get this solution for the $(\D, S_{\tau})$-quality control problem, we combine two independent ingredients. First, we show that $(\D, S_{\tau})$-quality control can be solved in linear time (and linear query complexity) on all sequences of length at least $\text{poly}(k)$, using a notion that we develop called ``permutation jumbledness.'' Permutation jumbledness turns out to be an even stronger property than requiring $S_{\tau} \approx 1$, but it is satisfied by random sequences and is computationally easy to verify. This yields the claimed time-efficient algorithm for the $(\D, S_{\tau})$-quality control problem (see \Cref{thm:jumbled-pattern} in \Cref{sec:jumbled}).

However, this algorithm does not circumvent the need to look at $\min(n, \exp(k))$-length subsequences in order to check $S_{\tau} \approx 1$. To get to this result, in \Cref{sec:exp-rob-qr}, we show a $\text{poly}(k)$-size sample preserves the counts of length-$k$ permutation patterns (\Cref{lem:exp-qr-main}), via a notion called exponentially robust quasirandomness. \Cref{sec:permutation-alg} combines these ingredients to prove \Cref{thm:qc-ub}. The resulting algorithm even has a succinct description, given as follows. On a length-$n$ permutation:

\begin{enumerate}
    \item Pick a random subsequence of length $N = \text{poly}(k)$.
    \item Represent a permutation $\sigma$ over $[N]$ as a point set $\{(i, \sigma(i))\}_{i \in [N]} \subseteq [N] \times [N]$. Tile $[N] \times [N]$ into a $(N/m) \times (N/m)$ ``coarse grid'' of $m \times m$ boxes. 
    \item Verify if the following property (called ``permutation jumbledness'') holds: each box holds about $m^2/N$ points, with fluctuations of order $m/\sqrt{N}$.
    \item If the property holds, output \textsc{Accept}. Else output \textsc{Reject}.
\end{enumerate}

\subsubsection{Analysis: permutation jumbledness}\label{sec:jumbled}

\paragraph{Flow of this section.} This section states and proves \Cref{thm:jumbled-pattern}, the first of two ingredients contributing to \Cref{thm:qc-ub}. We first define permutation jumbledness (\Cref{def:jumbled}), a geometric regularity condition on the point set of the permutation. We then establish that it is checkable in $O(N)$ queries and time (\Cref{lem:completeness}), satisfied by random permutations with high probability (\Cref{lem:completeness}), and certifies that all subpattern counts are close to their expectations for random permutations (\Cref{lem:soundness}).

Together, these give \Cref{thm:jumbled-pattern}, an $O(N)$-time quality control algorithm with the following guarantees. On a permutation pattern $\tau$, the quality control algorithm accepts length-$N$ i.i.d. sequences with high probability. The algorithm rejects any sequence, with high probability over the algorithm's randomness, for which there exists a subpattern $\tau' \subseteq \tau$ (as in \Cref{def:subpattern}) whose count is not $\approx \binom{N}{|\tau'|} \frac{1}{(|\tau'|)!}$ in the sequence.

We now give \Cref{thm:jumbled-pattern}.

\begin{theorem}\label{thm:jumbled-pattern}
    Let $\tau$ be a length-$k$ permutation pattern and let $p$ be a non-atomic distribution over $\mathbb{R}$. Let $\D = \{p^{\otimes N}\}_N$. Then $(\D, S_\tau)$-quality control is achievable in $O(N)$ queries and $O(N)$ time, as long as $N \gtrsim \frac{k^8}{\ep^5}\log \left( \frac{k}{\ep}\right)$.
\end{theorem}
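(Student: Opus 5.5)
The algorithm reads all $N$ entries of the input $f$ and forms the induced permutation $\sigma\in S_N$; ties occur with probability zero under non-atomic $p$ and can be broken arbitrarily otherwise. It then tiles $[N]\times[N]$ into $(N/m)^2$ boxes $I_a\times I_b$ with $m=\Theta(N/k^2)$. It accepts iff $\sigma$ is $(m,\beta)$-jumbled, i.e.
\[
\bigl||\sigma(I_a)\cap I_b|-m^2/N\bigr|\le \beta m/\sqrt N \quad\text{for all } a,b,
\]
with $\beta=\Theta(\sqrt{\log(k/\ep)})$ up to a constant. The test takes $O(N)$ queries and time: bucket each point and compare the $(N/m)^2=O(k^4)\le N$ counts. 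The proof then has the three parts announced in the flow paragraph: efficiency, completeness, and soundness (\Cref{lem:soundness}).

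\textbf{Completeness.} Under $p^{\otimes N}$ the induced $\sigma$ is uniform. Each box count is hypergeometric: $m$ draws from a population of $N$ containing $m$ marked elements. Its mean is $m^2/N$ and its standard deviation is at most $m/\sqrt N$. A hypergeometric (Hoeffding/Bernstein) tail gives failure probability $2e^{-c\beta^2}$ per box. A union bound over $O(k^4)$ boxes gives total failure probability $o(1)$ once $\beta^2\gtrsim\log(k/\ep)$, or $\log$ of the desired confidence.

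\textbf{Soundness.} Fix a subpattern $\tau'$ of length $\ell\le k$; we must show jumbledness forces $C_{\tau'}(\sigma)=(1\pm\ep)\binom N\ell/\ell!$. Split the copies of $\tau'$ into two classes.

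The \emph{spread} copies are those whose $\ell$ points lie in $\ell$ distinct column-blocks and $\ell$ distinct row-blocks. For a choice of column blocks $a_1<\dots<a_\ell$ and row blocks $b_{\tau'(1)}<\dots<b_{\tau'(\ell)}$ ordered consistently with $\tau'$, every choice of one point from each box $I_{a_j}\times I_{b_j}$ is automatically a copy. The spread count is therefore exactly
\[
\sum_{a,b}\prod_{j} n_{a_j b_j},
\]
where $n_{ab}$ is the box count. Jumbledness gives $n_{ab}=(m^2/N)(1\pm\beta\sqrt N/m)$. So each product equals $(m^2/N)^\ell(1\pm\beta\sqrt N/m)^\ell$, which is $(m^2/N)^\ell(1\pm 2\ell\beta\sqrt N/m)$ provided $\ell\beta\sqrt N/m\ll1$. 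The number of block choices is $\binom{N/m}{\ell}^2$. Combining, the spread count is
\[
\binom{N/m}{\ell}^2 (m^2/N)^\ell (1\pm\cdot)\approx \frac{N^\ell}{(\ell!)^2}\Bigl(1-O\bigl(\tfrac{\ell^2 m}{N}\bigr)\Bigr),
\]
versus the target $\binom N\ell/\ell!\approx N^\ell/(\ell!)^2\,(1-O(\ell^2/N))$.

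The \emph{collision} copies are those with two points sharing a column block or a row block. We upper bound them using only the upper side of jumbledness: each box holds at most $2m^2/N$ points, so each row strip and each column strip holds exactly $m$ points. The fraction of $\ell$-subsets of $[N]$ with two indices in the same column block is at most $\binom{\ell}{2}m/N$. For row collisions, we bound the number of $\ell$-subsets with two values in the same row block, intersected with consistency with $\tau'$. I would handle this by conditioning on which pair collides and bounding the remaining $\ell-2$ points by the spread-type product estimate. That should give a collision contribution of at most $O(\ell^2 m/N)\cdot\binom N\ell/\ell!$.

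\textbf{Parameter choice.} Both error terms must be at most $\ep$:
\[
\ell^2m/N\lesssim\ep \qquad\text{and}\qquad \ell\beta\sqrt N/m\lesssim\ep.
\]
The first gives $m\approx\ep N/k^2$. Substituting into the second gives $k^3\beta/(\ep^2\sqrt N)\lesssim\ep$, i.e. $N\gtrsim k^6\beta^2/\ep^6$. The stated $k^8\ep^{-5}\log$ bound suggests the authors balance the terms slightly differently, perhaps with an extra $k$ from the sum over $\ell$ or from the collision bound. I would tune $m$ and $\beta$ at the end to match $N\gtrsim k^8\ep^{-5}\log(k/\ep)$.

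\textbf{Main obstacle.} I expect the collision term to be the main obstacle. Bounding it carefully requires, for copies sharing a row or column block, a product bound that is not circular. My plan is to proceed by induction on the number of distinct blocks used, or to crudely bound collision copies by the total number of "almost-spread" configurations times the maximum box occupancy $2m^2/N$. In every case I would use only upper bounds on box counts, so that the lower-order term stays $O(k^2 m/N)$ relative to the main term.
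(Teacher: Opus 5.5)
Your outline matches the paper's proof. It uses the same jumbledness test at scale $m=\Theta(\ep N/k^2)$ and the same completeness argument (hypergeometric tail plus union bound). Soundness likewise splits the copies of each $\tau'$ into those occupying $\ell$ distinct row and column blocks, which form an exact sum of products of box counts pinned down by jumbledness, plus a lower-order remainder. The gap is that remainder. You flag it as the main obstacle but do not bound it, and the tools you list do not close it:
\begin{itemize}
\item The estimate $\binom{\ell}{2}m/N$ counts $\ell$-subsets with a column collision, not copies of $\tau'$, so it is off by a factor of $\ell!$.
\item If you bound every box by $2m^2/N$, you lose $2^\ell$ against the main term $(m^2/N)^\ell$. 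Absorbing that loss would force $N/m\gtrsim 2^k k^2/\ep$, i.e.\ exponential $N$.
\item Conditioning on one colliding pair still leaves the other $\ell-2$ points uncontrolled, and they may collide too.
\end{itemize}

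The missing idea, which is how \Cref{lem:soundness} is proved, is to classify copies by the numbers $r,c$ of distinct row and column blocks they use. Fix the $r$ row blocks, the $c$ column blocks, and how many points fall in each, i.e.\ compositions of $\ell$ into $r$ and into $c$ positive parts. Then every point's box is determined: points fill column blocks in position order and row blocks in the value order dictated by $\tau'$. With $t=N/m$ this gives
\[ C(r,c)\le\binom{t}{r}\binom{t}{c}\binom{\ell-1}{r-1}\binom{\ell-1}{c-1}\Bigl(\frac{m^2}{N}\Bigr)^{\ell}\Bigl(1+\beta\frac{\sqrt N}{m}\Bigr)^{\ell}. \]
The last factor is at most $e$, because you already need $\ell\beta\sqrt N/m\le 1$ for the main term; this is why you must use the sharp upper side of jumbledness rather than $2m^2/N$. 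Dividing by the main term gives $C(r,c)/C(\ell,\ell)=O\bigl((2\ell^2m/N)^{(\ell-r)+(\ell-c)}\bigr)$. The double geometric sum then yields a relative error of $O(\ell^2m/N)$, with no circularity and with any number of collisions handled at once.

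On parameters, your choice $\beta=\Theta(\sqrt{\log(kN/\ep)})$ is legitimate under two conditions. First, use a variance-sensitive tail (Bernstein, or the multiplicative Chernoff bound in \Cref{lem:completeness}), which gives $2e^{-c\beta^2}$ per box; plain Hoeffding only gives $e^{-2\beta^2m/N}$. Second, $\beta^2$ must include $\log N$ or $\log(1/\delta)$, so that the completeness error is $o(1)$ in the runtime. The statement's $k^8/\ep^5$ comes from the paper's larger choice $\beta\asymp k\sqrt{\log(k/\ep)/\ep}$, not from the collision term. You also have a slip: $k\beta\sqrt N/m=k^3\beta/(a\ep\sqrt N)$, not $k^3\beta/(\ep^2\sqrt N)$. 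With your $\beta$ and this correction you need only $N\gtrsim k^6\log(k/\ep)/\ep^4$. The theorem's hypothesis implies this, so no re-tuning is needed.
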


This theorem is a partial step towards proving \Cref{thm:qc-ub}. Observe that it is a quality control algorithm with a similar guarantee to that statement, with a stronger quality function. However, it runs in linear time in the length of the sequence, which is not good enough for our claimed final result. As we will see in \Cref{sec:exp-rob-qr}, however, we will run the algorithm of \Cref{thm:jumbled-pattern} on a $\text{poly}(k)$-length subsequence of the original sequence we want to perform quality control on.

That is, this theorem performs quality control on the count of \textit{all subpatterns of $\tau$}. As we reduce to counting these subpatterns over a sequence of length $N = \text{poly}(k)$, on such a smaller sequence we need to consider quality control for this stronger property.

\paragraph{Defining permutation jumbledness.} We begin by defining ``permutation jumbledness.'' 

\begin{definition}[Permutation jumbledness]\label{def:jumbled}
   Consider a permutation $\sigma$ over $[N]$. Specifically, let $t$ be a positive integer and let $m=N/t$. Let $\ell_1,\ldots,\ell_{t} \in \{\lfloor m\rfloor, \lceil m\rceil\}$ such that $\sum_{j=1}^t \ell_j = N$ and $\ell_j \geq \ell_{j+1}$. Now let $I_j = [1+ \sum_{r=1}^{j-1} \ell_r, \sum_{r=1}^{j} \ell_r]$.   We say that a permutation $\sigma$ is $(m, \beta)$-jumbled if for all $i, j \in [t]$, 
   $$\left|\left|\sigma(I_i) \cap I_j \right| - \frac{|I_i||I_j|}{N}\right| \leq \beta \frac{m}{\sqrt{N}}.$$
\end{definition}

\begin{figure}[h]
    \centering
    \input{Visuals/jumbled-figure}
    \caption{A random permutation $\sigma: [N]\to[N]$ is jumbled (\Cref{def:jumbled}) with high probability, meaning in a partition of $[N] \times [N]$ into $m$-by-$m$ squares, there are roughly $m^2/N$ points $(i, \sigma(i))$. Suppose we want to count the occurrences of the length-$k$ increasing pattern. As this figure portrays, look at any sequence of $k$ increasing blocks. Most occurrences of the pattern will occur in these sequences of increasing blocks, and an analogous statement can be made for general length-$k$ permutation patterns.}
    \label{fig:jumbled-figure}
\end{figure}

Intuitively, permutation jumbledness can be pictured as follows: View a permutation $\sigma:[N]\to[N]$ as the collection of points of the form $(i, \sigma(i)) \in [N] \times [N]$. Jumbledness considers a coarsening of $[N] \times [N]$ into $(N/m)^2$ blocks each of side length roughly $m$. In a random permutation $\sigma$, one would expect $m^2/N$ points to fall in each block. Jumbledness requires this to hold for each block (up to some small error). The fluctuation in the number of points is around $m/\sqrt{N}$, which is also what is expected from random permutations. Thus, permutation jumbledness characterizes the property that a random permutation sends elements $i$ to seemingly random elements $\sigma(i)$, and it captures both the expectation and fluctuations that a random permutation has in the counts of elements in the coarse grid.

\paragraph{Use of permutation jumbledness.} We prove that, on a length-$N$ sequence for $N = \text{poly}(k)$, permutation jumbledness can be checked in $O(N)$ queries and time. We also prove that permutation jumbledness implies all patterns of length $k$ are in $(1 \pm \ep) \binom{N}{k} \cdot \frac{1}{k!}$. This yields \Cref{thm:jumbled-pattern}.

The algorithm (\Cref{figure:jumbled-pattern}) of \Cref{thm:jumbled-pattern} will be used as a sub-routine in our final algorithm. We will reduce performing quality control of a length-$n$ sequence with respect to a length-$k$ pattern count to checking the pattern's count on a subsequence of length $N = \text{poly}(k)$, then utilize permutation jumbledness to check the pattern's count.

\paragraph{Proving \Cref{thm:jumbled-pattern}.} To prove \Cref{thm:jumbled-pattern}, we construct an algorithm that checks permutation jumbledness for some appropriate $m$ and $\beta$.

\begin{figure}[h]
\centering
\fbox{
\parbox{\dimexpr0.95\linewidth-2\fboxsep-2\fboxrule\relax}{
\textbf{Algorithm Jumbled-Count:} \\
On inputs: A length-$N$ permutation $\sigma$, a length-$k$ pattern $\tau$, parameter $\ep \in (0, 1)$, $N \gtrsim \frac{k^8}{\ep^5}\log \left( \frac{k}{\ep}\right)$
\begin{enumerate}
    \item Let $t = \left\lceil \frac{k^2}{a \ep} \right\rceil$ for a sufficiently small absolute constant $a$, and let $m = N/t$ be the typical interval length (so $m = \Theta(\ep N / k^2)$). Let
    $$\beta = b \sqrt{\frac{k^2}{\ep}\left( d \log k + \log \left( \frac{N}{m}\right) + \log N\right)}$$
    for a sufficiently large absolute constant $b$ and constant $d$.
    \item Partition $[N]$ into $t$ equitable intervals, so that each is of length $\lfloor N/t \rfloor$ or $\lceil N/t \rceil$. Call these intervals $I_1, I_2, \dots, I_t$.
    \item For each pair of intervals $I_i, I_j$, store a counter $\textsc{Count}_{ij}$, initialized to $0$.
    \item For each index $x \in [N]$, find the pair $(i, j)$ such that $(x, \sigma(x)) \in I_i \times I_j$ and increment $\textsc{Count}_{ij}$ by $1$.
    \item For each rectangle $I_i \times I_j$ in the resulting $t \times t$ grid over $[N] \times [N]$:
    \begin{enumerate}
        \item (Observe $\textsc{Count}_{ij} = \left|\sigma(I_i) \cap I_j \right|$.)
        \item If $$\left|\textsc{Count}_{ij} - \frac{|I_i||I_j|}{N}\right| > \beta \frac{m}{\sqrt{N}},$$
        output \textsc{Reject}. Else, proceed.
    \end{enumerate}
    \item Output \textsc{Accept}.
\end{enumerate}
}
}
\caption{Quality control algorithm for verifying whether a permutation pattern $\tau$'s count in $\sigma$ is in $(1 \pm \ep) \binom{N}{k}/k!$. This algorithm checks whether a permutation is jumbled (\Cref{def:jumbled}). By \Cref{lem:soundness}, if the permutation is jumbled, it has the correct count. Additionally, a random permutation is jumbled with high probability by \Cref{lem:completeness}.}
\label{figure:jumbled-pattern}
\end{figure}

For completeness (accepting random permutations), we first show that a random permutation is jumbled with high probability.

\begin{lemma}\label{lem:completeness}
    A random permutation $\sigma$ over $[N]$ is $(m, \beta)$-jumbled with probability at least $1 - O\left(\frac{1}{k^dN}\right)$, for $m \geq a \ep N/k^2$ and $\beta \geq b \sqrt{\frac{k^2}{\ep} \left( d \log k + \log \left( \frac{N}{m}\right) + \log N\right)}$ for sufficiently small absolute constant $a$ and sufficiently large absolute constant $b$.
\end{lemma}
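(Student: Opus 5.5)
The plan is a hypergeometric tail bound applied to each grid cell, followed by a union bound over the $t^2 = (N/m)^2$ cells.

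\emph{Single cell.} Fix $i,j \in [t]$. For a uniformly random permutation $\sigma$ of $[N]$, the image $\sigma(I_i)$ is a uniformly random subset of $[N]$ of size $|I_i|$. Hence
\[
X_{ij} := |\sigma(I_i)\cap I_j|
\]
is hypergeometric: $|I_i|$ draws without replacement from a population of $N$ containing $|I_j|$ marked elements. Its mean is $|I_i||I_j|/N \approx m^2/N$, and its variance is at most $|I_i||I_j|/N \le 2m^2/N$. Two tail bounds are available.
\begin{enumerate}
\item Hoeffding's inequality for sampling without replacement, which dominates the binomial case, gives
\[
\Pr\big[|X_{ij}-\mathbb{E}X_{ij}| \ge \lambda\big] \le 2\exp\!\big(-2\lambda^2/|I_i|\big).
\]
With $\lambda = \beta m/\sqrt N$ and $|I_i|\approx m$, the exponent is $-2\beta^2 m/N = -2\beta^2/t$. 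This would need $\beta^2 \gtrsim t\log(\cdot)$, which matches the stated form, since $t = N/m \le k^2/(a\ep)$.
\item A Bernstein or Chernoff bound using the variance $\approx m^2/N$ gives the better exponent $\lambda^2/(2(\mathrm{Var} + \lambda/3))$. This needs only $\beta^2 \gtrsim \log(\cdot)$, as long as $\beta m/\sqrt N$ does not exceed the variance scale; in the large-deviation regime the exponent is still of order $\beta m/\sqrt N$.
\end{enumerate}
The stated $\beta$ already contains the factor $k^2/\ep \ge a\,t$ (using $m \ge a\ep N/k^2$), so the crude Hoeffding route suffices, and I would use it. It gives, per cell,
\[
\Pr[\text{cell } (i,j) \text{ fails}] \le 2\exp(-2\beta^2 m/N) \le 2\exp\!\Big(-2b^2 a\,\big(d\log k + \log(N/m) + \log N\big)\Big),
\]
where I substitute $\beta^2 = b^2 (k^2/\ep)(\cdots)$ and $m/N \ge a\ep/k^2$.

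\emph{Union bound.} Taking $b$ large enough that $2ab^2 \ge 2$ (one can first fix $a$, then $b$ depending on $a$, which is consistent with ``sufficiently small $a$, sufficiently large $b$''), each cell fails with probability at most
\[
2\,k^{-d}\,(m/N)^{2}\,N^{-1}.
\]
A union bound over the $t^2 = (N/m)^2$ cells then gives total failure probability $O(1/(k^d N))$, as claimed. The $\log(N/m)$ term in $\beta$ is exactly what pays for the $(N/m)^2$ cells, the $\log N$ term gives the $1/N$ factor, and $d\log k$ gives the $k^{-d}$ factor.

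\emph{Main obstacle.} The only delicate point is the bookkeeping, not the probability. Interval lengths are $\lfloor m\rfloor$ or $\lceil m\rceil$, so $|I_i| \le m+1 \le 2m$; this loses only a constant, which is absorbed into $b$. The hypothesis is the inequality $m \ge a\ep N/k^2$ rather than an equality, so I must use it in the correct direction: $m/N \ge a\ep/k^2$ is what makes $\beta^2 m/N$ large. Finally, $b$ must be chosen after $a$. If one wanted the sharper $\beta$ with no $k^2/\ep$ factor, Bernstein would be needed, and the regime $\lambda \gg \mathrm{Var}$ would require care. For the stated bound this is unnecessary.
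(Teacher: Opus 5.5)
Your proof is correct. It shares the paper's skeleton: $|\sigma(I_i)\cap I_j|$ is hypergeometric with mean $|I_i||I_j|/N$, followed by a union bound over the $(N/m)^2$ cells. Where you differ is the tail inequality, and that choice has real consequences in both directions.

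The paper uses the multiplicative Chernoff bound $2\exp(-u^2/(3\mu_{i,j}))$ with $u=\beta m/\sqrt{N}$ and $\mu_{i,j}\approx m^2/N$. This gives an exponent of order $\beta^2$ with no dependence on $t=N/m$. So $b$ need not depend on $a$, and the factor $k^2/\ep$ inside $\beta$ is pure slack for completeness. The price is that this bound is stated only for $u\le\mu_{i,j}$, i.e.\ roughly $\beta\le m/\sqrt{N}$. The lemma's hypotheses (lower bounds on $m$ and $\beta$, none on $N$) do not guarantee this; the paper secures it by appealing to the parameter range of \Cref{lem:soundness}, where $k\beta\sqrt{N}/m\le 1$.

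Your Hoeffding bound for sampling without replacement has exponent $2\beta^2 m/N$, a factor $t$ weaker. You recover that factor exactly from the $k^2/\ep$ in $\beta$ together with $m\ge a\ep N/k^2$. This requires $ab^2\gtrsim 1$, which is harmless: $b$ can be fixed after $a$, and the constant $C$ in \Cref{lem:soundness} is chosen after both. In exchange, your bound holds for every deviation size and every $N$. It therefore proves the lemma exactly as stated, with no implicit lower bound on $N$.
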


\begin{proof}
    We first observe that, for a random permutation $\sigma$ and for any interval $I_a$ of length $m' \in \{\lfloor m\rfloor,\lceil m\rceil\}$, $\sigma(I_a)$ is a uniform random subset of $[N]$ of size $m'$. This follows since, under a random permutation $\sigma$, each set of size $m'$ equals $\sigma(I_a)$ with equal probability. Thus, for every $J\in\binom{[N]}{m'}$, the probability that $J=\sigma(I_a)$ is $1/\binom{N}{m'}$.

    Since $\sigma(I_i)$ is a uniformly random $|I_i|$-subset of $[N]$, $S_{i,j}=|\sigma(I_i)\cap I_j|$ has a hypergeometric distribution with expectation
    $$\mu_{i,j}:=\frac{|I_i||I_j|}{N}.$$
    A Chernoff bound for the hypergeometric distribution gives, for every $0\leq u\leq \mu_{i,j}$,
    $$\mathbb{P}\left[|S_{i,j}-\mu_{i,j}|\geq u\right]\leq 2\exp\left(-\frac{u^2}{3\mu_{i,j}}\right).$$
    Taking $u=\beta m/\sqrt{N}$, and using $|I_i|,|I_j|=m\pm O(1)$, we obtain, for the parameter range used below and after adjusting the absolute constant,
    $$\mathbb{P}\left[\left|\left|\sigma(I_i) \cap I_j \right| - \frac{|I_i||I_j|}{N}\right| \geq \beta \frac{m}{\sqrt{N}}\right] \leq 2 \exp\left( - c\beta^2\right)$$
    for an absolute constant $c>0$.
    Now take a union bound over all pairs. Since $t=N/m$,
    $$\mathbb{P}_{\sigma}\left[ \sigma \text{ is not $(m, \beta)$-jumbled}\right]$$
    $$\leq \mathbb{P}_{\sigma}\left[ \exists i, j \in [t] ~ : ~ \left|\left|\sigma(I_i) \cap I_j \right| - \frac{|I_i||I_j|}{N}\right| \geq \beta \frac{m}{\sqrt{N}}\right]$$
    $$\leq 2 \left(\frac{N}{m} \right)^2 \exp\left( - c\beta^2\right).$$
    This is at most $1/(k^dN)$ whenever
    $$c\beta^2\geq d\log k+2\log\left(\frac{N}{m}\right)+\log N+\log 2.$$
    The stated choice
    $$\beta \geq b \sqrt{\frac{k^2}{\ep} \left( d \log k + \log \left( \frac{N}{m}\right) + \log N\right)}$$
    satisfies this inequality for a sufficiently large absolute constant $b$. In particular, the failure probability is $o(1)$ as the runtime $O(N)$ tends to infinity.
\end{proof}

Next we show that jumbled permutations have $(1 \pm \ep) \binom{n}{k}/k!$ copies of length-$k$ permutations. This is the main lemma that allows us to argue soundness (rejecting permutations for which the count of permutation pattern $\tau$ is not in $(1 \pm \ep) \binom{n}{k}/k!$).

\begin{lemma}\label{lem:soundness}
    Suppose that a permutation $\sigma$ over $[N]$ is $(m, \beta)$-jumbled for $m = a \ep N/k^2$  and $\beta = b \sqrt{\frac{k^2}{\ep} \left( d \log k + \log \left( \frac{N}{m}\right) + \log N\right)}$ and for sufficiently small absolute constant $a$ and sufficiently large absolute constant $b$. Suppose $N \geq C \frac{k^8}{\ep^5}\left( d\log k + \log \left( \frac{k}{\ep}\right)\right)$ for a sufficiently large absolute constant $C$ (which depends on $a$ and $b$). Then, the number of copies of $\tau$ in $\sigma$ is in $(1 \pm \ep) \binom{N}{k}/k!$.
\end{lemma}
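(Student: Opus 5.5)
Following the intuition behind \Cref{fig:jumbled-figure}, we sort the copies of $\tau$ by which cells of the coarse grid they occupy. Write $t=N/m$ and, for a copy $i_1<\dots<i_k$, record its row indices $r_1\le\dots\le r_k$ (with $i_a\in I_{r_a}$) and its column indices $c_a$ (with $\sigma(i_a)\in I_{c_a}$). We call a copy \emph{spread} if the $r_a$ are pairwise distinct and the $c_a$ are pairwise distinct; otherwise it is \emph{degenerate}.

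\textbf{Spread copies.} For a spread copy, the relative order of the values is determined by the columns: $\sigma(i_a)<\sigma(i_b)$ iff $c_a<c_b$. So a spread $k$-tuple is a copy of $\tau$ exactly when $r_1<\dots<r_k$ and the column vector $(c_1,\dots,c_k)$ is order-isomorphic to $\tau$. Fix such a pair of row and column sequences. The number of spread copies using these cells is then $\prod_{a}|\sigma(I_{r_a})\cap I_{c_a}|$. Jumbledness gives each factor as $\frac{m^2}{N}(1\pm \eta)$ with $\eta=\beta\sqrt N/m\pm O(1/m)$. The product is therefore $(m^2/N)^k(1\pm O(k\eta))$ provided $k\eta\ll\ep$.

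The number of admissible cell pairs is $\binom{t}{k}^2$: choose $k$ rows in increasing order, choose a $k$-set of columns, and assign the columns in the order dictated by $\tau$. The spread total is therefore
\[
\binom{t}{k}^2\left(\tfrac{m^2}{N}\right)^k(1\pm O(k\eta)).
\]
Since $t^2m^2/N=N$, this equals
\[
\frac{N^k}{(k!)^2}\cdot\frac{\binom tk^2 (k!)^2}{t^{2k}}(1\pm O(k\eta)).
\]
Next, $\binom tk k!/t^k=\prod_{j<k}(1-j/t)=1-O(k^2/t)=1-O(a\ep)$, which explains the choice $t=\Theta(k^2/\ep)$. Also $\binom Nk k!/N^k=1-O(k^2/N)$. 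Hence the spread count is $(1\pm \ep/2)\binom Nk/k!$, once $a$ is small and $k\eta\le c\ep$.

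\textbf{Degenerate copies.} These need only an upper bound of $(\ep/4)\binom Nk/k!$. We bound them crudely, ignoring the pattern constraint except through counting. Group copies by their multisets of rows and columns. If two indices share a row, the row sequence takes at most $\binom{t}{k-1}$ values with multiplicity patterns, costing a factor of about $k^2/t$ relative to $\binom tk$. The column side is the same.

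The key issue is that for a degenerate placement the pattern is no longer forced by the cells, so we may lose the $1/k!$ gained from column ordering. We handle this by counting only the ordering within coincident cells. Suppose the colliding indices form groups of sizes $g_1,g_2,\dots$ inside a row block or column block. Then, up to the within-group orderings, the column assignment is still forced. The within-group orderings cost at most $\prod g_i!$, while the combinatorial gain is roughly $\prod t^{-(g_i-1)}$ times the corresponding factorial from $\binom tk$.

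Summing over collision structures, the degenerate mass should be at most $\sum_{j\ge1}(O(k^2/t))^j\cdot(1+\eta)^k$ times the main term. This is $O(a\ep)$. Jumbledness with $\eta\le1$ suffices here, because cell counts are at most $2m^2/N$.

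\textbf{Checking the parameters.} It remains to verify $k\eta\le c\ep$. We have $\eta=\beta\sqrt N/m=\beta k^2/(a\ep\sqrt N)$ and $\beta^2=\Theta((k^2/\ep)\log(Nk/\ep))$. So $k^2\eta^2=O(k^8\log/(\ep^3N))$, and the condition $k^2\eta^2\le c^2\ep^2$ becomes $N\gtrsim k^8\log(k/\ep)/\ep^5$, matching the hypothesis.

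\textbf{Main obstacle.} I expect the degenerate bound to be the hardest step. Within a single cell the relative order of the points is uncontrolled: jumbledness says nothing there, and a cell could contain a monotone run. So we cannot assume these copies are $\tau$-shaped with frequency $1/g!$. The plan is to bound them by the total number of $k$-tuples with the given cell multiset, with no pattern restriction inside groups. We then verify that the $1/g!$ lost inside a group of size $g$ is paid for by the $t^{-(g-1)}$ saving in cell choices, since $g!\,t^{-(g-1)}\le (k/t)^{g-1}$. This should make the series converge geometrically with ratio $O(k^2/t)=O(a\ep)$.
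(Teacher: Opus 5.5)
Your proposal is essentially the paper's proof. The paper's $C_\tau(k,k)$ is your spread count, computed from jumbledness over $\binom{N/m}{k}^2$ cell choices, and its bound $C_\tau(r,c)\le\binom{N/m}{r}\binom{N/m}{c}\binom{k-1}{r-1}\binom{k-1}{c-1}\bigl(\frac{m^2}{N}+\beta\frac{m}{\sqrt N}\bigr)^k$, which gives decay $\bigl(\frac{2mk^2}{N}\bigr)^{(k-r)+(k-c)}$ relative to $C_\tau(k,k)$, is your geometric bound on degenerate copies.

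Two corrections to your degenerate step:
\begin{enumerate}
\item The ``within-group orderings'' are not actually lost, so the obstacle you flag does not arise. The row multiset, read in position order, fixes each point's row. The column multiset, read in value order and transported by $\tau$, fixes each point's column. So the product of cell counts over the $k$ positions already bounds the copies, with no $\prod g_i!$ factor.
\item Your aside that $\eta\le 1$ suffices is wrong. The crude bound carries a factor $(1+\eta)^k$, which can be as large as $2^k$ and would swamp the $O(k^2/t)$ gain. It is $O(1)$ only when $k\eta=O(1)$. Your main-term condition $k\eta\le c\ep$ already supplies this, which is how the paper handles it via $\beta\sqrt N/m\le 1/(3k)$.
\end{enumerate}
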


\begin{proof}
    We count the number of copies of $\tau$ in $\sigma$ by considering various cases for where the copies can lie relative to the grid considered by $(m, \beta)$-jumbledness.

    We consider all choices of how many rows and columns the copy of $\tau$ may be across. Let $r, c \in \{1, 2, \dots, k\}$. Let $C_\tau(r, c)$ be the number of copies of $\tau$ across any choices of $r$ rows and $c$ columns.

    We begin by bounding $C_\tau(k, k)$. By the definition of jumbledness, 

    $$C_\tau(k, k) \in \binom{N/m}{k}^2 \left( \frac{m^2}{N} \pm \beta \frac{m}{\sqrt{N}}\right)^k = \binom{N/m}{k}^2 \left( \frac{m^2}{N}\right)^k \left( 1 \pm \beta \frac{\sqrt{N}}{m}\right)^k.$$
    This is in:
    $$\in \binom{N/m}{k}^2 \left( \frac{m^2}{N}\right)^k \left( 1 \pm 2k \beta \frac{\sqrt{N}}{m}\right)$$
    when $$k \beta \frac{\sqrt{N}}{m} \leq 1,$$
    using that $(1-x)^k \geq 1 - kx$ and $(1+x)^k \leq 1 + 2kx$ for $x = \beta\frac{\sqrt{N}}{m}$ with $kx \leq 1$.
    Plugging in $\beta$, this tells us that we need
    $$k b \sqrt{\frac{k^2}{\ep} \left( d \log k + \log \left( \frac{N}{m}\right) + \log N\right)} \frac{\sqrt{N}}{m} \leq 1.$$
    We have set $m, N$ so that this holds.

    Next, we prove that $C_\tau(r, c)$ decays as $r$ and $c$ decrease. This will tell us that $C_\tau(k, k)$ is the maximum and that all $C_\tau(r< k, c< k)$ will contribute negligibly to the total count of $\tau$ in the permutation. Concretely, we prove
    \begin{equation}\label{eq:decreasing}
        C_\tau(r, c) \leq 3 \cdot \left(\frac{2mk^2}{N}\right)^{(k-r) + (k-c)} \cdot C_\tau(k, k).
    \end{equation}

    To show this, first we write out $C_\tau(r, c)$ via our jumbledness assumption: $$C_\tau(r, c) \leq \binom{N/m}{r}\binom{N/m}{c} \binom{k-1}{r-1} \binom{k-1}{c-1} \left( \frac{m^2}{N} + \beta \frac{m}{\sqrt{N}}\right)^k $$ $$= \binom{N/m}{r}\binom{N/m}{c} \binom{k-1}{r-1} \binom{k-1}{c-1} \left( \frac{m^2}{N}\right)^k \left( 1 + \beta \frac{\sqrt{N}}{m}\right)^k.$$

    We compare this to $C_\tau(k, k)$:
    $$\frac{C_\tau(r, c)}{C_\tau(k, k)} \leq \frac{\binom{N/m}{r}\binom{N/m}{c} \binom{k-1}{r-1} \binom{k-1}{c-1} \left( \frac{m^2}{N}\right)^k \left( 1 + \beta \frac{\sqrt{N}}{m}\right)^k}{\binom{N/m}{k}^2 \left( \frac{m^2}{N}\right)^k \left( 1 - \beta \frac{\sqrt{N}}{m}\right)^k}$$
    $$\leq \left(\frac{k^2}{N/m - k + 1} \right)^{(k-r) + (k-c)}\left(\frac{1 + \beta \frac{\sqrt{N}}{m}}{1 - \beta \frac{\sqrt{N}}{m}} \right)^k.$$
    When $N/m \geq 2k$ (for the first term in the product) and $\beta \sqrt{N}/m \leq 1/(3k)$ (for the second term), this is:
    $$\leq \left( \frac{2mk^2}{N}\right)^{(k-r) + (k-c)} \cdot 3,$$
    where we used $\left(\frac{1 + \beta\sqrt{N}/m}{1 - \beta\sqrt{N}/m}\right)^k \leq (1 + 3\beta\sqrt{N}/m)^k \leq (1 + 1/k)^k \leq e \leq 3$.

    \Cref{eq:decreasing} then implies that
    $$\sum_{(r, c) \neq (k, k)} C_\tau(r, c) \leq 3 C_\tau(k, k)\left( \frac{ 1}{(1 - \frac{2mk^2}{N})^2} - 1\right) \leq 3 \cdot 3 \frac{2mk^2}{N} C_\tau(k, k) = \frac{18mk^2}{N}C_\tau(k, k).$$
    When $mk^2/N \leq \delta/36$, this is at most
    $$\leq \delta C_\tau(k, k).$$

    Putting all of these steps together, we find that the total number of copies of $\tau$ in $\sigma$ is:
    $$\sum_{r = 1}^k \sum_{c = 1}^k C_\tau(r, c) \in [C_\tau(k, k), (1 + \delta) C_\tau(k, k)]$$
    $$\in \binom{N/m}{k}^2 \left( \frac{m^2}{N}\right)^k \left( 1 \pm 2k \beta \frac{\sqrt{N}}{m}\right) (1 + \delta).$$

    As long as $2k \beta \sqrt{N}/m \leq \gamma/3$ and $\delta \leq \gamma/3$, this is 
    $$\in (1 \pm \gamma) \binom{N/m}{k}^2 \left( \frac{m^2}{N}\right)^k.$$
    For $\gamma = \ep/2$ and $m \leq a \ep N/k^2$ for some $a$, we now prove that this is in:
    $$\in (1 \pm \ep) \binom{N}{k} \frac{1}{k!}.$$
    We use that $\binom{x}{y} = \frac{x^y}{y!} \prod_{i = 0}^{y - 1}\left( 1 - \frac{i}{x}\right)$:
    $$ (1 \pm \gamma) \binom{N/m}{k}^2 \left( \frac{m^2}{N}\right)^k = (1 \pm \gamma) \frac{(N/m)^{2k}}{(k!)^2} \prod_{i = 0}^{k - 1}\left( 1 - \frac{i m}{N}\right)^2 \left( \frac{m^2}{N}\right)^k$$
    $$= (1 \pm \gamma) \frac{N^k}{(k!)^2} \prod_{i = 0}^{k - 1}\left( 1 - \frac{i m}{N}\right)^2.$$
    When $m \leq a \frac{\ep N}{2 k^2}$ for small enough $a$ and $\gamma = \ep/2$, this is
    $$\in (1 \pm \ep) \prod_{i = 1}^{k-1} \left(1 - \frac{i}{N} \right) = (1 \pm \ep) \binom{N}{k} \frac{1}{k!}.$$

    To see this, observe:
    $$1 \geq \prod_{i = 0}^{k - 1}\left( 1 - \frac{i m}{N}\right)^2 \geq \left( 1 - \frac{km}{N}\right)^{2k}\geq 1 - \frac{2k^2 m}{N} \geq 1 - 2a \ep.$$
    A similar calculation works for $\prod_{i = 1}^{k-1} \left(1 - \frac{i}{N} \right)$, and thus for small enough $a$,
    $$ = (1 \pm O(a \ep)) \prod_{i = 1}^{k-1} \left(1 - \frac{i}{N} \right) \in (1 \pm \ep) \prod_{i = 1}^{k-1} \left(1 - \frac{i}{N} \right) = (1 \pm \ep) \binom{N}{k} \frac{1}{k!}.$$
\end{proof}

Given the lemmas we have now proven, we are ready to prove \Cref{thm:jumbled-pattern}.

\begin{proof}[Proof of \Cref{thm:jumbled-pattern}]
    We now prove the correctness and runtime of the algorithm of \Cref{figure:jumbled-pattern}.
    \\\\
    \noindent\textbf{Correctness.} For this algorithm to be a quality control algorithm, it must accept random permutations with probability $1 - o(1)$ (``completeness''). It must reject each permutation with the number of copies of the pattern not in $(1 \pm \ep) \binom{N}{k} \frac{1}{k!}$ with probability at least $2/3$ (``soundness''). By \Cref{lem:completeness}, the completeness failure probability is $O(1/(k^dN))=o(1)$ in the runtime $O(N)$. For soundness, we can prove something stronger: whenever there is a subpattern $\tau' \subseteq \tau$ whose number of copies in $\sigma$ is \textit{not} in $(1 \pm \ep) \binom{N}{|\tau'|}/|\tau'|!$, the algorithm \textit{always} outputs Reject. Indeed, apply \Cref{lem:soundness} with $r=|\tau'|\leq k$; the parameters chosen for $k$ satisfy the required bounds for every such $r$. Thus $\sigma$ is \textit{not} $(m, \beta)$-jumbled for $\beta = b\sqrt{\frac{k^2}{\ep}\left(d\log k + \log(N/m) + \log N\right)}$ and $m = a \ep N/k^2$ for some constants $a, b$. By \Cref{def:jumbled} of permutation jumbledness, this means that there exists some intervals $I_i, I_j$ (in the $N/m \times N/m$ grid of $[N] \times [N]$) such that 
    \begin{equation}\label{eq:bad-case}
        \left|\left|\sigma(I_i) \cap I_j \right| - \frac{|I_i||I_j|}{N}\right| > \beta \frac{m}{\sqrt{N}}.
    \end{equation}
    Algorithm \textsc{Jumbled-Count} stores $\left|\sigma(I_i) \cap I_j \right|$ via $\textsc{Count}_{ij}$. The algorithm will detect \Cref{eq:bad-case} and output Reject.

    We now justify our choice of parameters. In the proof, we needed the following parameter relationships:
    \begin{enumerate}
        \item $m = a \ep N/k^2$ 
        \item $\beta \geq b \sqrt{\frac{k^2}{\ep} \left( d \log k + \log \left( \frac{N}{m}\right) + \log N\right)} $ for some constant $b$, and constant $d$ of our choice
        \item $2k \beta \frac{\sqrt{N}}{m} \leq 1$
        \item $2k \beta \sqrt{N}/m \leq \ep/6$
        \item $2 m k^2/N \leq \ep/144$
        \item $N/m \geq 2k$.
    \end{enumerate}
    Solving for these, we can choose $m = \Theta( \ep N/k^2)$ and $\beta = b \sqrt{\frac{k^2}{\ep} \left( d \log k + \log \left( \frac{N}{m}\right) + \log N\right)}  = \Theta(\sqrt{\log (k/\ep)}) k/\sqrt{\ep}$ for the stated polynomial choice of $N$. 
    Therefore

$$N \gtrsim \frac{k^8}{\ep^5} \log \left( \frac{k}{\ep}\right)$$
    suffices.
    \\\\
    \noindent \textbf{Runtime.} The runtime of \textsc{Jumbled-Count} is $O(N + (N/m)^2)$. Since $m \geq \sqrt{N}$, this is $O(N)$.
\end{proof}

\subsubsection{Analysis: exponentially robust quasirandomness}\label{sec:exp-rob-qr}

In this section, we will prove and state \Cref{lem:exp-qr-main}. Now that we understand how to verify counts of patterns on length $N = \text{poly}(k)$ subsequences, we need to argue that it suffices to look at counts on a length $N = \text{poly}(k)$ subsequence in the first place. What we are looking for is a statement of the following. If the global count of a length-$k$ pattern is not $\approx \binom{n}{k} \frac{1}{k!}$, then the count on the subsequence of length $N = \text{poly}(k)$ will reflect this. On the other hand, if the global count \textit{is} $\approx \binom{n}{k} \frac{1}{k!}$, the count on the subsequence of length $N = \text{poly}(k)$ will also be $\approx \binom{N}{k} \frac{1}{k!}$.

We begin with notation. 

\begin{definition}
    For a permutation $\sigma$ and a length-$k$ permutation pattern $\tau$, and $\ell \in [k]$, define $\textsc{Count}_{\ell}(\tau, \sigma)$ be the count in $\sigma$ of all permutation patterns of length $\ell$ that are subpatterns of $\tau$, counted with multiplicity:

    $$\textsc{Count}_{\ell}(\tau, \sigma) = \sum_{\substack{Q \subseteq \tau\\|Q| = \ell}} \sum_{\substack{R \subseteq \sigma\\|R| = \ell}} 1(R \equiv Q),$$
    where $R \equiv Q$ means that $R$ forms a copy of permutation pattern $Q$.
\end{definition}

For a subset of indices $S \subseteq [n]$ and a length-$n$ permutation $\sigma$, let $\sigma|_S$ be the permutation on the indices $S$ induced by $\sigma$.

In \Cref{prop:random-permutation-aggregate-copies} of \Cref{appendix:aggregate-counts}, we state and prove the following, which captures the property that the count of a pattern on a random length $N = \text{poly}(k)$ subsequence gives sufficient information about the global count of the pattern in the sequence.

\begin{lemma}\label{lem:exp-qr-main}
    Suppose $\sigma$ is a length-$n$ permutation, $\tau$ is a length-$k$ pattern, and $\ep \in (0, 1)$. Sample a set $S \subseteq [n]$ of $s = \widetilde{O}(k^7/\ep^2)$ indices. Suppose there exists some $\ell \in [k]$ such that $\textsc{Count}_{\ell}(\tau, \sigma) \not\in (1 \pm \beta_\ell) \binom{k}{\ell}\binom{n}{\ell} \frac{1}{\ell!}$ (and consider the first such $\ell$). Then, with probability at least $1 - \exp(-\alpha_{\ell})$, $\textsc{Count}_{\ell}(\tau, \sigma|_S) \not\in (1 \pm 3\beta_{\ell}/4) \binom{k}{\ell}\binom{|S|}{\ell} \frac{1}{\ell!}$.
\end{lemma}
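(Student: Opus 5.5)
We will prove \Cref{lem:exp-qr-main} by applying the subset bounded-differences inequality \Cref{thm:combes} to the function
\[
f(S) \;=\; \frac{\textsc{Count}_{\ell}(\tau, \sigma|_S)}{\binom{k}{\ell}\binom{s}{\ell}\frac{1}{\ell!}}, \qquad S \in \binom{[n]}{s}.
\]
The first step is the expectation. Every $\ell$-subset of $[n]$ lies in $S$ with probability $\binom{s}{\ell}/\binom{n}{\ell}$, so
\[
\mathbb{E}[f(S)] \;=\; \frac{\textsc{Count}_{\ell}(\tau, \sigma)}{\binom{k}{\ell}\binom{n}{\ell}\frac{1}{\ell!}}.
\]
By hypothesis this quantity lies outside $1 \pm \beta_\ell$. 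Hence the desired conclusion, $f(S)\notin 1\pm 3\beta_\ell/4$, follows once we show $|f(S)-\mathbb{E}f(S)| < \beta_\ell/4$ with probability $1-\exp(-\alpha_\ell)$. (If $\mathbb{E} f$ is very large, the same argument is applied after rescaling, or one uses the trivial upper bound $f\le \max f$; we set this case aside.)

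The second step is to define the good set $\mathcal{Y}$ and verify the bounded-differences condition on it. Swapping one element of $S$ changes $f$ by at most
\[
\frac{D(x)}{\binom{k}{\ell}\binom{s}{\ell}/\ell!},
\]
where $D(x)$ counts the $\ell$-tuples in $S$ containing a fixed index $x$ that form a copy of some length-$\ell$ subpattern of $\tau$. Since $\ell$ is the \emph{first} bad level, all levels $\ell'<\ell$ are well behaved globally. The exponentially robust quasirandomness framing gives us exactly this control on subsamples: for all but an $\exp(-\alpha_{\ell-1})$ fraction of sets, the counts at level $\ell-1$ inside $S$ and inside $S\setminus\{x\}$ are $(1\pm\beta_{\ell-1})$-accurate. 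This should follow inductively from the same lemma applied at level $\ell-1$, by contrapositive together with a first-moment argument.

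The count $D(x)$ is bounded by the number of length-$(\ell-1)$ copies in $S\setminus\{x\}$ of subpatterns of $\tau$, each extendable in at most $k$ ways (a choice of which element of the subpattern $x$ plays). On $\mathcal{Y}$ this gives
\[
D(x) \;\lesssim\; k\binom{k}{\ell-1}\binom{s}{\ell-1}\frac{1}{(\ell-1)!},
\]
and therefore a Lipschitz constant
\[
c \;\approx\; \frac{k\,\ell^2}{(k-\ell+1)\,s} \;\lesssim\; \frac{k^2}{s}.
\]
We must also check that $q = 1-\mathbb{P}(S\in\mathcal{Y}) \le \exp(-\alpha_{\ell-1})\cdot\mathrm{poly}$ satisfies $q \le \delta/(2\max f)$ with $\delta=\beta_\ell/4$. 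We can take $\max f \le \binom{s}{\ell}$ crudely, so this forces $\alpha_{\ell-1} \gtrsim \ell\log s$. This is why the failure probabilities must be exponentially small rather than merely small.

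The third step applies \Cref{thm:combes} with $t=s$ and the constant $c$ above, giving failure probability
\[
q + 2\exp\!\left(-\frac{\beta_\ell^2 s^2}{128\, s\, k^4}\right)
\;=\; q + 2\exp\!\left(-\Omega\!\left(\frac{\beta_\ell^2 s}{k^4}\right)\right).
\]
We then choose the parameter sequence so that three conditions hold simultaneously:
\[
\alpha_\ell \;\le\; \min\!\left\{\alpha_{\ell-1} - O(\ell\log s),\; \Omega\!\left(\frac{\beta_\ell^2 s}{k^4}\right)\right\},
\qquad \alpha_k \ge \Omega(1),
\]
with $\beta_\ell$ of order $\ep$ and possibly shrinking geometrically, e.g.\ $\beta_\ell \ge \tfrac{4}{3}\beta_{\ell-1}$-type slack. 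A natural choice is
\[
\alpha_\ell = (k-\ell+1)\cdot C k\log(k/\ep),
\]
which requires $s \gtrsim k^4\cdot k^2\log/\ep^2$, in line with $s=\widetilde{O}(k^7/\ep^2)$ up to the extra factor.

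The main obstacle is the second step: rigorously bounding the Lipschitz constant on $\mathcal{Y}$ using only the inductive hypothesis at level $\ell-1$. This requires control of the degree $D(x)$ for \emph{every} $x$, not just on average. I would handle it by union bounding over the $s$ choices of removed element and over the replacement, which is affordable precisely because $\alpha_{\ell-1}$ is exponentially large.
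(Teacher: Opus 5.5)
Your architecture is the paper's: apply \Cref{thm:combes} on $\binom{[n]}{s}$ with a good set $\mathcal{Y}$ defined by the level-$(\ell-1)$ sample count, use that $\mathbb{E}f(S)$ is the normalized global count, and induct over levels. However, several steps do not go through as written.

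First, the good set cannot come from ``the same lemma applied at level $\ell-1$, by contrapositive together with a first-moment argument.'' \Cref{lem:exp-qr-main} only concerns a level that is globally \emph{bad}, so neither it nor its contrapositive says anything about sample counts at a globally good level. A first-moment bound also cannot give failure probability $\exp(-\alpha_{\ell-1})$. What you need is the separate statement \Cref{lem:ok-global-counts-exp-qr}: if all levels up to $\ell-1$ are globally good, then the level-$(\ell-1)$ sample count lies in $(1\pm 2\beta_{\ell-1})$ times its target, except with probability $\exp(-\alpha_{\ell-1})$. The paper proves it by induction using the concentration step (\Cref{lem:concentration-lemma}), which never uses whether the current level is good or bad. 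Equivalently, you can run a single induction on the claim ``if all levels below $\ell$ are globally good, then $|f(S)-\mathbb{E}f(S)|<\beta_\ell/4$ except with probability $\exp(-\alpha_\ell)$,'' with trivial base case $\textsc{Count}_1(\tau,\sigma|_S)=ks$.

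Second, the ``main obstacle'' you name is not one, and your fix would fail. For every $x\in S$ you have deterministically $D(x)\le (k-\ell+1)\,\textsc{Count}_{\ell-1}(\tau,\sigma|_S)$: delete $x$ from an occurrence, and note at most $k-\ell+1$ positions of $\tau$ can be put back. So $S\in\mathcal{Y}$ controls all degrees simultaneously, and the bounded-differences hypothesis of \Cref{thm:combes} only concerns pairs of sets that both lie in $\mathcal{Y}$. A union bound over replacements would range over $n-s$ indices and cost a factor $n$, which $\exp(-\alpha_{\ell-1})$ (independent of $n$) cannot absorb. There is likewise no large-$\mathbb{E}f$ case to set aside: the deviation bound is absolute, and the Lipschitz constant depends only on level $\ell-1$.

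Third, the bookkeeping breaks the claimed $s=\widetilde{O}(k^7/\ep^2)$. Normalizing the degree bound gives
\[
\frac{(k-\ell+1)\binom{k}{\ell-1}\binom{s}{\ell-1}/(\ell-1)!}{\binom{k}{\ell}\binom{s}{\ell}/\ell!}=\frac{\ell^3}{s-\ell+1}
\]
(times $1+2\beta_{\ell-1}$). You dropped the factor $\ell=\ell!/(\ell-1)!$. Even your own expression $k\ell^2/((k-\ell+1)s)$ equals $k^3/s$ at $\ell=k$, not $O(k^2/s)$. So the exponent from \Cref{thm:combes} is $\Omega(\beta_\ell^2 s/\ell^6)$.

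Your constraint $\alpha_\ell\le\alpha_{\ell-1}-O(\ell\log s)$ then forces $\alpha_\ell=\Omega(k^2\log s)$ at levels $\ell=\Theta(k)$, hence $s=\Omega(k^8\log s/\ep^2)$. It is also incompatible with the specific $\alpha_\ell$ of \Cref{def:parameters-exprobustqr}, in which the lemma is stated and which decrease by only $\log 3$ per level. The steep decrement comes from two things. One is the spurious $\mathrm{poly}$ factor in $q$: with $\mathcal{Y}$ as above, $q\le\exp(-\alpha_{\ell-1})$ with no extra factor, so a decrement of $\log 3$ already gives $q\le\frac13\exp(-\alpha_\ell)$. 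The other is the crude bound $\max f\le\binom{s}{\ell}$: since $\textsc{Count}_\ell\le\binom{k}{\ell}\binom{s}{\ell}$, the normalized $f$ is at most $\ell!$, so $q\le\delta/(2\max f)$ needs only $\alpha_{\ell-1}\gtrsim\ell\log\ell+\log(1/\ep)$.

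With $\alpha_\ell=O(k\log(k/\ep))$ throughout and Lipschitz constant $O(\ell^3/s)$, the requirement $\beta_\ell^2 s/\ell^6\gtrsim\alpha_\ell$ gives exactly $s=\widetilde{O}(k^7/\ep^2)$. No relation between consecutive $\beta_\ell$ is needed beyond $2\beta_{\ell-1}\le 1$. A schedule with $\beta_\ell\ge\frac43\beta_{\ell-1}$ would make the low-level $\beta_\ell$ exponentially small in $k$ and blow up $s$.
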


Here and throughout this section, the sample size hidden by $\widetilde{O}(k^7/\ep^2)$ includes a multiplicative factor of $\log(2+k/\ep)$.

We observe the following corollary.

\begin{corollary}\label{cor:qc-samples}
    Let $\tau$ be a length-$k$ permutation pattern, $\ep > 0$, and $p$ be a non-atomic distribution over $\mathbb{R}$. Define $\D = \{p^{\otimes n}\}_n$. Then, $(\D, S_\tau)$-quality control is achievable in $\widetilde{O}(k^7/\ep^2)$ queries.
\end{corollary}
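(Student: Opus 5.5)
We build the algorithm directly from \Cref{lem:exp-qr-main}. On input $\sigma$ of length $n$, sample a uniformly random $S \subseteq [n]$ with $|S| = s = \widetilde{O}(k^7/\ep^2)$ and query every index of $S$. Then compute, for each $\ell \in [k]$, whether $\textsc{Count}_\ell(\tau, \sigma|_S) \in (1 \pm 3\beta_\ell/4)\binom{k}{\ell}\binom{s}{\ell}\frac{1}{\ell!}$. Accept iff all $k$ checks pass. Only the query complexity is claimed, so computing these counts by brute force is acceptable. A time-efficient version would substitute the jumbledness test of \Cref{thm:jumbled-pattern}, which is deferred to \Cref{thm:qc-ub}. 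The thresholds $\beta_\ell$ will be a schedule increasing in $\ell$ with $\beta_\ell \leq \ep$, for instance $\beta_\ell = \ep\,\ell/(2k)$, so that the level-$\ell$ tolerance stays comparable to the aggregated per-pattern tolerance. If \Cref{lem:exp-qr-main} fixes the schedule itself, we simply adopt it.

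\textbf{Completeness.} Since $p$ is non-atomic, $\sigma \sim p^{\otimes n}$ induces a uniformly random permutation, so $\sigma|_S$ is also uniformly random on $s$ points. For each subpattern $Q \subseteq \tau$ with $|Q| = \ell$, counted with multiplicity, \Cref{prop:random-permutation-P-copies} applied at length $s$ gives $(1 \pm \beta_\ell/2)\binom{s}{\ell}/\ell!$ copies. That proposition's failure bound is only polynomial, which is too weak for a union bound over the exponentially many subpatterns. We would therefore use the stronger consequence of jumbledness instead. By \Cref{lem:completeness}, $\sigma|_S$ is jumbled with probability $1-o(1)$, provided $s$ meets the size requirement of \Cref{lem:soundness}; if the stated $s$ falls short, we enlarge it to $\widetilde{O}(k^8/\ep^5)$. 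By \Cref{lem:soundness}, jumbledness forces \emph{every} subpattern of every length $\ell \le k$ to have count $(1\pm\ep')\binom{s}{\ell}/\ell!$. Summing these with multiplicity yields the aggregate bound $\binom{k}{\ell}\binom{s}{\ell}/\ell!\,(1\pm\ep')$, so all checks pass.

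\textbf{Soundness.} Suppose $S_{\tau,n}(\sigma)$ is not within $\ep$ of $1$, so some $\tau' \subseteq \tau$ of length $\ell'$ has $Q_{\tau',n}$ off by more than $\ep$. Here $S_{\tau,n}$ is the quality function of \Cref{def:S-tau}; we write $s$ only for the sample size. The main obstacle is converting this per-pattern deviation into a deviation of the aggregate $\textsc{Count}_{\ell}$ at some level. Errors in individual patterns could cancel inside the sum, so the aggregate test might be fooled. I would handle this by noting that the argument of \Cref{lem:exp-qr-main} applies verbatim to a single pattern $Q$ in place of the aggregate. The concentration step uses \Cref{thm:combes}, with bounded differences holding on the good set supplied by the inductive hypothesis at smaller lengths, and this works for any fixed pattern family. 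So we test each sub-permutation-pattern separately in the sample, and union bound the failure probabilities $\exp(-\alpha_\ell)$ over the at most $\binom{k}{\ell}$ patterns of length $\ell$. This requires $\alpha_\ell \gtrsim \ell \log k$, which the $\widetilde{O}$ factor in $s$ provides.

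With that in place, take the first length $\ell$ at which some pattern deviates globally. \Cref{lem:exp-qr-main}, or its per-pattern variant, then shows that the sample's count deviates by at least $3\beta_\ell/4$ with probability $1-\exp(-\alpha_\ell)\ge 2/3$, so the tester rejects. The total query count is $s=\widetilde{O}(k^7/\ep^2)$, independent of $n$.
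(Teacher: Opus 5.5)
The genuine gap is in completeness at the claimed sample size. You reject Chebyshev because you believe a union bound over exponentially many subpatterns is needed, and fall back on jumbledness. But \Cref{lem:soundness} converts jumbledness into pattern counts only once $s \geq C\frac{k^8}{\ep^5}\log(k/\ep)$, so ``enlarging $s$'' gives $\widetilde{O}(k^8/\ep^5)$ queries, not $\widetilde{O}(k^7/\ep^2)$.

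The missing observation is that the algorithm you wrote down tests only $k$ aggregate quantities, and Chebyshev applies to each aggregate directly. In $\mathrm{Var}(\textsc{Count}_\ell(\tau,\sigma|_S))$, the sum over pairs of index sets $I,J\subseteq[k]$ contributes a factor $\binom{k}{\ell}^2$. This factor cancels against $(\mathbb{E}\,\textsc{Count}_\ell(\tau,\sigma|_S))^2$, so the relative variance is bounded exactly as for a single pattern, by $\sum_{a\geq 1}(\ell^3/(s-\ell))^a$. A union bound over the $k$ levels gives failure probability $O(k^4/(s\ep^2))=o(1)$ at $s=\widetilde{O}(k^7/\ep^2)$. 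This is \Cref{prop:random-permutation-aggregate-copies}, and it is how the paper proves completeness; no jumbledness is needed for the query bound.

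Your soundness modification is what actually creates the exponential union bound. Once every subpattern is tested separately, completeness needs all of their counts on a random $s$-point permutation to concentrate simultaneously. Chebyshev cannot give this at polynomial $s$, and jumbledness certifies it only at the larger size. So your completeness and soundness arguments are not about one algorithm at the claimed sample size.

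The paper keeps the aggregate test and argues soundness from a deviation of $\tau$ itself. Since $\binom{k}{k}=1$, we have $\textsc{Count}_k(\tau,\sigma)=C_{\tau,n}(\sigma)$. Hence an $\ep$-deviation of $Q_{\tau,n}$ is a deviation of the level-$k$ aggregate beyond $\beta_k=\ep/4$. Taking the least deviating level and applying \Cref{lem:exp-qr-main} finishes the argument, with no cancellation to worry about.

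Your concern about cancellation for a deviating \emph{proper} subpattern $\tau'$ is legitimate. The paper's proof of this corollary does not address it either: its soundness paragraph treats only $Q_\tau$. The paper obtains the full $S_\tau$ guarantee in \Cref{lem:s-tau-qc} through jumbledness at $\widetilde{O}(k^8/\ep^5)$. As written, your proposal can at best reach that larger bound, not the one stated.
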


We prove the corollary below.

\paragraph{The algorithm.} The quality control algorithm performs the following procedure. First, sample a subsequence on $s := \widetilde{O}(k^7/\ep^2)$ indices. Then, for each $\ell \in 2, 3, \dots, k$, evaluate whether $\textsc{Count}_{\ell}(\tau, \sigma|_S) \not\in (1 \pm 3\beta_\ell/4) \binom{k}{\ell}\binom{|S|}{\ell} \frac{1}{\ell!}$ in $O(s^k)$ time.

\begin{proof}[Proof of \Cref{cor:qc-samples} from \Cref{lem:exp-qr-main}]
    We argue completeness and soundness. While we are handling real-valued sequences drawn from $p^{\otimes n}$, since $p$ is non-atomic, replacing queries by their ranked values yields a uniformly random permutation. We thus argue about permutations.

    \paragraph{Completeness.} We prove that a random permutation is accepted with probability $1 - o(1)$. By \Cref{prop:random-permutation-aggregate-copies}, the probability that one of the tests fails is at most $O\left(\frac{k^4}{s\ep^2}\right)=O\left(\frac{1}{k^3\log(2+k/\ep)}\right)=o(1)$ in the runtime of the algorithm. Thus, for all $\ell = 2, 3, \dots, k$, $\textsc{Count}_{\ell}(\tau, \sigma|_S) \in (1 \pm 3\beta_\ell/4) \binom{k}{\ell}\binom{|S|}{\ell} \frac{1}{\ell!}$, and $\sigma$ is accepted.

    \paragraph{Soundness.} Suppose that $\sigma$ is a permutation such that the count of length-$k$ pattern $\tau$ in $\sigma$ is not in $(1 \pm \ep) \binom{n}{k} \cdot \frac{1}{k!}$. Then, there is a minimum $\ell \in \{2, 3, \dots, k\}$ such that $\textsc{Count}_{\ell}(\tau, \sigma) \not\in (1 \pm \beta_\ell) \binom{k}{\ell}\binom{n}{\ell} \frac{1}{\ell!}$. Then, by \Cref{lem:exp-qr-main}, on a random subset $S$ of size $N \geq s := \widetilde{O}(k^7/\ep^2)$, with probability at least $1-\exp(-\alpha_\ell)$, $\textsc{Count}_{\ell}(\tau, \sigma|_S) \not\in (1 \pm 3\beta_\ell/4) \binom{k}{\ell}\binom{|S|}{\ell} \frac{1}{\ell!}$. In this case, the quality control algorithm rejects.
\end{proof}

We follow the framework introduced by \cite{qualitycontrol} of exponentially robust quasirandomness to prove \Cref{lem:exp-qr-main}, adapted for our setting of parameters and objects. Exponentially robust quasirandomness refers to properties that capture quasirandomness of objects -- which means deterministic properties possessed by random instances. Importantly, exponentially robust quasirandomness is a relaxation of typical quasirandomness, as it requires a property to hold with all but exponentially small probability (say, over choices of portions of the object) instead of with probability 1.

We define a suitable notion of exponentially robust quasirandomness for permutations in our context.

\begin{definition}[Exponentially robust quasirandomness of permutations]\label{def:exp-rob-qr}
    For a length-$k$ permutation pattern $\tau$ and any $\ell \in [k]$, a permutation $\sigma$ is $(\alpha, \beta, s_0, \ell)$-exponentially robustly quasirandom with respect to $\tau$ if, for all $s \geq s_0$,
    $$\mathbb{P}_{S : |S| = s}\left(\textsc{Count}_{\ell}(\tau, \sigma|_S) \not \in (1 \pm \beta) \frac{\binom{k}{\ell}\binom{s}{\ell}}{\ell!} \right) \leq \exp(- \alpha).$$

\end{definition}

Note that when $\ell=k$, $(\alpha, \beta, s_0, \ell)$-exponentially robustly quasirandomness with respect to $\tau$ implies that $\sigma$ has roughly the right count of $\tau$ in all but an exponentially small fraction of subsequences $S$ of size $s$. When $\ell <k$ the definition is a bit more complex and may roughly be thought of as asserting that for every subpattern $\tau'$ of length $\ell$ of $\tau$,  $\sigma$ is $(\alpha, \beta, s_0, \ell)$-exponentially robustly quasirandom with respect to $\tau'$. (Of course, this is not formally true and the actual condition is a weaker one.) 

We prove \Cref{lem:exp-qr-main} by induction on the pattern length $\ell$, where
each step of the induction is made possible by the ``concentration lemma'' of \Cref{lem:concentration-lemma}.

The \textit{concentration lemma} is the
high-probability step: it states that if $\sigma$ is exponentially robustly
quasirandom for the count of length-$\ell$ subpatterns of $\tau$, then a small
sampled sub-permutation of $\sigma$ has its count of length-$(\ell+1)$
subpatterns tightly concentrated around the expectation, with all but
exponentially small failure probability. This lets us detect from the sample
alone whether the length-$(\ell+1)$ count matches the random-permutation case.

We then chain these steps together with \textit{induction}. Concentration at level
$\ell$ establishes exponentially robust quasirandomness at level $\ell$ (which implies that its count is close to the expectation whp). This is exactly the hypothesis the
concentration lemma needs at level $\ell+1$. Starting from the trivial base case
$\ell = 1$ and iterating up to $\ell = k$ propagates control from single elements
all the way to the full pattern $\tau$.

\begin{definition}\label{def:parameters-exprobustqr}
Set $\alpha_1 = \infty$ and $\beta_1 = 0$.

For $2 \leq \ell \leq k$, define
$$
    \beta_\ell := \frac{\ep}{4}\cdot 2^{(k-\ell)/k}.
$$
(Observe that $\beta_\ell=\Theta(\ep)$ for every $\ell\in\{2,\dots,k\}$, and in particular
$\beta_k \leq \ep$.)

For $2 \leq \ell \leq k$, define
$$
    \alpha_\ell := (k-\ell)\log 3 + k\log k + \log(3k/\ep).
$$

Let
$$
    s = \widetilde{O}(k^7/\ep^2).
$$
\end{definition}

\begin{lemma}[Concentration lemma]\label{lem:concentration-lemma}
    Suppose a permutation $\sigma: [n] \to [n]$ is $(\alpha_{\ell}, 2\beta_{\ell}, s, \ell)$-exponentially robustly quasirandom with respect to $\tau$. Then, for $\alpha_{\ell+1}, \beta_{\ell+1}, s$ as in \Cref{def:parameters-exprobustqr},

    $$\mathbb{P}_{S, |S| = s}\left(\left|\textsc{Count}_{\ell+1}(\tau, \sigma|_S)  - \mathbb{E}\left(\textsc{Count}_{\ell+1}(\tau, \sigma|_S) \right)\right|\geq  \frac{\beta_{\ell+1}}{4} \frac{\binom{k}{\ell+1}\binom{s}{\ell+1}}{(\ell+1)!} \right) \leq \exp(-\alpha_{\ell + 1}).$$
\end{lemma}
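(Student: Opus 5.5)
I will apply \Cref{thm:combes} to the function $f(S) = \textsc{Count}_{\ell+1}(\tau, \sigma|_S)$ on $\binom{[n]}{s}$. The good set $\mathcal{Y}$ will consist of those $S$ on which the level-$\ell$ counts are well behaved, uniformly over all relevant subsets. The first step is to understand how $f$ changes when one index of $S$ is swapped, i.e.\ $A \triangle B = \{x, y\}$. The change is at most the number of $(\ell+1)$-subsets of $A$ containing $x$ that realize a length-$(\ell+1)$ subpattern of $\tau$, plus the analogous count for $y$ in $B$. Every such $(\ell+1)$-subset containing $x$ decomposes as $\{x\} \cup R$, where $R$ is an $\ell$-subset of $A \setminus \{x\}$ realizing a length-$\ell$ subpattern of $\tau$. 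Each $R$ extends in at most $k-\ell$ ways: at most $\ell+1$ positions for the deleted letter, weighted by multiplicities. Hence the increment is bounded by roughly $(\ell+1)\cdot\textsc{Count}_{\ell}(\tau, \sigma|_{A\setminus\{x\}})$, up to a factor relating $\binom{k}{\ell+1}$ to $\binom{k}{\ell}$. I will make this combinatorial accounting precise with multiplicities, using the convention of \Cref{def:subpattern}.

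Next I define $\mathcal{Y}$. It is the set of $S$ such that every $(s-1)$-subset $T \subseteq S$ satisfies $\textsc{Count}_{\ell}(\tau, \sigma|_T) \le (1+2\beta_\ell)\binom{k}{\ell}\binom{s-1}{\ell}/\ell!$. On $\mathcal{Y}$, the swap bound gives $|f(A)-f(B)| \le c$ with
\[
c \approx (k-\ell)\,(1+2\beta_\ell)\,\frac{\binom{k}{\ell}\binom{s-1}{\ell}}{\ell!} \approx \frac{\ell+1}{s}\cdot\frac{\binom{k}{\ell+1}\binom{s}{\ell+1}}{(\ell+1)!}\cdot O(\ell+1),
\]
where the key point is that $c$ is about a $\mathrm{poly}(k)/s$ fraction of the mean scale $M := \binom{k}{\ell+1}\binom{s}{\ell+1}/(\ell+1)!$. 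To bound $q = 1-\mathbb{P}(S\in\mathcal{Y})$, I use the hypothesis: $\sigma$ is $(\alpha_\ell, 2\beta_\ell, s, \ell)$-exponentially robustly quasirandom, so each fixed size $s-1$ fails with probability at most $e^{-\alpha_\ell}$. The definition quantifies over $s' \ge s_0$, and I will apply it at size $s-1$, adjusting $s_0$ by one. A pair $(S, T)$ with $S$ uniform and $T \subset S$ a uniformly random $(s-1)$-subset makes $T$ uniform. A union bound over the $s$ choices of $T$ then gives $q \le s\,e^{-\alpha_\ell}$.

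Finally, I plug into \Cref{thm:combes} with $\delta = \beta_{\ell+1}M/4$. The sub-Gaussian term is $2\exp(-\delta^2/(8sc^2))$, and since $c/M = O(k^2/s)$, the exponent is $\Omega(\ep^2 s/k^4)$. Taking $s = \widetilde{O}(k^7/\ep^2)$ makes this exceed $\alpha_{\ell+1}+1 = O(k\log(k/\ep))$. This is exactly where the $k^7$ comes from: $k^4\cdot k\log k/\ep^2$, with slack for the $k^2$ degree factors. The condition $q \le \delta/(2\max f)$ holds because $\max f \le \binom{k}{\ell+1}\binom{s}{\ell+1}$, so $\delta/\max f \ge \ep/( (\ell+1)!\cdot\mathrm{poly})$. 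Meanwhile $q \le s e^{-\alpha_\ell}$ and $\alpha_\ell = \alpha_{\ell+1}+\log 3$ contains $k\log k \ge \log k!$, which pays for the $(\ell+1)!$ and for $\log(3k/\ep)$. The total failure probability is then $q + 2e^{-(\alpha_{\ell+1}+1)} \le e^{-\alpha_{\ell+1}}$ by the margin $\alpha_\ell - \alpha_{\ell+1} = \log 3$. The main obstacle I expect is the bookkeeping in the first step, namely obtaining the bounded-difference constant $c$ with the correct ratio $c/M = O(\mathrm{poly}(k)/s)$ while counting subpatterns with multiplicity. I would first try the crude bound $c \le (\ell+1)(k-\ell)\cdot\textsc{Count}_\ell$ and verify that the resulting power of $k$ still fits inside $k^7$.
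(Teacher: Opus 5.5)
Your overall plan is the paper's: apply \Cref{thm:combes} to $f(S)=\textsc{Count}_{\ell+1}(\tau,\sigma|_S)$ with $\delta=\beta_{\ell+1}M/4$, where $M=\binom{k}{\ell+1}\binom{s}{\ell+1}/(\ell+1)!$, take a good set from the level-$\ell$ hypothesis, bound differences by extending level-$\ell$ occurrences, and use the $\log 3$ margin. However, your choice of $\mathcal{Y}$ breaks the final probability tally, in two ways.

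First, the hypothesis is $(\alpha_\ell,2\beta_\ell,s,\ell)$-quasirandomness. It controls only subsets of size at least $s$, so you cannot ``adjust $s_0$ by one'' to use it at size $s-1$ while proving the lemma as stated.

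Second, even granting size $s-1$, your union bound gives only $q\le s\,e^{-\alpha_\ell}=\frac{s}{3}e^{-\alpha_{\ell+1}}$. The margin $\alpha_\ell-\alpha_{\ell+1}=\log 3$ absorbs a factor of $3$, not a factor of $s$. So your inequality $q+2e^{-(\alpha_{\ell+1}+1)}\le e^{-\alpha_{\ell+1}}$ fails for $s>3$, and the $\alpha_\ell$ are fixed by \Cref{def:parameters-exprobustqr}, so you cannot widen the gap.

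The missing observation is monotonicity: if $T\subseteq S$, then $\textsc{Count}_\ell(\tau,\sigma|_T)\le\textsc{Count}_\ell(\tau,\sigma|_S)$. Every $(\ell+1)$-occurrence in $A$ through $x$ restricts to an $\ell$-occurrence inside $A\setminus\{x\}\subseteq A$. It therefore suffices to let $\mathcal{Y}$ be the $s$-sets $S$ with $\textsc{Count}_\ell(\tau,\sigma|_S)\in(1\pm2\beta_\ell)\binom{k}{\ell}\binom{s}{\ell}/\ell!$, as the paper does. Then $q\le e^{-\alpha_\ell}=\frac13e^{-\alpha_{\ell+1}}$, leaving $\frac23e^{-\alpha_{\ell+1}}$ for the sub-Gaussian term.

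Your bounded-difference bookkeeping also needs fixing.
\begin{itemize}
\item An $(\ell+1)$-occurrence $(\{x\}\cup R,I)$ determines the deleted pattern position $j$: the element of $I$ in the same rank position as $x$ within $\{x\}\cup R$. So it maps to a single $\ell$-occurrence $(R,I\setminus\{j\})$, and each $\ell$-occurrence has at most $k-\ell$ preimages. Hence $c=2(k-\ell)(1+2\beta_\ell)\binom{k}{\ell}\binom{s}{\ell}/\ell!$ works, with no extra $(\ell+1)$ factor.
\item This bound holds for all $A,B\in\mathcal{Y}$, not just swaps, as \Cref{thm:combes} requires. Indeed, $|f(A)-f(B)|$ is at most the number of occurrences in $A$, or in $B$, that use an element of the symmetric difference.
\item This gives $c/M=2(1+2\beta_\ell)(\ell+1)^3/(s-\ell)=\Theta(k^3/s)$, not your $O(k^2/s)$. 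The exponent is then $\Theta(\ep^2 s/k^6)$, and $s=\widetilde{O}(k^7/\ep^2)$ is exactly what is needed to beat $\alpha_{\ell+1}=\Theta(k\log k+\log(1/\ep))$.
\item There is no spare slack. Your fallback bound $c\le(\ell+1)(k-\ell)\textsc{Count}_\ell$ would force $s$ of order $k^9$.
\end{itemize}
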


\begin{proof}[Proof of \Cref{lem:concentration-lemma}]
    We want to upper bound:
    \begin{equation}\label{eq:ub-2}
       \mathbb{P}_{S, |S| = s}\left(\left|\textsc{Count}_{\ell+1}(\tau, \sigma|_S)  - \mathbb{E}\left(\textsc{Count}_{\ell+1}(\tau, \sigma|_S) \right)\right|\geq  \frac{\beta_{\ell+1}}{4} \frac{\binom{k}{\ell+1}\binom{s}{\ell+1}}{(\ell+1)!} \right).
    \end{equation}

    We want to apply the high-probability-bounded-differences inequality given in \Cref{thm:combes} to $\textsc{Count}_{\ell+1}(\tau, \sigma|_S)$.
    \\\\
    \textit{Step 1: Verifying the conditions of \Cref{thm:combes}.}
    We first need to verify that the conditions of \Cref{thm:combes} hold in our context. The exponentially robust quasirandom assumption tells us that: 
    \begin{equation}\label{eq:ub-1}
      \mathbb{P}_{S : |S| = s}\left(\textsc{Count}_{\ell}(\tau, \sigma|_S) \not \in (1 \pm 2\beta_{\ell}) \frac{\binom{k}{\ell}\binom{s}{\ell}}{\ell!} \right) \leq \exp(- \alpha_{\ell}).  
    \end{equation}

    In applying \Cref{thm:combes}, let $\mathcal{Y}$ be the set consisting of sets $S$ of $s$ indices such that $\textsc{Count}_{\ell}(\tau, \sigma|_S) \in (1 \pm 2\beta_{\ell}) \frac{\binom{k}{\ell}\binom{s}{\ell}}{\ell!}$. Let $q := 1 - \mathbb{P}(S \in \mathcal{Y})$. By definition of exponentially robust quasirandomness, $q \leq \exp(-\alpha_{\ell})$. For \Cref{thm:combes} to apply, we need to verify that
    $$q \leq \frac{\delta}{2 \max(f)},$$
    where $\delta = \frac{\beta_{\ell+1}}{4}\frac{\binom{k}{\ell+1}\binom{s}{\ell+1}}{(\ell+1)!}$ (by choice) and $f = \textsc{Count}_{\ell+1}(\tau, \sigma|_S)$.

    Since $\max(f) \leq \binom{k}{\ell+1} \binom{s}{\ell+1}$,
    it suffices to verify that 
    $$\exp(-\alpha_{\ell}) \leq \frac{\delta}{2 \binom{k}{\ell+1} \binom{s}{\ell+1}}.$$
    This is satisfied by our choice of $\alpha_{\ell}$ from \Cref{def:parameters-exprobustqr}.

    For use later on in the proof, we also note that $q \leq \exp(-\alpha_{\ell}) = \frac{1}{3}\exp(-\alpha_{\ell + 1})$, since $\alpha_{\ell} = \alpha_{\ell + 1} + \log 3$.
    \\\\
    \textit{Step 2: Bounded difference analysis.} Every length-$(\ell+1)$ subpattern occurrence that is added or eliminated when one sampled index is replaced contains that index. Deleting it produces a length-$\ell$ subpattern occurrence. Each length-$\ell$ subpattern of $\tau$ can be extended using at most $k-\ell$ additional positions of $\tau$, and we account for occurrences containing either the removed or the added sampled index.
        
    By the assumption of exponentially robust quasirandomness, in the ``high probability'' case, this number of subpatterns of length $\ell$ is in $(1 \pm 2\beta_{\ell}) \frac{\binom{k}{\ell}\binom{s}{\ell}}{\ell!}$. Therefore, the bounded difference is upper-bounded by: 
       $\frac{2(k-\ell)\binom{k}{\ell}\binom{s}{\ell}}{\ell!}$.
    \\\\
    \textit{Applying the bounds.}
    Applying \Cref{thm:combes} and the analysis above, we find that, for $\delta = \frac{\beta_{\ell+1}}{4}\frac{\binom{k}{\ell+1}\binom{s}{\ell+1}}{(\ell+1)!}$ and $c = 2(k-\ell)\binom{s}{\ell} \binom{k}{\ell}/\ell!$, \Cref{eq:ub-2} is:
    $$\leq q + 2\exp\left(- \frac{\delta^2}{8s c^2} \right) \leq \frac{1}{3}\exp(-\alpha_{\ell + 1}) + 2 \exp\left(- \Omega\left(\beta_{\ell+1}^2\frac{(s - \ell)^{2}}{s (\ell + 1)^{6}}\right)\right),$$
    where the first term used $q \leq \exp(-\alpha_{\ell}) = \frac{1}{3}\exp(-\alpha_{\ell + 1})$.

    It therefore suffices to bound the second term by $\frac{2}{3}\exp(-\alpha_{\ell + 1})$, for which it suffices that the concentration exponent is at least $\alpha_{\ell}$, i.e.,
    $$\beta_{\ell+1}^2\frac{(s - \ell)^2}{s (\ell + 1)^6} \geq (k - \ell) \log 3 + k \log k + \log(3k/\ep).$$ This is satisfied for $s = \widetilde{\Omega}(k^7/\ep^2)$.
    
    Thus, overall, \Cref{eq:ub-2} is upper-bounded by $\frac{1}{3}\exp(-\alpha_{\ell + 1}) + \frac{2}{3}\exp(-\alpha_{\ell + 1}) = \exp(-\alpha_{\ell  + 1})$, yielding \Cref{lem:concentration-lemma}.
\end{proof}
We also note the following fact about a sufficient condition for being exponentially robustly quasirandom.

\begin{lemma}\label{lem:ok-global-counts-exp-qr}
    Suppose that a permutation $\sigma$ satisfies the property that, for some $\ell \leq k$ and all $\ell' \leq \ell$, the global count $\textsc{Count}_{\ell'}(\tau, \sigma)$ is $\in (1 \pm \beta_{\ell'}) \binom{k}{\ell'} \binom{n}{\ell'} \frac{1}{(\ell')!}$. Then, $\sigma$ is $(\alpha_{\ell}, 2\beta_{\ell}, s, \ell)$-exponentially robustly quasirandom with respect to $\tau$.
\end{lemma}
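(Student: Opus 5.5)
We prove the statement by induction on $\ell' \in \{1, \dots, \ell\}$. The inductive claim is that $\sigma$ is $(\alpha_{\ell'}, 2\beta_{\ell'}, s, \ell')$-exponentially robustly quasirandom with respect to $\tau$.

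\textbf{Base case.} For $\ell' = 1$ we have $\textsc{Count}_1(\tau, \sigma|_S) = k\,|S| = \binom{k}{1}\binom{s}{1}/1!$ deterministically, so the failure probability is $0 \le \exp(-\alpha_1)$ with $\beta_1 = 0$.

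\textbf{Inductive step.} Assume the property holds at level $\ell'$, with $\ell' + 1 \le \ell$. Apply \Cref{lem:concentration-lemma} at level $\ell'$. It gives, except with probability $\exp(-\alpha_{\ell'+1})$ over $S$ of size $s$,
\[
\bigl|\textsc{Count}_{\ell'+1}(\tau,\sigma|_S) - \mathbb{E}\,\textsc{Count}_{\ell'+1}(\tau,\sigma|_S)\bigr| < \tfrac{\beta_{\ell'+1}}{4}\, \tfrac{\binom{k}{\ell'+1}\binom{s}{\ell'+1}}{(\ell'+1)!}.
\]
Next we compute the expectation exactly. Each $(\ell'+1)$-subset $R$ of $[n]$ lands inside $S$ with probability $\binom{s}{\ell'+1}/\binom{n}{\ell'+1}$, and its pattern is unchanged under restriction. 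Hence
\[
\mathbb{E}\,\textsc{Count}_{\ell'+1}(\tau,\sigma|_S) = \textsc{Count}_{\ell'+1}(\tau,\sigma)\,\frac{\binom{s}{\ell'+1}}{\binom{n}{\ell'+1}}.
\]
By the global-count hypothesis, this lies in $(1\pm\beta_{\ell'+1})\binom{k}{\ell'+1}\binom{s}{\ell'+1}/(\ell'+1)!$. Combining the two, the sampled count lies in $(1\pm \tfrac54\beta_{\ell'+1})$ times the target, which is within $(1\pm 2\beta_{\ell'+1})$. This establishes exponential robust quasirandomness at level $\ell'+1$ for sample size exactly $s$.

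\textbf{Sample sizes $s' \ge s$.} The definition quantifies over every $s' \ge s_0 = s$. We run the same argument with $s'$ in place of $s$. The concentration lemma's requirement on the sample size is monotone: the exponent $\beta^2 (s'-\ell)^2/(s'(\ell+1)^6)$ only grows with $s'$. The condition $q \le \delta/(2\max f)$ is scale-free, because both $\delta$ and $\max f$ carry the factor $\binom{s'}{\ell+1}$. So the induction is carried out simultaneously for all $s' \ge s$, with the inductive hypothesis also holding for every size $\ge s$.

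\textbf{Main obstacle.} The delicate point is checking that the concentration lemma's hypothesis matches what the induction supplies. The lemma needs the $2\beta_{\ell'}$ tolerance at level $\ell'$, and we only produce $\tfrac54\beta_{\ell'}$, which is stronger and so suffices. The lemma's $\alpha$ requirement also holds, since $\alpha_{\ell'} = \alpha_{\ell'+1} + \log 3$ by \Cref{def:parameters-exprobustqr}. The only remaining care is the $\ell'=1$ step: there $\alpha_1 = \infty$ and $\beta_1 = 0$, so the hypothesis of \Cref{lem:concentration-lemma} holds trivially ($q = 0$) and the step goes through.
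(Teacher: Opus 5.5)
Your proposal is correct and takes the same route as the paper: induction on the level using \Cref{lem:concentration-lemma}, the exact formula $\mathbb{E}\,\textsc{Count}_{\ell}(\tau,\sigma|_S)=\textsc{Count}_{\ell}(\tau,\sigma)\binom{s}{\ell}/\binom{n}{\ell}$, and the slack $\beta_\ell+\beta_\ell/4\le 2\beta_\ell$. Your explicit handling of every sample size $s'\ge s$, and of the $\alpha_1=\infty$ base step (where $q=0$), fills in details the paper leaves implicit.
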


\begin{proof}
We prove this statement via induction using \Cref{lem:concentration-lemma}, following the proof idea of Claim 3.14 of \cite{qualitycontrol}. \\\\
\textit{Base case.} Let $\ell = 1$. Any permutation is $(\infty, 0, s, 1)$-exponentially robustly quasirandom with respect to $\tau$ for any pattern $\tau$, because exponentially robust quasirandomness for $\ell = 1$ concerns the number of individual values chosen in subsets, which is always $s$.
\\\\
\textit{Inductive hypothesis.} Suppose that $\sigma$ is $(\alpha_{\ell - 1}, 2\beta_{\ell - 1}, s, \ell - 1)$-exponentially robustly quasirandom with respect to $\tau$.
\\\\
\textit{Inductive argument.} Assume the inductive hypothesis holds. Since $\sigma$ is $(\alpha_{\ell-1}, 2\beta_{\ell-1}, s, \ell - 1)$-exponentially robustly quasirandom with respect to $\tau$, by \Cref{lem:concentration-lemma},
$$\mathbb{P}_{S, |S| = s}\left(\left|\textsc{Count}_{\ell}(\tau, \sigma|_S)  - \mathbb{E}\left(\textsc{Count}_{\ell}(\tau, \sigma|_S) \right)\right|\geq  \frac{\beta_{\ell}}{4} \frac{\binom{k}{\ell}\binom{s}{\ell}}{\ell!} \right) \leq \exp(-\alpha_{\ell}).$$
Since the global count $\textsc{Count}_{\ell}(\tau, \sigma)$ is in $(1 \pm \beta_{\ell}) \binom{k}{\ell} \binom{n}{\ell} \frac{1}{(\ell)!}$,
$$\mathbb{E}\left(\textsc{Count}_{\ell}(\tau, \sigma|_S) \right) \in (1 \pm \beta_{\ell}) \frac{\binom{k}{\ell}\binom{s}{\ell}}{\ell!}.$$
This implies that
$$\mathbb{P}_{S : |S| = s}\left(\textsc{Count}_{\ell}(\tau, \sigma|_S) \not \in (1 \pm 2\beta_{\ell}) \frac{\binom{k}{\ell}\binom{s}{\ell}}{\ell!} \right) \leq \exp(- \alpha_{\ell}),$$
because $\beta_{\ell} + \beta_{\ell}/4 \leq 2\beta_{\ell}$.

Thus, by definition of exponentially robustly quasirandom (\Cref{def:exp-rob-qr}), $\sigma$ is also $(\alpha_{\ell}, 2\beta_{\ell}, s, \ell)$-exponentially robustly quasirandom with respect to pattern $\tau$.
\end{proof}

We are now ready to prove \Cref{lem:exp-qr-main}.

\begin{proof}[Proof of \Cref{lem:exp-qr-main}]
  Assume there exists some $\ell \in [k]$ such that $\textsc{Count}_{\ell}(\tau, \sigma) \not\in (1 \pm \beta_{\ell}) \binom{k}{\ell}\binom{n}{\ell} \frac{1}{\ell!}$ (and consider the first such $\ell$). Since we are considering the first such $\ell$, for all $\ell' < \ell$, $\textsc{Count}_{\ell'}(\tau, \sigma) \in (1 \pm \beta_{\ell'}) \binom{k}{\ell'}\binom{n}{\ell'} \frac{1}{(\ell')!}$. Thus, by \Cref{lem:ok-global-counts-exp-qr}, $\sigma$ is $(\alpha_{\ell - 1}, 2\beta_{\ell - 1}, s, \ell - 1)$-exponentially robustly quasirandom with respect to $\tau$. 

  Because $\sigma$ is $(\alpha_{\ell - 1}, 2\beta_{\ell - 1}, s, \ell - 1)$-exponentially robustly quasirandom with respect to $\tau$, by \Cref{lem:concentration-lemma}, with probability at least $1 - \exp(-\alpha_\ell)$,
  $$\left|\textsc{Count}_{\ell}(\tau, \sigma|_S)  - \mathbb{E}\left(\textsc{Count}_{\ell}(\tau, \sigma|_S) \right)\right| < \frac{\beta_{\ell}}{4} \frac{\binom{k}{\ell}\binom{s}{\ell}}{\ell!}.$$
  Since
  $$\mathbb{E}\left(\textsc{Count}_{\ell}(\tau, \sigma|_S) \right) = \textsc{Count}_{\ell}(\tau, \sigma)\frac{\binom{s}{\ell}}{\binom{n}{\ell}},$$
  and $\textsc{Count}_{\ell}(\tau, \sigma) \not\in (1 \pm \beta_{\ell}) \binom{k}{\ell}\binom{n}{\ell} \frac{1}{\ell!}$, it follows that, with probability at least $1 - \exp(-\alpha_\ell)$,
  $$\textsc{Count}_{\ell}(\tau, \sigma|_S) \not\in (1 \pm 3\beta_{\ell}/4) \binom{k}{\ell}\binom{|S|}{\ell} \frac{1}{\ell!},$$
  because $\beta_\ell - \beta_\ell/4 = 3\beta_\ell/4$.
\end{proof}

\subsubsection{The algorithm}\label{sec:permutation-alg}

Recall the definition of $S_{\tau}$ from \Cref{def:S-tau}: for each $N$, let $S_{\tau, N} = 1 + \max_{\tau' \subseteq \tau} (|1 - Q_{\tau', N}|)$. Let $S_\tau = \{S_{\tau, N}\}_N$.

In this section, we prove the following.

\begin{lemma}\label{lem:s-tau-qc}
    Let $\tau$ be a length-$k$ permutation pattern, $\ep > 0$, and $p$ be a non-atomic distribution over $\mathbb{R}$. Define $\D = \{p^{\otimes n}\}_n$. Then, $(\D, S_\tau)$-quality control is achievable in $O\left(\frac{k^8}{\ep^5}\log\left(\frac{k}{\ep}\right)\right)$ queries and time. 
\end{lemma}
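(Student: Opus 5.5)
The plan is to compose the two ingredients: \Cref{lem:exp-qr-main} shrinks the input to a short random subsequence, and \Cref{thm:jumbled-pattern} (via Algorithm \textsc{Jumbled-Count}) performs the check on that subsequence. The algorithm samples a uniformly random set $S\subseteq[n]$ of $N=\Theta\!\left(\frac{k^8}{\ep^5}\log\frac{k}{\ep}\right)$ indices and queries $\sigma$ on $S$. Since $p$ is non-atomic, ties occur with probability zero, so it can replace the queried values by their ranks to obtain a permutation $\sigma|_S$ of $[N]$. It then runs \textsc{Jumbled-Count} on $\sigma|_S$ with accuracy parameter $\Theta(\ep)$, scaled to match the thresholds $3\beta_\ell/4$ (with $\beta_\ell=\Theta(\ep)$) used in \Cref{lem:exp-qr-main}. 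It accepts iff \textsc{Jumbled-Count} accepts. Ranking costs $O(N\log N)$; with the $\log$ factor absorbed, or with bucketing into the $t$ intervals by rank, the time is $O(N)$ up to the stated bound. Queries are exactly $N$. Note that $N$ exceeds the sample size $\widetilde O(k^7/\ep^2)$ required by \Cref{lem:exp-qr-main}, and \Cref{def:exp-rob-qr} is stated for all $s\ge s_0$, so the sampling lemma applies at size $N$.

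Completeness is short. If $\sigma\sim p^{\otimes n}$, then $\sigma|_S$ ranked is a uniformly random permutation of $[N]$, independent of the choice of $S$. \Cref{lem:completeness} then gives jumbledness with probability $1-O(1/(k^dN))=1-o(1)$, so the algorithm accepts.

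Soundness is the main step. Suppose $|S_{\tau,n}(\sigma)-1|>\ep$, so some subpattern $\tau'\subseteq\tau$ has $|Q_{\tau',n}(\sigma)-1|>\ep$. I need to move from this single bad subpattern to an aggregate level $\ell$ with $\textsc{Count}_\ell(\tau,\sigma)\notin(1\pm\beta_\ell)\binom{k}{\ell}\binom{n}{\ell}/\ell!$. This is exactly what the soundness argument of \Cref{cor:qc-samples} asserts, and I would cite it there. Here is the delicate point. Since $\textsc{Count}_\ell$ sums over the $\binom{k}{\ell}$ index sets of $\tau$, with all targets equal, one deviating summand need not move the sum unless $\ep$ is rescaled. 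I would handle this by applying the reduction to $\tau'$ itself, since $\tau'$'s own subpatterns are subpatterns of $\tau$, and by working with parameter $\ep/k$-type slack only if forced. My first attempt, however, is to follow the paper's convention and cite the corollary's reduction directly.

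Given such a first bad $\ell$, \Cref{lem:exp-qr-main} says that with probability $\ge 1-\exp(-\alpha_\ell)\ge 2/3$ we have $\textsc{Count}_\ell(\tau,\sigma|_S)\notin(1\pm 3\beta_\ell/4)\binom{k}{\ell}\binom{N}{\ell}/\ell!$. If so, some length-$\ell$ subpattern of $\tau$ has count outside $(1\pm 3\beta_\ell/4)\binom{N}{\ell}/\ell!$ in $\sigma|_S$, because otherwise summing the $\binom{k}{\ell}$ terms would put the aggregate in range. By the contrapositive of \Cref{lem:soundness}, applied at every length $r\le k$ exactly as in the proof of \Cref{thm:jumbled-pattern} with accuracy $3\beta_\ell/4=\Theta(\ep)$, $\sigma|_S$ is not jumbled, and \textsc{Jumbled-Count} deterministically rejects. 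The remaining work is parameter bookkeeping: the jumbledness accuracy must be set below $\min_\ell 3\beta_\ell/4\ge 3\ep/16$, and this changes $N$ only by constant factors.
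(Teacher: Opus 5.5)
Your algorithm, your completeness argument, and your final averaging-plus-jumbledness step match the paper's proof. The gap is in the soundness reduction you designate as your first attempt: citing the soundness argument of \Cref{cor:qc-samples} to obtain a first bad level $\ell$ for $\textsc{Count}_\ell(\tau,\sigma)$.

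That argument only applies when $\tau$ itself is miscounted, because then level $k$ is bad, as $\textsc{Count}_k(\tau,\sigma)=C_{\tau,n}(\sigma)$. Under the $S_\tau$ hypothesis you only know that some $\tau'$ of length $r<k$ is miscounted. But $C_{\tau',n}(\sigma)$ is just one of $\binom{k}{r}$ summands of $\textsc{Count}_r(\tau,\sigma)$, and its deviation can be diluted or even cancelled exactly. For instance, $\tau=1432$ has three ascending and three descending pairs. Hence for every $\sigma$, $\textsc{Count}_2(\tau,\sigma)=3C_{12,n}(\sigma)+3C_{21,n}(\sigma)=3\binom{n}{2}$, exactly on target, however far $Q_{12,n}(\sigma)$ is from $1$. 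So an $\ep/k$-type rescaling does not rescue this route. Nothing in the cited argument forces some other level of $\tau$'s aggregate to be off, so the ``first bad $\ell$'' for $\tau$ need not exist.

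The alternative you mention in passing is the correct argument. It is exactly what the paper does, and it needs no extra slack. Run the whole chain on $\tau'$ instead of $\tau$.
\begin{itemize}
\item Since $\binom{r}{r}=1$, we have $\textsc{Count}_r(\tau',\sigma)=C_{\tau',n}(\sigma)$. Together with $Q_{\tau',n}(\sigma)\notin 1\pm\ep$ and $\beta_r\le \ep/2$, this makes level $r$ bad.
\item Level $1$ is never bad, since $\textsc{Count}_1(\tau',\sigma)=rn$ identically. Hence there is a minimal bad $\ell\in\{2,\dots,r\}$.
\item Apply \Cref{lem:exp-qr-main} to $\tau'$ with the $k$-based $\alpha_\ell,\beta_\ell,s$. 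This is legitimate because every requirement in its proof depends only on $\ell+1\le r\le k$ and is implied by the length-$k$ one. It gives, with probability at least $1-\exp(-\alpha_\ell)$, that $\textsc{Count}_\ell(\tau',\sigma|_S)\notin(1\pm 3\beta_\ell/4)\binom{r}{\ell}\binom{N}{\ell}/\ell!$.
\item Averaging over the $\binom{r}{\ell}$ summands produces a length-$\ell$ subpattern of $\tau'$, hence of $\tau$, whose count in $\sigma|_S$ is off by more than $3\ep/16>\ep/8$.
\item By \Cref{lem:soundness}, this cannot happen if $\sigma|_S$ is jumbled, so \textsc{Jumbled-Count} run at accuracy $\ep/8$ rejects.
\end{itemize}
With this argument replacing the citation of \Cref{cor:qc-samples}, your proof is complete.
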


We will later observe \Cref{thm:qc-ub} as a corollary of \Cref{lem:s-tau-qc}.

Given our analysis of jumbledness and exponentially robust quasirandomness, we are ready to present the algorithm for $(\D, S_\tau)$-quality control. These two ingredients together allow us to construct an algorithm that is rather simple: sample a subsequence of the indices, and verify jumbledness on this subsequence. Exponentially robust quasirandomness allowed us to argue that this subsequence can be very small and still possess information about whether the global count of the pattern (and subpatterns) all align with random permutations. Jumbledness then allows us to extract this information from the small subsequence.

\begin{figure}[H]
\centering
\fbox{
\parbox{\dimexpr0.95\linewidth-2\fboxsep-2\fboxrule\relax}{
\textbf{Algorithm Permutation-Pattern-Count:} \\
On inputs: A length-$n$ real-valued sequence $\sigma$ with distinct entries, a length-$k$ permutation pattern $\tau$, parameter $\ep \in (0, 1)$, $n \geq N$
\begin{enumerate}
    \item Sample $N = C\frac{k^8}{\ep^5}\log \left(\frac{k}{\ep}\right)$ indices; replace the sampled values by their relative ranks and call the resulting induced permutation $\pi$.
    \item Run \textsc{Jumbled-Count} on inputs: $\pi$, $\tau$, $\ep/8$, and $N$.
    \item Accept iff the call to \textsc{Jumbled-Count} accepts; reject otherwise.
\end{enumerate}
}
}
\caption{Algorithm for quality control of permutation pattern counts in a permutation. Exponentially robust quasirandomness is used to prove that sampling $N$ indices suffices to preserve relative pattern counts as compared to the global count. Permutation jumbledness is used to verify that, in the permutation on the sampled indices, the counts of patterns are as expected from a random permutation.}
\label{figure:permutation-alg}
\end{figure}

We now prove \Cref{lem:s-tau-qc}.

\begin{proof}[Proof of \Cref{lem:s-tau-qc}]

Although the quality-control problem is defined for real-valued sequences drawn from $p^{\otimes n}$, since $p$ is non-atomic, the queried values are distinct almost surely. Replacing each queried value by its rank yields a uniformly random permutation, so we treat the input as a permutation $\sigma$ throughout.
\\\\
\textbf{Completeness.} We first prove that a random permutation is accepted with probability $1 - o(1)$. First, if $\sigma$ is a random permutation over $[n]$ and $S \subseteq [n]$, then $\sigma|_S$ is a random permutation over $S$. Thus, by \Cref{lem:completeness}, when \textsc{Jumbled-Count} is called on $\sigma|_S$, \textsc{Accept} is output except with probability $O(1/(k^dN))=o(1)$ in the runtime of the algorithm.
\\\\
\textbf{Soundness.} Suppose that $S_{\tau,n}(\sigma) \notin 1\pm\ep$. Then there exists a subpattern $\tau'\subseteq\tau$ of length $r\leq k$ such that $Q_{\tau',n}(\sigma)\notin1\pm\ep$. Apply the preceding argument to $\tau'$: there is a minimum $\ell \in \{2,3,\dots,r\}$ such that $\textsc{Count}_{\ell}(\tau', \sigma) \not\in (1 \pm \beta_\ell) \binom{r}{\ell}\binom{n}{\ell} \frac{1}{\ell!}$. Then, by \Cref{lem:exp-qr-main}, on a random subset $S$ of size $N \geq s := \widetilde{O}(k^7/\ep^2)$, with probability at least $1-\exp(-\alpha_\ell)$, $\textsc{Count}_{\ell}(\tau', \sigma|_S) \not\in (1 \pm 3\beta_\ell/4) \binom{r}{\ell}\binom{|S|}{\ell} \frac{1}{\ell!}$. Since $\beta_\ell \geq \ep/4$, a deviation of $3\beta_\ell/4 \geq 3\ep/16 > \ep/8$ in the summed count $\textsc{Count}_\ell$ forces at least one length-$\ell$ subpattern of $\tau'$ to deviate by more than $\ep/8$. By \Cref{thm:jumbled-pattern}, \textsc{Jumbled-Count}, run with parameter $\ep/8$, therefore rejects. The sample size and jumbledness parameters chosen using $k$ dominate those required for $r\leq k$, and jumbledness certifies all permutation patterns of length at most $k$ simultaneously.
\end{proof}

We now prove \Cref{thm:qc-ub} from \Cref{lem:s-tau-qc}.

\begin{proof}[Proof of \Cref{thm:qc-ub} from \Cref{lem:s-tau-qc}]
    Suppose $A$ is a $(\D, S_{\tau})$-quality control algorithm. We argue that it is also a $(\D, Q_{\tau})$-quality control algorithm. For completeness, $A$ accepts $x \sim \D$ with probability $1 - o(1)$. For soundness, by definition of $Q_{\tau, n}$, $Q_{\tau, n}(f) \not \in 1 \pm \ep$ implies $S_{\tau, n}(f) \not \in 1 \pm \ep$. Thus, $A$ rejects on $f$ with $Q_{\tau, n}(f) \not \in 1 \pm \ep$. We conclude that $A$ is a $(\D, Q_{\tau})$-quality control algorithm.
\end{proof}

\subsection{Quality control lower bound}\label{sec:quality-control-permutation-lb}

We turn to proving the lower bound of \Cref{thm:any-pattern-intro}, restated below.

\begin{theorem}\label{thm:any-pattern-lb-in-section-1}
    For every non-atomic distribution $p$ over $\mathbb{R}$, letting $\D = \{p^{\otimes n}\}_n$, and any length-$k$ permutation pattern $\tau$, any algorithm for $(\D, Q_\tau)$-quality control requires $\Omega(k^{1.5 -\delta})$ queries, for any constant $\delta \in (0, 1)$.
\end{theorem}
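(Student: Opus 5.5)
We would prove this with Yao's principle. We construct two distributions over inputs: $\mathcal{D}_1 = p^{\otimes n}$, which the tester must accept, and a planted distribution $\mathcal{D}_2$ whose samples have $Q_{\tau,n} \geq 1+\ep$ with high probability, so the tester must reject them. We then show that no deterministic algorithm making $q = o(k^{1.5-\delta})$ queries can distinguish $\mathcal{D}_1$ from $\mathcal{D}_2$ with constant advantage.

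Since $p$ is non-atomic, only relative ranks matter, so we work with $\text{Unif}(0,1)$. To sample from $\mathcal{D}_2$, fix a constant $t$ (to be taken large as a function of $\delta$) and a set $T \subseteq [k]$ of $t$ consecutive \emph{ranks} in the middle of $\tau$. Let $a_1 < \dots < a_t$ be the positions with $\tau(a_j) \in T$. The complement ranks $[k]\setminus T$ then split into $b_{\tau,t} \leq t+1$ nonempty rank intervals, as in \Cref{def:b-P}. The construction is as follows.

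\begin{enumerate}
\item Draw $z \sim \mathcal{D}_1$.
\item Choose a random location $i$.
\item Plant $t$ short runs, each of length $s$, at positions near $i + (a_j/k) n$.
\item Give the run for $a_j$ values in a tiny window around the quantile $\tau(a_j)/k$. The values are ordered so that, within the planted region, every choice of one element from each run induces the pattern $\mathrm{pat}_{\{a_1,\dots,a_t\}}(\tau)$.
\item Leave everything else equal to $z$.
\end{enumerate}

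Outside the planted region the two distributions agree. So, conditioned on no query landing in a planted run, the query answers are identically distributed. The probability of hitting a run is at most $q\cdot t s/n$, which gives a lower bound of $\Omega(n/(ts))$ queries.

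The main step is to choose $s$ as small as possible while still forcing $\mathcal{D}_2$ to have at least $(1+2\ep)\binom{n}{k}/k!$ copies of $\tau$ with probability $1-o(1)$. We lower bound the new copies that use exactly one element from each run. A new copy is formed by picking one element from each of the $t$ runs ($s^t$ ways) and completing it with $k-t$ i.i.d. positions. Those positions must fall in the correct position gaps between the runs and take values in the correct rank intervals relative to the planted values.

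Because the planted values sit at the exact quantiles $\tau(a_j)/k$, the constraint imposed by the $t$ fixed middle elements costs only a polynomial factor, not $1/k^t$-type losses at every split. Let $m_1,\dots,m_b$ be the sizes of the $b=b_{\tau,t}$ rank intervals. The completion count is then roughly $\binom{n}{k-t}\cdot\frac{(k-t)!}{k!}\prod_i$ (multinomial/quantile factors). Comparing with the baseline, we expect the ratio of new copies to $\binom{n}{k}/k!$ to be roughly $(s/n)^t \cdot k^{t}\cdot k^{t/2}\cdot k^{-(b-1)/2}$. The $k^{-(b-1)/2}$ term is a local-CLT penalty per gap, matching Stirling's approximation as used in \Cref{prop:binomial}; the other factors come from the quantile concentration of the split.

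Setting this ratio to $\Theta(\ep)$ and solving gives $n/s \approx k^{1.5 - (b-1)/(2t)}$ in the worst case $b=t+1$. Since a hit requires one of $q$ queries to land in a region of total length $ts$, we obtain $q = \Omega(k^{1.5 - 1/2 - \ldots})$. We will recheck exactly how the exponent $3/2 - O(1)/t$ arises; in the worst case it is at least $3/2 - 1/t$. Taking $t > 1/\delta$ then gives $\Omega(k^{1.5-\delta})$.

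We also need concentration for $\mathcal{D}_2$. The original copies already number $(1\pm\ep/2)\binom{n}{k}/k!$ with high probability by \Cref{prop:random-permutation-P-copies}, since modifying the positions touches only a negligible fraction of $k$-tuples. For the planted copies, a second-moment bound over the i.i.d. completions gives concentration.

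The hardest part is precisely computing the completion probability: the probability that $k-t$ uniform values split into the correct rank intervals around the planted quantiles, and that the positions split correctly around the runs. This computation carries the $k^{-(b-1)/2}$ exponent. To get it we would first try exact multinomial formulas and Stirling's approximation, place the run values at the exact quantiles, and place the runs at the exact positional quantiles $a_j/k$, so that each gap contributes exactly one $\Theta(1/\sqrt{k})$ factor.
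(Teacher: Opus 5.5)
Your framework matches the paper's: Yao's principle, $t$ planted runs carrying consecutive middle ranks, a hitting-probability bound, and a count of copies that use one element per run. However, both quantitative steps that produce the exponent fail as written.

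\textbf{The copy count.} Your ratio $(s/n)^t k^{t}k^{t/2}k^{-(b-1)/2}$ gives $n/s\approx k^{3/2-(b-1)/(2t)}$. When $b=t+1$ this equals $k$, so your argument as written yields only a linear bound, and ``at least $3/2-1/t$'' is asserted rather than derived. The error starts when you treat $b_{\tau,t}$ as a number of \emph{rank} intervals. Since $T$ is a run of consecutive ranks, $[k]\setminus T$ is at most two rank intervals (below and above the planted band). The value constraint therefore costs a single $\Theta(1/\sqrt{k})$ factor (\Cref{lem:analysis-construction-1}). The quantity $b_{\tau,t}$ of \Cref{def:b-P} counts nonempty \emph{position} gaps. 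Forcing exactly $g_j$ background indices into the $j$th gap costs about $\prod_j g_j^{-1/2}\approx k^{-(b-1)/2}$ relative to $\binom{n}{k-t}$. The gain is $s^t\cdot\frac{k!}{(k-t)!}\cdot\binom{n}{k-t}/\binom{n}{k}\approx(sk^2/n)^t$, not $(s/n)^tk^{3t/2}$. So the ratio to $\binom{n}{k}/k!$ is about $(sk^2/n)^t k^{-b/2}$. This allows $n/s=\Theta(k^{2-b/(2t)})$, i.e.\ $k^{3/2-1/(2t)}$ for $b=t+1$. The paper's looser bound $3\binom{n}{k}/k!\le d(ne^2/k^2)^k$ gives the slightly weaker $k^{2-(b+2)/(2t)}$ in \Cref{lem:expected-P}.

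\textbf{Indistinguishability.} This is the more serious gap: the hitting bound does not hold for your construction. With $T$ fixed, the positions $a_1<\dots<a_t$ are fixed, and the only randomness in the run locations is the common shift $i$. The runs must stay inside $[n]$, and the two end gaps must keep lengths roughly proportional to their demands $a_1-1$ and $k-a_t$. Shifting all runs by $xn/k$ shrinks the number of completions by a factor $e^{-\Omega(x^2/k)}$. Hence $i$ can range over only $\widetilde{O}(n/\sqrt{k})$ positions before the planted copies become too few.

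An algorithm that knows $\tau$ and $T$ can scan the window where the first run may lie. It hits the run with $\widetilde{O}(n/(s\sqrt{k}))$ queries, and the hit is recognizable because the value lands in a known tiny window. So your claimed hitting probability $q\cdot ts/n$ is false. Even with the corrected count, this construction gives only about $k^{1-1/(2t)}$ when $b=t+1$.

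The missing idea is to randomize the planted ranks themselves. The paper picks the block $j_1$ uniformly among blocks whose rank lies in $[k/4,3k/4]$, sets $r=\tau(j_1)$, and plants in blocks $\tau^{-1}(r),\dots,\tau^{-1}(r+t-1)$, each run uniform in the first half of its own block. A fixed block is then planted with probability $O(t/k)$. Given that, a fixed index is covered with probability $O(ks/n)$ (with your per-run length $s$). This gives the $O(ts/n)$-per-query bound of \Cref{lem:probability-differ-deterministic}. Moreover, every gap that needs $g$ background indices contains $g$ whole blocks, so it has length at least $gn/k$ for every realization of the offsets, and no shift penalty arises.
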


\begin{theorem}\label{thm:any-pattern-lb-in-section-2}
    Let $p$ be a non-atomic distribution over $\mathbb{R}$, let $\D = \{p^{\otimes n}\}_n$, and let $\tau$ be the increasing pattern of length $k$. Any algorithm for $(\D, Q_\tau)$-quality control requires $\Omega(k^{2 -\delta})$ queries, for any constant $\delta \in (0, 1)$.
\end{theorem}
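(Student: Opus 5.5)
We prove the lower bound via Yao's principle, distinguishing $\mathcal{D}_1 = p^{\otimes n}$ (which we may take to be $\mathrm{Unif}(0,1)^{\otimes n}$, since only relative order matters) from a planted distribution $\mathcal{D}_2$ along the lines of the single-block construction in the overview. Fix a large constant $t$. To sample $y \sim \mathcal{D}_2$, we first draw $z \sim \mathcal{D}_1$. We then choose a uniformly random window of length $s$ and plant inside it a slowly increasing run of values. Each planted value $y[i+j] = i/n + j\delta$ sits in a narrow band of values, with the band's location tied to the position $i$. Outside the window, $y$ agrees with $z$.

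The key counting step is to show that with high probability $Q_{\tau,n}(y) > 1+\ep$. We count copies of the increasing $k$-pattern whose $t$ middle-ranked entries all come from the planted block, while the remaining $k-t$ entries are split into the two nonempty value/position intervals below and above the block ($b_{\tau,t}=2$). Choosing $t$ planted entries gives $\binom{s}{t}$ options. Because of the correlation between position $i/n$ and value, each such choice extends to roughly
\[
\sum_{a+b=k-t} \binom{\,\cdot\,}{a}\binom{\,\cdot\,}{b}
\]
increasing completions from the i.i.d. part: the elements left of and below the window, and those right of and above it. Comparing with $\binom{n}{k}/k!$, the ratio behaves like $(s/n)^t \cdot k^{2t}$ up to $t$-dependent constants. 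The reason is that forcing $t$ consecutive ranks of the increasing pattern to come from a tiny region costs $(s/n)^t$, but it gains a factor of about $k^{2}$ per planted element. That gain comes from removing the $1/k!$-type penalty on the ordering among the planted elements relative to the two flanking groups. So setting $s = \Theta(n/k^{2-2/t})$ makes the planted contribution exceed $\ep\binom{n}{k}/k!$. We also need concentration over the random $z$, which we get from a second-moment bound analogous to Proposition~\ref{prop:random-permutation-P-copies}. We also note that the copies not touching the block still number $(1\pm o(1))\binom{n}{k}/k!$, so the counts only add up.

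Indistinguishability follows from an averaging argument. Any (possibly adaptive) deterministic algorithm making $q$ queries sees identically distributed answers under $\mathcal{D}_1$ and $\mathcal{D}_2$ unless some query lands in the planted window. The window is placed uniformly among $n-s+1$ positions, so this happens with probability at most $qs/(n-s+1)$. The total variation distance between the transcripts is therefore $O(qs/n) = O(q/k^{2-2/t})$. A valid quality control algorithm must accept $\mathcal{D}_1$ with probability $1-o(1)$ and reject $\mathcal{D}_2$ (a sound instance to reject, with high probability) with probability $2/3$, so it needs $q = \Omega(k^{2-2/t})$. Taking $t \geq 2/\delta$ gives Theorem~\ref{thm:any-pattern-lb-in-section-2}.

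The main obstacle is the counting estimate in the second step. We must verify rigorously that the planted $t$ elements, sitting in a band of values near $i/n$ at positions near $i$, combine with $\Theta(k)$-size flanking increasing subsequences with the claimed $k^{2t}$ gain. This requires handling the hypergeometric-like split of the remaining $k-t$ entries between the lower-left and upper-right quadrants, where the dominant contribution comes from $i/n$ near the middle so that both quadrants have about $k/2$ elements. We plan to compute this expectation by an explicit Beta-integral: the probability that $a$ i.i.d. points in the lower-left region and $b$ in the upper-right form increasing chains is $1/(a!\,b!)$ times the region volumes. We will then check that $\binom{k}{t}$-type factors yield $(s k^2/n)^t$ up to constants depending only on $t$. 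The step size $\delta$ must be chosen so that all planted values stay in that band and keep their intended ranks.
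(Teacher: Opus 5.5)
Your proof is correct and is essentially the paper's argument. You use the single contiguous diagonal block from the paper's overview, whereas the formal proof uses $t$ constant-valued runs in consecutive blocks (the case $b_{\tau,t}=2$ of Lemma~\ref{lem:any-pattern-lb-in-section-3}); the counting of copies with $t$ planted middle-ranked entries and two flanking groups, the choice $s=\Theta(n/k^{2-2/t})$, and the hitting bound $q=\Omega(n/s)$ are otherwise the same.

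One minor correction to your heuristic: summing over the split $a+b=k-t$ contributes a factor $\sum_a \Pr[\mathrm{Bin}(k-t,i/n)=a]^2=\Theta(1/\sqrt{k})$ for $i/n$ in the middle. The ratio is therefore $\Theta_t\bigl((sk^2/n)^t/\sqrt{k}\bigr)$ rather than $\Theta_t\bigl((s/n)^t k^{2t}\bigr)$, which is harmless because your choice of $s$ still makes it $\Theta_t(k^{3/2})$.
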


Before giving the construction, we define a quantity $b_\tau$ of a permutation pattern $\tau$ that will play a role in our query complexity lower bound.

\begin{definition}\label{def:b-P}
    Let $\tau:[k]\to[k]$ be a length-$k$ permutation pattern and let $t \geq 1$ be a constant. Consider any rank $r \in [k-t+1]$ such that $\{r, r+1, \dots, r+t-1\} \subseteq [k/4, 3k/4]$. For such an $r$, let
    $$ B_r = \left\{ \tau^{-1}(r), \tau^{-1}(r+1), \dots, \tau^{-1}(r+t-1) \right\}
    \subseteq [k]$$
    be the set of positions whose ranks under $\tau$ form the consecutive run
    $r, r+1, \dots, r+t-1$. Deleting the $t$ positions of $B_r$ from the line $[k]$ leaves at most $t+1$ maximal intervals of consecutive positions, which specifically are the (up to) $t-1$ gaps between consecutive positions in $B_r$, together with the two end intervals. Let $\nu(\tau, t, r)$ be the number of these intervals that are nonempty.
        
    Define
    $$ b_{\tau, t} = 
        \min_{\substack{r \in [k-t+1] :\\
        \{r,r+1,\dots,r+t-1\}\subseteq[k/4,\,3k/4]}}
        \nu(\tau, t, r), $$
    the least number of nonempty intervals into which these $t$ positions partition the remaining $k - t$ positions, over all choices of the consecutive planted ranks lying
    in the middle half of $[k]$.
\end{definition}

\begin{remark}
    For $k > t$ we always have $1 \le b_{\tau,t} \le t+1$. For the increasing (and, symmetrically,
    the decreasing) pattern, the planted blocks are consecutive in sequence order, so only the two end intervals are nonempty and $b_{\tau,t} = 2$. This gives the exponent
    $2 - \tfrac{4}{2t} = 2 - \tfrac{2}{t} \to 2$ of \Cref{thm:any-pattern-lb-in-section-2}. In the worst case, the
    middle-ranked blocks are spread throughout the sequence, every interval is nonempty, and
    $b_{\tau,t} = t+1$, giving the exponent $2 - \frac{t+3}{2t} = \frac{3}{2} - \frac{3}{2t} \to \frac{3}{2}$ of
    \Cref{thm:any-pattern-lb-in-section-1}.
\end{remark}

When the permutation pattern $\tau$ and constant $t$ are clear from context, we refer to $b_{\tau, t}$ as $b$. \Cref{thm:any-pattern-lb-in-section-1} and \Cref{thm:any-pattern-lb-in-section-2} both follow from the following result we prove.

\begin{lemma}\label{lem:any-pattern-lb-in-section-3}
    For all non-atomic distributions $p$ over $\mathbb{R}$, letting $\D = \{p^{\otimes n}\}_n$, and for any length-$k$ permutation pattern $\tau$, any algorithm for $(\D, Q_\tau)$-quality control requires $\Omega\left(k^{2 - \frac{b_{\tau, t} + 2}{2t}}\right)$ queries, for any constant $t \geq 1$.
\end{lemma}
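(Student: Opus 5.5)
We will prove \Cref{lem:any-pattern-lb-in-section-3} by Yao's principle. We exhibit two distributions over length-$n$ sequences. The first is $\mathcal{D}_1 = p^{\otimes n}$, which any quality control algorithm must accept with probability $1-o(1)$. The second, $\mathcal{D}_2$, is supported (with high probability) on sequences with $Q_\tau \geq 1+\ep$, which must be rejected with probability $2/3$. We then show that any deterministic algorithm making $q = o(k^{2-\frac{b+2}{2t}})$ queries has $o(1)$ distinguishing advantage. Since $p$ is non-atomic, only ranks matter, so we may take $p = \mathrm{Unif}(0,1)$ throughout.

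\textbf{Construction of $\mathcal{D}_2$.} Fix $r$ attaining $b = b_{\tau,t}$, and let $B_r = \{j_1 < \dots < j_t\}$ be the positions of $\tau$ carrying the consecutive middle ranks $r,\dots,r+t-1$. To sample $y \sim \mathcal{D}_2$:
\begin{enumerate}
\item Draw $z \sim \mathcal{D}_1$.
\item Choose, at random locations, $t$ runs of length $s$ each. Their relative order in the sequence follows the order of $j_1,\dots,j_t$, and their spacing is roughly proportional to the gaps $j_{i+1}-j_i$ (runs are adjacent when the gap between them is empty).
\item Overwrite the runs with values lying in a tiny window around the quantile $r/k$. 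Within that window, the values in the run for $j_i$ all lie below those in the run for $j_{i+1}$ whenever $\tau(j_i) < \tau(j_{i+1})$, i.e.\ they realize the rank order that $\tau$ imposes on $B_r$.
\end{enumerate}

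\textbf{Counting planted copies.} The copies of $\tau$ created by the planting choose one element from each run, which gives $s^t$ choices. The remaining $k-t$ pattern entries are then filled from the unplanted random background. These entries split into the $b$ nonempty position-intervals, and the entries in each interval must land in the corresponding stretch of background positions with the correct relative ranks. The planted values sit exactly at the middle quantile, so ranks below $r$ are filled by values below the window and ranks above $r+t-1$ by values above it. A concentration argument in the style of \Cref{prop:random-permutation-P-copies} should show that the number of completions is about a fixed constant fraction of $\binom{n}{k-t}/(k-t)!$. This fraction is governed by a product over the $b$ intervals of binomial/Stirling factors. Each nonempty interval costs roughly a $\sqrt{k}$-type loss from fixing how many elements fall between consecutive planted runs, in the spirit of \Cref{prop:binomial}; there are also two boundary terms.

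Requiring the planted count to exceed $\ep\binom{n}{k}/k!$ therefore gives a condition of the form
\[
s^t \gtrsim \frac{(n/k)^t}{k^{t}}\cdot k^{(b+2)/2},
\]
which rearranges to $s \approx n / k^{2-\frac{b+2}{2t}}$. I expect this bookkeeping to be the main obstacle. It requires pinning down exactly how the probability that the $k-t$ background entries respect the rank split (a central binomial event of size $\Theta(1/\sqrt{k})$, per \Cref{prop:binomial}) and the interval-placement constraints multiply. It also requires showing that copies overlapping several planted elements in unintended ways do not destroy the lower bound. I would first handle the increasing pattern ($b=2$, a single contiguous block) and then generalize interval by interval.

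\textbf{Indistinguishability.} Outside the planted positions, $y$ is distributed exactly as $\mathcal{D}_1$. The window values are atypical only jointly: individually they are just numbers near $r/k$, and a value falls in the window with negligible probability under $\mathcal{D}_1$ when the window is tiny. Hence, conditioned on no query hitting a planted run, the query answers have identical distributions under $\mathcal{D}_1$ and $\mathcal{D}_2$. Because the run locations are uniformly random, each nonadaptive or adaptive query hits one of the $t$ runs with probability $O(ts/n)$. A union bound over $q$ queries gives a hitting probability of $O(qts/n) = o(1)$ when $q = o(n/s) = o(k^{2-\frac{b+2}{2t}})$, with $t$ constant. So the total variation distance between the transcripts is $o(1)$, contradicting completeness together with soundness. \Cref{thm:any-pattern-lb-in-section-1,thm:any-pattern-lb-in-section-2} then follow by plugging in $b \le t+1$ and $b=2$ respectively and letting $t \to \infty$.
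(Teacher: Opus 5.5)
Your counting plan matches the paper's: runs ordered like $B_r$, background stretches of length about $a_jn/k$ between and beside them, the $\Theta(1/\sqrt{k})$ rank-split event of \Cref{lem:analysis-construction-1}, and a conditional second-moment bound. The indistinguishability step, however, fails for the construction you describe, because you fix $r$.

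With $r$ fixed, the interval sizes $a_0,\dots,a_b$ are fixed. The number of completions is then proportional to $\prod_j\binom{L_j}{a_j}$, where $L_j$ is the length of the $j$-th background stretch, \emph{including the two end stretches}, which your construction leaves unconstrained. The ratio of this product to its optimum is about $\exp(-(k-t)\,\mathrm{KL}(\pi\,\|\,\lambda))$, with $\pi_j=a_j/(k-t)$ and $\lambda_j=L_j/n$. So moving some proportion by $w$ costs a factor $e^{-\Omega(kw^2)}$, and a polynomial increase in $s$ can absorb only $w=\widetilde{O}(1/\sqrt{k})$. Hence every run lies in a window of width $\widetilde{O}(n/\sqrt{k})$ whose location the algorithm can compute from $\tau$ and $t$.

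Your claim that the run locations are uniformly random, so that each query hits a run with probability $O(ts/n)$, is therefore false. A grid of $\widetilde{O}(n/(\sqrt{k}\,s))$ queries inside one window always lands on a planted value, which, as you note, $\mathcal{D}_1$ essentially never produces. With $r$ fixed, the argument certifies only about $n/(\sqrt{k}\,s)$, roughly a $\sqrt{k}$ factor below the claim. For the increasing pattern this caps the construction near $k^{3/2}$ rather than $k^{2-\delta}$.

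The missing idea is to randomize the planted rank window, which is Step~2 of the paper's construction. The paper cuts $[n]$ into $k$ blocks, with block $i$ playing position $i$ of $\tau$. It picks the first run's start uniformly among the first halves of all valid blocks, so $r$ is uniform over the valid ranks. It then starts the run for rank $r+i$ uniformly in the first half of block $\tau^{-1}(r+i)$.

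Under this construction, a fixed block is planted with probability $O(t/k)$, and given this, a fixed index is covered with probability $O(ks/(tn))$. So every index is planted with probability at most $8s/n$. Meanwhile each stretch still has at least $a_jn/k$ background positions, so for each fixed $r$ the count bound holds whatever the offsets inside blocks are.

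Since $r$ is now random, though, the count bound has to hold for every valid $r$. This argument therefore controls the largest $\nu(\tau,t,r)$ over valid $r$ rather than the minimum $b_{\tau,t}$. That is harmless for \Cref{thm:any-pattern-lb-in-section-1,thm:any-pattern-lb-in-section-2}, since $\nu\le t+1$ always and $\nu=2$ for every valid $r$ when $\tau$ is increasing.

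A minor point: require the planted copies to exceed a constant multiple such as $2\binom{n}{k}/k!$, as the paper does, rather than $\ep\binom{n}{k}/k!$. The untouched background contributes only about $(1-o(1))\binom{n}{k}/k!$ copies, so $\ep\binom{n}{k}/k!$ planted copies would not push the total past $(1+\ep)\binom{n}{k}/k!$.
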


We begin by arguing about the case where $p = \mathrm{Unif}(0, 1)$ and then describe why the same argument extends from $\mathrm{Unif}(0, 1)$ to any non-atomic distribution $p$.

For the case of $p = \mathrm{Unif}(0, 1)$, we prove the theorem via the following construction of a sequence that is difficult to distinguish from the uniform distribution with fewer than $\Omega\left(k^{2 - \frac{b_{\tau, t} + 2}{2t}}\right)$ queries.

\paragraph{Construction.}

Let $t$ be an arbitrarily large constant. Let $s = \Theta\left(\frac{n}{k^{2 - \frac{b+2}{2t}}}\right)$.

\begin{enumerate}
    \item Divide a length-$n$ sequence into $k$ blocks. For each block $i \in [k]$, look at the rank of $i$ in the permutation pattern $\tau$. Let the ``valid blocks'' be those blocks whose ranks are in $[k/4, 3k/4]$.
    \item Sample $i_1 \sim [n]$ uniformly at random among the first halves of valid blocks. See which block it falls into and call it $j_1$.
    \item Let $r$ be the rank of $j_1$ in the permutation pattern $\tau$. Place $\frac{r-1}{k-1}$ at position $i_1$, $i_1 + 1, \dots, i_1 + \frac{s}{t} - 1$. 
    \item Find the block $j_2$ whose rank is $r + 1$. Sample a uniform random index $i_2$ in the first half of this block. Place $\frac{r}{k-1}$ at position $i_2$, $i_2 + 1, \dots, i_2 + \frac{s}{t} - 1$.
    \item Repeat this $t - 2$ more times, with ranks $r + 2$, $r + 3$, $\dots, r + t - 1$, respectively.
    \item For each of the remaining indices, independently sample values uniformly at random from $(0, 1)$ and place these values at these indices.
\end{enumerate}

Let $\mathcal{D}_{\textsc{no}}^{s, t}$ be the distribution over sequences induced by this construction. We illustrate the construction of sequences from $\mathcal{D}_{\textsc{no}}^{s, t}$ for the increasing pattern \Cref{fig:lowerbound-qc-visual-increasing} and general patterns \Cref{fig:lowerbound-qc-visual}.

\begin{figure}
    \centering
    \includegraphics[scale=0.14]{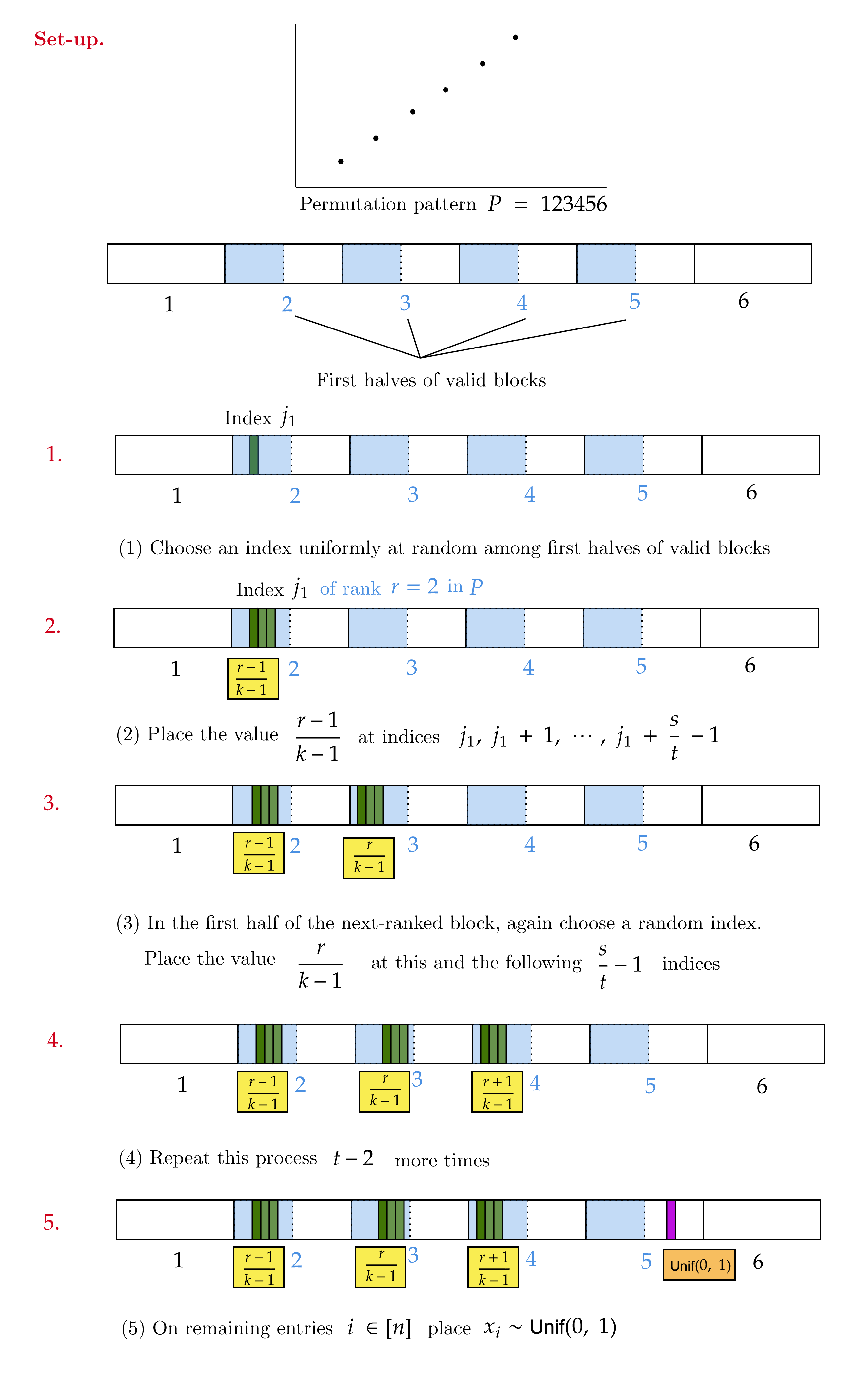}
    \caption{The process of generating a sequence according to the distribution $\mathcal{D}_{\textsc{no}}^{s, t}$ when the permutation pattern $\tau$ is the length-6 increasing permutation pattern. Each figure corresponds to a step of the construction described in \Cref{sec:quality-control-permutation-lb}.}
    \label{fig:lowerbound-qc-visual-increasing}
\end{figure}

\begin{figure}
    \centering
    \includegraphics[scale=0.14]{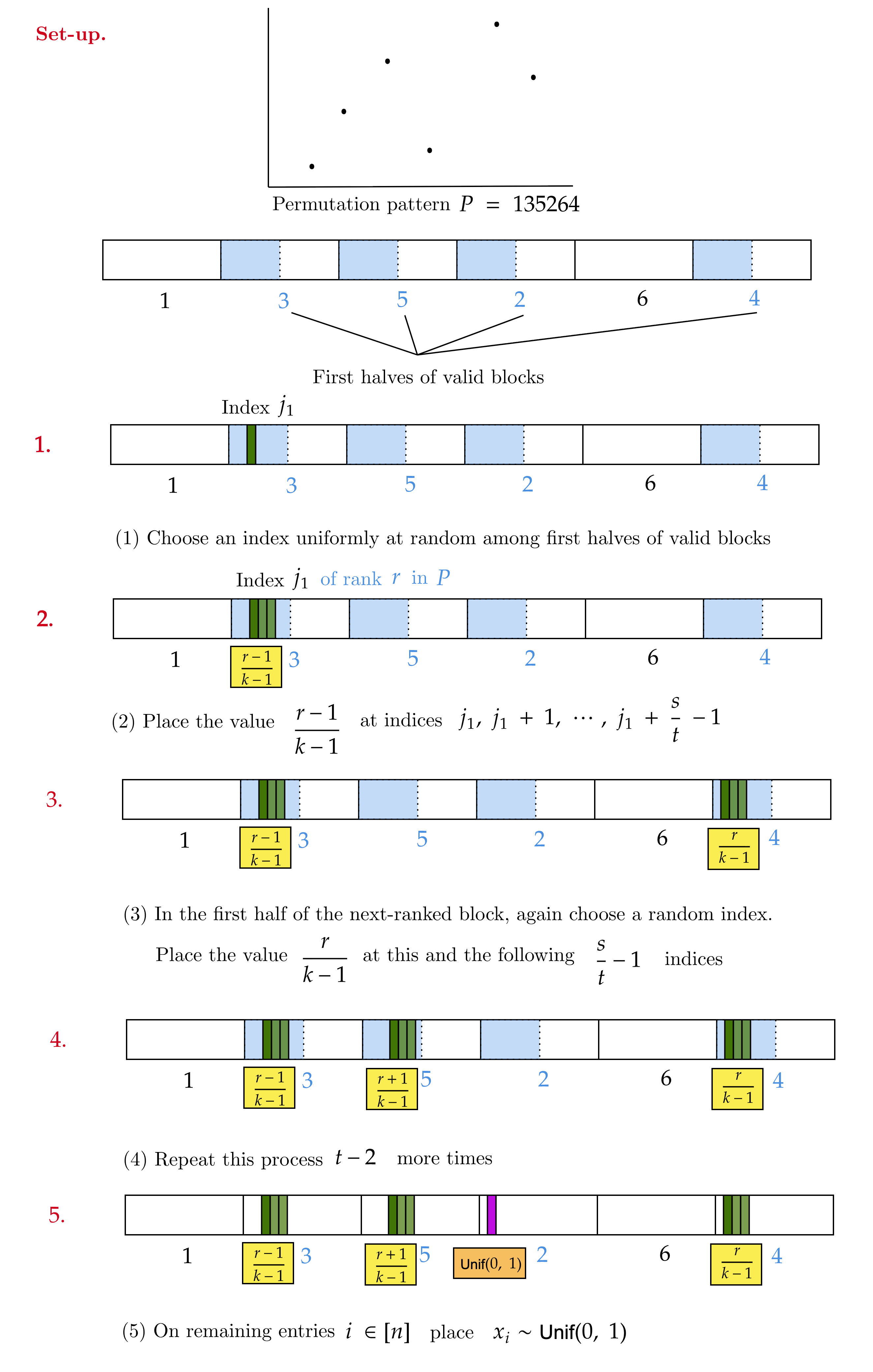}
    \caption{The process of generating a sequence according to the distribution $\mathcal{D}_{\textsc{no}}^{s, t}$. Each figure corresponds to a step of the construction described at the start of \Cref{sec:quality-control-permutation-lb}.}
    \label{fig:lowerbound-qc-visual}
\end{figure}

\paragraph{Analysis.}

In our analysis, we will be interested in the number of copies of $\tau$ in a sequence constructed according to the procedure above. As such, one of the properties we care about is captured by the following question: For $k - t$ values drawn uniformly at random from $(0, 1)$, what is the probability that exactly $r - 1$ values will fall \textit{below} the $t$ values planted, and the rest will fall \textit{above}? We will need such an event to hold to get a copy of $\tau$ in some positions, and will use this lemma later on in the analysis.

\begin{lemma}\label{lem:analysis-construction-1}
Let $k \in \mathbb{N}$ be sufficiently large and let $r \in [k/4, 3k/4]$. Let $X_1, X_2, \dots, X_{k - t}$ each be independently and identically distributed according to $\text{Unif}(0, 1)$. Let $p_{r-1}$ be the probability that exactly $r - 1$ of the $X_1, X_2, \dots, X_{k - t}$ have values less than $\frac{r - 1}{k - 1}$ on them, and exactly $k - t - r + 1$ of them have value greater than $\frac{r + t - 2}{k - 1}$ on them. Then $p_{r-1}$ is at least $\frac{c}{\sqrt{k}}$ for some constant $c$ that is independent of $k, t, r$.
\end{lemma}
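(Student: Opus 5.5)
The plan is to compute $p_{r-1}$ exactly as a multinomial probability and then factor it into two pieces that can each be bounded separately. Set $a = \frac{r-1}{k-1}$ and $b = \frac{r+t-2}{k-1}$. Each $X_i$ independently lands in one of three regions: $[0,a)$ with probability $a$, $[a,b]$ with probability $\frac{t-1}{k-1}$, or $(b,1]$ with probability $1-b$. The event in question says exactly $r-1$ of the $X_i$ land in the low region, none in the middle, and $k-t-r+1$ in the high region. So
$$p_{r-1} = \binom{k-t}{r-1} a^{r-1}(1-b)^{k-t-r+1}.$$
The useful observation is that
$$a + (1-b) = 1-\frac{t-1}{k-1} = \frac{k-t}{k-1}.$$
Factoring out $\bigl(\frac{k-t}{k-1}\bigr)^{k-t}$ therefore gives
$$p_{r-1} = \Bigl(1-\tfrac{t-1}{k-1}\Bigr)^{k-t}\cdot \mathbb{P}\bigl(\text{Binom}(k-t, a') = r-1\bigr),$$
where $a' = \frac{a}{a+1-b} = \frac{r-1}{k-t}$. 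In words, the first factor is the probability that no point falls in the middle gap. Conditioned on that, each point is independently low with probability exactly $a'$.

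The first factor is handled by $1-x \ge e^{-2x}$ for $x \le 1/2$. This gives $\bigl(1-\frac{t-1}{k-1}\bigr)^{k-t} \ge e^{-2(t-1)}$ once $k \ge 2t$, which is a positive constant because $t$ is a fixed constant. I would note explicitly that this constant depends on $t$. Indeed the probability of avoiding the middle gap is about $e^{-(t-1)}$, so a constant truly independent of $t$ cannot hold. The intended reading must be that $c$ is independent of $k$ and $r$ and may depend on the constant $t$, and this suffices for the later use in the lower bound construction.

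The second factor is the probability that a binomial equals exactly its mean $r-1 = (k-t)a'$, which is an integer. This is where I would use \Cref{prop:binomial}. However, that proposition's constant $C_p$ depends on $p$, while here $a'$ varies with $r$ and $k$. So I would instead redo the Stirling estimate uniformly. Since $r \in [k/4, 3k/4]$ and $k$ is large relative to $t$, we have $a' \in [1/5, 4/5]$. Write $M = k-t$ and $j = Ma'$. Stirling's bounds $\sqrt{2\pi n}(n/e)^n \le n! \le e\sqrt{n}(n/e)^n$ give
$$\binom{M}{j}a'^j(1-a')^{M-j} \ge \frac{\sqrt{2\pi}\,\sqrt{2\pi}}{e^2}\cdot\sqrt{\frac{M}{j(M-j)}} = \Omega\!\left(\frac{1}{\sqrt{M\,a'(1-a')}}\right) \ge \frac{c'}{\sqrt{k}},$$
because the exponential terms cancel exactly at the mean. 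Here $c'$ is absolute, since $a'(1-a')$ is bounded away from $0$.

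Multiplying the two bounds gives $p_{r-1} \ge c/\sqrt{k}$ with $c = c' e^{-2(t-1)}$. The main obstacle is this uniformity issue. \Cref{prop:binomial} cannot be used as a black box because its constant depends on $p$, and the $t$-dependence of the gap factor also has to be addressed. Both are resolved by the explicit Stirling computation together with the restriction of $r$ to the middle half.
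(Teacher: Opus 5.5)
Your proposal is correct and is essentially the paper's proof. You use the same factorization $p_{r-1} = \bigl(\frac{k-t}{k-1}\bigr)^{k-t}\,\mathbb{P}\bigl(\mathrm{Binom}(k-t,\frac{r-1}{k-t}) = r-1\bigr)$, and your uniform Stirling estimate tightens the paper's loose appeal to \Cref{prop:binomial}, whose constant depends on $p$. You are also right that $c$ must depend on $t$: the paper's own bound $e^{-(t-1)}$ on the first factor does too.

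There are two small slips, neither affecting the conclusion. First, the Stirling constant should be $\sqrt{2\pi}/e^2$, not $2\pi/e^2$; with $2\pi/e^2$ your displayed inequality is actually false, for example at $a'=1/2$. Second, the last step uses $a'(1-a')\le 1/4$, not $a'(1-a')$ being bounded away from $0$; the middle-half restriction is only needed for $j,\,M-j\ge 1$ and the range of $a'$.
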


\begin{proof}

    Let us begin by writing out $p_{r-1}$ explicitly:
    $$p_{r-1} = \mathbb{P}\left( \text{Binom}\left(k-t, \frac{r-1}{k-1}\right) = r - 1\right) = \binom{k - t}{r-1} \left( \frac{r - 1}{k-1} \right)^{r-1} \left(1- \frac{r + t - 2}{k-1} \right)^{k - t - r + 1}.$$
    Let $$A = \frac{r - 1}{k-1}, ~~~~ B = \left(1- \frac{r + t - 2}{k-1} \right).$$
    Let $$\delta = \frac{A}{A + B} = \frac{ \frac{r - 1}{k-1} }{\frac{r - 1}{k-1} + \left(1- \frac{r + t - 2}{k-1} \right)} = \frac{r - 1}{k-t}.$$
    Then, $$p_{r-1} = \left( A + B \right)^{k - t}  \cdot \binom{k - t}{r-1} \cdot \left( \frac{A}{A + B}\right)^{r - 1} \left( 1 - \frac{A}{A + B}\right)^{k - t - r + 1}.$$
    $$= \left( \frac{k - t}{k-1}\right)^{k - t}  \cdot \binom{k - t}{\delta \cdot (k - t)} \cdot \delta^{r - 1} ( 1 - \delta)^{k - t - r + 1}.$$

    Now, this expression is: $$=\left( \frac{k - t}{k-1}\right)^{k - t} \cdot \mathbb{P}\left(\text{Binom}\left( k-t, \delta \right) = \delta\cdot(k-t) \right).$$ First, $\left( \frac{k - t}{k-1}\right)^{k - t} \geq e^{-(t-1)}$,
    which is a constant.
    
    Second, for $\mathbb{P}\left(\text{Binom}\left( k-t, \delta \right) = \delta\cdot(k-t) \right)$, we are interested in the probability that a Binomial random variable achieves its expectation. By \Cref{prop:binomial}, since $\delta$ is a constant because $r \in [k/4, 3k/4]$, then for some constant $C$, 
    $$\mathbb{P}\left(\text{Binom}\left( k-t, \delta \right) = \delta\cdot(k-t) \right) \geq \frac{C}{\sqrt{k}}.$$
    Put together, we find that $p_{r-1}$ is at least $\frac{c}{\sqrt{k}}$ for some constant $c$ that is independent of $k, t, r$.
\end{proof}

Given \Cref{lem:analysis-construction-1}, we are now ready to analyze the \textit{expected} number of copies of the length-$k$ permutation pattern $\tau$ in a sequence constructed according to our procedure above.

\begin{lemma}\label{lem:expected-P}
    Consider any sequence generated according to the construction described above. The expected number of copies of the length-$k$ permutation pattern $\tau$ this sequence has is at least $3 \binom{n}{k}/k!$.
\end{lemma}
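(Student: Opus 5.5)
Our plan is to lower bound the expected count by counting only one structured family of copies of $\tau$: those that use exactly one planted index from each of the $t$ planted runs, with the remaining $k-t$ entries drawn from the random (unplanted) positions in appropriate places. Write $B_r=\{\tau^{-1}(r),\dots,\tau^{-1}(r+t-1)\}$ for the positions of $\tau$ whose ranks are the planted consecutive run, and let $J_1,\dots,J_b$ be the nonempty intervals of $[k]\setminus B_r$, where $b=b_{\tau,t}$ (we choose $r$ attaining the minimum in \Cref{def:b-P}). A copy of the type we count is specified as follows. First, choose one index $x_a$ in each planted run $a=1,\dots,t$; this gives $(s/t)^t$ choices. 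Next, choose $k-t$ unplanted indices in the sequence, positioned so that the $j$-th group (of size $|J_j|$) lies strictly between the planted indices that bound $J_j$ in $\tau$. Every planted value lies in $[\frac{r-1}{k-1},\frac{r+t-2}{k-1}]$, and the planted values are increasing in rank, so the planted indices automatically realize the relative order of ranks $r,\dots,r+t-1$. The copy is then valid exactly when the unplanted values respect $\tau$ among themselves and relative to the planted run.

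The key probabilistic step uses \Cref{lem:analysis-construction-1}. Conditioned on the chosen positions, the $k-t$ unplanted values are i.i.d.\ $\mathrm{Unif}(0,1)$. The event that exactly $r-1$ of them fall below $\frac{r-1}{k-1}$ and the remaining ones fall above $\frac{r+t-2}{k-1}$ has probability at least $c/\sqrt{k}$. Given this event, the values below the run are i.i.d.\ uniform on an interval and so are those above it, so by exchangeability each of the required relative orderings has probability $\frac{1}{(r-1)!\,(k-t-r+1)!}$. Summing over choices, the expected count is at least
\[
\Big(\tfrac{s}{t}\Big)^{t}\cdot M\cdot \frac{c}{\sqrt{k}}\cdot\frac{1}{(r-1)!\,(k-t-r+1)!},
\]
where $M$ is the number of ways to place the $k-t$ unplanted indices into the $b$ gap regions.

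Next we estimate $M$. Each planted run sits in the first half of the block whose index is the corresponding position of $\tau$, so the gap region assigned to $J_j$ has length $\Theta(|J_j|\,n/k)$; for the two end intervals, the region is the part of the sequence before the first run or after the last. Hence $M\gtrsim\prod_j \binom{\Theta(|J_j| n/k)}{|J_j|}$. We will compare this with $\binom{n}{k-t}\approx n^{k-t}/(k-t)!$ using Stirling's formula. The sizes $|J_j|$ sum to $k-t$, and the cost of forcing the groups into prescribed regions is a multinomial-type factor. We expect it to behave like $\prod_j \sqrt{|J_j|}$ up to $e^{O(b)}$ factors, which is at worst a polynomial loss of order $k^{-b/2}$ after normalizing by $\binom{k-t}{|J_1|,\dots,|J_b|}$. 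Combining this with the identity $\binom{k-t}{r-1}^{-1}\cdot\frac{1}{(k-t)!}$ versus $\frac{1}{k!}$, the ratio of the bound to $\binom{n}{k}/k!$ becomes
\[
\Theta\!\Big(\big(\tfrac{sk}{n}\big)^{t}\, t^{-t}\Big)\cdot k^{-O(1)}\cdot k^{-(b+1)/2}\cdot k^{t}\cdots .
\]
The powers of $k$ and $s$ balance exactly when $s=\Theta(n/k^{2-\frac{b+2}{2t}})$. Taking the constant in $s$ large enough (depending on $t$, which is constant) then makes the ratio at least $3$.

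The main obstacle is the exponent bookkeeping in this last comparison. We must get the correct power of $k$ from $k!/((r-1)!(k-t-r+1)!)\approx k^{t}\binom{k-t}{r-1}$, where the central binomial contributes a factor $\sqrt{k}$ that partly cancels the $1/\sqrt{k}$ from \Cref{lem:analysis-construction-1}. We must also track the per-gap Stirling losses of order $\sqrt{|J_j|}$ and the loss from restricting positions to block halves, which is $2^{-O(b)}$ and hence a constant. We would first carry this out for the increasing pattern ($b=2$) as a sanity check, confirming the exponent $2-2/t$, and then redo it with general $b$. Any $e^{O(t)}$ constants are absorbed into the choice of the constant in $s$.
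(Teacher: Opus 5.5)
Your skeleton is the same as the paper's: one index from each planted run, $k-t$ background indices placed in the gaps, \Cref{lem:analysis-construction-1} for the threshold event, and $M\ge\prod_j\binom{a_jn/k}{a_j}$. However, the ordering step is off by an exponential factor. The event in \Cref{lem:analysis-construction-1} only says that \emph{some} $r-1$ of the $k-t$ background values fall below the planted band. A copy of $\tau$ needs the \emph{prescribed} $r-1$ positions (those to which $\tau$ gives ranks below $r$) to be the low ones, and conditioned on the event this happens with probability $\binom{k-t}{r-1}^{-1}$. So the correct factor is $\binom{k-t}{r-1}^{-1}\frac{1}{(r-1)!\,(k-t-r+1)!}=\frac{1}{(k-t)!}$, which is what the paper uses. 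The threshold event is invariant under permuting the background values, so given it all $(k-t)!$ orderings are equally likely. Your per-configuration bound equals $\frac{c}{\sqrt{k}}\binom{k-t}{r-1}\frac{1}{(k-t)!}$. This exceeds the trivial upper bound $\frac{1}{(k-t)!}$, the probability that $k-t$ i.i.d.\ continuous values appear in one prescribed relative order, as soon as $\binom{k-t}{r-1}>\sqrt{k}/c$. So your displayed lower bound on the expectation is false. Your later claim that the central binomial ``contributes a factor $\sqrt{k}$'' is also wrong: for $r\in[k/4,3k/4]$, $\binom{k-t}{r-1}=2^{\Theta(k)}$, with polynomial part $\Theta(k^{-1/2})$. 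With the corrected factor no binomial appears at all. You only need $k!/(k-t)!\approx k^t$, and the $1/\sqrt{k}$ from \Cref{lem:analysis-construction-1} is absorbed by Stirling factors.

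Second, the quantitative comparison, which is the actual content of the lemma, is never carried out. Substituting $s=\Theta(n/k^{2-\frac{b+2}{2t}})$ into your displayed powers gives $k^{1/2}\cdot k^{-O(1)}$ times unspecified factors, which does not show the ratio is at least $3$. The paper does this explicitly:
\begin{itemize}
\item $\frac{c}{(k-t)!\sqrt{k}}\ge\frac{c'e^{k-t}}{k^{k-t+1}}$;
\item $\binom{a_jn/k}{a_j}\gtrsim (en/k)^{a_j}/\sqrt{a_j}$;
\item AM--GM gives $\prod_j a_j^{-1/2}\ge (b/(k-t))^{b/2}$;
\item $3\binom{n}{k}/k!\le d\,(e^2n/k^2)^k$.
\end{itemize}
So it suffices that $(s/t)^t\ge d'\,n^t/(b^{b/2}k^{2t-1-b/2})$, which the stated $s$ satisfies. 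Finally, $r$ is determined by the random block $j_1$ in the construction, so you cannot simply fix it at the minimizer in \Cref{def:b-P}. The bound has to hold for the realized $r$, whose number of nonempty gaps may exceed $b_{\tau,t}$; the paper works with the realized number, which is at most $t+1$.
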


\begin{proof}
    Let $X$ be the random variable representing the number of copies of $\tau$ in a sequence from the distribution of the construction. For any set of indices $I$, let $X^{(I)}$ represent the number of copies of $\tau$ involving all indices in $I$.

    Consider all sets of indices $I$ that consist of one index in each of the $t$ planted runs (each of length $s/t$). Call this collection of sets of indices $\mathcal{I}$. Since all values within each planted run are equal, a copy of a permutation pattern can contain at most one index from each planted run. Thus, each copy counted by $X^{(I)}$ determines a unique set $I \in \mathcal{I}$. We thus have:
    \begin{equation}\label{eq:split-up-expectation}
       \mathbb{E}\left(X \right) \geq \sum_{I \in \mathcal{I}} \mathbb{E}\left(X^{(I)} \right). 
    \end{equation}

    We now compute $\mathbb{E}\left(X^{(I)} \right)$ for any $I \in \mathcal{I}$.
    $$\mathbb{E}\left(X^{(I)} \right) = \sum_{\substack{J \subseteq [n]\setminus I\\ |J|=k-t}} \mathbb{P}\left( J \cup I \text{ forms a copy of } \tau \right)$$

    Let $T_1, T_2, \dots, T_{t}$ be the $t$ blocks in which we have placed values in steps 1 through 5 of the construction. Consider any set $J$ of $k - t$ indices in $[n] \setminus \left( \bigcup_{i = 1}^{t} T_i\right)$. We now aim to understand the probability that a copy of $\tau$ exists on $J \cup I$. It suffices (as we aim for a lower bound) to consider copies of $\tau$ such that the indices of $I$ have ranks $r, r+1, \dots, r+t-1$ in the copy of $\tau$. We therefore need two properties to hold in order to obtain a copy of $\tau$:
    \begin{enumerate}
        \item Exactly $r - 1$ indices of $J$ have values less than $\frac{r - 1}{k - 1}$ on them, and exactly $k - t - r + 1$ of the indices have value greater than $\frac{r + t - 2}{k - 1}$ on them.
        \item The order of the indices of $J$ is consistent with the pattern $\tau$ after deleting the $t$ planted ranks.
    \end{enumerate}

    Since the elements of $I$ are already ordered consistently with a copy of the pattern $\tau$, put together these properties imply that a copy of $\tau$ exists on $J \cup I$.

    First, to address (1), by \Cref{lem:analysis-construction-1}, this probability is bounded below by $\frac{c}{\sqrt{k}}$ for some constant $c$. Second, conditioned on (1), every ordering of the values on the indices of $J$ is equally likely, and so the probability that the order of the indices of $J$ is consistent with the pattern $\tau$ is $\frac{1}{(k - t)!}$.

    Thus, we find that:
    $$\mathbb{P}\left( J \cup I \text{ forms a copy of } \tau \right) \geq \frac{c}{(k - t)! \sqrt{k}}.$$
    Let us now count the number of sets $J$ where $J \cup I $ can possibly form a copy of $\tau$. For this to occur, in each of the up to $b \leq t+1$ portions of the sequence between/to the sides of the blocks that the elements of $I$ fall in, we must have enough elements in $J$ for a permutation pattern where the indices of $I$ occupy the desired ranks to occur. If a portion of the sequence necessitates some $a$ elements, say, observe that its length will be at least $a n / k$ by construction. Additionally, the sum of these $a$'s will be $k - t$. For the $b \leq t+1$ nonempty portions of the sequence, let $a_1, a_2, \dots, a_b$ represent the number of elements we need to fall in each respective portion. Thus,
    $\sum_{j=1}^{b} a_j = k-t$.
    
    We then find:
    $$\mathbb{E}\left(X^{(I)} \right) \geq \frac{c}{(k - t)! \sqrt{k}} \cdot \prod_{j = 1}^{b}\binom{a_j n / k}{a_j}.$$
    For sufficiently large $n$ and $k$ and $t$ constant with respect to $n$ and $k$, for some constant $c'$ this is:
    $$\geq \frac{c'e^{k-t}}{k^{k - t + 1}} \cdot \prod_{j = 1}^{b}\left(\left(\frac{a_j n \cdot e}{a_j k}\right)^{a_j} \cdot \frac{1}{\sqrt{a_j}}\right) =  \frac{c'}{k^{k - t + 1}} \cdot \left(\frac{n \cdot e^2}{k} \right)^{k - t} \cdot \prod_{j = 1}^{b}\frac{1}{\sqrt{a_j}}.$$
    By the AM-GM inequality, since $\sum_{j=1}^{b} a_j = k - t$, this is:
    $$\geq \frac{c'}{k^{k - t + 1}} \cdot \left(\frac{n \cdot e^2}{k} \right)^{k - t} \cdot \left( \frac{b}{k-t}\right)^{b/2}.$$
    
    Thus, by \Cref{eq:split-up-expectation}, and since $|\mathcal{I}|=(s/t)^t$, we find:
    $$\mathbb{E}(X) \geq \left(\frac{s}{t}\right)^t \cdot \frac{c'}{k^{k - t + 1}} \cdot \left(\frac{n \cdot e^2}{k} \right)^{k - t} \cdot \left( \frac{b}{k-t}\right)^{b/2}.$$

    This is at least $3 \binom{n}{k}/k!$ for $s \geq \Omega\left(\frac{n}{k^{2 - \frac{b+2}{2t}}} \right)$. To see this, observe that, for sufficiently large $n$ and $k$, for some constant $d$ we have:
    $$3 \binom{n}{k}\cdot \frac{1}{k!} \leq d \cdot \left( \frac{n \cdot e^2}{k^2}\right)^k.$$

    We need:
    $$\left(\frac{s}{t}\right)^t \frac{c'}{k^{k - t + 1}} \cdot \left(\frac{n \cdot e^2}{k} \right)^{k - t} \cdot \left( \frac{b}{k-t}\right)^{b/2} \geq d \cdot \left( \frac{n \cdot e^2}{k^2}\right)^k.$$
    Equivalently, for some constant $d'$
    $$\left(\frac{s}{t}\right)^t \geq d' \cdot \frac{n^{t}}{b^{b/2} \cdot k^{2t - 1 - \frac{b}{2}}}.$$
    Since $t$ is a constant, this is satisfied for $s \geq \Omega\left(\frac{n}{k^{2 - \frac{b+2}{2t}}} \right)$.
    In the worst case, $b = t + 1$, and we get $$ s \geq \Omega\left(\frac{n}{k^{\frac{3}{2}-\frac{3}{2t}}}\right).$$ Thus, for every fixed $\delta>0$, taking $t$ to be a sufficiently large constant gives $$ s \geq \Omega\left(\frac{n}{k^{3/2-\delta}}\right).$$
\end{proof}

\begin{lemma}\label{lem:too-many-copies-whp}
    Consider any sequence generated according to the construction described above. With probability at least $9/10$, the number of copies of the length-$k$ permutation pattern $\tau$ this sequence has is at least $2 \binom{n}{k}/k!$.
\end{lemma}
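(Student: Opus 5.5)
The plan is to condition on the planting and then run a second-moment (Chebyshev) argument on the same lower-bound quantity used in \Cref{lem:expected-P}.

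\textbf{Step 1: fix the planting.} Fix the randomness of steps 2--5 of the construction, that is, the starting indices $i_1,\dots,i_t$ of the planted runs. Define
$$Y=\sum_{I\in\mathcal{I}}\ \sum_{J} \mathbf{1}\left[J\cup I \text{ forms a copy of } \tau \text{ with } I \text{ in ranks } r,\dots,r+t-1\right],$$
where $J$ ranges over the admissible $(k-t)$-sets of unplanted indices with the prescribed numbers $a_1,\dots,a_b$ in each of the $b$ portions. Then $X\geq Y$. The proof of \Cref{lem:expected-P} in fact gives $\mathbb{E}[Y\mid i_1,\dots,i_t]\geq 3\binom{n}{k}/k!$ for \emph{every} admissible planting: the only place positions entered was the bound ``portion length $\geq a_j n/k$,'' which holds because each $i_j$ lies in the first half of its block. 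Enlarging the constant in $s$ if needed, I may assume this conditional expectation is at least $C_0\binom{n}{k}/k!$ for any desired constant $C_0\geq 3$.

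\textbf{Step 2: reduce to a variance bound.} It suffices to show $\operatorname{Var}(Y\mid i_1,\dots,i_t)\leq \eta\,\mathbb{E}[Y\mid i_1,\dots,i_t]^2$ for a small constant $\eta$. Chebyshev then gives $Y\geq \tfrac23\mathbb{E}[Y]\geq 2\binom{n}{k}/k!$ except with probability at most $9\eta/1\cdot\tfrac14$, and this is at most $1/10$ once $\eta\leq 2/45$.

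\textbf{Step 3: expand the variance.} Conditioned on the planting, $Y$ is a sum of indicators, each depending only on the i.i.d.\ uniform values at the free indices in $J$. The set $I$ sits at deterministic values, and the event for $J\cup I$ depends only on the values at $J$. Two terms $(I,J)$ and $(I',J')$ are therefore independent whenever $J\cap J'=\emptyset$, so
$$\operatorname{Var}(Y)\leq \sum_{\substack{(I,J),(I',J')\\ J\cap J'\neq\emptyset}} \mathbb{P}\left[(I,J)\text{ and }(I',J')\text{ both copies}\right].$$
Group the pairs by $q=|J\cap J'|\geq 1$.
\begin{itemize}
\item \emph{Number of pairs.} Compared with the independent product count, the number of pairs with overlap $q$ is smaller by roughly a factor $\binom{k-t}{q}^2 q!\,(k/n)^q$. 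This is $(k^2/n)^q$ up to constants depending on the block structure.
\item \emph{Joint probability.} The joint probability is at most $\mathbb{P}[(I,J)]$ times a conditional probability for $J'\setminus J$ to complete a copy. I will bound this crudely by $\left((k-t-q)!\right)^{-1}$ times a constant, versus $\left((k-t)!\right)^{-1}\cdot c/\sqrt{k}$ in the unconditional case. This loses a factor of at most $k^{q}\sqrt{k}/c$.
\end{itemize}
Summing over $q$, the ratio $\operatorname{Var}(Y)/\mathbb{E}[Y]^2$ is bounded by $\sum_{q\geq1}(\mathrm{poly}(k)/n)^q = o_n(1)$ for fixed $k$. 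This is below $\eta$ for $n$ sufficiently large, which is the regime of the lower bound: $n$ must be large compared with $k$ anyway, for completeness of the $\mathcal{D}_1$ side via \Cref{prop:random-permutation-P-copies}.

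\textbf{Main obstacle.} The delicate step is the joint-probability bound in Step 3 when $J$ and $J'$ overlap. Conditioned on $J$ realizing its rank profile, the shared values constrain where the new values of $J'$ must fall. I expect to handle this by noting that all we need is a loss polynomial in $k$ per shared index, not exponential. Given the shared values, the remaining $k-t-q$ values are i.i.d.\ uniform, and the probability that they land in the prescribed interleaving is at most $1/(k-t-q)!$ by symmetry among themselves.

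The polynomial loss per shared index is then absorbed by the $1/n$ gain per shared index, so the second moment stays controlled as long as $n\gg \mathrm{poly}(k)$. If a cleaner route is needed, I would instead apply the Efron--Stein inequality over the free coordinates, bounding each coordinate's influence by the number of counted copies through that index. That count is itself $O\left((k/n)\,\mathbb{E}Y\right)$ in expectation by the same double counting.
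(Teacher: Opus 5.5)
Your proposal is correct and follows essentially the paper's route: condition on the run positions, use that the bound of \Cref{lem:expected-P} holds pointwise in those positions, and apply Chebyshev to the background-only count. The paper notes this count factors as $(s/t)^t V$ and defers the variance to the computation of \Cref{prop:random-permutation-P-copies}; you instead carry out the overlap estimate explicitly and correctly absorb the $\sqrt{k}/c$ loss coming from the band probability. One small slip: Chebyshev gives failure probability at most $9\eta$, not $9\eta/4$, so you need $\eta \le 1/90$. This is harmless, since your variance ratio is $O(\mathrm{poly}(k)/n)$.
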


\begin{proof}
    Let $X$ be the random variable representing the number of copies of $\tau$ in a sequence from the distribution of the construction, and let $\mathcal{I}$ and $X^{(I)}$ be as in the proof of \Cref{lem:expected-P}. Let $W = \sum_{I \in \mathcal{I}} X^{(I)}$ be the number of copies of $\tau$ that use exactly one index from each of the $t$ planted runs (occupying ranks $r, r+1, \dots, r+t-1$) together with $k - t$ background indices. Distinct $I \in \mathcal{I}$ use distinct run indices, so these copies are distinct, and thus $W \leq X$.

    Condition on the positions of the $t$ planted runs (equivalently, on the index $i_1$), and write $\mathbb{E}_{\mathrm{pos}}$ for the conditional expectation given these positions. Once the positions are fixed, the only remaining randomness is in the i.i.d. $\mathrm{Unif}(0,1)$ background values, and whether $I \cup J$ forms a copy of $\tau$ depends only on the values at the $k - t$ background indices $J$ (every index within a run plays the same rank in any copy of $\tau$). Thus, for fixed positions, $X^{(I)}$ is the same for every $I \in \mathcal{I}$, so
    $$W = |\mathcal{I}| \cdot V = \left(\frac{s}{t}\right)^t V,$$
    where $V$ is the number of background sets $J$ such that $J \cup I$ forms a copy of $\tau$ for a fixed $I \in \mathcal{I}$.

    The quantity $V$ counts the occurrences of a fixed length-$(k-t)$ pattern among the i.i.d. $\mathrm{Unif}(0,1)$ background values, where an occurrence is determined both by the order of values and the threshold condition that exactly $r-1$ of the background values fall below the planted band and the remaining fall above. By the same second-moment computation as in the proof of \Cref{prop:random-permutation-P-copies} (applied with error parameter $1/3$), for sufficiently large $n$, with probability at least $9/10$ over the background values,
    $$V \geq \frac{2}{3}\mathbb{E}_{\mathrm{pos}}[V], \qquad \text{and thus} \qquad W = \left(\frac{s}{t}\right)^t V \geq \frac{2}{3}\mathbb{E}_{\mathrm{pos}}[W].$$

    Finally, the lower bound of \Cref{lem:expected-P} holds pointwise over the run positions: for every fixed choice of positions, the intervals between and beside the planted runs have length at least $a_j n / k$, so $\mathbb{E}_{\mathrm{pos}}[W] = \sum_{I \in \mathcal{I}} \mathbb{E}_{\mathrm{pos}}[X^{(I)}] \geq 3 \binom{n}{k}/k!$. Therefore, with probability at least $9/10$,
    $$X \geq W \geq \frac{2}{3}\mathbb{E}_{\mathrm{pos}}[W] \geq 2 \binom{n}{k}/k!.$$
\end{proof}

Let $\mathcal{D}_{\textsc{yes}}$ be the distribution over length-$n$ sequences which draws the value at each index uniformly at random from $(0, 1)$. Let $\mathcal{D}_{\textsc{no}}^{s, t}$ be the distribution over sequences induced by the construction above. We now prove an upper bound on the probability that any deterministic algorithm making $q$ queries will be able to return different outcomes for inputs from $\mathcal{D}_{\textsc{yes}}$ versus $\mathcal{D}_{\textsc{no}}^{s, t}$.

\begin{lemma}\label{lem:probability-differ-deterministic}
    Let $A$ be a deterministic algorithm that makes $q$ queries to the values at indices of a length-$n$ sequence and outputs $0$ or $1$. Then,
    $$\left|\mathbb{P}_{x \sim \mathcal{D}_{\textsc{yes}}}\left[A(x) = 1 \right] - \mathbb{P}_{y \sim \mathcal{D}_{\textsc{no}}^{s, t}}\left[A(y) = 1 \right] \right| \leq \frac{8 qs}{n}.$$
\end{lemma}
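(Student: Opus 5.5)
Since $A$ is deterministic, it is a decision tree of depth $q$. I will use a coupling of $\mathcal{D}_{\textsc{yes}}$ and $\mathcal{D}_{\textsc{no}}^{s,t}$: draw the background values $z \sim \mathrm{Unif}(0,1)^{\otimes n}$ once. Set $x = z$, and form $y$ by overwriting $z$ on the planted runs $T_1,\dots,T_t$, whose positions are drawn independently of $z$ exactly as in the construction. Then $x$ and $y$ agree outside $T := \bigcup_j T_j$. Run $A$ on $x$. If none of its (adaptively chosen) query positions lands in $T$, then $A$ sees identical answers on $y$ and outputs the same bit. So the quantity in the lemma is at most $\mathbb{P}[\text{some query of } A \text{ on } x \text{ hits } T]$.

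The important point is that the query path of $A$ on $x$ depends only on $z$. The run positions are independent of $z$. So I condition on $z$, which fixes the $q$ query positions $u_1,\dots,u_q$, and bound the union $\sum_{i\le q}\sum_{j\le t}\mathbb{P}[u_i \in T_j]$ over the randomness of the run positions only.

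Next I bound $\mathbb{P}[u \in T_j]$ for a fixed index $u$. The starting index $i_j$ of run $j$ is uniform over the first half of some block, and the block depends on the random rank $r$. Conditional on the block, $i_j$ is uniform over $n/(2k)$ positions, and $u \in T_j$ requires $i_j \in [u - s/t + 1, u]$, an interval of $s/t$ positions. This gives probability at most $(s/t)/(n/(2k)) = 2ks/(tn)$. That estimate carries an unwanted factor of $k$, so I will instead use the marginal distribution of $i_j$ over $[n]$. The first index $i_1$ is uniform over the union of first halves of the valid blocks. There are about $k/2$ valid blocks, so this union has size about $n/4$, giving density at most $4/n$ per position. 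For $j\ge 2$, $i_j$ is a rank-shifted version of $i_1$: the block is determined by $r+j-1$, which is still a near-uniform choice among about $k/2$ blocks, and the offset within the block is uniform. Hence $i_j$ also has point density at most about $4/n$ (up to a constant absorbed in the 8). Therefore $\mathbb{P}[u\in T_j]\le (s/t)\cdot 4/n$ for every $j$, and summing over $j\le t$ gives $4s/n$ per query.

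Summing over the $q$ queries gives $4qs/n$, and allowing a factor 2 slack for the boundary rounding in the number of valid blocks and the ceiling effects gives $8qs/n$. The only delicate step is the density bound for $i_j$ with $j\ge 2$. Its block is a deterministic function of $r$, so I must check that $r\mapsto$ block of rank $r+j-1$ is injective ($\tau^{-1}$ is a bijection) and that the range of $r$ has size about $k/2$. Given injectivity, each position has probability at most (number of valid $r$)$^{-1}\cdot 2k/n \le 4/n\cdot(1+o(1))$.
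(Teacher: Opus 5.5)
Your proposal is correct and follows essentially the same argument as the paper. It uses the same three ingredients: the coupling that overwrites the planted runs on a shared background, the observation that the (adaptive) query positions on $x$ depend only on $x$ and hence are independent of the run positions, and a union bound over the $q$ queries with a per-index hitting probability of $O(s/n)$. The only difference is bookkeeping in that per-index bound: the paper factors it as $\Pr[\text{the index's block is planted}] \le 4t/k$ times $\Pr[\text{hit} \mid \text{block planted}] \le 2ks/(tn)$, while you bound the point density of each run start $i_j$ by roughly $4/n$ and sum over the $t$ runs, which is the same computation (and in fact yields the slightly better constant $4$).
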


\begin{proof}
    We first argue that for every set $Q \subseteq [n]$, $|Q| \leq q$, the probability of intersecting the set $S$ of the $s$ indices with planted values is at most $\frac{8 qs}{n}$. (That is, $S$ is the set of all indices whose value is set in Steps 1 through 5 of the construction.)

    To see this, fix any coordinate $j \in [n]$. We analyze the probability that $j \in S$, where the probability is taken over the random construction of $S$.

    Fix an index $j$, and let $B$ be the block containing $j$. Conditional on $B$ being one of the blocks containing planted values, $j \in S$ if the starting point of the planted run is in $\left\{j - \frac{s}{t} + 1, j - \frac{s}{t} + 2, \dots, j\right\}$. Since the starting point is chosen uniformly at random from the first half of the block, which has $n/(2k)$ elements, the probability that one of these $\frac{s}{t}$ points is sampled is at most:
    $$\frac{2k}{n} \cdot \frac{s}{t}.$$
    
    The initial block in Step 2 is chosen uniformly at random from blocks with ranks in $[k/4, 3k/4]$. The remaining blocks are specified given the initial block, and using a similar analysis as above, we find that the probability that $B$ is a block with any planted values is at most $\frac{4t}{k}$.
    
    Thus, for any fixed $j \in Q$, the probability that $j \in S$ is at most:
    $$\frac{2k}{n} \cdot \frac{4t}{k} \cdot \frac{s}{t} = \frac{8s}{n}.$$
    By a union bound, the probability that any $j \in Q$ is in $S$ is at most:
    $$\frac{8qs}{n}.$$
    
    We couple the distribution $\mathcal{D}_{\textsc{yes}} = (\text{Unif}(0, 1))^{\otimes n}$ and $\mathcal{D}_{\textsc{no}}^{s, t}$ as follows. First, draw a sequence $x$ according to $\mathcal{D}_{\textsc{yes}}$. Then, apply Steps 1 through 5 of the construction of $\mathcal{D}_{\textsc{no}}^{s, t}$, replacing some of the indices' values with planted values. Call the resulting sequence $r(x)$. We prove something stronger than the statement of \Cref{lem:probability-differ-deterministic}: For every sequence $x$, with high probability over the construction of $r(x)$, we have: 
    \begin{equation}\label{eq:indistinguishable}
        \mathbb{P}_{r(x)}[A(x) \neq A(r(x))] \leq \frac{8qs}{n}.
    \end{equation}

    Let $Q$ be the set of indices queried. As above, let $S$ be the set of planted values. Note that, when $Q \cap S = \emptyset$, $x$ and $r(x)$ look indistinguishable to the algorithm, since the query responses are the same on $x$ and $r(x)$ and the outputs are deterministic functions of prior queries and their values. We conclude that \Cref{eq:indistinguishable} holds because, as we proved above, the probability that $Q \cap S \neq \emptyset$ is at most $\frac{8qs}{n}$.
\end{proof}

\begin{proof}[Proof of \Cref{lem:any-pattern-lb-in-section-3}]
    Let $A$ be any algorithm that makes $q$ queries and accepts $x \sim (\text{Unif}(0, 1))^{\otimes n}$ with probability $1 - o(1)$ and rejects any $y$ with at least $2 \binom{n}{k}/k!$ copies of length-$k$ permutation pattern $\tau$ with probability at least $9/10$. By \Cref{prop:random-permutation-P-copies}, with probability $1 - o(1)$,  $x \sim (\text{Unif}(0, 1))^{\otimes n}$ has $(1 \pm \frac{1}{2})\binom{n}{k}/k!$ copies of $\tau$.

    Thus it must be the case that
    $$\mathbb{P}_{x \sim \mathcal{D}_{\textsc{yes}}, R}\left[A(x) = 1 \right] - \mathbb{P}_{y \sim \mathcal{D}_{\textsc{no}}^{s, t}, R}\left[A(y) = 1 \right] \geq 4/5$$
    where the probability is taken over both the randomness of the constructions of $x$ and $y$ and the random coin tosses made by the algorithm $A$.

    Thus, there must exist a fixed choice $R_0$ such that 
    $$\mathbb{E}_{x \sim \mathcal{D}_{\textsc{yes}}}\left[A'(x) = 1 \right] - \mathbb{P}_{y \sim \mathcal{D}_{\textsc{no}}^{s, t}}\left[A'(y) = 1 \right] \geq 4/5,$$
    where $A'$ is the deterministic counterpart of $A$ defined by fixing the randomness of $A$ to $R_0$. 

    Now, applying \Cref{lem:probability-differ-deterministic}, we find that:
    $$\frac{8qs}{n} \geq \frac{4}{5},$$
    i.e., that $q \geq n/(10 s)$.

    By definition of $s = \Theta\left(\frac{n}{k^{2 - \frac{b_{\tau, t} + 2}{2t}}} \right)$, we find that for any length-$k$ permutation pattern $\tau$, any algorithm for $(\{p^{\otimes n}\}_n, Q_\tau)$-quality control requires $\Omega\left(k^{2 - \frac{b_{\tau, t} + 2}{2t}}\right)$ queries, for any constant $t \geq 1$. This yields \Cref{lem:any-pattern-lb-in-section-3} for $p = Unif(0, 1)$.

    \paragraph{General non-atomic distributions:} The same lower bound applies to every non-atomic distribution $p$ over $\mathbb{R}$, by connecting $p$ to $\mathrm{Unif}(0,1)$ via its quantile function. Let $F$ be the cumulative distribution function of $p$, and let $F^{-1}(u) := \inf\{x : F(x) \geq u\}$ be its quantile function. When $U \sim \mathrm{Unif}(0,1)$, we have $F^{-1}(U) \sim p$. Furthermore, if $U_1, U_2, \dots, U_n$ are independent $\mathrm{Unif}(0,1)$ random variables, then $F^{-1}(U_1), F^{-1}(U_2), \dots, F^{-1}(U_n)$ are distinct with probability $1$, because $p$ is non-atomic. Since $F^{-1}$ is nondecreasing, it preserves the relative ordering of the elements. Thus, replacing $u$ entrywise in the construction with $F^{-1}(u)$ preserves all permutation pattern counts and extends the construction to general non-atomic $p$.
\end{proof}

\section{Patterns}\label{sec:patterns}

We now move to studying the count of \textit{subsequences} in sequences, which we refer to as \textit{patterns}. As opposed to permutation patterns, for patterns, we now care about \textit{values} instead of relative ordering.
The distribution that we perform quality control against is the product distribution over $n$ entries of some distribution $p$ over $[m]$, for $m$ constant. The most natural case is arguably $m = 2$ and $p = \text{Unif}(\{0, 1\})$, where the sequence generated is a Boolean sequence, and quality control asks whether the sequence looks like a random Boolean sequence with respect to the counts of small patterns. (And in this case, permutation patterns do not make sense, because everything is 0- or 1-valued, and so permutation patterns without ties are not well-defined.) 

Let the count of a pattern $P$ in length-$n$ sequence $f$ be $C_{P, n}(f)$. Let $p$ be a distribution over $[m]$ which assigns weight $p_i$ to each $i \in [m]$. Define the pattern count quality $R_{P, n}$ to be $$R_{P, n}(f) = \frac{C_{P, n}(f)}{\binom{n}{k} \cdot \prod_{j \in [k]} p_{P(j)}}.$$ For a distribution $p$, let $p^{\otimes n}$ be the product distribution over $n$ i.i.d. copies of $p$. 

Let $R_P = \{R_{P, n}\}_n$ and $\D = \{p^{\otimes n}\}_n$. We are interested in $(\D, R_P)$-quality control. 

Our main algorithmic result, stated as \Cref{thm:subsequences} at the start of \Cref{sec:pattern-ub} and proved at its end, shows that $(\D, R_P)$ is achievable with $\text{poly}(k/p_{\min})$ queries and time, where $p_{\min} = \min_{i \in [m]} p_i$. We also give a $\Omega(k^2/p_{\min})$ lower bound (\Cref{thm:qc-lb-sequences-general1}) for some patterns, proven in \Cref{sec:qc-lb-all}.

\subsection{Quality control algorithm}\label{sec:pattern-ub}
We now turn to quality control algorithms with the objective of accepting random sequences over $[m]^n$, for each $n \in \mathbb{N}$, and rejecting sequences whose count of a pattern $P$ of length $k$ does not match the expected count for product distributions with entries from a distribution $p$. We think of $k$ and $m$ as constant. (For $m$, the key example is Boolean sequences, for which $m = 2$. For $p$, the key example is the uniform distribution.)

We define the minimum weight of a distribution over a finite domain as follows.

\begin{definition}[Minimum weight]
     We say a probability distribution $p$ over $[m]$ has minimum weight $p_{\min}$ for $p_{\min} = \min_{i \in [m]} p_i$, where $p_i$ is the probability of $i$ under $p$.
\end{definition}

Recall that $R_{P, n}(\sigma)$ is the pattern count quality of $P$ in length-$n$ sequence $\sigma$, i.e., the number of times the pattern $P$ appears as a subsequence in $\sigma$, normalized by its expectation under the reference distribution. In this section, we prove the following.

\begin{theorem}\label{thm:subsequences}
    Consider a sequence of length $n$ with entries in $[m]$. Let $P$ be a length-$k$ sequence. Let $p$ be a distribution over $[m]$ with minimum weight $p_{\min}$, and define $\D = \{p^{\otimes n}\}_n$.
    Then $(\D, R_P)$-quality control is possible in $\widetilde{O}\left(\frac{k^5}{\ep^2 p_{\min}^2}\right)$ queries and $\widetilde{O}\left(\frac{k^7}{\ep^2 p_{\min}^2}\right)$ time.
\end{theorem}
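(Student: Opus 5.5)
We follow the two-ingredient template of \Cref{sec:quality-control-permutation-ub}, replacing permutation jumbledness by an exact dynamic program. The pattern overview (\Cref{sec:pattern-overview}) describes the shape of the argument.

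\textbf{Aggregate statistic.} For a sequence $\sigma$ over $[m]$ and $\ell\in[k]$, let $\textsc{Count}_\ell(P,\sigma)$ be the number of pairs $(I,R)$ with $I\subseteq[k]$, $|I|=\ell$, and $R$ an occurrence of $P|_I$ in $\sigma$. Each summand should be weighted by $1/\prod_{j\in I}p_{P(j)}$, so that its expectation under $p^{\otimes n}$ is exactly $\binom{k}{\ell}\binom{n}{\ell}$. Every term in the weighted sum has the same expectation, so the aggregate is the right normalized statistic. This weighted aggregate over all $\ell$ is computable by a dynamic program in time $O(s\cdot k^2)$ on a length-$s$ sequence. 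The state is (position in $\sigma$, position in $P$, number of matched letters), and the transitions are: skip a letter of $\sigma$, skip a letter of $P$, or match equal letters and multiply by $1/p_{P(j)}$. The $k^7$ time bound presumably reflects $s\cdot k^2$ with $s=\widetilde O(k^5/(\ep^2p_{\min}^2))$.

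\textbf{Algorithm.} The algorithm samples $S$ of size $s$ and computes $\textsc{Count}_\ell(P,\sigma|_S)$ for all $\ell$. It accepts iff every count lies in $(1\pm 3\beta_\ell/4)\binom{k}{\ell}\binom{s}{\ell}$.

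\textbf{Completeness.} Completeness follows from a second-moment bound analogous to \Cref{prop:min-weight-sequence-P-copies}, applied to the aggregate counts of the random sample, with a union bound over $\ell$.

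\textbf{Soundness.} If $R_{P,n}(\sigma)\notin 1\pm\ep$, we want some $\ell$ at which the aggregate deviates. Since $\textsc{Count}_k$ is just the normalized count of $P$ itself, choosing $\beta_k\le\ep$ makes $\ell=k$ deviate. Take the first deviating $\ell$. By the analogue of \Cref{lem:ok-global-counts-exp-qr}, $\sigma$ is exponentially robustly quasirandom at level $\ell-1$. The analogue of \Cref{lem:concentration-lemma} then shows the sample count concentrates around $\mathbb{E}=\textsc{Count}_\ell(P,\sigma)\binom{s}{\ell}/\binom{n}{\ell}$ up to $\beta_\ell/4$, except with probability $e^{-\alpha_\ell}$. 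The algorithm therefore rejects, exactly as in the proof of \Cref{lem:exp-qr-main}.

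\textbf{Main obstacle: the bounded-differences constant in the concentration lemma.} Replacing one sampled index changes $\textsc{Count}_{\ell+1}$ by the weighted number of $(\ell+1)$-occurrences through that index. Removing that index yields a weighted $\ell$-occurrence, and each $\ell$-occurrence extends in at most $k-\ell$ ways with extra weight at most $1/p_{\min}$. On the good set $\mathcal{Y}$, this gives $c\lesssim (k/p_{\min})\binom{k}{\ell}\binom{s}{\ell}$. Comparing with $\delta=\beta\binom{k}{\ell+1}\binom{s}{\ell+1}/4$, the ratio is
\[
\frac{\delta}{c}\approx \frac{\beta\, s\,p_{\min}}{k(\ell+1)^2}\cdot\frac{k-\ell}{k},
\]
up to constants.

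The delicate point is to gain over the permutation case by exploiting the uniform aggregate normalization, so that $\delta^2/(sc^2)\gtrsim \ep^2 s p_{\min}^2/k^4$. Requiring this to be at least $\alpha_\ell=\Theta(k\log(k/\ep))$ gives $s=\widetilde O(k^5/(\ep^2p_{\min}^2))$.

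Two further conditions need checking:
\begin{itemize}
\item The hypothesis of \Cref{thm:combes}, namely $q\le\delta/(2\max f)$. Here $\max f\le\binom{k}{\ell+1}\binom{s}{\ell+1}p_{\min}^{-(\ell+1)}$, which forces $\alpha_\ell\gtrsim k\log(1/p_{\min})+\log(k/\ep)$. This only adds logarithmic factors.
\item Consistency of the $\beta_\ell$ schedule with the inductive thresholds.
\end{itemize}

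I expect getting the $k$-exponent down to $5$ in this bounded-difference calculation to be the step needing the most care. The first attempt would be to bound the extension count per $\ell$-occurrence by $(k-\ell)/p_{\min}$ on average rather than in the worst case.
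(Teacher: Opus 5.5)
Your architecture matches the paper's: aggregate subpattern counts for each length $\ell$, an $O(sk^2)$ dynamic program on a random sample of size $s$, and the exponentially-robust-quasirandomness induction driven by \Cref{thm:combes}. Reweighting each term by $1/\prod_{j\in I}p_{P(j)}$ is a harmless variant of the paper's $T_P$, which keeps unweighted counts normalized by $\binom{N}{\ell}W_\ell$ with $W_\ell=\sum_{|I|=\ell}\prod_{i\in I}p_{P_i}$. It only moves the $p_{\min}$ loss out of $W_{\ell+1}\geq\frac{(k-\ell)p_{\min}}{\ell+1}W_\ell$ and into the bounded-difference constant.

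The gap is that you never establish the $\widetilde{O}(k^5/(\ep^2p_{\min}^2))$ sample size. Your displayed ratio $\delta/c$ carries spurious factors $1/k$ and $(k-\ell)/k$. Taken literally, at $\ell=k-1$ it gives $\delta^2/(sc^2)\approx\ep^2 s p_{\min}^2/k^8$, forcing $s\gtrsim k^9/(\ep^2p_{\min}^2)$. You then leave the reduction to $k^5$ open and suggest bounding extension counts ``on average.'' That remedy is unavailable: the hypothesis of \Cref{thm:combes} is a Lipschitz condition for every pair $A,B\in\mathcal{Y}$, so only a worst-case bound on $\mathcal{Y}$ can enter $c$. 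It is also unnecessary.

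Here is the missing computation. Every weighted $(\ell+1)$-occurrence counted in $f(A)$ but not in $f(B)$ uses some $x\in A\setminus B$. Deleting $x$ leaves an $\ell$-occurrence inside $A$, which re-extends through $x$ in at most $k-\ell$ ways, each multiplying the weight by at most $1/p_{\min}$. So on $\mathcal{Y}$ you may take
\[
c=(1+2\beta_\ell)\,\frac{k-\ell}{p_{\min}}\binom{k}{\ell}\binom{s}{\ell},
\]
keeping $k-\ell$ rather than $k$. Since $\binom{k}{\ell+1}/\binom{k}{\ell}=\frac{k-\ell}{\ell+1}$, that factor cancels exactly, and
\[
\frac{\delta}{c}=\frac{\beta_{\ell+1}}{4(1+2\beta_\ell)}\cdot\frac{(s-\ell)\,p_{\min}}{(\ell+1)^2},\qquad \frac{\delta^2}{8sc^2}=\Omega\left(\frac{\ep^2 s\,p_{\min}^2}{k^4}\right)
\]
uniformly in $\ell$ for $s\geq 2k$. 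This is exactly the paper's exponent. Take $\alpha_\ell=(k-\ell)\log 3+k\log(1/p_{\min})+\log(3k/\ep)$; the $\log 3$ decrement per level gives $q\leq\frac13 e^{-\alpha_{\ell+1}}$, which settles your schedule-consistency concern. This choice yields $s=\widetilde{O}(k^5/(\ep^2p_{\min}^2))$. The gain over the permutation case comes from the per-extension match probability being $p_{\min}$ instead of $1/(\ell+1)$, not from the normalization.

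Two smaller repairs are also needed.
\begin{itemize}
\item Unlike for permutations, the level-$1$ count $\sum_{i\in S}w_{\sigma_i}/p_{\sigma_i}$ (with $w_x$ the multiplicity of $x$ in $P$) is not constant. So the base case, and the case where the first deviating level is $\ell=1$, need a direct application of \Cref{thm:combes} with $q=0$ and $c=k/p_{\min}$, as in the paper.
\item Your three-transition recurrence overcounts, because different interleavings of skips yield the same pair $(A,I)$. You must subtract the $(i-1,j-1)$ term by inclusion--exclusion, as the paper's prefix-sum table $S$ does.
\end{itemize}
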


For the key example of the uniform distribution over $[m]$ in each index, we obtain the following corollary.

\begin{corollary}
    Consider a sequence of length $n$ with entries in $[m]$. Let $P$ be a length-$k$ sequence. Let $p$ be the uniform distribution over $[m]$ and $\D = \{p^{\otimes n}\}_n$.
    Then $(\D, R_P)$-quality control is possible in $\widetilde{O}\left(\frac{m^2 k^5}{\ep^2}\right)$ queries and $\widetilde{O}\left(\frac{m^2 k^7}{\ep^2}\right)$ time.
\end{corollary}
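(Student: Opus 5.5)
The statement is the uniform specialization of \Cref{thm:subsequences}, so the plan is to invoke that theorem with $p = \mathcal{U}_m$. Since $\mathcal{U}_m$ gives every element weight $1/m$, its minimum weight is $p_{\min} = 1/m$. With this choice $\prod_{j\in[k]} p_{P(j)} = m^{-k}$, so the general pattern count quality $R_{P,n}(f) = C_{P,n}(f)/(\binom{n}{k}\prod_j p_{P(j)})$ becomes exactly the uniform quality $C_{P,n}(f)/(\binom{n}{k} m^{-k})$ from the introduction. Consequently the problems $(\D, R_P)$ in the two statements coincide.

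Next I would verify that $\D=\{\mathcal{U}_m^{\otimes n}\}_n$ is $(R_P,\ep)$-good, so that the quality control problem is well-posed in the sense of \Cref{def:quality control}. This follows from \Cref{cor:random-sequence-P-copies}, or more precisely from the ``for fixed $k,P,p,\ep$'' clause of \Cref{prop:min-weight-sequence-P-copies}. That clause gives $R_{P,n}\in 1\pm\ep$ with probability $1-o_n(1)$.

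It then remains to substitute into the bounds of \Cref{thm:subsequences}. Setting $p_{\min}^{-2}=m^2$, the query bound becomes $\widetilde{O}(k^5/(\ep^2p_{\min}^2)) = \widetilde{O}(m^2k^5/\ep^2)$, and the time bound becomes $\widetilde{O}(m^2k^7/\ep^2)$. The algorithm itself is unchanged; it is the general algorithm run with $p=\mathcal{U}_m$.

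There is no real obstacle in this argument. The only point needing care is that the logarithmic factors hidden in $\widetilde{O}$ in \Cref{thm:subsequences} are polylogarithmic in $k/(\ep p_{\min})$, which after substitution become polylogarithmic in $mk/\ep$. These factors are absorbed into the $\widetilde{O}$ of the corollary.
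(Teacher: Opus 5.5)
Your proposal is correct and follows the paper's route exactly: the paper states this corollary as an immediate specialization of \Cref{thm:subsequences}, obtained by setting $p_{\min}=1/m$ so that $1/p_{\min}^2=m^2$ in both the query and time bounds. Your additional checks are sanity checks that the paper leaves implicit. These are that $R_{P,n}$ reduces to the $m^{-k}$ normalization, that the uniform family is good via the fixed-parameter clause of \Cref{prop:min-weight-sequence-P-copies}, and that the hidden logarithmic factors become polylogarithmic in $mk/\ep$.
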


We stress the lack of dependence on $n$ as well as the polynomial dependence on $m$ and $k$.

We start with an overview now before moving on to the elements of the proof. As in \Cref{sec:quality-control-permutation-ub}, we define a stronger target quality control problem, denoted $(\D, T_P)$-quality control for $T_P = \{T_{P, N}\}_N$. Our goal is to prove the analogues of \Cref{thm:jumbled-pattern} and \Cref{cor:qc-samples} for $(\D, T_P)$-quality control. The natural choice would be to let $T_P$ assure that pattern counts are correct for \textit{all} subpatterns $P'$ of $P$. While this approach is feasible, we do not pursue it here because it gives us a higher query and time complexity than we are finally able to achieve. Instead, we adopt a definition of $T_P$ carefully so as to make the algorithmic task of computing $T_P$ doable with dynamic programming, which is proven in \Cref{sec:pattern-dp}. At the same time, this choice is made so that we can still use the methodology of exponentially robust quasirandomness to bound the sample complexity. The exact definition is the following, which aggregates the counts of all subpatterns of length $\ell$ of $P$ into a single quantity and bounds the maximum deviation from the expectation over all $\ell \in [k]$.

\begin{definition}\label{def:T-P-qc}
    For each $N$ and each length-$N$ sequence $\sigma$, let
    $$T_{P, N}(\sigma) = 1 + \max_{\ell \in [k]}\left(\left|1 - \frac{\sum_{\substack{I \subseteq [k] \\ |I| = \ell}} C_{P|_I, N}(\sigma)}{\binom{N}{\ell} \cdot \sum_{\substack{I \subseteq [k] \\ |I| = \ell}} \prod_{i \in I} p_{P_i}}\right|\right),$$
    where subpatterns are indexed by the subsets $I \subseteq [k]$ and are therefore counted with multiplicity. Let $T_P = \{T_{P, N}\}_N$.
\end{definition}

$T_{P, N} \approx 1$ certifies that, for all $\ell \in [k]$, the sum of subpatterns of length $\ell$ of $P$ occurs with the right frequency in a given instance. This is a stronger guarantee than $R_{P, N} \approx 1$ since $T_{P, N} \approx 1$ implies $R_{P, N} \approx 1$, simply by looking at $\ell = k$.

To give a $(\D, T_P)$-quality control algorithm, it turns out that a dynamic programming algorithm allows us to compute $\sum_{\substack{P' \subseteq P \\ |P'| = \ell}}C_{P', N}$ simultaneously for all $\ell \in [k]$ on an instance $\sigma$, and thus to compute $T_{P, N}$ on the instance (see \Cref{sec:pattern-dp}). This dynamic programming algorithm uses the fact that a length-$L$ subpattern $P' \subseteq P$ that ends at some index $\sigma[i] = P[j]$ is (distinctly) formed by taking a length-$(L-1)$ subpattern ending at some $\sigma[i'] = P[j'], i' < i, j' < j$ and appending $\sigma[i] = P[j]$.  The dynamic programming approach yields a $(\D, T_P)$-quality control algorithm with query complexity $O(N)$ and time complexity $O(Nk^2)$.

We will then again use a notion of exponentially robust quasirandomness to show that a $\text{poly}(k/p_{\min})$-size sample preserves the aggregate counts of length-$\ell$ subpatterns, for all $\ell \in [k]$. This allows us to say that  $\text{poly}(k/p_{\min})$ samples suffice for  $(\D, T_P)$-quality control. Combining this with the time-efficient algorithm on the sample yields \Cref{thm:subsequences}. The proof of the sample size via exponentially robust quasirandomness will change for the case of patterns due to the new expected number of patterns of a certain length in a random instance. However, the argument itself will be analogous, as for permutation patterns we had already reasoned about bounded differences using aggregate length-$\ell$ pattern counts, for each $\ell \in [k]$.

As in \Cref{sec:quality-control-permutation-ub}, we begin with the time-efficient subroutine we will perform on the $\text{poly}(k/p_{\min})$-size sample, given in \Cref{thm:subsequences-dp}. We then use exponentially robust quasirandomness in \Cref{sec:pattern-exp-rob-qr-sequences} to argue that a $\text{poly}(k/p_{\min})$-size sample suffices to preserve pattern counts. We combine these ingredients in \Cref{sec:pattern-algo}.

\subsubsection{Analysis: dynamic programming}\label{sec:pattern-dp}

We analyze quality control for the counts of all patterns of a sequence $P$ as subsequences in the longer sequence $\sigma$.

\begin{theorem}\label{thm:subsequences-dp}
    Let $P$ be a length-$k$ sequence and let $p$ be any distribution over $[m]$ with minimum weight $p_{\min}$. Let $T_P = \{T_{P, N}\}_N$ and $\D = \{p^{\otimes N}\}_N$. Then, for $N = \Omega\left(\frac{k^3}{p_{\min} \ep^2}\log\left(2+\frac{k}{\ep p_{\min}}\right) \right)$, $(\D, T_P)$-quality control is possible in $O(N)$ queries and $O(N k^2)$ time.
\end{theorem}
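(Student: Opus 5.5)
The algorithm is the obvious one. Query all $N$ entries of $\sigma$, compute exactly for every $\ell\in[k]$ the aggregate count $A_\ell(\sigma)=\sum_{I\subseteq[k],|I|=\ell}C_{P|_I,N}(\sigma)$, and accept iff $T_{P,N}(\sigma)\le 1+\ep$. Soundness is then deterministic: any $\sigma$ with $|T_{P,N}(\sigma)-1|>\ep$ is rejected with probability $1$. So the proof has two parts, an $O(Nk^2)$ dynamic program for the $A_\ell$ and a completeness bound for $\sigma\sim p^{\otimes N}$.

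For the dynamic program, define $D[i][j][L]$ as the number of pairs $(I,\iota)$ with $I\subseteq[k]$, $|I|=L$, $\max I=j$, and $\iota$ an occurrence of $P|_I$ in $\sigma[1..i]$ whose last index maps to position $j$. The recurrence comes from deleting the last element: when $\sigma[i]=P[j]$,
$$\text{new}_{j,L}=[L=1]+\sum_{j'<j}F[j'][L-1],$$
where $F[j'][L-1]$ accumulates $D[i'][j'][L-1]$ over all processed $i'<i$. We scan $i$ from $1$ to $N$ and keep $F[j][L]$ as running totals over the indices seen so far. Using prefix sums over $j'$, each step touches only the positions $j$ with $P[j]=\sigma[i]$, at $O(k)$ work per $(j,L)$ after the prefix sums are refreshed. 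A naive refresh costs $O(k^2)$ per step, which gives $O(Nk^2)$ total time. Finally $A_\ell=\sum_j F[j][\ell]$ at the end of the scan. Correctness follows because the map from (subset, occurrence) pairs to (prefix pair, appended element) is a bijection. Multiplicities over distinct $I$ are preserved, which is exactly what \Cref{def:T-P-qc} counts.

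Completeness is the main obstacle. For each $\ell$, the expectation of $A_\ell$ is $\mu_\ell=\binom{N}{\ell}\sum_{|I|=\ell}\prod_{i\in I}p_{P_i}$, and we need a union bound over the $k$ levels. One option is to apply the second-moment argument of \Cref{prop:min-weight-sequence-P-copies} to each aggregate count: two index sets sharing $j$ indices contribute a covariance of relative size at most about $(\ell^2/(Np_{\min}))^j$, so $\mathrm{Var}(A_\ell)/\mu_\ell^2=O(\ell^2/(Np_{\min}))$. Chebyshev plus a union bound over $\ell$ then requires $N\gtrsim k^3/(p_{\min}\ep^2)$ times a factor driving the failure probability to $o(1)$. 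The catch is that aggregating over $I$ creates correlations between different subsets, which must be bounded termwise by the same overlap estimate; each pair of subsets is treated via the per-pattern covariance bound. Alternatively, \Cref{thm:combes} can be applied over the random values: changing one coordinate changes $A_\ell$ by at most about $\ell\,\mu_{\ell-1}$-type quantities on a high-probability set where the lower-level counts are correct. That inductive argument yields exponential tails, and the $\log(2+k/(\ep p_{\min}))$ factor pays for the union bound over $\ell$. I would attempt the Chebyshev route first and fall back to the bounded-differences induction if the $o(1)$ requirement is not met.
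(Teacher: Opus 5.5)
Your proposal is correct and follows essentially the same route as the paper. The paper also computes the aggregate counts exactly with an $O(Nk^2)$ dynamic program: your streaming running totals $F[j][L]$ are a space-saving version of its prefix-sum table $S[i,j,L]$. Soundness is likewise deterministic, and completeness comes from the same per-pair joint-probability bound, Chebyshev, and a union bound over $\ell$ that the paper proves as \Cref{prop:min-weight-sequence-aggregate-copies}, so the bounded-differences fallback is unnecessary. One small slip: once the prefix sums are refreshed, each $(j,L)$ update costs $O(1)$, not $O(k)$, and that is what your $O(Nk^2)$ total actually requires.
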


We consider the algorithm given in \Cref{figure:dynamic-programming}, which is a dynamic programming algorithm. We will construct a three-dimensional table $M$ such that $M[i,j,L]$ will contain the number of pairs of length-$L$ indexed subsequences of $P$ and $\sigma$ that agree, such that the final selected positions are $j$ in $P$ and $i$ in $\sigma$.

\begin{figure}[h]
\centering
\fbox{
\parbox{\dimexpr0.95\linewidth-2\fboxsep-2\fboxrule\relax}{
\textbf{Algorithm Dynamic-Programming:} \\
On inputs: A length-$N$ sequence $\sigma$ in $[m]^N$, a length-$k$ pattern $P$, a probability distribution $p$ over $[m]$, and parameter $\ep \in (0, 1)$.
\\\\
Dynamic programming to compute aggregate subpattern counts:
\begin{enumerate}
    \item Initialize $(N+1) \times (k+1) \times (k+1)$ table $M$, indexed starting at $0$, to be all-zero. (The entry $M[i, j, L]$ will consist of the number of $P' \subseteq P$ of length $L$ in $\sigma[1, \dots, i]$ such that $\sigma[i] = P[j]$ and $P'$ ends at index $i$ of $\sigma$ and index $j$ of $P$.)
    \item Initialize $(N+1) \times (k+1) \times (k+1)$ table $S$, indexed starting at $0$, to be all-zero. (This will be a table of prefix sums.)
    \item For all $i$ from 1 to $N$, all $L$ from $1$ to $k$, and all $j$ from $1$ to $k$:
    \begin{enumerate}
        \item If $\sigma[i] = P[j]$:
        \begin{enumerate}
            \item If $L = 1$, set $M[i, j, L] = 1$.
            \item Else, set $M[i, j, L] = S[i-1, j-1, L-1]$.
        \end{enumerate}
        \item Else, set $M[i, j, L]  = 0$.
        \item Set $S[i, j, L] = M[i, j, L] + S[i-1, j, L] + S[i, j-1, L] - S[i - 1, j-1, L]$.
    \end{enumerate}
\end{enumerate}
\noindent Now verify that all the counts match that of sequences from $p^{\otimes N}$:
    \begin{enumerate}[resume]
    \item For all $L$ from $1$ to $k$: 
    \begin{enumerate}
        \item Let $\textsc{Total-Count-L}$ be $S[N, k, L]$.
        \item Let $p_{P_a}$ be the weight that $p$ assigns the symbol at the $a$th index of $P$.
        \item If $\textsc{Total-Count-L} \not \in (1 \pm \eps) \binom{N}{L} \cdot \sum_{I \subseteq [k], |I| = L} \prod_{a \in I} p_{P_a}$, then output \textsc{Reject}.
    \end{enumerate}
    \item Output \textsc{Accept}.
\end{enumerate}
}
}
\caption{This algorithm uses dynamic programming to count the number of occurrences of patterns of a sequence $P$ as subsequences in a longer sequence $\sigma$. The algorithm then accepts sequences whose count of subpatterns of $P$ match that of sequences from $p^{\otimes N}$, and rejects sequences otherwise.}
\label{figure:dynamic-programming}
\end{figure}

We prove the dynamic programming algorithm correctly computes aggregate counts of subpatterns of length $L$ of $P$ in $\sigma$, for every $L \in [k]$.

\begin{lemma}\label{lem:dp-correct}
    The algorithm in \Cref{figure:dynamic-programming}, Steps 1 through 3, results in a table $S[i, j, L]$ such that $S[N, k, L]$ equals the aggregate count of length-$L$ subpatterns of length-$k$ pattern $P$ in sequence $\sigma$.
\end{lemma}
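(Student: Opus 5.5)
We prove the claim by induction on $i$, with two invariants. The first concerns $M$: for every $i \in [N]$, $j \in [k]$, $L \in [k]$, $M[i,j,L]$ equals the number of pairs $(I, J)$ where $I = \{a_1 < \dots < a_L\} \subseteq [k]$ with $a_L = j$, $J = \{b_1 < \dots < b_L\} \subseteq [N]$ with $b_L = i$, and $\sigma[b_r] = P[a_r]$ for all $r \in [L]$. The second concerns $S$: $S[i,j,L] = \sum_{i' \le i,\, j' \le j} M[i',j',L]$, so $S[i,j,L]$ counts matched pairs whose last positions lie in $[i] \times [j]$. Taking $i = N$ and $j = k$, the quantity $S[N,k,L]$ counts all pairs $(I,J)$ with $|I| = |J| = L$ and $\sigma|_J = P|_I$. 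This is exactly $\sum_{|I| = L} C_{P|_I, N}(\sigma)$, the aggregate count with multiplicity over index sets $I$ used in \Cref{def:T-P-qc}, which is the statement of the lemma.

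We verify both invariants in the order the loops fill the tables, using the boundary convention that all entries with index $0$ vanish.

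\emph{$M$-invariant.} If $\sigma[i] \neq P[j]$, no matched pair can end at $(i,j)$, so the value $0$ is correct. If $\sigma[i] = P[j]$ and $L = 1$, the unique pair is $(\{j\}, \{i\})$, so the value $1$ is correct. For $L \ge 2$, delete the last elements of $I$ and $J$. This gives a bijection between matched pairs of length $L$ ending at $(i,j)$ and matched pairs of length $L-1$ ending at some $(i',j')$ with $i' < i$ and $j' < j$. By the $S$-invariant at level $L-1$, the number of the latter is $S[i-1,j-1,L-1]$. That entry has first index $i-1$, so it is already computed whenever $i \ge 2$, and it is $0$ by convention when $i = 1$.

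\emph{$S$-invariant.} The update $S[i,j,L] = M[i,j,L] + S[i-1,j,L] + S[i,j-1,L] - S[i-1,j-1,L]$ is the standard two-dimensional inclusion–exclusion for prefix sums over the rectangle $[i] \times [j]$. It requires $S[i,j-1,L]$ to be available when $S[i,j,L]$ is computed. This holds because, for fixed $i$ and $L$, the innermost loop runs $j$ upward from $1$.

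The only real subtlety is checking this dependency order of the loops, since $L$ varies in an intermediate loop and $j$ in the innermost one. Every entry read has either a smaller $i$, or the same $i$ and $L$ with a smaller $j$, so each is already final when it is read. Everything else is routine.
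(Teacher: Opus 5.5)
Your proposal is correct and follows essentially the same argument as the paper. Both proofs show that $M[i,j,L]$ counts matched index pairs ending at $(i,j)$, via the bijection that deletes the last element, and that $S$ is the two-dimensional inclusion--exclusion prefix sum of $M$. The only difference is that you induct along the loop order in $i$, whereas the paper inducts on $L$; this makes explicit the evaluation-order check that every entry read is already final, which the paper leaves implicit.
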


\begin{proof}
    \noindent\textbf{Intuition.} At each step, \begin{equation}\label{eq:S-wrt-M}
    S[i, j, L] = \sum_{i' \leq i} \sum_{j' \leq j} M[i', j', L]
    \end{equation}
    where $M[i', j', L]$ holds the count (with multiplicity) of $P' \subseteq P$ of length $L$ in the first $i'$ entries of $\sigma$ such that $\sigma[i'] = P[j']$ and $P'$ ends at index $i'$ of $\sigma$ and pattern index $j'$. First, \Cref{eq:S-wrt-M} holds because of Step 3(c), where we update $S[i, j, L]$ according to inclusion-exclusion of smaller prefix sums and the remaining entry $M[i, j, L]$:
    $$S[i, j, L] = S[i-1, j, L] + S[i, j-1, L] + M[i, j, L] - S[i-1, j-1, L].$$
    
    To see why $M[i', j', L]$ holds the intended quantity, observe that in Step 3(a), if $\sigma[i] = P[j]$ and $L = 1$, then $M[i, j, L] = 1$ because there is one length-1 pattern $P' = P[j]$ at this index of $\sigma$. If $L > 1$, then $M[i, j, L]$ equals the number of patterns of length $L - 1$ in $\sigma[1, \dots, i-1]$ ending at some pattern index $j' < j$. Each of these can be extended to a length $L$ pattern with last entry $\sigma[i] = P[j]$. 
    Thus, we compute $M[i, j, L]$ properly as:
    $$M[i, j, L] = \mathbf{1}(\sigma[i] = P[j]) \cdot \left( \mathbf{1}(L > 1) \cdot \left( \sum_{i' < i} \sum_{j' < j} M[i', j', L-1]\right) + \mathbf{1}(L = 1)\right).$$
    
    Thus, we see that all quantities are updated correspondingly and, at the end, $S[N, k, L]$ stores the total count of length-$L$ subpatterns of $P$ in $\sigma$.

    We now give the formal proof of correctness.
    \\\\
    \noindent \textbf{Proof of the dynamic program's correctness.} We give an inductive proof. Let $C(i,j,L)$ be the number of pairs $(A,I)$ such that $A \subseteq [i]$, $I \subseteq [j]$, $|A|=|I|=L$, $\max A=i$, $\max I=j$, and $\sigma|_A=P|_I$. $C(i, j, L)$ is the quantity that $M[i, j, L]$ is intended to hold, and below we prove that $M$ indeed does hold $C$.

    For the base case, a length-1 occurrence ending at $\sigma[i]$ and $P[j]$ exists iff $\sigma[i] = P[j]$. $M[i, j, 1] = 1$ iff $\sigma[i] = P[j]$, and is 0 otherwise. Thus, $M[\cdot, \cdot, 1] = C[\cdot, \cdot, 1]$. We set $S$ so that $S[N, k, 1] = \sum_{i, j} M[i, j, 1]$, and thus $S[N, k, 1]$ properly counts length-1 subpattern occurrences.

    For the inductive hypothesis, assume $M[\cdot, \cdot, L-1] = C(\cdot, \cdot, L-1)$. For the inductive step, fix a pair $(i, j)$ such that $\sigma[i] = P[j]$. Every length-$L$ subpattern $P' \subseteq P$ ending with $\sigma[i] = P[j]$ is formed by taking a length-$(L-1)$ subpattern ending at some $\sigma[i'] = P[j']$ for $i' < i, j' < j$ and appending $\sigma[i] = P[j]$. Appending is valid since $\sigma[i] = P[j]$. Each extension corresponds to a distinct occurrence of a subpattern.
    
    Via this and the inductive hypothesis,
    $$C(i, j, L) = \sum_{i' \leq i-1} \sum_{j' \leq j - 1} C(i', j', L-1) = S[i-1, j-1, L-1].$$
    Thus, $M[i, j, L] = S[i-1, j-1, L-1]$ in Step 3(a) iff $\sigma[i] = P[j]$, and 0 otherwise, sets $M[i, j, L] = C(i, j, L)$. $S$ is set so that $S[N, k, L] = \sum_{i, j} M[i, j, L]$ via \Cref{eq:S-wrt-M} and thus $S[N, k, L]$ properly counts length-$L$ subpattern occurrences.

    We conclude that $M = C$ on all entries, and $$S[N, k, L] = \sum_{i \leq N} \sum_{j \leq k} C(i, j, L)$$
    which correctly counts the number of pairs $(A,I)$, where $A \subseteq [N]$, $I \subseteq [k]$, $|A|=|I|=L$, and $\sigma|_A=P|_I$ for all $L \in [k]$.
\end{proof}

We now prove \Cref{thm:subsequences-dp}.

\begin{proof}[Proof of \Cref{thm:subsequences-dp}]
    First, by \Cref{lem:dp-correct}, the dynamic programming algorithm correctly computes the aggregate counts of all subpatterns of length $L$ of the length-$k$ pattern $P$ on $\sigma$, for all $L \in [k]$.
    
    Given this, we prove completeness and soundness of the quality control algorithm. \\\\
    \textbf{Completeness.} Suppose that $\sigma \sim p^{\otimes N}$. By \Cref{prop:min-weight-sequence-aggregate-copies}, with probability at least $1-O\left(\frac{k^3}{Np_{\min}\eps^2}\right)$, simultaneously for every $a\in[k]$,
    $$\textsc{Count}_a(P,\sigma) \in (1 \pm \eps) \binom{N}{a} \cdot \sum_{\substack{I \subseteq [k] \\ |I| = a}} \prod_{i \in I} p_{P_i}.$$
    Since $N=\Omega\left(\frac{k^3}{p_{\min}\eps^2}\log\left(2+\frac{k}{\eps p_{\min}}\right)\right)$, the failure probability $O\left(\frac{k^3}{Np_{\min}\eps^2}\right)$ is $O\left(1/\log\left(2+\frac{k}{\eps p_{\min}}\right)\right)$ and is $o(1)$ as $k/(\eps p_{\min})\to\infty$. By \Cref{lem:dp-correct}, $S[N,k,a]$ is exactly $\textsc{Count}_a(P,\sigma)$. Thus, the algorithm accepts with probability $1-o(1)$, and completeness holds.
    \\\\
    \textbf{Soundness.} Suppose that $T_{P,N}(\sigma) \notin 1 \pm \eps$. By the definition of $T_{P,N}$, for some $a \in [k]$, the number of subpatterns of $P$ of length $a$ that appear as subsequences in $\sigma$ is not $\in (1 \pm \eps) \binom{N}{a} \cdot \sum_{I \subseteq [k], |I| = a} \prod_{i \in I} p_{P_i}$. By \Cref{lem:dp-correct}, $S[N,k,a]$ is exactly this aggregate count, so the algorithm \textsc{Dynamic-Programming} will detect this and reject. Thus, soundness holds.
    \\\\
    \textbf{Query complexity and runtime.} The algorithm queries all indices of the sequence $\sigma$, so the query complexity is $O(N)$. The algorithm uses $O(N k^2)$ arithmetic operations since there are three nested loops: two over $[k]$ and one over $[N]$. Thus, in the unit-cost arithmetic model, the runtime is $O(Nk^2)$.
\end{proof}

\subsubsection{Analysis: exponentially robust quasirandomness}\label{sec:pattern-exp-rob-qr-sequences}

We prove the following.

\begin{lemma}\label{lem:exp-qr-main-subsequences}
    Suppose $\sigma \in [m]^n$ is a length-$n$ sequence, $P$ is a length-$k$ sequence, $p$ is a distribution over $[m]$ with minimum weight $p_{\min}$, and $\ep \in (0, 1)$. Sample a set $S \subseteq [n]$ of $s = \widetilde{O}\left(\frac{k^5}{\ep^2 p_{\min}^2}\right)$ indices. Suppose there exists some $\ell \in [k]$ such that $\textsc{Count}_{\ell}(P, \sigma) \not\in (1 \pm \beta_\ell) \binom{n}{\ell} \sum_{\substack{I \subseteq [k] \\ |I| = \ell}} \prod_{i \in I} p_{P_i} $ (and consider the first such $\ell$). Then, with probability at least $1 - \exp(-\alpha_{\ell})$, $ \textsc{Count}_{\ell}(P, \sigma|_S) \not\in (1 \pm 3\beta_{\ell}/4) \binom{|S|}{\ell} \sum_{\substack{I \subseteq [k] \\ |I| = \ell}} \prod_{i \in I} p_{P_i}.$
\end{lemma}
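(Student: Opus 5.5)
The plan mirrors the permutation case (\Cref{lem:exp-qr-main}): an induction on $\ell$ driven by a concentration lemma proved via \Cref{thm:combes}. Write $E_\ell(N) := \binom{N}{\ell}\sum_{I\subseteq[k],|I|=\ell}\prod_{i\in I}p_{P_i}$ for the i.i.d. baseline of $\textsc{Count}_\ell(P,\cdot)$ on a length-$N$ sequence. I will reuse $\beta_\ell=\Theta(\ep)$ from \Cref{def:parameters-exprobustqr}. I will re-tune $\alpha_\ell$ so that $\exp(-\alpha_\ell)$ beats $\max(f)/\delta$; it will again have the form $(k-\ell)\log 3+O(k\log(k/(\ep p_{\min})))$. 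Exponentially robust quasirandomness at level $\ell$ is defined as in \Cref{def:exp-rob-qr}, with $\binom{k}{\ell}\binom{s}{\ell}/\ell!$ replaced by $E_\ell(s)$.

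\textbf{Concentration lemma.} Assume $\sigma$ is $(\alpha_\ell,2\beta_\ell,s,\ell)$-exponentially robustly quasirandom. I will apply \Cref{thm:combes} to $f(S)=\textsc{Count}_{\ell+1}(P,\sigma|_S)$, with $\mathcal{Y}$ the set of $S$ whose level-$\ell$ count lies in $(1\pm2\beta_\ell)E_\ell(s)$. The bad-set condition $q\le\delta/(2\max f)$ needs $\max f\le\binom{k}{\ell+1}\binom{s}{\ell+1}$ and $\delta=\frac{\beta_{\ell+1}}{4}E_{\ell+1}(s)\ge \frac{\beta_{\ell+1}}4\binom{s}{\ell+1}p_{\min}^{\ell+1}$. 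This ratio is $\exp(O(k\log(k/(\ep p_{\min}))))$, which is absorbed into $\alpha_\ell$.

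\textbf{Bounded differences.} Swapping one sampled index changes only occurrences containing the removed or added index. Deleting that index from such an occurrence leaves a level-$\ell$ occurrence (a pair $(A,I)$ with $\sigma|_A=P|_I$). Each level-$\ell$ occurrence extends in at most $k-\ell$ ways in $P$ and at most once at the fixed new index. Hence on $\mathcal{Y}$ the difference is at most $c\approx 2(k-\ell)(1+2\beta_\ell)E_\ell(s)$.

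\textbf{Main obstacle.} The exponent $\delta^2/(8sc^2)$ involves the ratio
\[
\frac{E_{\ell+1}(s)}{E_\ell(s)}=\frac{s-\ell}{\ell+1}\cdot\frac{e_{\ell+1}}{e_\ell},
\]
where $e_j$ is the elementary symmetric polynomial of degree $j$ in the weights $w_i=p_{P_i}$. In the permutation case this ratio was explicit. Here I need a lower bound on $e_{\ell+1}/e_\ell$. Each $(\ell+1)$-set is obtained from an $\ell$-set by adding one of the $k-\ell$ remaining elements, each of weight at least $p_{\min}$, and each $(\ell+1)$-set arises $\ell+1$ times. This gives $(\ell+1)e_{\ell+1}\ge (k-\ell)p_{\min}e_\ell$. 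Therefore
\[
\frac{\delta}{c}\gtrsim \beta_{\ell+1}\frac{(s-\ell)p_{\min}}{(\ell+1)^2}\cdot\frac{k-\ell}{k-\ell},
\]
and the exponent is about $\beta_{\ell+1}^2 s\,p_{\min}^2/k^4$. Requiring this to be at least $\alpha_\ell=\widetilde O(k)$ gives $s=\widetilde O(k^5/(\ep^2p_{\min}^2))$, which matches the statement. The $(k-\ell)$ factors cancel, which is what saves the factor of $k^2$ relative to the permutation case. I expect checking this cancellation carefully, together with the case $\ell=k-1$ and Newton-type inequalities if a cleaner bound is needed, to be the delicate part.

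\textbf{Induction and conclusion.} The base case $\ell=1$ is not trivial here, since $\textsc{Count}_1$ depends on symbol frequencies. Instead I start from level $0$, where the count is the constant $1$ and the statement holds trivially, and let the concentration lemma handle level $1$ (its bounded difference is at most $2$). Then, exactly as in \Cref{lem:ok-global-counts-exp-qr}, global correctness at all levels $\ell'<\ell$ gives exponentially robust quasirandomness at level $\ell-1$. The concentration lemma then concentrates $\textsc{Count}_\ell(P,\sigma|_S)$ within $\frac{\beta_\ell}{4}E_\ell(s)$ of its mean, which is $\textsc{Count}_\ell(P,\sigma)\binom{s}{\ell}/\binom{n}{\ell}$. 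This mean lies outside $(1\pm\beta_\ell)E_\ell(s)$, so with probability at least $1-\exp(-\alpha_\ell)$ the sample count lies outside $(1\pm3\beta_\ell/4)E_\ell(s)$.
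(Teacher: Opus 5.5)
Your proposal is correct and is essentially the paper's proof. The shared elements are:
\begin{itemize}
\item the same \Cref{thm:combes}-based concentration lemma, with $\mathcal{Y}$ cut out by the level-$\ell$ count;
\item the same bounded difference $\approx 2(k-\ell)E_\ell(s)$;
\item the same inequality $(\ell+1)e_{\ell+1}\ge (k-\ell)p_{\min}e_\ell$, giving an exponent $\gtrsim \beta_{\ell+1}^2 s p_{\min}^2/k^4$;
\item the same induction up to the first bad level.
\end{itemize}
Your trivial level-$0$ start is just a repackaging of the paper's direct $\ell=1$ step (\Cref{thm:combes} with $q=0$, $c=k$).

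There are two small slips to fix.
\begin{itemize}
\item Swapping one sampled index can change $\textsc{Count}_1$ by up to $k$, not $2$. Your general formula at $\ell=0$ correctly gives $c\approx 2k$, and $e_1\ge kp_{\min}$ still yields an exponent $\gtrsim\beta_1^2 s p_{\min}^2$.
\item $\beta_1$ must be the $\Theta(\ep)$ value of \Cref{def:parameters-exprobustqr-sequences}, not the $\beta_1=0$ of the permutation definition you cite.
\end{itemize}
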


We prove the following corollary from the lemma.

\begin{corollary}\label{cor:qc-samples-sequences}
    Let $P$ be a length-$k$ sequence over $[m]$, $\ep > 0$, and $p$ be a distribution over $[m]$ with minimum weight $p_{\min}$. Define $\D = \{p^{\otimes n}\}_n$. Then, $(\D, R_P)$-quality control is achievable in $\widetilde{O}\left(\frac{k^5}{\ep^2 p_{\min}^2}\right)$ queries.
\end{corollary}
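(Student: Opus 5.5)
We derive \Cref{cor:qc-samples-sequences} from \Cref{lem:exp-qr-main-subsequences} by following the template of \Cref{cor:qc-samples}. The algorithm samples a uniformly random set $S \subseteq [n]$ of $s = \widetilde{O}\left(\frac{k^5}{\ep^2 p_{\min}^2}\right)$ indices and queries $\sigma|_S$. For every $\ell \in [k]$ it computes $\textsc{Count}_{\ell}(P, \sigma|_S)$ exactly, using the dynamic program of \Cref{figure:dynamic-programming} (correct by \Cref{lem:dp-correct}). It accepts iff, for every $\ell$,
$$\textsc{Count}_{\ell}(P, \sigma|_S) \in (1 \pm 3\beta_\ell/4) \binom{s}{\ell} \sum_{I \subseteq [k], |I| = \ell} \prod_{i \in I} p_{P_i}.$$
Here $\beta_\ell = \Theta(\ep)$ are the sequence-analogues of the parameters in \Cref{def:parameters-exprobustqr}, and we arrange that $\beta_k \leq \ep$. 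The query complexity is $s$, which is the claimed bound.

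\textbf{Soundness.} Suppose $R_{P,n}(\sigma) \notin 1 \pm \ep$. At $\ell = k$ the only subset is $I = [k]$, so $\textsc{Count}_k(P,\sigma) = C_{P,n}(\sigma)$. Hence $\textsc{Count}_k(P,\sigma) \notin (1 \pm \beta_k)\binom{n}{k}\prod_i p_{P_i}$, because $\beta_k \leq \ep$. So some $\ell \in [k]$ violates the global bound; let $\ell$ be the first such. By \Cref{lem:exp-qr-main-subsequences}, with probability at least $1 - \exp(-\alpha_\ell)$ the sampled count violates the $(1 \pm 3\beta_\ell/4)$ window. Since $\alpha_\ell \geq \log(3k/\ep) \geq \log 3$, this probability is at least $2/3$, and the algorithm rejects.

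At $\ell = 1$ there is a small point to handle. Here $\textsc{Count}_1$ is a sum of symbol frequencies, so it is not automatically correct, unlike in the permutation case. However, \Cref{lem:exp-qr-main-subsequences} covers every $\ell \in [k]$. Alternatively, for $\ell = 1$ the statistic is a bounded-difference function of $S$, and direct hypergeometric concentration suffices.

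\textbf{Completeness.} This step needs the most care, since the thresholds must be compatible with the fluctuations of a truly random sequence. Let $\sigma \sim p^{\otimes n}$. The subsequence $\sigma|_S$ is distributed as $p^{\otimes s}$ for any fixed $S$. I would invoke \Cref{prop:min-weight-sequence-aggregate-copies} (the aggregate-count concentration used in the proof of \Cref{thm:subsequences-dp}) with error parameter $3\beta_\ell/4 = \Theta(\ep)$. It gives that all $k$ aggregate counts lie in their windows simultaneously, except with probability $O\left(\frac{k^3}{s p_{\min}\ep^2}\right)$. With $s = \widetilde{\Omega}\left(\frac{k^5}{\ep^2 p_{\min}^2}\right)$ this is $O\left(\frac{p_{\min}}{k^2 \log(\cdot)}\right) = o(1)$ as the runtime grows, which meets the convention for $o(1)$ in \Cref{def:quality control}.

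The main obstacle is bookkeeping. We must check that the sample size demanded by \Cref{lem:exp-qr-main-subsequences} dominates the one needed for completeness. We must also check that the constant-factor slack between $\beta_\ell$ and $3\beta_\ell/4$, which is uniform in $\ell$ because $\beta_\ell \in [\ep/4, \ep/2]$, is compatible with \Cref{prop:min-weight-sequence-aggregate-copies} being applied at accuracy $\Theta(\ep)$.
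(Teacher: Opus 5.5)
Your proposal is correct and follows the paper's own proof essentially verbatim. It uses the same algorithm (sample $s$ indices, accept iff every aggregate count lies in its $(1\pm 3\beta_\ell/4)$ window), proves soundness via the first violating $\ell$ and \Cref{lem:exp-qr-main-subsequences}, and proves completeness via \Cref{prop:min-weight-sequence-aggregate-copies}. Your extra checks (that $\ell=k$ forces a violating level since $\beta_k\le\ep$, and that $\exp(-\alpha_\ell)\le 1/3$) and your use of the dynamic program in place of the paper's $O(s^k)$ brute-force counting are harmless refinements that do not change the query bound.
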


We begin by proving \Cref{cor:qc-samples-sequences} from \Cref{lem:exp-qr-main-subsequences}.

\paragraph{The algorithm.} The quality control algorithm performs the following procedure. First, sample a subsequence on $s := \widetilde{O}\left(\frac{k^5}{\ep^2 p_{\min}^2}\right)$ indices. Then, for each $\ell \in [k]$, evaluate whether $\textsc{Count}_{\ell}(P, \sigma|_S) \not\in (1 \pm 3\beta_\ell/4) \binom{|S|}{\ell} \sum_{\substack{I \subseteq [k] \\ |I| = \ell}} \prod_{i \in I} p_{P_i}$ in $O(s^k)$ time.

\begin{proof}[Proof of \Cref{cor:qc-samples-sequences} from \Cref{lem:exp-qr-main-subsequences}]
    We argue completeness and soundness.

    \paragraph{Completeness.} We prove that a sequence $\sigma \sim p^{\otimes n}$ is accepted with probability $1 - o(1)$. Since $\sigma \sim p^{\otimes n}$ and $S \subseteq [n]$ imply $\sigma|_S \sim p^{\otimes |S|}$, by \Cref{prop:min-weight-sequence-aggregate-copies}, the probability that one of the tests fails is at most $O\left(\frac{k^3}{s p_{\min}\ep^2}\right)=O\left(\frac{p_{\min}}{k^2\log(2+k/(\ep p_{\min}))}\right)=o(1)$ as $k/(\ep p_{\min})\to\infty$. Thus, for all $\ell \in [k]$, $\textsc{Count}_{\ell}(P, \sigma|_S) \in (1 \pm 3\beta_\ell/4) \binom{|S|}{\ell} \sum_{\substack{I \subseteq [k] \\ |I| = \ell}} \prod_{i \in I} p_{P_i}$, and $\sigma$ is accepted.

    \paragraph{Soundness.} Suppose that $\sigma$ is a length-$n$ sequence such that the count of length-$k$ pattern $P$ in $\sigma$ is not in $(1 \pm \ep) \binom{n}{k} \prod_{i \in [k]} p_{P_i}$. Then, there is a minimum $\ell \in [k]$ such that $\textsc{Count}_{\ell}(P, \sigma) \not\in (1 \pm \beta_\ell) \binom{n}{\ell} \sum_{\substack{I \subseteq [k] \\ |I| = \ell}} \prod_{i \in I} p_{P_i}$. Then, by \Cref{lem:exp-qr-main-subsequences}, on a random subset $S$ of size $N \geq s := \widetilde{O}\left(\frac{k^5}{\ep^2 p_{\min}^2}\right)$, with probability at least $1-\exp(-\alpha_\ell)$, $\textsc{Count}_{\ell}(P, \sigma|_S) \not\in (1 \pm 3\beta_\ell/4) \binom{|S|}{\ell} \sum_{\substack{I \subseteq [k] \\ |I| = \ell}} \prod_{i \in I} p_{P_i}$. In this case, the quality control algorithm rejects.
\end{proof}

As in \Cref{sec:exp-rob-qr} for permutation patterns, we define a notion of \textit{exponentially robust quasirandomness} for the counts of subpatterns of a pattern $P$ in a longer sequence $\sigma$ of length $n$. The analysis will differ because the ranges of quantities (such as the expectation and bounded difference inequality values) will now differ as we handle pattern counts.

We begin by introducing some notation. 

\begin{definition}
    For a subsequence $\sigma$ over $[m]^n$ and a length-$k$ pattern $P$, and $\ell \in [k]$, define $\textsc{Count}_{\ell}(P, \sigma)$ be the count in $\sigma$ of all patterns of length $\ell$ that are subpatterns of $P$:

    $$\textsc{Count}_{\ell}(P, \sigma) = \sum_{\substack{I \subseteq [k]\\|I| = \ell}} C_{P|_I,n}(\sigma),$$
    where subpatterns are indexed by their position sets $I$ and are therefore counted with multiplicity.
\end{definition}

For a subset of indices $S \subseteq [n]$ and a length-$n$ sequence $\sigma$, let $\sigma|_S$ be the subsequence of $\sigma$ on the indices $S$. For a distribution $p$ over $[m]$, let $p_i$ be the weight that $p$ assigns $i \in [m]$.

Next, define exponentially robust quasirandomness as follows.

\begin{definition}[Exponentially robust quasirandomness of pattern count]\label{def:exp-rob-qr-sequence}
    For a length-$k$ pattern $P$,  any $\ell \in [k]$, and a distribution $p$ over $[m]$, a sequence $\sigma$ over $[m]^n$ is $(\alpha, \beta, s_0, \ell, p)$-exponentially robustly quasirandom with respect to $P$ if, for all $s \geq s_0$,
    $$\mathbb{P}_{S : |S| = s}\left(\textsc{Count}_{\ell}(P, \sigma|_S) \not \in (1 \pm \beta) \binom{s}{\ell}\sum_{\substack{I \subseteq [k] \\ |I| = \ell}} \prod_{i \in I} p_{P_i} \right) \leq \exp(- \alpha).$$
\end{definition}

That is, a sequence $\sigma$ is exponentially robustly quasirandom with respect to a sequence $P$ if, with all but exponentially small probability, the number of occurrences of subpatterns of $P$ in size-$s$ subsequences of $\sigma$ is approximately what is expected when $\sigma$ is a sequence drawn from $p^{\otimes n}$.

As in \Cref{sec:exp-rob-qr}, we prove a ``concentration lemma'' below. We first define the relevant parameters.

\begin{definition}\label{def:parameters-exprobustqr-sequences}
For $1 \leq \ell \leq k$, define
$$
    \beta_\ell := \frac{\ep}{4}\cdot 2^{(k-\ell)/k}.
$$
(Observe that $\beta_\ell=\Theta(\ep)$ for every $\ell\in[k]$, and in particular
$\beta_k \leq \ep$.)

For $1 \leq \ell \leq k$, define
$$
     \alpha_\ell := (k-\ell)\log 3 + k\log(1/p_{\min}) + \log(3k/\ep).
$$

Let
$$
    s = \widetilde{O}\left(\frac{k^5}{\ep^2 p_{\min}^2}\right).
$$
\end{definition}

\begin{lemma}[Concentration lemma]\label{lem:concentration-lemma-sequences}
    Suppose a sequence $\sigma \in [m]^n$ is $(\alpha_{\ell}, 2\beta_{\ell}, s, \ell, p)$-exponentially robustly quasirandom with respect to $P$. Then, for $\alpha_{\ell+1}, \beta_{\ell+1}, s$ as in \Cref{def:parameters-exprobustqr-sequences},

    $$\mathbb{P}_{S, |S| = s}\left(\left|\textsc{Count}_{\ell+1}(P, \sigma|_S)  - \mathbb{E}\left(\textsc{Count}_{\ell+1}(P, \sigma|_S) \right)\right|\geq  \frac{\beta_{\ell+1}}{4} \binom{s}{\ell+ 1} \sum_{\substack{I \subseteq [k] \\ |I| = \ell + 1}} \prod_{i \in I} p_{P_i} \right) \leq \exp(-\alpha_{\ell + 1}).$$
\end{lemma}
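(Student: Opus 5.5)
We mirror the proof of \Cref{lem:concentration-lemma}, applying \Cref{thm:combes} to $f(S) = \textsc{Count}_{\ell+1}(P, \sigma|_S)$ over uniformly random $S \in \binom{[n]}{s}$. The good set is
$$\mathcal{Y} = \Big\{S : \textsc{Count}_{\ell}(P, \sigma|_S) \in (1 \pm 2\beta_\ell)\, \mu_\ell(s)\Big\}, \qquad \mu_\ell(s) := \binom{s}{\ell}\sum_{\substack{I \subseteq [k]\\ |I| = \ell}} \prod_{i \in I} p_{P_i}.$$
The deviation threshold is $\delta = \frac{\beta_{\ell+1}}{4}\mu_{\ell+1}(s)$. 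By the exponentially robust quasirandomness hypothesis, $q := 1-\mathbb{P}(S\in\mathcal{Y}) \le \exp(-\alpha_\ell) = \frac13 \exp(-\alpha_{\ell+1})$, since $\alpha_\ell = \alpha_{\ell+1}+\log 3$.

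\emph{Step 1: the side condition $q \le \delta/(2\max f)$.} We have $\max f \le \binom{k}{\ell+1}\binom{s}{\ell+1}$. Each product $\prod_{i\in I} p_{P_i}$ is at least $p_{\min}^{\ell+1} \ge p_{\min}^k$, so
$$\frac{\delta}{2\max f} \;\ge\; \frac{\beta_{\ell+1}}{8}\, p_{\min}^{k} \;=\; \Omega\!\left(\ep\, p_{\min}^k\right).$$
This is exactly why $\alpha_\ell$ contains the term $k\log(1/p_{\min}) + \log(3k/\ep)$: it gives $\exp(-\alpha_\ell) \le \frac{\ep}{3k}\, p_{\min}^k$, which is below the required bound up to the constants absorbed in the $\log$ term.

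\emph{Step 2: bounded differences on $\mathcal{Y}$.} Swapping one sampled index changes $f$ only through occurrences containing the removed or the added index. Deleting that index from such an occurrence leaves an occurrence of a length-$\ell$ subpattern in $\sigma|_{A\cap B}$. Each length-$\ell$ occurrence extends in at most $k-\ell$ ways in pattern position, and for each pattern position the sequence position is forced to be the swapped index. Hence on $\mathcal{Y}$,
$$|f(A)-f(B)| \le c' := 2(k-\ell)(1+2\beta_\ell)\,\mu_\ell(s) \quad \text{per swap.}$$
There is a subtlety here: $\mathcal{Y}$ controls $\textsc{Count}_\ell$ on $A$ and on $B$, not on $A\cap B$. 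However, the occurrences in $A\cap B$ form a subset of those in $A$, so monotonicity fixes this, and for general $|A\triangle B|$ we chain the bound through swaps. Set $c = O(k\,\mu_\ell(s))$.

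\emph{Step 3: the exponent, which is the main obstacle.} We need
$$\frac{\delta^2}{8sc^2} \;\gtrsim\; \frac{\beta_{\ell+1}^2}{s\,k^2}\left(\frac{\mu_{\ell+1}(s)}{\mu_\ell(s)}\right)^{2} \;\ge\; \alpha_{\ell+1} + \log 3.$$
Unlike the permutation case, the ratio $\mu_{\ell+1}/\mu_\ell$ is not a clean combinatorial factor. We have $\binom{s}{\ell+1}/\binom{s}{\ell} = \frac{s-\ell}{\ell+1}$. Write $e_j$ for the elementary symmetric polynomial of degree $j$ in the weights $w_i = p_{P_i}$. We lower-bound $e_{\ell+1}/e_\ell$ by the counting identity
$$(\ell+1)\,e_{\ell+1} = \sum_{|I|=\ell} \prod_{i\in I} w_i \sum_{j\notin I} w_j \;\ge\; (k-\ell)\,p_{\min}\, e_\ell,$$
so that $\frac{\mu_{\ell+1}}{\mu_\ell} \ge \frac{(s-\ell)(k-\ell)\,p_{\min}}{(\ell+1)^2}$. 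With $\ell = k-1$ this gives only $\Omega(s\, p_{\min}/k^2)$. Plugging in, the exponent is
$$\Omega\!\left(\frac{\ep^2\, s\, p_{\min}^2}{k^6}\right),$$
which must exceed $\alpha_{\ell+1} = O\big(k\log(k/(\ep p_{\min}))\big)$. That would force $s = \widetilde{O}(k^7/(\ep^2 p_{\min}^2))$, not $k^5$.

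To reach $k^5$, I would sharpen the Lipschitz constant. An occurrence through a swapped index is counted once per pair of a length-$\ell$ occurrence and an extension position, but the extension position must match the symbol $\sigma$ at the swapped index. That symbol is $P_j$ for only some $j$, so the count is really weighted by how often the symbol appears, not by $k-\ell$. I would try to bound $c$ by $O(\mu_\ell(s)\cdot \max_{\text{symbol}} \#\{j: P_j = \text{symbol}\})$ and compare it with $e_{\ell+1}/e_\ell$ more carefully, for instance by using $e_{\ell+1}/e_\ell \ge (k-\ell)\,p_{\min}/(\ell+1)$ and letting the two $k-\ell$ factors cancel. If the cancellation gives a ratio $c/\mu_{\ell+1}$ of order $\ell/(s\,p_{\min})$, the exponent becomes $\ep^2 s\, p_{\min}^2/k^4$, which yields $s = \widetilde{O}(k^5/(\ep^2 p_{\min}^2))$ as claimed. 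Finally we combine $q + 2\exp(-\cdot) \le \exp(-\alpha_{\ell+1})$, exactly as in the permutation case.
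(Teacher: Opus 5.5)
Your setup is the paper's own: the same good set $\mathcal{Y}$, the same $\delta$, the same side condition via the $k\log(1/p_{\min})+\log(3k/\ep)$ term in $\alpha_\ell$, and the same bound $(\ell+1)e_{\ell+1}\ge(k-\ell)p_{\min}e_\ell$. However, as written you do not prove the lemma for the $s$ of \Cref{def:parameters-exprobustqr-sequences}. Your Step 3 only closes for $s=\widetilde{O}(k^7/(\ep^2p_{\min}^2))$, and the $k^5$ case is left conditional.

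The loss is self-inflicted. In Step 2 you correctly get $c'=2(k-\ell)(1+2\beta_\ell)\mu_\ell(s)$, but you then coarsen it to $c=O(k\,\mu_\ell(s))$. At $\ell=k-1$ you use $k-\ell\ge 1$ in $\mu_{\ell+1}/\mu_\ell$ while paying $k$ in the Lipschitz constant. Keep the factor $k-\ell$ in the constant, as the paper does, and it cancels exactly:
\[
\frac{\delta}{c'}\;=\;\frac{\beta_{\ell+1}}{8(1+2\beta_\ell)}\cdot\frac{s-\ell}{\ell+1}\cdot\frac{e_{\ell+1}}{(k-\ell)\,e_\ell}\;\ge\;\frac{\beta_{\ell+1}}{8(1+2\beta_\ell)}\cdot\frac{(s-\ell)\,p_{\min}}{(\ell+1)^2}.
\]
Hence $\delta^2/(8s c'^2)=\Omega\bigl(\ep^2(s-\ell)^2p_{\min}^2/(s(\ell+1)^4)\bigr)$ uniformly in $\ell$. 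This exceeds $\alpha_{\ell+1}+\log 3=O(k\log(k/(\ep p_{\min})))$ once $s=\widetilde{\Omega}(k^5/(\ep^2p_{\min}^2))$, and the rest combines exactly as you say.

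The symbol-multiplicity refinement is unnecessary. It also would not beat $k-\ell$ in general: for a constant pattern every free position carries the swapped symbol. Your guessed ratio $c/\mu_{\ell+1}\approx \ell/(sp_{\min})$ is off by a factor of $\ell$. The cancellation gives $c'/\mu_{\ell+1}=O(\ell^2/(sp_{\min}))$, and squaring this $\ell^2$ is what produces the $k^4$ in your exponent. Your stated ratio would give $k^2$.

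A smaller point concerns Step 2. Chaining single-swap bounds is not justified for general $A,B\in\mathcal{Y}$, because the intermediate sets need not lie in $\mathcal{Y}$. Route through $A\cap B$ instead, using your monotonicity observation:
\begin{itemize}
\item Every length-$(\ell+1)$ occurrence in $\sigma|_A$ not contained in $A\cap B$ contains some $x\in A\setminus B$.
\item Deleting $x$ leaves a length-$\ell$ occurrence in $\sigma|_A$, and for fixed $x$ each such occurrence has at most $k-\ell$ preimages.
\item So $0\le f(A)-f(A\cap B)\le |A\setminus B|\,(k-\ell)\,\textsc{Count}_\ell(P,\sigma|_A)$, and symmetrically for $B$.
\end{itemize}
This yields the hypothesis of \Cref{thm:combes} directly with $c=(k-\ell)(1+2\beta_\ell)\mu_\ell(s)$, using $|A\setminus B|=|A\triangle B|/2$. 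The paper also leaves this step implicit.
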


\begin{proof}[Proof of \Cref{lem:concentration-lemma-sequences}]
    We want to upper bound:
    \begin{equation}\label{eq:ub-3}
        \mathbb{P}_{S, |S| = s}\left(\left|\textsc{Count}_{\ell+1}(P, \sigma|_S)  - \mathbb{E}\left(\textsc{Count}_{\ell+1}(P, \sigma|_S) \right)\right|\geq  \frac{\beta_{\ell+1}}{4} \binom{s}{\ell+ 1} \sum_{\substack{I \subseteq [k] \\ |I| = \ell + 1}} \prod_{i \in I} p_{P_i} \right).
    \end{equation}

    We want to apply the inequality given in \Cref{thm:combes} to $\textsc{Count}_{\ell+1}(P, \sigma|_S)$, which utilizes high-probability bounded differences and requires some conditions on the parameters used.
    \\\\
    \textit{Step 1: Verifying the conditions of \Cref{thm:combes}.}
    This follows in exactly the same way as \Cref{sec:exp-rob-qr}, now with our new choice of parameters.
    \\\\
    \textit{Step 2: Bounded difference analysis.} Every length-$(\ell+1)$ subpattern occurrence that is added or eliminated when one sampled index is replaced contains that index. Deleting it produces a length-$\ell$ subpattern occurrence. Each length-$\ell$ subpattern of $P$ can be extended using at most $k-\ell$ additional positions of $P$, and we account for occurrences containing either the removed or the added sampled index. Therefore, in the high-probability case, the bounded difference is at most twice $k-\ell$ times the number of length-$\ell$ subpattern occurrences.

    By the exponentially robust quasirandomness assumption, in the ``high probability'' case, the number of length-$\ell$ subpatterns of $P$ is in $(1 \pm 2 \beta_{\ell}) \binom{s}{\ell} \sum_{\substack{I \subseteq [k] \\ |I| = \ell}} \prod_{i \in I} p_{P_i}$. The bounded difference is upper bounded by 
    $2(k-\ell) \binom{s}{\ell} \sum_{\substack{I \subseteq [k] \\ |I| = \ell}} \prod_{i \in I} p_{P_i}.$
    \\
    \textit{Applying the bounds.} We now apply \Cref{thm:combes} and the analysis above. We find that, for 
    $$\delta = \frac{\beta_{\ell+1}}{4} \binom{s}{\ell+ 1} \sum_{\substack{I \subseteq [k] \\ |I| = \ell + 1}} \prod_{i \in I} p_{P_i} \text{  and  } c = 2(k-\ell)\binom{s}{\ell} \sum_{\substack{I \subseteq [k] \\ |I| = \ell}} \prod_{i \in I} p_{P_i},$$ \Cref{eq:ub-3} is:
    $$\leq q + 2 \exp\left(-\frac{\delta^2}{8 s c^2} \right).$$
    The first term is upper-bounded by $\exp(-\alpha_\ell)=\frac{1}{3}\exp(-\alpha_{\ell+1})$ by definition. For the second term, we begin by observing that 
    $$\sum_{\substack{I \subseteq [k] \\ |I| = \ell + 1}} \prod_{i \in I} p_{P_i} \geq \frac{(k-\ell)p_{\min}}{\ell+1} \sum_{\substack{I \subseteq [k] \\ |I| = \ell}} \prod_{i \in I} p_{P_i}.$$ This holds because every $Q \subseteq P, |Q| = \ell$ can extend to a length-$(\ell + 1)$ subpattern in at least $k - \ell$ ways. Each way to extend contributes a factor of at least $p_{\min}$. Finally, each length-$(\ell + 1)$ subpattern is counted at most $\ell + 1$ times.

    Thus, $$\frac{\delta}{c} \geq \Omega\left( \beta_{\ell+1} \frac{(s-\ell)p_{\min}}{(\ell+1)^2} \right)$$ and $$\frac{\delta^2}{8sc^2} \geq \Omega\left( \beta_{\ell+1}^2 \frac{(s-\ell)^2p_{\min}^2}{s(\ell+1)^4} \right).$$ So \Cref{eq:ub-3} is at most
    $$\frac{1}{3}\exp(-\alpha_{\ell+1}) + 2 \exp\left(-\Omega\left( \beta_{\ell + 1}^2 \frac{(s-\ell)^2p_{\min}^2}{s(\ell+1)^4}\right) \right).$$

    For $s = \widetilde{\Omega}\left(\frac{k^5}{\ep^2 p_{\min}^2}\right)$, the second term is at most $\frac{2}{3}\exp(-\alpha_{\ell+1})$. Thus, \Cref{eq:ub-3} is at most $\exp(-\alpha_{\ell+1})$, and the concentration lemma follows.
\end{proof}

We also note the following fact.

\begin{lemma}\label{lem:ok-global-counts-exp-qr-sequences}
Suppose that a sequence $\sigma \in [m]^n$ satisfies the property that,
for some $\ell\le k$ and all $\ell'\leq \ell$,
$$ \textsc{Count}_{\ell'}(P,\sigma)
\in
(1\pm \beta_{\ell'})
\binom{n}{\ell'} \sum_{\substack{I \subseteq [k] \\ |I| = \ell'}} \prod_{i \in I} p_{P_i}. $$
Then $\sigma$ is
$(\alpha_\ell,2\beta_\ell,s,\ell,p)$-exponentially robustly
quasirandom with respect to $P$.
\end{lemma}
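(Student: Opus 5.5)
The plan is to mirror the proof of \Cref{lem:ok-global-counts-exp-qr}, arguing by induction on $\ell$ and using \Cref{lem:concentration-lemma-sequences} in place of \Cref{lem:concentration-lemma}. Throughout, write
$$E_{\ell}(s) := \binom{s}{\ell}\sum_{\substack{I \subseteq [k] \\ |I| = \ell}} \prod_{i \in I} p_{P_i}$$
for the benchmark count at length $\ell$ on a length-$s$ sequence.

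\textbf{Base case $\ell = 1$.} We have $\textsc{Count}_1(P,\sigma|_S) = \sum_{j\in[k]} |\{i \in S : \sigma_i = P_j\}|$. Unlike the permutation setting, this is not deterministic, so the base case is not trivial. I will handle it directly with \Cref{thm:combes}, taking $\mathcal{Y}$ to be all of $\binom{[n]}{s}$ (so $q = 0$). Swapping one index of $S$ changes $\textsc{Count}_1$ by at most $k$, so $c = O(k)$. Since $E_1(s) = s\sum_j p_{P_j} \geq s k p_{\min}$, taking $\delta = \frac{\beta_1}{4}E_1(s)$ gives
$$\frac{\delta^2}{8sc^2} = \Omega\left(\frac{\ep^2 s p_{\min}^2}{1}\right).$$
This exceeds $\alpha_1 + \log 2 = O(k\log(1/p_{\min}) + k + \log(k/\ep))$ once $s = \widetilde{\Omega}(k/(\ep^2 p_{\min}^2))$, which is implied by the choice of $s$.

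Next, $\mathbb{E}[\textsc{Count}_1(P,\sigma|_S)] = \frac{s}{n}\textsc{Count}_1(P,\sigma) \in (1\pm\beta_1)E_1(s)$ by hypothesis. Combining this with the concentration bound, the count lies in $(1\pm \frac{5}{4}\beta_1)E_1(s) \subseteq (1\pm 2\beta_1)E_1(s)$ except with probability $\exp(-\alpha_1)$. (Alternatively, one can set $\beta_1 = 0$, $\alpha_1 = \infty$ as in the permutation case, but then the hypothesis at $\ell'=1$ would force exact counts. I will use the concentration route.)

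\textbf{Inductive step.} Assume the hypothesis holds for all $\ell' \le \ell$, and that $\sigma$ is $(\alpha_{\ell-1}, 2\beta_{\ell-1}, s, \ell-1, p)$-exponentially robustly quasirandom. \Cref{lem:concentration-lemma-sequences}, applied at level $\ell-1$, shows that $\textsc{Count}_\ell(P,\sigma|_S)$ is within $\frac{\beta_\ell}{4}E_\ell(s)$ of its mean, except with probability $\exp(-\alpha_\ell)$. Every size-$\ell$ index set of $\sigma$ lies in $S$ with probability $\binom{s}{\ell}/\binom{n}{\ell}$, so by linearity of expectation
$$\mathbb{E}[\textsc{Count}_\ell(P,\sigma|_S)] = \textsc{Count}_\ell(P,\sigma)\,\frac{\binom{s}{\ell}}{\binom{n}{\ell}} \in (1\pm\beta_\ell)E_\ell(s).$$
The triangle inequality then places the sample count in $(1\pm \frac{5}{4}\beta_\ell)E_\ell(s) \subseteq (1\pm 2\beta_\ell)E_\ell(s)$ except with probability $\exp(-\alpha_\ell)$. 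This holds for every sample size $s' \ge s$, since the concentration lemma's requirement on $s$ is monotone. That is exactly $(\alpha_\ell, 2\beta_\ell, s, \ell, p)$-exponential robust quasirandomness.

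\textbf{Main obstacle.} The step I expect to need the most care is checking that the hypotheses of \Cref{lem:concentration-lemma-sequences} (through \Cref{thm:combes}) hold under the new parameters, in particular the condition $q \le \delta/(2\max f)$. This reduces to
$$\exp(-\alpha_\ell) \le \frac{\beta_{\ell+1}}{8}\cdot\frac{E_{\ell+1}(s)}{\binom{k}{\ell+1}\binom{s}{\ell+1}} = \frac{\beta_{\ell+1}}{8}\cdot\frac{\sum_{|I|=\ell+1}\prod_{i\in I} p_{P_i}}{\binom{k}{\ell+1}}.$$
The right-hand side is at least $\frac{\beta_{\ell+1}}{8}\, p_{\min}^{\ell+1}$. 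The left-hand side satisfies $\exp(-\alpha_\ell) \le p_{\min}^k\,\ep/(3k)$, which is the reason for the term $k\log(1/p_{\min})$ in $\alpha_\ell$. With $\beta_{\ell+1} = \Theta(\ep)$, the two sides agree up to constants, and the constant can be absorbed by adjusting $\alpha_\ell$ slightly. The concentration lemma's proof asserts this check only implicitly, so I would verify it explicitly here.
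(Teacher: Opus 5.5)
Your proposal is correct and follows the paper's proof essentially step for step. The base case $\ell=1$ applies the subset bounded-differences inequality with $\mathcal{Y}=\binom{[n]}{s}$, $q=0$, $c=k$ and $\sum_j p_{P_j}\ge kp_{\min}$, and the inductive step combines the concentration lemma at level $\ell-1$ with $\mathbb{E}[\textsc{Count}_\ell(P,\sigma|_S)]=\textsc{Count}_\ell(P,\sigma)\binom{s}{\ell}/\binom{n}{\ell}$ and the triangle inequality; the only differences are cosmetic (the paper uses deviation $\beta_1 s\sum_j p_{P_j}$ rather than a quarter of it in the base case, and treats the concentration lemma's precondition $q\le\delta/(2\max f)$ as already established rather than re-checking it).
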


\begin{proof}
    We prove this statement via induction. For the base case, let $\ell=1$ and, for each $x\in[m]$, let $w_x$ be the number of occurrences of $x$ in $P$. For a uniformly random $s$-subset $S\subseteq[n]$, define $$f(S):=\textsc{Count}_1(P,\sigma|_S)=\sum_{i\in S}w_{\sigma_i}.$$
    Changing one sampled index changes $f$ by at most $k$, and
    $$\mathbb{E}[f(S)]=\frac{s}{n}\textsc{Count}_1(P,\sigma).$$
    Let $W_1:=\sum_{i\in[k]}p_{P_i}$. Since the global count is in $(1\pm\beta_1)nW_1$, the expectation is in $(1\pm\beta_1)sW_1$. Applying \Cref{thm:combes} with $\mathcal{Y}=\binom{[n]}{s}$, $q=0$, $c=k$, and $\delta=\beta_1sW_1$ gives
    $$\mathbb{P}\left(\left|f(S)-\mathbb{E}[f(S)]\right|\geq\beta_1sW_1\right)
    \leq 2\exp\left(-\frac{\beta_1^2sW_1^2}{8k^2}\right)
    \leq 2\exp\left(-\frac{\beta_1^2sp_{\min}^2}{8}\right),$$
    where the last inequality uses $W_1\geq kp_{\min}$. For the stated choice of $s$, this is at most $\exp(-\alpha_1)$. Therefore, with probability at least $1-\exp(-\alpha_1)$,
    $$f(S)\in(1\pm2\beta_1)sW_1,$$
    so $\sigma$ is $(\alpha_1,2\beta_1,s,1,p)$-exponentially robustly quasirandom with respect to $P$.

    For the inductive step, suppose that $\sigma$ is $(\alpha_{\ell-1},2\beta_{\ell-1},s,\ell-1,p)$-exponentially robustly quasirandom with respect to $P$. By \Cref{lem:concentration-lemma-sequences}, except with probability $\exp(-\alpha_\ell)$,
    $$\left|\textsc{Count}_{\ell}(P,\sigma|_S)-\mathbb{E}\left[\textsc{Count}_{\ell}(P,\sigma|_S)\right]\right|
    <\frac{\beta_\ell}{4}\binom{s}{\ell}\sum_{\substack{I \subseteq [k] \\ |I| = \ell}} \prod_{i \in I} p_{P_i}.$$
    Moreover,
    $$\mathbb{E}\left[\textsc{Count}_{\ell}(P,\sigma|_S)\right]
    =\textsc{Count}_{\ell}(P,\sigma)\frac{\binom{s}{\ell}}{\binom{n}{\ell}},$$
    which is in
    $$(1\pm\beta_\ell)\binom{s}{\ell}\sum_{\substack{I \subseteq [k] \\ |I| = \ell}} \prod_{i \in I} p_{P_i}$$
    by the hypothesis on the global count. Since $\beta_\ell+\beta_\ell/4\leq2\beta_\ell$, this proves that $\sigma$ is $(\alpha_\ell,2\beta_\ell,s,\ell,p)$-exponentially robustly quasirandom with respect to $P$.
\end{proof}

\begin{proof}[Proof of \Cref{lem:exp-qr-main-subsequences}]
    Let $\ell\in[k]$ be the first index for which the displayed deviation in the statement of the lemma occurs.

    If $\ell=1$, use the notation $f$ and $W_1$ from the preceding proof. Applying \Cref{thm:combes} with $\mathcal{Y}=\binom{[n]}{s}$, $q=0$, $c=k$, and $\delta=(\beta_1/4)sW_1$ gives
    $$\mathbb{P}\left(\left|f(S)-\mathbb{E}[f(S)]\right|\geq\frac{\beta_1}{4}sW_1\right)
    \leq2\exp\left(-\frac{\beta_1^2sW_1^2}{128k^2}\right)
    \leq2\exp\left(-\frac{\beta_1^2sp_{\min}^2}{128}\right)
    \leq\exp(-\alpha_1).$$
    Since $\mathbb{E}[f(S)]=(s/n)\textsc{Count}_1(P,\sigma)$ and the global count is not in $(1\pm\beta_1)nW_1$, it follows that, with probability at least $1-\exp(-\alpha_1)$,
    $$\textsc{Count}_1(P,\sigma|_S)\notin(1\pm3\beta_1/4)sW_1.$$

    Now suppose $\ell\geq2$. By the minimality of $\ell$ and \Cref{lem:ok-global-counts-exp-qr-sequences}, $\sigma$ is $(\alpha_{\ell-1},2\beta_{\ell-1},s,\ell-1,p)$-exponentially robustly quasirandom with respect to $P$. Therefore, by \Cref{lem:concentration-lemma-sequences}, with probability at least $1-\exp(-\alpha_\ell)$,
    $$\left|\textsc{Count}_{\ell}(P,\sigma|_S)-\mathbb{E}\left[\textsc{Count}_{\ell}(P,\sigma|_S)\right]\right|
    <\frac{\beta_\ell}{4}\binom{s}{\ell}\sum_{\substack{I \subseteq [k] \\ |I| = \ell}} \prod_{i \in I} p_{P_i}.$$
    Since
    $$\mathbb{E}\left[\textsc{Count}_{\ell}(P,\sigma|_S)\right]
    =\textsc{Count}_{\ell}(P,\sigma)\frac{\binom{s}{\ell}}{\binom{n}{\ell}}$$
    and the global count is not in the corresponding $(1\pm\beta_\ell)$ interval, the sampled count is not in the $(1\pm3\beta_\ell/4)$ interval appearing in the statement of the lemma.
\end{proof}

\subsubsection{The algorithm}\label{sec:pattern-algo}

Given the analysis of exponentially robust quasirandomness and counting subpatterns with dynamic programming, we now present the algorithm. The algorithm uses the principle of exponentially robust quasirandomness so that it only needs to query $N = \widetilde{O}\left(\frac{k^5}{\ep^2 p_{\min}^2}\right)$ indices. Then, dynamic programming handles the counting of subpatterns of the pattern we are interested in performing quality control of the counts for.

As discussed in \Cref{sec:pattern-dp}, the algorithm in fact solves the stronger $(\D,T_P)$-quality control problem. Recall that $T_{P,N}\approx 1$ certifies that, for every $\ell\in[k]$, the aggregate count of length-$\ell$ subpatterns of $P$ occurs with the right frequency. Since $T_P\approx 1$ implies $R_P\approx 1$ by considering $\ell=k$, this is a harder problem. We prove the following.

\begin{lemma}\label{lem:t-P-qc}
    Let $P$ be a length-$k$ sequence over $[m]$, $\ep > 0$, and $p$ be a distribution over $[m]$ with minimum weight $p_{\min}$. Define $\D = \{p^{\otimes n}\}_n$. Then, $(\D, T_P)$-quality control is achievable in $\widetilde{O}\left(\frac{k^5}{\ep^2 p_{\min}^2}\right)$ queries and $\widetilde{O}\left(\frac{k^7}{\ep^2 p_{\min}^2}\right)$ time.
\end{lemma}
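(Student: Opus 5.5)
We will mirror the proof of \Cref{lem:s-tau-qc}, with the dynamic program of \Cref{thm:subsequences-dp} in place of permutation jumbledness. The algorithm has two steps. First, sample a uniformly random set $S\subseteq[n]$ of $N=s=\widetilde{O}\left(\frac{k^5}{\ep^2 p_{\min}^2}\right)$ indices and query them. Second, run \textsc{Dynamic-Programming} (\Cref{figure:dynamic-programming}) on $\sigma|_S$ with tolerance $\ep/8$, and accept iff it accepts. The query complexity is $N$. By \Cref{thm:subsequences-dp} the time is $O(Nk^2)=\widetilde{O}\left(\frac{k^7}{\ep^2p_{\min}^2}\right)$. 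We must also check that $N$ meets the requirement $N=\Omega\left(\frac{k^3}{p_{\min}\ep^2}\log(2+\frac{k}{\ep p_{\min}})\right)$ of \Cref{thm:subsequences-dp} with $\ep/8$ in place of $\ep$. This holds because $k^5/p_{\min}^2\ge k^3/p_{\min}$.

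\textbf{Completeness.} If $\sigma\sim p^{\otimes n}$, then $\sigma|_S\sim p^{\otimes N}$ for every fixed $S$. So \Cref{thm:subsequences-dp}, via \Cref{prop:min-weight-sequence-aggregate-copies}, shows that all aggregate counts are within $(1\pm\ep/8)$ of their expectations except with probability $O\left(\frac{k^3}{Np_{\min}\ep^2}\right)=o(1)$. Hence the algorithm accepts with probability $1-o(1)$.

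\textbf{Soundness.} Suppose $T_{P,n}(\sigma)\notin 1\pm\ep$. Then some $\ell\in[k]$ has $\textsc{Count}_\ell(P,\sigma)$ outside $(1\pm\ep)\binom{n}{\ell}\sum_{|I|=\ell}\prod_{i\in I}p_{P_i}$. Since $\beta_\ell\le\ep/2<\ep$, this count is also outside the $(1\pm\beta_\ell)$ window. So there is a \emph{first} $\ell$ at which the global aggregate count leaves the $(1\pm\beta_\ell)$ window. Applying \Cref{lem:exp-qr-main-subsequences} at this first $\ell$, with probability at least $1-\exp(-\alpha_\ell)\ge 2/3$ we get
\[
\textsc{Count}_\ell(P,\sigma|_S)\notin(1\pm 3\beta_\ell/4)\binom{N}{\ell}\sum_{|I|=\ell}\prod_{i\in I}p_{P_i}.
\]
Since $\beta_\ell\ge\ep/4$, we have $3\beta_\ell/4\ge 3\ep/16>\ep/8$. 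Therefore $T_{P,N}(\sigma|_S)\notin 1\pm\ep/8$, and by \Cref{lem:dp-correct} the dynamic program computes this count exactly and rejects.

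The main point needing care is the mismatch between the condition that \Cref{lem:exp-qr-main-subsequences} requires (a first violation at tolerance $\beta_\ell$) and the hypothesis we are given (a violation at tolerance $\ep$). The plan is to use monotonicity: any $\ep$-violation is in particular a $\beta_\ell$-violation, and the minimal violating level always exists. The other check is that the sampled tolerance $3\beta_\ell/4$ strictly exceeds the dynamic program's tolerance $\ep/8$, which the constants $\beta_\ell\in[\ep/4,\ep/2]$ from \Cref{def:parameters-exprobustqr-sequences} guarantee.
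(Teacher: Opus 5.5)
Your proposal is correct and follows the paper's proof essentially step for step. Completeness comes from \Cref{prop:min-weight-sequence-aggregate-copies} applied to $\sigma|_S \sim p^{\otimes N}$; soundness passes from an $\ep$-violation to the first $\beta_\ell$-violating level, invokes \Cref{lem:exp-qr-main-subsequences}, and uses $3\beta_\ell/4 \geq 3\ep/16 > \ep/8$ so that \textsc{Dynamic-Programming} with tolerance $\ep/8$ rejects. Your extra checks, that $N$ meets the size requirement of \Cref{thm:subsequences-dp} and that $\exp(-\alpha_\ell)\le 1/3$, are details the paper leaves implicit.
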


We will then prove \Cref{thm:subsequences} as a corollary of \Cref{lem:t-P-qc}.

\begin{figure}[H]
\centering
\fbox{
\parbox{\dimexpr0.95\linewidth-2\fboxsep-2\fboxrule\relax}{
\textbf{Algorithm Pattern-Count:} \\
On inputs: A length-$n$ sequence $\sigma$ in $[m]^n$, a length-$k$ pattern $P$, parameter $\ep \in (0, 1)$, a probability distribution $p$ over $[m]$ with minimum probability $p_{\min}$, $n \geq N$
\begin{enumerate}
    \item Sample $N = \widetilde{O}\left(\frac{k^5}{\ep^2 p_{\min}^2}\right)$ indices; call the subsequence on these indices $\pi$.
    \item Run \textsc{Dynamic-Programming} on inputs: $\pi$, $P$, $p$, and $\ep/8$.
    \item Accept iff the call to \textsc{Dynamic-Programming} accepts; reject otherwise.
\end{enumerate}
}
}
\caption{Algorithm for quality control of pattern counts in a sequence in $[m]^n$, as compared to sequences from $p^{\otimes n}$.  Exponentially robust quasirandomness ensures that, upon sampling $N$ indices, the counts of the pattern $P$ among the indices sampled are proportional to the global count. Then, dynamic programming is utilized to count $P$ and the smaller subpatterns within $P$ in order to perform quality control.}
\label{figure:subsequence-alg}
\end{figure}

We now prove \Cref{lem:t-P-qc}.

\begin{proof}[Proof of \Cref{lem:t-P-qc}]
    \textbf{Completeness.} We first prove that a sequence $\sigma \sim p^{\otimes n}$ is accepted with probability $1 - o(1)$. First, if $\sigma \sim p^{\otimes n}$ and $S \subseteq [n]$, then $\sigma|_S$ is distributed as $p^{\otimes |S|}$. For our algorithm, $|S| = N$. By \Cref{prop:min-weight-sequence-aggregate-copies}, the probability that one of the tests fails is at most $O\left(\frac{k^3}{Np_{\min}\ep^2}\right)=O\left(\frac{p_{\min}}{k^2\log(2+k/(\ep p_{\min}))}\right)=o(1)$ as $k/(\ep p_{\min})\to\infty$. Thus, for each $\ell \in [k]$, the number of times subpatterns of $P$ appear as subsequences in $\sigma|_S$ is in $(1 \pm \ep/8) \binom{N}{\ell} \cdot \sum_{\substack{I \subseteq [k] \\ |I| = \ell}} \prod_{i \in I} p_{P_i}$. Thus, when \textsc{Pattern-Count} calls \textsc{Dynamic-Programming}, it accepts. Thus, random sequences over $[m]^n$ are accepted with probability $1 - o(1)$.
    \\\\ \textbf{Soundness.} Suppose that $T_{P,n}(\sigma) \notin 1 \pm \ep$. Then, there exists some $\ell \in [k]$ such that
    $$\textsc{Count}_{\ell}(P,\sigma) \notin (1\pm\ep)\binom{n}{\ell}\sum_{\substack{I \subseteq [k] \\ |I| = \ell}} \prod_{i \in I} p_{P_i}.$$
    Since $\beta_\ell\leq\ep$ for every $\ell\in[k]$, there is a minimum $\ell \in [k]$ such that $\textsc{Count}_{\ell}(P, \sigma) \not\in (1 \pm \beta_\ell) \binom{n}{\ell} \sum_{\substack{I \subseteq [k] \\ |I| = \ell}} \prod_{i \in I} p_{P_i}$. By \Cref{lem:exp-qr-main-subsequences}, on a random subset $S$ of size $N$, with probability at least $1-\exp(-\alpha_\ell)$,
    $$\textsc{Count}_{\ell}(P, \sigma|_S) \not\in (1 \pm 3\beta_{\ell}/4) \binom{N}{\ell} \sum_{\substack{I \subseteq [k] \\ |I| = \ell}} \prod_{i \in I} p_{P_i}.$$
    Since $\beta_\ell \geq \ep/4$, this implies that, with probability at least $1-\exp(-\alpha_\ell)$,
    $$\textsc{Count}_{\ell}(P, \sigma|_S) \not\in (1 \pm 3\ep/16) \binom{N}{\ell} \sum_{\substack{I \subseteq [k] \\ |I| = \ell}} \prod_{i \in I} p_{P_i}.$$
    Since $3\ep/16>\ep/8$, this will be detected by \textsc{Dynamic-Programming}, which is run with parameter $\ep/8$, and thus \textsc{Pattern-Count} will reject. Therefore, \textsc{Pattern-Count} solves $(\D,T_P)$-quality control.
\end{proof}

\begin{proof}[Proof of \Cref{thm:subsequences} from \Cref{lem:t-P-qc}]
    Suppose $A$ is a $(\D, T_{P})$-quality control algorithm. We argue that it is also a $(\D, R_{P})$-quality control algorithm. For completeness, $A$ accepts $x \sim \D$ with probability $1 - o(1)$. For soundness, by definition of $R_{P, n}$, $R_{P, n}(f) \not \in 1 \pm \ep$ implies $T_{P, n}(f) \not \in 1 \pm \ep$. Thus, $A$ rejects on $f$ with $R_{P, n}(f) \not \in 1 \pm \ep$. We conclude that $A$ is a $(\D, R_{P})$-quality control algorithm.
\end{proof}

\subsection{Quality control lower bound}\label{sec:qc-lb-all}

We turn to proving the lower bound component of \Cref{thm:any-subsequence-intro}, stated below.

\begin{theorem}\label{thm:qc-lb-sequences-general1}
    Let $k \in \mathbb{N}$, let $p$ be a distribution over $[m]$ with minimum probability $p_{\min}$, let $\ep \in (0,1)$, and define $\D = \{p^{\otimes n}\}_n$. Then, there exists a length-$k$ pattern $P$ such that $(\D,R_P)$-quality control requires $\Omega\left(\frac{k^2}{p_{\min}\ep^2}\right)$ queries. This lower bound holds for every input length $n$ satisfying $np_{\min}=\Omega\left(\frac{k^2}{\ep^2}\right)$.
\end{theorem}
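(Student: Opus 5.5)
We follow the outline in the overview. Let $i_{\min}$ be a minimum-probability symbol, $p_{\min}=p(i_{\min})$, and take the constant pattern $P=i_{\min}^{k}$. Define $q$ over $[m]$ by $q(i_{\min})=q_{\min}:=p_{\min}(1+c\ep/k)$ for a suitable absolute constant $c$, and rescale the other symbols so that $q$ conditioned on $[m]\setminus\{i_{\min}\}$ equals $p$ conditioned on the same set. The plan is to show three things. First, a string $x\sim p^{\otimes n}$ has $R_{P,n}(x)\in 1\pm\ep$ with probability $1-o(1)$. Second, a string $y\sim q^{\otimes n}$ has $R_{P,n}(y)>1+\ep$ with probability at least $9/10$. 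Third, no algorithm making $o(k^2/(p_{\min}\ep^2))$ queries distinguishes $p^{\otimes n}$ from $q^{\otimes n}$ with constant advantage.

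For the counts, note that $C_{P,n}(y)=\binom{Z}{k}$, where $Z$ is the number of occurrences of $i_{\min}$. Here $Z\sim\mathrm{Binom}(n,p_{\min})$ under $p$ and $Z\sim\mathrm{Binom}(n,q_{\min})$ under $q$. Since $\binom{Z}{k}/\binom{n}{k}\approx (Z/n)^k$, we have $R_{P,n}\approx (Z/(np_{\min}))^k$, with a correction of order $k^2/(np_{\min})$ coming from the falling factorials. That correction is $\ll\ep$ because $np_{\min}=\Omega(k^2/\ep^2)$. Under $p$, $Z/(np_{\min})=1\pm O(1/\sqrt{np_{\min}})$, so $R=1\pm O(k/\sqrt{np_{\min}})$, which lies in $1\pm\ep/2$ when $np_{\min}\ge Ck^2/\ep^2$. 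Rather than rely on \Cref{prop:min-weight-sequence-P-copies}, it is cleaner to work directly with the binomial tail. Under $q$, $Z/(np_{\min})\ge 1+c\ep/k-O(1/\sqrt{np_{\min}})\ge 1+c\ep/(2k)$ with probability $9/10$, so $R\ge(1+c\ep/(2k))^k\cdot(1-\ep/4)>1+\ep$ once $c$ is a large enough constant. The step requiring the most care is handling the falling-factorial correction uniformly in $k$, and it is exactly what forces the hypothesis $np_{\min}=\Omega(k^2/\ep^2)$.

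For indistinguishability, use Yao's principle as in the proof of \Cref{lem:any-pattern-lb-in-section-3}: fix the randomness to obtain a deterministic algorithm with $q'$ queries. The $q'$ answers are i.i.d. draws from $p$ or from $q$; repeated queries give nothing new, and adaptivity does not help because the entries are i.i.d. Under the coupling, each symbol equals $i_{\min}$ with probability $p_{\min}$ or $q_{\min}$, and otherwise takes a common conditional law. Hence the transcript is a garbling of $q'$ samples of $\mathrm{Bern}(p_{\min})$ versus $\mathrm{Bern}(q_{\min})$, and by data processing it suffices to bound the total variation distance between these product Bernoulli distributions. The Hellinger or KL distance per sample is $\Theta((q_{\min}-p_{\min})^2/p_{\min})=\Theta(p_{\min}\ep^2/k^2)$, assuming $p_{\min}\le 1/2$, which holds automatically for $m\ge2$. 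After $q'$ samples the total variation is $O(\sqrt{q' p_{\min}\ep^2/k^2})$, which is below $1/10$ unless $q'=\Omega(k^2/(p_{\min}\ep^2))$.

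To conclude, a quality control algorithm must accept $p^{\otimes n}$ with probability $1-o(1)$ and reject $q^{\otimes n}$ with probability at least $(9/10)(2/3)$, which gives constant advantage. This contradicts the total variation bound, so the algorithm needs $\Omega(k^2/(p_{\min}\ep^2))$ queries. The only delicate point is choosing $c$ and the constant in $np_{\min}=\Omega(k^2/\ep^2)$ consistently, so that $p$-strings concentrate within $\ep$ while $q$-strings exceed $1+\ep$.
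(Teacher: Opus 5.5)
Your proposal is correct and follows the paper's proof. It uses the same constant pattern $i_{\min}^k$, the same perturbation $q$ that raises only the mass of $i_{\min}$ by a factor $1+\Theta(\ep/k)$ (the paper takes $(1+2\ep)^{1/k}$), a binomial tail bound showing $q$-strings exceed $1+\ep$, and the same coupling reduction to distinguishing $\mathrm{Bern}(p_{\min})$ from $\mathrm{Bern}(q_{\min})$ with per-sample divergence $\Theta(p_{\min}\ep^2/k^2)$. Your extra check that $p$-strings land in $1\pm\ep$ is not needed for the lower bound, since acceptance of $p^{\otimes n}$ comes directly from the completeness requirement.
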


We observe the following consequence for when $p = \mathrm{Unif}([m])$.

\begin{theorem}\label{thm:qc-lb-sequences-uniform}
    Let $\D = \{\mathrm{Unif}([m])^{\otimes n}\}_n$. There exists a length-$k$ pattern $P$ such that  $(\D, R_{P})$-quality control requires $\Omega\left(\frac{m k^2 }{\ep^2}\right)$ queries.
\end{theorem}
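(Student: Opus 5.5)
The statement to prove is \Cref{thm:qc-lb-sequences-uniform}. It is the special case $p=\mathrm{Unif}([m])$ of \Cref{thm:qc-lb-sequences-general1}, and I will derive it directly from that theorem. Under the uniform distribution every symbol has probability $1/m$, so $p_{\min}=1/m$. Plugging this in, \Cref{thm:qc-lb-sequences-general1} yields a length-$k$ pattern $P$ for which $(\D,R_P)$-quality control requires
\[
\Omega\left(\frac{k^2}{p_{\min}\ep^2}\right)=\Omega\left(\frac{mk^2}{\ep^2}\right)
\]
queries. This holds for every input length $n$ with $n/m=\Omega(k^2/\ep^2)$.

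Two bookkeeping points need checking.

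\emph{Existence of large $n$.} A quality control algorithm must succeed on all sufficiently large $n$: completeness is required for large $n$, and soundness for every $n$. The lower bound of \Cref{thm:qc-lb-sequences-general1} applies at any $n\ge Cmk^2/\ep^2$, and such $n$ exist. So any purported algorithm using $o(mk^2/\ep^2)$ queries already fails at some admissible $n$.

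\emph{The witness pattern.} Following the overview, $P$ is the constant pattern $i_{\min}^{\otimes k}$. Since all symbols have equal weight, any constant pattern works, for example $P=1^k$. This is consistent with the existential quantifier in the statement.

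If one wants the argument to be self-contained rather than citing the general theorem, I would run the coupling sketched in the overview. Set $q(1)=\frac1m\bigl(1+\frac{c\ep}{k}\bigr)$ and rescale the other symbols proportionally. This changes the expected count of $1^k$ by a factor of $\bigl(1+\frac{c\ep}{k}\bigr)^k\approx e^{c\ep}$, which is outside $1\pm\ep$ for a suitable constant $c$, and concentration via \Cref{prop:min-weight-sequence-P-copies} makes this hold with high probability. The queried values reveal at most a Bernoulli sample distinguishing $\mathrm{Bern}(1/m)$ from $\mathrm{Bern}\bigl((1+c\ep/k)/m\bigr)$. The standard KL/Hellinger bound says distinguishing these requires $\Omega\bigl(\frac{p}{(\Delta p)^2}\bigr)=\Omega\bigl(\frac{m k^2}{\ep^2}\bigr)$ samples.

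The only delicate step is that the no-instances must violate $R_P$ with high probability while the queried positions behave like i.i.d. samples. The first follows from concentration of the pattern count for $n$ large. The second holds because the algorithm's queries are to distinct positions of an i.i.d. sequence, so adaptivity does not help.
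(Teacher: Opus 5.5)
Your proposal is correct and follows the paper, which obtains this theorem by substituting $p_{\min}=1/m$ into \Cref{thm:qc-lb-sequences-general1}. Your optional self-contained sketch is also the paper's proof of the general theorem: a constant pattern, one symbol's weight inflated by a $1+\Theta(\ep/k)$ factor with the conditional distribution on the other symbols unchanged, and a KL-based reduction to distinguishing $\mathrm{Bern}(1/m)$ from $\mathrm{Bern}((1+\Theta(\ep/k))/m)$. The only difference is that you use a second-moment concentration bound where the paper uses a direct Chernoff bound on the number of occurrences of the symbol.
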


As in \Cref{sec:quality-control-permutation-lb}, we prove the lower bound for quality control via constructing distributions over sequences that are indistinguishable from each other with high probability unless $\Omega\left(\frac{k^2}{p_{\min} \ep^2} \right)$ queries are made. 

    \begin{proof}[Proof of \Cref{thm:qc-lb-sequences-general1}]
    Since we are performing $(\D, R_{P})$-quality control, in the ``accept'' case, for any $n \in \mathbb{N}$, the distribution must be $p^{\otimes n}$. In the ``reject'' case, we consider sequences distributed as $q^{\otimes n}$ for some other distribution $q$.

    Let $p_{\min}$ be the probability of the (or, a) minimum-probability element in $p$. Call the minimum probability element $v_{\min}$, and let $q_{\min}$ be the probability of $v_{\min}$ under $q$. We define distribution $q$ to be the distribution whose probability on $v_{\min}$ is $q_{\min}$ and whose conditional distribution on $[m]\setminus \{v_{\min}\}$ matches $p$'s.
    \\\\
    We couple sequences $x_{\textsc{yes}} \sim p^{\otimes n}$ and $x_N \sim q^{\otimes n}$ as follows. For each $i \in [n]$:
    \begin{enumerate}
        \item Sample $u \sim \mathrm{Unif}(0, 1)$ and an item $w \in [m]\setminus \{v_{\min}\}$ according to the marginal distribution of $p$ on $[m]\setminus \{v_{\min}\}$.
        \item In $x_{\textsc{yes}}$, place $v_{\min}$ in position $i$ if $u \leq p_{\min}$, and place $w$ in position $i$ otherwise.
        \item In $x_N$, place $v_{\min}$ in position $i$ if $u \leq q_{\min}$, and place $w$ in position $i$ otherwise.
    \end{enumerate}
    Note that $x_{\textsc{yes}}$ and $x_N$ agree except at positions where $p_{\min} < u \leq q_{\min}$, at which $x_N$ receives $v_{\min}$ while $x_{\textsc{yes}}$ receives $w$.

    The pattern $P$ that we consider $(\D, R_{P})$-quality control for is $P = v_{\min}^{\otimes k}$. We prove the following.

    First, what does $q_{\min}$ need to be? We answer this below.

    \begin{lemma}\label{lem:manycopies}
    In order for $x_N \sim q^{\otimes n}$ to have $(1 + \ep) \binom{n}{k} p_{\min}^k$ copies of $P$ with probability $\geq 9/10$, it suffices for $q_{\min} \geq (1 + 2\ep)^{1/k} p_{\min}$ when $n p_{\min}$ is $\Omega(k^2/\ep^2)$.
\end{lemma}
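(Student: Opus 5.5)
The count of $P = v_{\min}^{\otimes k}$ in $x_N$ depends only on the number $Z$ of occurrences of $v_{\min}$: it is exactly $\binom{Z}{k}$. Under $q^{\otimes n}$, $Z \sim \mathrm{Binom}(n, q_{\min})$. The plan is to show that with probability at least $9/10$, $Z$ is large enough that $\binom{Z}{k} \geq (1+\ep)\binom{n}{k}p_{\min}^k$ whenever $q_{\min} \geq (1+2\ep)^{1/k}p_{\min}$ and $np_{\min} = \Omega(k^2/\ep^2)$.

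The first step is concentration of $Z$. By Chebyshev (or a Chernoff bound), with probability at least $9/10$ we have $Z \geq nq_{\min} - C\sqrt{nq_{\min}}$ for an absolute constant $C$, that is, $Z \geq nq_{\min}(1-\eta)$ with $\eta = C/\sqrt{nq_{\min}}$. Since $q_{\min}\ge p_{\min}$ and $np_{\min} = \Omega(k^2/\ep^2)$ with a large hidden constant, we get $\eta \leq c\,\ep/k$ for a small constant $c$.

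The second step compares the binomial coefficients. Write
$$\binom{Z}{k} = \frac{Z^k}{k!}\prod_{i<k}\left(1-\frac{i}{Z}\right), \qquad \binom{n}{k} \le \frac{n^k}{k!}.$$
On the good event,
$$Z^k \geq (nq_{\min})^k(1-\eta)^k \geq (nq_{\min})^k(1-k\eta) \ge (nq_{\min})^k(1-c\ep).$$
Similarly, since $Z \geq np_{\min}/2 = \Omega(k^2/\ep^2)$,
$$\prod_{i<k}\left(1-\frac{i}{Z}\right) \geq 1-\frac{k^2}{2Z} \geq 1-c\ep.$$
Combining these,
$$\binom{Z}{k} \geq (1-2c\ep)\frac{n^k q_{\min}^k}{k!} \geq (1-2c\ep)(1+2\ep)\binom{n}{k}p_{\min}^k,$$
and $(1-2c\ep)(1+2\ep) \geq 1+\ep$ for $c$ small enough and $\ep\in(0,1)$.

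The main obstacle is bookkeeping the two lower-order losses: the fluctuation of $Z$ and the falling-factorial correction $\prod_{i<k}(1-i/Z)$. Each costs a $(1-O(\ep))$ factor that must fit inside the slack between $1+2\ep$ and $1+\ep$. Both are handled by the hypothesis $np_{\min} = \Omega(k^2/\ep^2)$. The fluctuation needs only $\sqrt{np_{\min}} \gg k/\ep$, and the falling factorial needs $np_{\min} \gg k^2/\ep$, so the stated hypothesis covers both. If one prefers to avoid the crude bound $\binom{n}{k}\le n^k/k!$, one can instead compare $\binom{Z}{k}/\binom{n}{k} = \prod_{i<k}\frac{Z-i}{n-i} \geq \left(\frac{Z-k}{n}\right)^k$ and apply the same estimates.
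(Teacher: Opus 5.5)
Your proof is correct and follows essentially the same route as the paper: both reduce the count of $P$ to $\binom{Z}{k}$ with $Z\sim\mathrm{Binom}(n,q_{\min})$, concentrate $Z$ to relative accuracy $O(\ep/k)$ (which is exactly where $np_{\min}=\Omega(k^2/\ep^2)$ enters), and absorb the falling-factorial loss. The only differences are cosmetic. The paper sets the explicit threshold $Z\ge(1+\ep)^{1/k}np_{\min}+k-1$, uses $\binom{Z}{k}\ge (Z-k+1)^k/k!$, and applies a Chernoff bound. You instead use Chebyshev and multiplicative bookkeeping of the two $(1-O(\ep))$ losses, which suffices for the constant $9/10$.
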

\begin{proof}[Proof of \Cref{lem:manycopies}]
    The number of occurrences of $v_{\min}$ is distributed as $Z \sim Binom(n, q_{\min})$. The number of copies of $P$ is $\binom{Z}{k}$.
    
    So, for the count of $P$ to be at least $(1 + \ep) \binom{n}{k} p_{\min}^k$, it suffices for $Z \geq (1 + \ep)^{1/k} n p_{\min} + k - 1$, because:
    $$\binom{Z}{k} \geq \frac{(Z - k + 1)^k}{k!} \geq (1 + \ep)\frac{n^k}{k!} p_{\min}^k \geq (1 + \ep) \binom{n}{k}p_{\min}^k.$$

    We apply Chernoff bounds to understand the probability that $Z$ is above this threshold. Chernoff  bounds tell us:
    $$\mathbb{P}\left(\mathrm{Binom}(n, q_{\min}) \geq (1 + \ep)^{1/k} n p_{\min} + k - 1 \right) \geq 9/10$$
    for $q_{\min} \geq (1 + 2\ep)^{1/k} p_{\min}$ and $n p_{\min} = \Omega(k^2/\ep^2)$. To see this, let's first observe that, by our setting of $q_{\min}$, for $X \sim \mathrm{Binom}(n, q_{\min})$, $\mathbb{E}[X] = n q_{\min} \geq (1 + 2\ep)^{1/k} np_{\min}$. We next find a $\delta \in (0, 1)$ such that
    $$(1 + \ep)^{1/k} n p_{\min} + k - 1 \leq (1 - \delta) (1 + 2\ep)^{1/k} np_{\min}.$$
    Equivalently, 
    $$\delta \leq \left(1 - \left( \frac{1 + \ep}{1 + 2\ep}\right)^{1/k} \right) - \frac{k - 1}{(1 + 2\ep)^{1/k}np_{\min}}.$$
    Using that $\exp(x) \approx 1 + x$ for small $x$, the first term is $1 - \left(\frac{1+\ep}{1+2\ep}\right)^{1/k} = \Theta(\ep/k)$ for large $k$ and $\ep \in (0,1)$. The second term is at most $\frac{k}{np_{\min}} = O(\ep^2/k)$ when $np_{\min} = \Omega(k^2/\ep^2)$, and is thus dominated by the first. Therefore, the RHS is at least $c\ep/k$ for some constant $c$, and so it suffices for $\delta \leq c \ep / k$.
    Thus, by Chernoff bounds,  $$\mathbb{P}\left(\mathrm{Binom}(n, q_{\min}) < (1 + \ep)^{1/k} n p_{\min} + k - 1 \right) \leq \exp(-\Omega(\ep^2 n p_{\min} / k^2)) \leq 1/10.$$
    Therefore, in order for $x_N \sim q^{\otimes n}$ to have $(1 + \ep) \binom{n}{k} p_{\min}^k$ copies of $P$ with probability $\geq 9/10$, it suffices for $q_{\min} \geq (1 + 2\ep)^{1/k} p_{\min}$ when $n p_{\min}$ is $\Omega(k^2/\ep^2)$.
\end{proof}

    We also note the following.

    \begin{lemma}\label{lem:distinguishing}
        Distinguishing if a sequence is constructed according to $x_{\textsc{yes}} \sim p^{\otimes n}$ or $x_N \sim q^{\otimes n}$ requires at least as many queries as distinguishing $(\text{Bern}(p_{\min}))^{\otimes n}$ from $(\text{Bern}(q_{\min}))^{\otimes n}$.
    \end{lemma}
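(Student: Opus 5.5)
The plan is a simulation (reduction) argument built on the coupling just described. Any algorithm $\mathcal{A}$ that distinguishes $p^{\otimes n}$ from $q^{\otimes n}$ with $q$ queries will be converted into an algorithm $\mathcal{B}$ that distinguishes $(\text{Bern}(p_{\min}))^{\otimes n}$ from $(\text{Bern}(q_{\min}))^{\otimes n}$ with exactly the same number of queries and the same success probability. The key observation is that the coupling factors each coordinate as a Bernoulli indicator $b_i = \mathbf{1}(u_i \le \theta)$, with $\theta \in \{p_{\min}, q_{\min}\}$, together with an independent filler symbol $w_i$. The filler $w_i$ is drawn from the conditional distribution of $p$ on $[m]\setminus\{v_{\min}\}$. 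By construction of $q$, this conditional distribution is the same under both hypotheses.

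Concretely, $\mathcal{B}$ has query access to a bit string $b \in \{0,1\}^n$. It runs $\mathcal{A}$, and whenever $\mathcal{A}$ queries an index $i$, $\mathcal{B}$ does three things. It queries $b_i$. It samples a fresh $w_i$ from the conditional distribution (using its own randomness, and memoizing so that repeated queries to $i$ get consistent answers). It then answers $v_{\min}$ if $b_i = 1$ and $w_i$ otherwise. Finally, $\mathcal{B}$ outputs whatever $\mathcal{A}$ outputs.

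The main step is to check that the simulated answers have the correct joint distribution. First, I will show that the map $\phi(b,w)_i = v_{\min}$ if $b_i=1$ and $w_i$ otherwise pushes $(\text{Bern}(\theta))^{\otimes n} \otimes (p|_{[m]\setminus\{v_{\min}\}})^{\otimes n}$ forward to exactly $p^{\otimes n}$ when $\theta = p_{\min}$ and to $q^{\otimes n}$ when $\theta = q_{\min}$. This is a coordinatewise computation. The probability of $v_{\min}$ is $\theta$. For any $x \neq v_{\min}$, the probability is $(1-\theta)\cdot p(x)/(1-p_{\min})$, which equals $p(x)$ when $\theta = p_{\min}$ and equals $q(x)$ by the definition of $q$ when $\theta = q_{\min}$. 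Independence across coordinates gives the product form.

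Second, because the $w_i$ are independent of $b$ and are sampled lazily only at queried positions, the transcript of $\mathcal{A}$ when run inside $\mathcal{B}$ on $b$ is distributed identically to the transcript of $\mathcal{A}$ run directly on $\phi(b,w)$. This holds even for adaptive $\mathcal{A}$: at each step, conditioned on the history, the answer at a new index depends only on the fresh pair $(b_i, w_i)$, which is independent of everything seen so far.

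It follows that the acceptance probabilities of $\mathcal{B}$ under the two Bernoulli hypotheses equal those of $\mathcal{A}$ under $p^{\otimes n}$ and $q^{\otimes n}$ respectively. Hence $\mathcal{B}$ distinguishes with the same advantage using the same number $q$ of queries, and the query complexity of the sequence problem is at least that of the Bernoulli problem. I do not expect any step to be a genuine obstacle. The only point that needs care is the adaptivity and consistency of the lazy sampling, and memoizing $w_i$ handles it.
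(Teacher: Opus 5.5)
Your proposal is correct and uses the same simulation argument as the paper: place $v_{\min}$ wherever the Bernoulli bit is $1$, fill the other positions with independent samples from $p$ conditioned on $[m]\setminus\{v_{\min}\}$, and run the sequence distinguisher on the result. Your lazy, memoized sampling and coordinatewise pushforward check also make explicit two things the paper leaves implicit: the query count is preserved, and the simulated transcript has the correct distribution even against adaptive algorithms.
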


    \begin{proof}[Proof of \Cref{lem:distinguishing}]
        Observe that the count of $v_{\min}$ is distributed as $(\text{Bern}(p_{\min}))^{\otimes n}$ in $x_{\textsc{yes}} \sim p^{\otimes n}$ and as $(\text{Bern}(q_{\min}))^{\otimes n}$ in $x_N \sim q^{\otimes n}$.

        Suppose we want to distinguish $(\text{Bern}(p_{\min}))^{\otimes n}$ from $(\text{Bern}(q_{\min}))^{\otimes n}$. Suppose we have samples from $\mathcal{D}$ which is either $(\text{Bern}(p_{\min}))^{\otimes n}$ or $(\text{Bern}(q_{\min}))^{\otimes n}$. One method to tell which distribution $\mathcal{D}$ is would be to construct a sequence $y$ which places $v_{\min}$ at all of the indices where $y \sim q$ is $1$. For the rest of the indices, sample according to the marginal distribution of $p$ on $[m] \setminus \{v_{\min}\}$. Observe that the sequence is either distributed according to $p^{\otimes n}$ or $q^{\otimes n}$, and thus we can run the supposed algorithm for distinguishing between sequences from the two distributions. This will tell us which distribution $\mathcal{D}$ is.

        Therefore, distinguishing if a sequence is constructed according to $x_{\textsc{yes}} \sim p^{\otimes n}$ or $x_N \sim q^{\otimes n}$ requires at least as many queries as distinguishing $(\text{Bern}(p_{\min}))^{\otimes n}$ from $(\text{Bern}(q_{\min}))^{\otimes n}$.
    \end{proof}

    We are now ready to proceed with the proof of \Cref{thm:qc-lb-sequences-general1}. Let $\delta := (q_{\min} - p_{\min})/p_{\min}$. Distinguishing $\text{Bern}(p_{\min})$ from $\text{Bern}(q_{\min})$ requires $q = \Omega\left( \frac{1}{KL(p_{\min} || q_{\min})}\right) = \Omega(\frac{1}{p_{\min} \delta^2})$ queries.

   In our case, $(q_{\min} - p_{\min})/p_{\min} = (1 + 2 \ep)^{1/k} - 1 \approx \frac{\ln(1 + 2\ep)}{k} = \Theta\left(\frac{\ep}{k}\right)$. So we require $q = \Omega\left(\frac{k^2}{\ep^2 p_{\min}}\right)$ queries.
\end{proof}

\section{Worst-case lower bounds}\label{sec:worst-case-lower-bounds}

We now move to proving $\exp(k)$ lower bounds for determining the count of permutation patterns and patterns in worst-case sequences. \Cref{sec:worst-case-permutation} proves the worst-case lower bound for permutation patterns, and \Cref{sec:worst-case-pattern} proves it for patterns.

Both lower bounds rely on a similar principle. Begin with an instance that has the right count of the (permutation) pattern to be accepted. Then, randomly perturb the instance in order to create many more copies of the (permutation) pattern in a way that is hard to detect. Interestingly, we will see that, in both cases, the instances that we start with look very far from random instances, which highlights the difference in the types of instances that quality control versus worst-case algorithms must accept.

\subsection{Worst-case lower bound for permutation patterns}\label{sec:worst-case-permutation}

We now prove that any algorithm for determining whether the count of a length-$k$ permutation pattern is in $(1 \pm \eps) \binom{n}{k} \cdot \frac{1}{k!}$ in a \textit{worst-case} length-$n$ sequence requires $\exp(k)$ queries.

\begin{theorem}\label{thm:worstcasepermutation}
    Let $\tau$ be the length-$k$ increasing permutation pattern (corresponding to $\tau(i) = i$). Any algorithm that distinguishes length-$n$ sequences with $\not \in (1 \pm \eps) \binom{n}{k} \cdot \frac{1}{k!}$ copies of $\tau$ from sequences with $(1 \pm \eps) \binom{n}{k} \cdot \frac{1}{k!}$ copies of $\tau$ must use $\Omega_{\ep}\left( \frac{k^{k+2}}{e^{2k}}\right)$ queries.
\end{theorem}

To prove this theorem, we construct two distributions over length-$n$ sequences, one with $\in (1 \pm \eps) \binom{n}{k} \cdot \frac{1}{k!}$ copies of the length-$k$ increasing pattern $\tau$ and the other with $\not \in (1 \pm \eps) \binom{n}{k} \cdot \frac{1}{k!}$ copies.\footnote{Let $\sigma$ be the permutation corresponding to the length-$n$ sequence. Since $\tau$ is the increasing pattern, a copy of $\tau$ is a subsequence $(i_1, i_2, \dots, i_k)$ such that $\sigma_{i_1} < \sigma_{i_2} < \dots < \sigma_{i_k}$.} These will be sequences over $[0, 1]^n$, but we will ensure the number of copies of the increasing patterns is in the corresponding ranges.

Before giving the two distributions, we define parameters $a$ and $b$ as follows:
$$a = \frac{1}{2} \left(\frac{n}{k-1} - \sqrt{\left( \frac{n}{k-1}\right)^2 - 4 \binom{n}{k} \frac{1}{k!}\left(\frac{k-1}{n} \right)^{k-2}} \right), ~~~~ b = \frac{8 (1 + 2 \ep)}{n} \binom{n}{k} \frac{1}{k!} \left( \frac{k-1}{n}\right)^{k - 1}.$$

\begin{figure}[h!]
    \centering
    \input{Visuals/lowerbound-worstcase-visual}
    \caption{Outline of the construction of the instances considered for the worst-case approximation lower bound for the count of length-$k$ increasing sequences. In both of the constructions, we first partition the length-$n$ sequence into $k-1$ blocks and construct decreasing sequences in each block (in increasing intervals). Lower some values in the first block to create some initial copies of the length-$k$ increasing permutation pattern. In Distribution 1, do nothing further. In Distribution 2, for the values in the first halves of each block (marked as blue), randomly choose elements to fall into a lower portion of the range. This causes a substantial enough increase in the counts of the increasing pattern.}
    \label{fig:lowerbound-worstcase-visual}
\end{figure}

\paragraph{Distribution 1: with $(1 \pm \eps) \binom{n}{k} \cdot \frac{1}{k!}$ copies of $\tau$ with high probability.}

Sequences are drawn from Distribution 1 $(\mathcal{D}_1)$ as follows.

\begin{enumerate}
    \item Partition the length-$n$ sequence into $k-1$ consecutive blocks ($C_j$, $j \in [k-1]$), each of length
    $\ell = \frac{n}{k-1}$.
    \item Split each block $j \in [k-1]$ into two halves: $A_j$, the first $\ell/2$ positions,
    and $B_j$, the remaining $\ell/2$.
    \item Assign a value to every entry of every block $C_j$, $j \in [k-1]$. For the entry at
    position $i \in \{1, \dots, \ell\}$ within $C_j$, set its value to
    $$
        v_j(i) := \frac{2j-1}{2(k-1)} + \frac{1}{2(k-1)}\cdot\frac{\ell - i}{\ell},
    $$
    a decreasing run in $i$ inside the upper sub-band
    $\left[\frac{2j-1}{2(k-1)}, \frac{2j}{2(k-1)}\right)$.
    \item Lower the first $a$ entries of $A_1$ by one band: replace each such value $v$
    by $v - \frac{1}{2(k-1)}$, placing them in $\left[0, \frac{1}{2(k-1)}\right)$.
\end{enumerate}
Observe that, with $a = 0$, sequences from Distribution 1 have zero copies of $\tau$; the choice of $a$ is made so that the sequence has $(1 \pm \eps) \binom{n}{k} \cdot \frac{1}{k!}$ copies of $\tau$ with high probability. Also, observe that Distribution 1 is actually a deterministic construction (and only Distribution 2 will be a distribution over more than one sequence).

\paragraph{Distribution 2: with $> (1 + \eps) \binom{n}{k} \cdot \frac{1}{k!}$ copies of $\tau$ with high probability.} Sequences are drawn from Distribution 2 $(\mathcal{D}_2)$ as follows.

\begin{enumerate}
    \item Partition and split into blocks $C_j = A_j \cup B_j$ exactly as in Distribution 1.
    \item Assign $v_j(i)$ to every entry of every block $C_j$, $j \in [k-1]$, exactly as in
    Step 3 of Distribution 1.
    \item For each entry of every $A_j$, $j \in [k-1]$, independently with probability $b$
    lower it by one band: replace $v_j(i)$ by $v_j(i) - \frac{1}{2(k-1)}$,
    placing it in the lower sub-band $\left[\frac{2j-2}{2(k-1)}, \frac{2j-1}{2(k-1)}\right)$.
    \item Lower the first $a$ entries of $A_1$ by one band (as in Step 4 of
    Distribution 1).
\end{enumerate}
In Step 3, the chosen indices contribute to a decreasing sequence in the block $\left[ \frac{2j - 2}{2(k-1)}, \frac{2j - 1}{2(k-1)}\right)$. We refer to these indices as having ``dropped values.'' Indices that are not chosen in Step 3 contribute to a decreasing sequence in the block $\left[ \frac{2j - 1}{2(k-1)}, \frac{2j}{2(k-1)}\right)$ (except those modified in Step 4). The main idea is that, when indices are chosen to contribute to the lower increasing sequence, there are now at least \textit{two} choices of elements per block (one from $A_j$ and one from $B_j$) that can be used to create an occurrence of a  length-$k$ increasing sequence.

\paragraph{Characterizations of length-$k$ increasing sequences in the constructions.} Let $C_j = A_j \cup B_j$ be a block, for $j \in [k-1]$. In $x$ from $\mathcal{D}_1$, the occurrences of the length-$k$ increasing pattern $\tau$ have one element from the first $a$ entries of $C_1$, one element from the other entries of $C_1$, and one element in each $C_j$, $j \geq 2$. In $y \sim \mathcal{D}_2$, the same occurrences appear. But now there are also occurrences that take one element from each $C_j$, $j \geq 1$ except one $C_i$, from which it takes an element from $A_i$ whose value was dropped and an element from $B_i$. The dropping of values in $A_j$s allows for these extra occurrences. Since elements in the $A_j$s are dropped randomly, it will be difficult for any algorithm to distinguish between the $\mathcal{D}_1$ and $\mathcal{D}_2$ cases, as this requires finding one of the dropped values in the construction of $y \sim \mathcal{D}_2$.
\\\\
We prove the following lemmas regarding the number of copies of the increasing pattern $\tau$ in sequences drawn from each distribution.

\begin{lemma}\label{lem:d1}
    $x \sim \mathcal{D}_1$ has $(1 \pm \ep) \binom{n}{k} \cdot \frac{1}{k!}$ copies of the increasing pattern.
\end{lemma}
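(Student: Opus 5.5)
The plan is to count increasing $k$-subsequences of $x$ exactly, using the block structure, and then use the defining equation for $a$ to show the count lands in $(1\pm\ep)\binom{n}{k}\frac{1}{k!}$. Since $\mathcal{D}_1$ is deterministic, no probabilistic argument is needed.

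\textbf{Step 1: only one shape of copy exists.} First I will classify all copies of $\tau$ in $x$:
\begin{enumerate}
    \item Inside each block $C_j$ the unmodified entries form a decreasing run, and every value of $C_j$ lies below every value of $C_{j+1}$. Hence an increasing subsequence uses at most one unmodified entry per block, giving at most $k-1$ such entries.
    \item The only other entries are the first $a$ entries of $A_1$. They lie in $[0,\frac{1}{2(k-1)})$, which is below every unmodified value, and among themselves they are again decreasing. So at most one of them can be used, and it must come first.
    \item A copy of $\tau$ of length $k$ therefore consists of one lowered entry of $A_1$, then one later unmodified entry of $C_1$, then one entry of each of $C_2,\dots,C_{k-1}$.
\end{enumerate}
I must check that a lowered entry at position $i\le a$ is below all later unmodified entries of $C_1$. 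This holds because the unmodified entries sit in the upper sub-band $[\frac{1}{2(k-1)},\frac{2}{2(k-1)})$.

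\textbf{Step 2: exact count.} With block length $\ell=\frac{n}{k-1}$, the number of copies is
$$C_\tau(x)=a(\ell-a)\ell^{k-2}.$$
This counts $a$ choices of lowered entry, $\ell-a$ choices for the unmodified element of $C_1$, and $\ell$ choices in each remaining block. I take $a\le \ell/2$ so that the lowered entries stay inside $A_1$, which has length $\ell/2$.

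\textbf{Step 3: solve for $a$.} Setting $C_\tau(x)$ equal to $\binom{n}{k}\frac{1}{k!}$ gives the quadratic
$$a^2-\ell a+\binom{n}{k}\frac{1}{k!}\Big(\frac{k-1}{n}\Big)^{k-2}=0.$$
The stated $a$ is exactly its smaller root.

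\textbf{Main obstacle: the root must be real and admissible.} The remaining work is to show the discriminant is nonnegative, i.e.
$$4\binom{n}{k}\frac{1}{k!}\ell^{-(k-2)}\le \ell^2, \quad\text{equivalently}\quad \binom{n}{k}\frac{1}{k!}\le \frac{\ell^k}{4}=\frac{1}{4}\Big(\frac{n}{k-1}\Big)^k.$$
Using $\binom{n}{k}\le n^k/k!$, it suffices that $(k-1)^k\le \frac{1}{4}(k!)^2$. This holds for $k$ beyond a small constant, since $k!\ge (k/e)^k$ makes $(k!)^2$ vastly larger than $(k-1)^k$. The smaller root then automatically satisfies $a\le \ell/2$.

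\textbf{Rounding.} Finally $a$ must be rounded to an integer. Changing $a$ by $1$ changes the count by at most $\ell^{k-1}$. Relative to the target, which is at least $4a(\ell-a)\ell^{k-2}/4\approx a\ell^{k-1}$, this is a $(1\pm O(1/a))$ factor. Since $a$ grows with $n$ for fixed $k$, this factor is absorbed into $1\pm\ep$. This yields the $(1\pm\ep)$ range claimed in Lemma~\ref{lem:d1}.
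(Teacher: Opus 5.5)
Your proof is correct and follows the same route as the paper. Every copy consists of one lowered entry of $A_1$, one later unmodified entry of $C_1$, and one entry from each of $C_2,\dots,C_{k-1}$, which gives the exact count $a(\ell-a)\ell^{k-2}$; this equals $\binom{n}{k}\frac{1}{k!}$ because $a$ is the smaller root of the resulting quadratic. Your checks that the discriminant is nonnegative (in fact $4(k-1)^k\le (k!)^2$ holds for all $k\ge 2$) and that rounding $a$ to an integer costs only a $1\pm O(1/a)$ factor are details the paper leaves implicit under its convention of ignoring floors and ceilings.
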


\begin{lemma}\label{lem:d2}
    With probability $0.99$, $x \sim \mathcal{D}_2$ has $> (1 + \ep) \binom{n}{k} \cdot \frac{1}{k!}$ copies of the increasing pattern.
\end{lemma}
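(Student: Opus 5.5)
The plan is to ignore the copies of $\tau$ already present in Distribution 1 and lower-bound only the \emph{new} copies created by the dropped values. This suffices because the planted copies alone will number about $2(1+2\ep)\binom{n}{k}\frac{1}{k!}$. Fix $y\sim\mathcal{D}_2$, and for each block $j$ let $D_j$ be the set of entries of $A_j$ that are dropped in Step 3 and not among the first $a$ entries of $A_1$ touched in Step 4. Consider the following index sets:
\begin{itemize}
\item choose a block $i\in[k-1]$;
\item choose one entry $d\in D_i$ and one entry $e\in B_i$;
\item choose one arbitrary entry from each block $C_j$ with $j\neq i$.
\end{itemize}

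First I will check that each such set is a copy of the increasing pattern. After Steps 3 and 4, every value in block $C_j$ still lies in the band $\left[\frac{2j-2}{2(k-1)},\frac{2j}{2(k-1)}\right)$. So entries from distinct blocks appear in increasing order of both position and value. Inside block $i$, $d$ precedes $e$ in position because $A_i$ precedes $B_i$. Also $d$ lies in the lower sub-band of band $i$ while $e$ lies in the upper sub-band, so $y(d)<y(e)$. Hence the $k$ chosen entries form an increasing subsequence.

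Next I will check that distinct choices give distinct copies. The block $i$ is recovered as the unique block contributing two entries. Within that block, $d$ and $e$ are recovered as the entry in $A_i$ and the entry in $B_i$. The remaining entries are recovered block by block.

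Therefore the number of copies is at least
$$W:=\sum_{i=1}^{k-1}|D_i|\cdot\frac{\ell}{2}\cdot \ell^{k-2}=\frac{\ell^{k-1}}{2}\,D,\qquad D:=\sum_i |D_i|.$$
The variable $D$ stochastically dominates $\mathrm{Binom}(n/2-a,\,b)$, and $a$ is negligible relative to $n$ (I will verify $a=o(n/k)$ from its defining formula). Hence
$$\mathbb{E}[W]\approx \frac{\ell^{k-1}}{2}\cdot\frac{n}{2}\cdot b=\frac{\ell^{k-1} n}{4}\cdot\frac{8(1+2\ep)}{n}\binom{n}{k}\frac1{k!}\ell^{-(k-1)}=2(1+2\ep)\binom{n}{k}\frac{1}{k!}.$$
Here I use $\ell=n/(k-1)$, so that $((k-1)/n)^{k-1}=\ell^{-(k-1)}$.

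Finally, I will apply concentration to $D$. Its mean is $nb/2=\Theta_k(n)$, which tends to infinity for fixed $k$ since $b\,n\propto \binom{n}{k}n^{-(k-1)}$. By Chebyshev (or a Chernoff bound), $D\geq \frac{3}{4}\,\mathbb{E}[D]$ with probability at least $0.99$ for $n$ large. On this event,
$$W\geq \tfrac{3}{2}(1+2\ep)\binom{n}{k}\frac{1}{k!}>(1+\ep)\binom{n}{k}\frac{1}{k!}.$$

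I expect the main obstacle to be bookkeeping rather than any deep step. I need to confirm that $b\le 1$ is a valid probability and that $a\ll\ell/2$, so that the Step 4 overlap in $A_1$ only removes a negligible part of $D_1$. I also need the required lower bound on $n$ relative to $k$ to be consistent with the regime of \Cref{thm:worstcasepermutation}. If $nb/2$ turned out not to be large, I would instead argue through the exact count $\binom{n}{k}/k!$ against the $\ell^{k-1}$ scaling to lower-bound $nb$ directly.
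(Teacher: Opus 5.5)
Your proposal is correct and follows the paper's proof. Both arguments lower-bound the count by the copies that use one dropped entry of some $A_i$, one entry of $B_i$, and one entry from each other block, which gives $N_\tau \geq \frac{X}{2}\left(\frac{n}{k-1}\right)^{k-1}$, and then use that the binomial count $X$ of dropped entries has mean more than twice the required threshold. Your version is more careful than the paper's in three places: you exclude the Step~4 entries of $A_1$, you check that distinct choices give distinct copies, and you make explicit that concentration needs $nb/2 \to \infty$, which the paper leaves implicit in ``by the setting of $b$''.
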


\begin{proof}[Proof of \Cref{lem:d1}]
    The copies of $\tau$ involve: One of the first $a$ indices of $A_1$ (whose values were lowered in the block in order to create just enough copies of the length-$k$ increasing pattern), an index in the remaining indices of block $1$, and an index from each of the blocks $2, \dots, k-1$. The number of copies $N_\tau(x)$ is, therefore,
    $$N_\tau(x) = a \cdot \left(\frac{n}{k-1} - a \right) \cdot  \left(\frac{n}{k-1}\right)^{k-2}.$$
    This is in $(1 \pm \ep) \binom{n}{k} \cdot \frac{1}{k!}$ for our choice of $a$.
\end{proof}

\begin{proof}[Proof of \Cref{lem:d2}]
    Suppose that, over all indices in $A_j, j \in [k-1]$, $X$ of the indices were chosen in step 3 to have lower values.

    A copy of the increasing pattern $\tau$ can be formed with one of the lowered indices and one index from each remaining $B_{j'}$. The number of copies of $\tau$, $N_\tau(x)$, is, therefore:
    $$N_\tau(x) \geq X \cdot \frac{n}{2(k-1)} \cdot \left(\frac{n}{k-1} \right)^{k-2} = \frac{X}{2}\left(\frac{n}{k-1} \right)^{k-1}.$$
    This is $> (1 + \ep) \binom{n}{k} \cdot \frac{1}{k!}$ for $$X > 2(1 + \eps) \binom{n}{k} \frac{1}{k!} \left( \frac{k-1}{n}\right)^{k-1}.$$
    Now, by construction $X \sim \text{Binom}(n/2, b)$. Thus, 
    $$\mathbb{P}\left(X \leq 2(1 + \eps) \binom{n}{k} \frac{1}{k!} \left( \frac{k-1}{n}\right)^{k-1} \right) \leq \frac{1}{100}$$
    by the setting of $b$.
\end{proof}

We are now ready to prove \Cref{thm:worstcasepermutation}.

\begin{proof}[Proof of \Cref{thm:worstcasepermutation}]
    Distinguishing whether $x$ is drawn from $\mathcal{D}_1$ versus $\mathcal{D}_2$ reduces to distinguishing whether a certain event (a lowered value) occurs with probability $0$ ($\mathcal{D}_1$) or $b$ ($\mathcal{D}_2$). Thus, an algorithm with $q$ queries has the task of distinguishing $q$ copies of $\text{Bern}(0)$ from $q$ copies of $\text{Bern}(b)$.

    The number of queries needed to distinguish between these cases is:
    $$q = \Omega\left( \frac{1}{b}\right).$$ 
    
    In the setting of $b$ from the construction of Distribution 2, this gives $$q = \Omega_{\ep}\left( \frac{(k!)^2}{(k-1)^{k-1}}\right) = \Omega_{\ep}\left( \frac{k^{k+2}}{e^{2k}}\right).$$
\end{proof}

\subsection{Worst-case lower bound for patterns}\label{sec:worst-case-pattern}

We prove the following.

\begin{theorem}\label{thm:worst-case-subsequences}
    Let $m \geq k$ and let $P$ be a length-$k$ pattern with distinct symbols. Any algorithm that distinguishes whether any length-$n$ sequence $x$ has $\in (1 \pm \ep) \binom{n}{k} \frac{1}{m^k}$ copies of $P$ or $\not \in (1 \pm \ep) \binom{n}{k} \frac{1}{m^k}$ copies with high probability must use $\Omega\left( \ep m^k\right)$ queries when $m^k \ll n$.
\end{theorem}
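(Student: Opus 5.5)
We follow the template of \Cref{thm:worstcasepermutation}. We build a deterministic ``accept'' instance with the right count of $P$, and a randomized ``reject'' instance that differs only at a sparse random set of positions, placed so that each changed position creates many new copies. Write $P=P_1P_2\cdots P_k$ with distinct symbols, and let $T:=\binom{n}{k}\frac{1}{m^k}$ be the target count. Since $m^k\ll n$, $T$ is much smaller than the roughly $(n/k)^k$ copies a sequence of $k$ constant blocks would contain.

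\textbf{Distribution 1.} Split $[n]$ into $k$ consecutive blocks $C_1,\dots,C_k$ of length $\ell=n/k$. Fill the first $a$ positions of $C_1$ with $P_1$. Fill every position of $C_2,\dots,C_k$ with $P_j$ (block $j$ gets $P_j$), except that every position of $C_k$ is given a filler symbol $z\notin\{P_1,\dots,P_k\}$; such a $z$ exists if $m>k$, and if $m=k$ we reuse $P_1$, which cannot complete a copy there. Thus no copy of $P$ exists yet. To reach exactly $T$ copies, place $P_k$ on the first $a'$ positions of $C_k$. Every copy then takes its $P_1$ from the $a$ special positions, $P_2,\dots,P_{k-1}$ from the blocks $C_2,\dots,C_{k-1}$, and $P_k$ from the $a'$ special positions, so the count is $a\,a'\,\ell^{k-2}$. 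Fill the rest of $C_1$ with $z$ as well. Choose $a,a'$ so that this count lies in $(1\pm\ep/2)T$; since $T/\ell^{k-2}\approx n^2 k^{k-2}/(k!\,m^k)\ll n^2$, integers $a,a'\le\ell/2$ exist with the required granularity.

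\textbf{Distribution 2.} Start from Distribution 1. In the last half of $C_k$ (the part still holding filler $z$), independently with probability $b$ replace $z$ by $P_k$. Each new $P_k$ creates $a\,\ell^{k-2}=T/a'$ new copies. With $X\sim\mathrm{Binom}(\ell/2,b)$ new symbols, the count exceeds $(1+\ep)T$ once $X\ge 2\ep a'$. Keeping $a'$ small (say $a'=\Theta(1)$ if possible, enlarging $a$ instead) means $b$ of order $\ep a' k/n$ suffices, and a Chernoff bound makes this hold with probability $0.99$.

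\textbf{Indistinguishability and the main obstacle.} As in the proof of \Cref{thm:worstcasepermutation}, an algorithm must distinguish $\mathrm{Bern}(0)$ from $\mathrm{Bern}(b)$ at the queried positions, which requires $\Omega(1/b)$ queries. The main obstacle is the parameter balancing. With the layout above, $1/b$ is only about $n/k$, not $m^k$. To get $\Omega(\ep m^k)$ we must reduce the number of new copies that each modification creates, so that roughly $n/m^k$ modifications are needed and $b\approx \ep\, m^k\cdot(\text{copies per modification})/T\cdot\frac{1}{n}$ becomes about $1/(\ep m^k)$.

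To achieve this, I would thin the middle blocks. Instead of filling $C_j$ entirely with $P_j$, place $P_j$ in only a $1/m$-fraction of the positions of $C_j$ (and $z$ elsewhere), and similarly for the special end segments. A single planted $P_k$ then creates about $T/(\text{baseline }P_k\text{ count})$ new copies. Making the baseline $P_k$ count of order $\ep^{-1}$ times the count of $P_k$ in a random sequence should force $b=\Theta(1/(\ep m^k))$ per position. I would then verify two things: that the count identity in \Cref{lem:d1} still holds, and that the Chernoff step of \Cref{lem:d2} goes through with the condition $m^k\ll n$ guaranteeing $\ell b\gg1$. This last balancing is where I expect the details to need the most care.
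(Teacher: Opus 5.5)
Your main-obstacle paragraph rests on an infeasible parameter choice, and the thinning fix you commit to does not give the bound. So as written the proposal does not prove the theorem.

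\textbf{The parameter error.} The choice $a'=\Theta(1)$ is not available. Since $a\le\ell/2$ and $a\,a'\,\ell^{k-2}\approx T$, you need $a'\ge 2T/\ell^{k-1}\approx 2nk^{k-1}/(k!\,m^k)$, which is $\Theta(n/m^k)\gg 1$ for constant $k$. Equivalently, $a'=O(1)$ would need $a\approx n^2k^{k-2}/(k!\,m^k)\gg\ell$. The conclusion $1/b\approx n/k$ should itself have been a warning. Under $m^k\ll n$ it would be a \emph{stronger} bound than $m^k$, and it is impossible for large $n$: estimating the fraction of uniformly random $k$-subsets that form $P$ decides your gap problem with $O(km^k/\ep^2)$ queries.

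\textbf{Thinning goes the wrong way.} A planted $P_k$ only completes the $P_1\cdots P_{k-1}$ partial copies preceding it. With density-$1/m$ blocks, each planting completes at most $(\ell/m)^{k-1}$ copies. You then need $\gtrsim \ep T(m/\ell)^{k-1}=\Theta(\ep n/m)$ plantings among $O(n/k)$ positions, so $b=\Omega(\ep/m)$, and this construction certifies at most $O(m/\ep)$ queries. Likewise, a baseline $P_k$ count of $\ep^{-1}n/m$ needs about $n/m$ new copies of $P_k$, i.e.\ $b\approx k/m$, not $1/(\ep m^k)$.

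\textbf{The missing observation.} The bound is the inverse density of planted $P_k$'s, roughly $\ell\cdot(\text{partial count})/(\ep T)$. So the partial count of $P_1\cdots P_{k-1}$ must be as \emph{large} as possible. This pins the baseline $P_k$ count at $\Theta(n/m^k)$, far below the random value $n/m$. Your original layout already does this with $a=\ell/2$ and $a'=2T/\ell^{k-1}$:
\begin{itemize}
\item Plugging this $a'$ into your own formula gives $b=\Theta(\ep a'k/n)=\Theta(\ep k^k/(k!\,m^k))$.
\item This $b$ suffices for $X>3\ep a'$ by Chernoff, since $\mathbb{E}X=\Theta(\ep n/m^k)\gg 1$ for constant $k,\ep$.
\item The hitting argument then yields $\Omega(1/b)=\Omega(k!\,m^k/(\ep k^k))$, i.e.\ $\Omega(m^k/\ep)\ge\Omega(\ep m^k)$ for constant $k$.
\end{itemize}

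\textbf{Comparison with the paper.} The paper uses the same mechanism with the roles of the two instances swapped. Full intervals carry $P_1,\dots,P_{k-1}$. The accept instance places $P_k$ on a single contiguous run of length $\ell=\Theta(n/m^k)$, at a uniformly random one of the first $\ep N/\ell$ slots of the last interval; restricting to an $\ep$-fraction keeps the count in $(1\pm\ep)T$ for every slot. The reject instance has no $P_k$, hence zero copies. Locating the run costs $\Omega(\ep N/\ell)=\Omega(\ep m^k)$ queries.
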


We construct sequences which have $\in (1 \pm \ep) \binom{n}{k} \frac{1}{m^k}$ copies of a pattern or $0$, which is $\not \in (1 \pm \ep) \binom{n}{k} \frac{1}{m^k}$, copies and show that it is not possible to distinguish the two with high probability without $\Omega\left( \ep m^k\right)$ queries.

\begin{proof}[Proof of \Cref{thm:worst-case-subsequences}]

Let $P = P_1 P_2 \dots P_k$ be any length-$k$ sequence over $[m]^k$, such that all $P_i$ are distinct.
\\\\
We first construct sequences $S$ of length $n$ with zero copies of $P$:
\begin{enumerate}
    \item Break up $[n]$ into consecutive intervals of length $\frac{n}{2(k-1)}$.
    \item In the first two intervals, place the value $P_1$. In the second two intervals, place the value $P_2$, and so forth for all future intervals.
\end{enumerate}

Observe that there are no occurrences of $P_k$ and thus $P$ never appears in $S$ as constructed above.
\\\\
We now create a distribution over sequences $T$ of length $n$ with $\in (1 \pm \ep) \binom{n}{k} \frac{1}{m^k}$ copies of $P$:
\begin{enumerate}
    \item Start with $S$ as generated above.
    \item Divide the rightmost interval into $N/\ell$ consecutive sub-intervals, for $N = \frac{n}{2(k-1)}$ and $\ell = \frac{1}{1 + \ep/2} \binom{n}{k}\frac{1}{m^k} \frac{2 (k-1)^{k-1}}{n^{k-1}}$.
    \item Choose a sub-interval of the rightmost interval uniformly at random among the first $\ep N / \ell$ subintervals, and place $P_k$ at all indices in the subinterval chosen.
\end{enumerate}

\begin{figure}[H]
    \centering
    \input{Visuals/subsequence-worst-case-lb-figure}
    \caption{Construction of sequences ($S$) with zero copies of $P = P_1 P_2 P_3 P_4 P_5 P_6$ and sequences ($T$) with $(1 \pm \ep) \binom{n}{k} \frac{1}{m^k}$ copies of $P$. In the latter case, the final element $P_6$ is placed in a randomly chosen block of $\ell$ indices, where $\ell$ is chosen so that the correct number of copies of $P$ appear with high probability. To distinguish between $S$ and $T$, an algorithm must find which random block the elements $P_6$ were planted in.}
    \label{fig:subsequence-worst-case-lb-figure}
\end{figure}

We first justify the choice of $\ell$. Suppose that, in the construction of $T$, in step 3, $P_k$ is placed in the $i$th block of the rightmost interval. Then, the number of occurrences of $P$ in $T$ -- denoted $C_{P}(T, i)$ is:
$$C_{P}(T, i) = \left(\frac{n}{k-1}\right)^{k-2} \left( \frac{n}{2(k-1)} + (i-1) \ell \right) \cdot \ell.$$

Since $i \sim \mathrm{Unif}([\frac{\ep N}{\ell}])$, the expected number of occurrences of $P$ is:
$$\mathbb{E}_i[C_{P}(T, i)] = \left(\frac{n}{k-1}\right)^{k-2} \left( \frac{n}{2(k-1)} + \frac{1}{2}\left(\frac{\ep n}{2 \ell (k-1)}-1\right) \ell \right) \cdot \ell.$$
For the choice of $$\ell = \frac{1}{1 + \ep/2} \binom{n}{k}\frac{1}{m^k} \frac{2 (k-1)^{k-1}}{n^{k-1}}.$$
(up to taking the floor), the expectation is $(1 - o(1)) \binom{n}{k} \cdot \frac{1}{m^k}$. Observe that, for constant $k$ and $m^k \ll n$, $\ell = \Theta(n/m^k)$.

Using that $i \sim \mathrm{Unif}([\frac{\ep N}{\ell}])$, we find that 
$$C_{P}(T, i) \in (1 \pm \ep) \binom{n}{k} \frac{1}{m^k}$$
for every possible $i \in [\frac{\ep N}{\ell}]$.

To distinguish between $S$ and a sequence drawn from the distribution over sequences $T$, the algorithm must identify which of the $N/\ell$ consecutive sub-intervals, for $N = \frac{n}{2(k-1)}$, has been assigned the value $P_k$. Since the choice of the sub-interval with value $P_k$ is uniform at random among the $N$ possible choices, the query complexity of any algorithm distinguishing between the two cases with high probability must be at least
$$\Omega\left( \frac{\ep N}{\ell}\right) = \Omega\left( \frac{n}{2(k-1)} \cdot \frac{\ep}{\ell}\right).$$
For constant $k$, this is:
$$\Omega\left( \ep m^k\right) \text{ for } m^k \ll n.$$
\end{proof}

\bibliographystyle{alpha}
\bibliography{bibliography}

\appendix

\section{Appendix}\label{sec:appendix}

\subsection{Proofs of preliminary propositions}\label{appendix:prelims}

\subsubsection{Concentration inequality}

We first state a concentration inequality of Pemantle and Peres for variables under a negative-dependence hypothesis.

\begin{proposition}[Corollary 5.2 of \cite{pemantle2014concentration}]\label{prop:mcdiarmid}
    Let $f: \binom{[n]}{t} \to \mathbb{R}$ be a 1-Lipschitz function with respect to the Hamming distance on the bases of a balanced matroid of rank $t$ on $n$ elements. Then
    $$\mathbb{P}\left(\left|f(A) - \mathbb{E}[f(A)] \right| \geq \delta\right) \leq 2 \exp\left( - \frac{\delta^2}{8t}\right).$$
\end{proposition}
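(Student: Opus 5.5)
This proposition is quoted from Pemantle and Peres, and in the paper we would simply cite it. If we were to reprove it, the plan is a Doob martingale argument that uses the negative dependence of balanced matroids to bound the martingale increments, followed by the Azuma--Hoeffding inequality. Two conventions are assumed: $A$ is a uniformly random basis of the matroid, and the Hamming distance is taken on indicator vectors, so $|A\triangle B|$ is the distance and a single exchange changes it by $2$. For the application in \Cref{thm:combes}, the matroid is the uniform matroid of rank $t$, whose bases are exactly $\binom{[n]}{t}$. In that case every step below reduces to an elementary fact about uniformly random $t$-subsets.

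\textbf{Step 1 (martingale).} We reveal the basis $A$ as a uniformly random ordering $(a_1,\dots,a_t)$ of its elements. For $0\le i\le t$, let $Z_i = \mathbb{E}[f(A)\mid a_1,\dots,a_i]$. Then $Z_0=\mathbb{E}[f(A)]$, $Z_t=f(A)$, and $(Z_i)$ is a martingale with exactly $t$ steps. This choice of exposure, rather than revealing the $n$ membership indicators, is what makes the bound depend on $t$ instead of $n$.

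\textbf{Step 2 (bounded increments; the main obstacle).} Fix a prefix $a_1,\dots,a_{i-1}$ and two candidates $x,y$ for $a_i$. We need a coupling of the conditional law of $A$ given $a_i=x$ with its law given $a_i=y$ such that the coupled bases differ by at most one exchange, meaning $|A\triangle A'|\le 2$.

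For the uniform matroid this is immediate. Map a completion $B\ni x$ to $B'$ as follows: if $y\notin B$, let $B'=B-x+y$; if $y\in B$, let $B'=B$. This is a measure-preserving bijection between the two conditional laws.

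For a general balanced matroid, the same coupling is provided by the Feder--Mihail negative-correlation property, equivalently the stochastic covering property of the uniform basis measure. Those properties give that the two conditional basis measures can be coupled to agree up to a single swap. Establishing this coupling is the only nontrivial input, and I would take it from the balanced-matroid theory rather than reprove it.

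Given the coupling, since $f$ is $1$-Lipschitz, $|\mathbb{E}[f\mid\cdots,a_i=x]-\mathbb{E}[f\mid\cdots,a_i=y]|\le 2$. The increment $Z_i-Z_{i-1}$ is an average over $y$ of such differences, so $|Z_i-Z_{i-1}|\le 2$.

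\textbf{Step 3 (Azuma).} Apply Azuma--Hoeffding to the martingale from Step 1, which has $t$ increments each bounded in absolute value by $c=2$:
\[
\mathbb{P}\bigl(|f(A)-\mathbb{E} f(A)|\ge\delta\bigr)\le 2\exp\!\left(-\frac{\delta^2}{2tc^2}\right)=2\exp\!\left(-\frac{\delta^2}{8t}\right),
\]
which is the stated bound.
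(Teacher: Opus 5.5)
The paper gives no proof of this proposition. It cites Pemantle--Peres and applies the result only to the uniform matroid, whose bases are $\binom{[n]}{t}$, in the proof of \Cref{thm:combes-appendix}. For that case your argument is correct and self-contained. Given the ordered prefix, $A$ is uniform over the $t$-sets containing $P\cup\{a_i\}$. Your exchange $x\leftrightarrow y$ is a measure-preserving bijection between the two conditional laws with $|B\triangle B'|\le 2$, so $|Z_i-Z_{i-1}|\le 2$. Azuma over $t$ steps then gives exactly $2\exp(-\delta^2/(8t))$.

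For a general balanced matroid, however, the coupling you assert in Step 2 need not exist. So the pairwise bound $|\mathbb{E}[f\mid\ldots,a_i=x]-\mathbb{E}[f\mid\ldots,a_i=y]|\le 2$ can fail. Take the partition matroid $U_{1,3}\oplus U_{1,3}$ on $\{x,a,b\}\sqcup\{y,c,d\}$, which is graphic and hence balanced, and take $i=1$.
\begin{itemize}
\item Given $a_1=x$, the basis is uniform on $\{x,y\},\{x,c\},\{x,d\}$.
\item Given $a_1=y$, the basis is uniform on $\{x,y\},\{a,y\},\{b,y\}$.
\end{itemize}
Both $\{x,c\}$ and $\{x,d\}$ are within one exchange only of $\{x,y\}$, so a mass of $2/3$ would have to be transported onto a mass of $1/3$. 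Quantitatively, take the $1$-Lipschitz function $f(B)=\min\{|B\triangle C|: C\in\{\{x,c\},\{x,d\}\}\}$. Its two conditional means are $2/3$ and $10/3$, which differ by $8/3>2$.

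This coupling is also not what the stochastic covering property provides. SCP couples the law given $x\in A$ with the law given $x\notin A$, not with the law given $y\in A$. The repair is to compare $a_i=x$ directly with the current conditional law. With $p=\Pr[x\in A\mid P\subseteq A]$,
\[
Z_i-Z_{i-1}=(1-p)\bigl(\mathbb{E}[f\mid P\cup\{x\}\subseteq A]-\mathbb{E}[f\mid P\subseteq A,\ x\notin A]\bigr).
\]
A one-exchange coupling of these two laws is exactly the SCP of the contracted basis measure, which is the input you would actually import. It gives $|Z_i-Z_{i-1}|\le 2(1-p)\le 2$; in the example above, $|Z_1-Z_0|=4/3$ when $a_1\in\{x,y\}$. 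With this change, Step 3 goes through unchanged for any basis measure with the SCP.
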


We rely on a concentration inequality for functions that have a ``bounded difference'' property on a high-probability subset of the domain. We prove the following subset analogue of Proposition 2 of \cite{combes2024extension}, which is a high-probability bounded differences inequality.

\begin{proposition}[Subset analogue of Proposition 2 of \cite{combes2024extension}]\label{thm:combes-appendix}
    Let $f: \binom{[n]}{t} \to \mathbb{R}$ be a bounded, non-negative function, and $\mathcal{Y} \subseteq \binom{[n]}{t}$. Suppose the following ``bounded differences property'' holds for $f$ and $\mathcal{Y}$:

    For all $A, B \in \mathcal{Y}$,
    $$\left|f(A) - f (B)\right| \leq \frac{c}{2} |A \triangle B|$$
    for some constant $c$.

    Consider any $\delta > 0$. Let $S$ be a uniformly random element of $\binom{[n]}{t}$. Let $$q:= 1 - \mathbb{P}\left( S \in \mathcal{Y}\right),$$ and suppose $q \leq \frac{\delta}{2 \max(f)}.$

    Then, the following holds:
    $$\mathbb{P}\left( \left| f(S) - \mathbb{E}\left( f(S)\right)\right| \geq \delta\right) \leq q +  2 \exp\left(- \frac{\delta^2}{8 t c^2} \right).$$
\end{proposition}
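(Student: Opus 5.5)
The plan is to reduce to the Pemantle--Peres inequality (\Cref{prop:mcdiarmid}) by building a globally Lipschitz extension of $f|_{\mathcal{Y}}$ and paying $q$ for the event $S\notin\mathcal{Y}$. This follows the strategy behind Proposition 2 of \cite{combes2024extension}. The uniform matroid on $[n]$ of rank $t$ has $\binom{[n]}{t}$ as its set of bases and is balanced, so \Cref{prop:mcdiarmid} applies to a uniformly random $S\in\binom{[n]}{t}$. I will take the Hamming distance on $t$-sets to be $|A\triangle B|$.

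\textbf{Step 1: the extension.} Define the McShane-type extension
\[
g(A) := \min_{B\in\mathcal{Y}} \Bigl( f(B) + \tfrac{c}{2}|A\triangle B| \Bigr).
\]
If $\mathcal{Y}$ is empty, then $q=1$ and the claim is trivial, so assume $\mathcal{Y}\neq\emptyset$. By the triangle inequality for $|\cdot\triangle\cdot|$, $g$ is $\tfrac{c}{2}$-Lipschitz on all of $\binom{[n]}{t}$. The bounded differences hypothesis on $\mathcal{Y}$ gives $g=f$ on $\mathcal{Y}$.

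Next, truncate: $\tilde g := \min(\max(g,0),\max f)$. Clipping to an interval preserves the Lipschitz constant. Since $0\le f\le \max f$, we still have $\tilde g = f$ on $\mathcal{Y}$. Then $h := 2\tilde g/c$ is $1$-Lipschitz.

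\textbf{Step 2: comparing expectations.} Since $f$ and $\tilde g$ agree on $\mathcal{Y}$ and both take values in $[0,\max f]$,
\[
|\mathbb{E} f(S)-\mathbb{E}\tilde g(S)| \le \mathbb{E}\bigl[|f(S)-\tilde g(S)|\,\mathbf{1}[S\notin\mathcal{Y}]\bigr] \le q\max f \le \delta/2,
\]
where the last inequality is the hypothesis $q\le \delta/(2\max f)$. This is exactly why that hypothesis appears in the statement.

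\textbf{Step 3: concentration and union bound.} On $\{S\in\mathcal{Y}\}$ we have $f(S)=\tilde g(S)$. So if $|f(S)-\mathbb{E}f(S)|\ge\delta$ there, Step 2 forces $|\tilde g(S)-\mathbb{E}\tilde g(S)|\ge \delta/2$. Hence
\[
\mathbb{P}(|f(S)-\mathbb{E}f(S)|\ge\delta)\le q + \mathbb{P}\bigl(|h(S)-\mathbb{E}h(S)|\ge \delta/c\bigr).
\]
Applying \Cref{prop:mcdiarmid} to $h$ at deviation $\delta/c$ bounds the last term by $2\exp(-\delta^2/(8tc^2))$, which is the claimed bound.

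\textbf{Main obstacle.} The only delicate point is the normalization of the Hamming metric in \Cref{prop:mcdiarmid}. If it is meant as the swap distance $|A\triangle B|/2$ rather than $|A\triangle B|$, then $h$ is only $2$-Lipschitz and the constant in the exponent degrades by a factor of $4$. In that case I would either re-verify the exact normalization in \cite{pemantle2014concentration} or absorb the constant, which only changes absolute constants in the downstream applications in \Cref{lem:concentration-lemma} and \Cref{lem:concentration-lemma-sequences}. Everything else is routine.
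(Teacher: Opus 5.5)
Your proposal is correct and follows essentially the same route as the paper. Both arguments take the McShane extension of $f|_{\mathcal{Y}}$ with respect to $\frac{c}{2}|A\triangle B|$, bound $|\mathbb{E}f(S)-\mathbb{E}g(S)|\le q\max(f)\le\delta/2$, union bound with the event $\{S\notin\mathcal{Y}\}$, and apply Pemantle--Peres to $(2/c)g$ at deviation $\delta/c$, using the same Hamming normalization $|A\triangle B|$ as the paper (so your normalization worry does not arise). Your truncation of $g$ to $[0,\max f]$ is a small improvement: the paper asserts $|f(A)-g(A)|\le \max(f)\,\mathbf{1}(A\notin\mathcal{Y})$ for the untruncated extension, which can fail when $g$ exceeds $\max f$ off $\mathcal{Y}$, and clipping repairs this without changing the Lipschitz constant.
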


\begin{proof}
    This proof is a modification of the proof of Lemma 2 in \cite{combes2024extension}, adapted to the setting of subsets (which requires relying on \cite{pemantle2014concentration} rather than McDiarmid's inequality \cite{mcdiarmid1989method}) and stated in terms of $\mathbb{E}(f(A))$ rather than $\mathbb{E}(f(A) | A \in \mathcal{Y})$.

    First, we assume that $f$ has the $c$-bounded difference property over subset $\mathcal{Y}$, meaning for all $A, B \in \mathcal{Y}$, $|f(A) - f(B)| \leq \frac{c}{2} |A \triangle B|$. This means that $f$ is 1-Lipschitz over $\mathcal{Y}$ according to distance metric $d_c(A, B) := \frac{c}{2} |A \triangle B|$.

    As in \cite{combes2024extension}, by \cite{mcshane1934extension}, the function 
    $$g(A) := \inf_{B \in \mathcal{Y}} \{f(B) + d_c(A, B) \}$$
    is a 1-Lipschitz function over $\binom{[n]}{t}$ such that $g(A) = f(A)$ on all $A \in \mathcal{Y}$. Observe that $g$, and thus $f$, has the $c$-bounded difference property over all of $\binom{[n]}{t}$.

    Next, observe that $|f(A) - g(A)| \leq \max(f) \cdot 1(A \not \in \mathcal{Y})$. Taking an expectation according to the uniform distribution over $\binom{[n]}{t}$, and applying Jensen's inequality, we find that
    $$\left|\mathbb{E}[f(S)] - \mathbb{E}[g(S)] \right| \leq \max(f) \cdot \mathbb{P}(S \not \in \mathcal{Y}) =: \max(f) \cdot q.$$
    By assumption that $q \leq \frac{\delta}{2 \max(f)}$, this implies that
    \begin{equation}\label{eq:combes-1}
      \left|\mathbb{E}[f(S)] - \mathbb{E}[g(S)] \right| \leq \frac{\delta}{2}.  
    \end{equation}

    We are interested in $\mathbb{P}\left( \left| f(S) - \mathbb{E}\left( f(S)\right)\right| \geq \delta\right)$ and now begin to analyze this.

    $$\mathbb{P}\left( \left| f(S) - \mathbb{E}\left( f(S)\right)\right| \geq \delta\right) \leq \mathbb{P}(S \not \in \mathcal{Y}) + \mathbb{P}\left( \left| f(S) - \mathbb{E}\left( f(S)\right)\right| \geq \delta ~\land~ S \in \mathcal{Y}\right)$$
    $$\leq q + \mathbb{P}\left( \left| g(S) - \mathbb{E}\left( f(S)\right)\right| \geq \delta\right).$$
    By \Cref{eq:combes-1}, this is:
    \begin{equation}\label{eq:combes-2}
        \leq q + \mathbb{P}\left( \left| g(S) - \mathbb{E}\left( g(S)\right)\right| \geq \delta/2\right).
    \end{equation}

    (We now diverge from the proof of \cite{combes2024extension} slightly.) Applying \Cref{prop:mcdiarmid} to $(2/c) g$, which is 1-Lipschitz with respect to the Hamming distance on $\binom{[n]}{t}$, \Cref{eq:combes-2} is upper bounded by:
    $$\leq q + 2 \exp\left(- \frac{\delta^2}{8t c^2} \right).$$

    Combining all the steps above, we have found that
    $$\mathbb{P}\left( \left| f(S) - \mathbb{E}\left( f(S)\right)\right| \geq \delta\right) \leq q +  2 \exp\left(- \frac{\delta^2}{8 t c^2} \right).$$
\end{proof}

\subsubsection{Random permutations and sequences}
We now move to proving the concentration of permutation patterns and patterns in random permutations and sequences.

\begin{proposition}[Restatement of \Cref{prop:random-permutation-P-copies}]
    Consider a random permutation $\sigma: [n] \to [n]$ and any length-$k$ permutation pattern $\tau$. If $n-k\geq 2k^3$, then $\sigma$ has $(1 \pm \ep) \binom{n}{k} \frac{1}{k!}$ copies of $\tau$ with probability at least $1-\frac{2k^3}{(n-k)\ep^2}$. In particular, for fixed $k$, $\tau$, and $\ep$, this probability is $1-o_n(1)$ as $n\to\infty$.
\end{proposition}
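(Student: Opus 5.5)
We use the second moment method together with Chebyshev's inequality. Let $X=\sum_{A\in\binom{[n]}{k}} X_A$, where $X_A$ indicates that the index set $A$ forms a copy of $\tau$ in $\sigma$. For a uniformly random permutation, the relative order of $\sigma$ on any fixed $k$-set is a uniform element of $S_k$. Hence $\mathbb{E}[X_A]=1/k!$ and $\mathbb{E}[X]=\binom{n}{k}/k!=:\mu$.

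Next we bound $\mathrm{Var}(X)=\sum_{A,B}\mathrm{Cov}(X_A,X_B)$. The key observation is that $X_A$ and $X_B$ are independent whenever $A\cap B=\emptyset$. Indeed, the relative order on $A$ and the relative order on $B$ are independent uniform permutations when the sets are disjoint. This holds because the joint relative order on $A\cup B$ is uniform on $S_{2k}$, and its restrictions to two disjoint sets are independent and each uniform. Thus only pairs with $|A\cap B|\ge 1$ contribute, and for those we use $\mathrm{Cov}(X_A,X_B)\le \mathbb{E}[X_AX_B]\le \mathbb{E}[X_A]=1/k!$. The number of ordered pairs $(A,B)$ with nonempty intersection is $\binom{n}{k}\bigl(\binom{n}{k}-\binom{n-k}{k}\bigr)$. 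This gives
\[
\frac{\mathrm{Var}(X)}{\mu^2}\le k!\left(1-\frac{\binom{n-k}{k}}{\binom{n}{k}}\right).
\]

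This bound is too weak, because of the factor $k!$. We must therefore use the sharper estimate $\mathbb{E}[X_AX_B]$ for overlapping pairs. If $|A\cap B|=j\ge1$, then $\mathbb{E}[X_AX_B]$ is the probability that the order on $A\cup B$, a uniform permutation of $2k-j$ elements, restricts to $\tau$ on both $A$ and $B$. The restriction to $A$ is $\tau$ with probability $1/k!$. Conditioned on that event, the relative order of the $k-j$ new elements of $B$ among all of $A\cup B$ is still uniform, and prescribing their order relative to $B$ costs about a factor $1/(k!/j!)$. So the plan is to show $\mathbb{E}[X_AX_B]\le \frac{1}{k!}\cdot\frac{j!}{k!}\cdot C$, or more crudely $\mathbb{E}[X_AX_B]\le \frac{j!}{(k!)^2}$, up to the combinatorial factor counting how the $j$ shared positions sit inside $\tau$. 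The $j$ shared elements of $B$ must occupy positions and ranks consistent with $\tau$, and the remaining $k-j$ elements of $B$ must have a specified relative order among themselves and relative to the shared ones. Counting interleavings, the probability is at most $\frac{1}{k!}\cdot\frac{j!\,(k-j)!}{k!}\cdot\binom{k}{j}\cdot\frac{1}{(k-j)!}$-type quantities, which simplify to roughly $\frac{1}{(k!)^2}\binom{k}{j}j!$. The number of pairs with $|A\cap B|=j$ is $\binom{n}{k}\binom{k}{j}\binom{n-k}{k-j}$.

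Summing over $j\ge1$ then gives
\[
\frac{\mathrm{Var}(X)}{\mu^2}\lesssim\sum_{j\ge1}\frac{\binom{k}{j}^2 j!\binom{n-k}{k-j}}{\binom{n}{k}}.
\]
The $j=1$ term is about $k^3/(n-k)$. Consecutive terms shrink by a factor of order $k^3/(n-k)\le 1/2$ under the hypothesis $n-k\ge 2k^3$, so the sum is at most $2k^3/(n-k)$. Chebyshev then yields $\Pr[|X-\mu|>\ep\mu]\le \frac{2k^3}{(n-k)\ep^2}$, as required. The main obstacle is obtaining the sharp overlap bound $\mathbb{E}[X_AX_B]\lesssim \binom{k}{j}j!/(k!)^2$ rigorously, in particular handling the combinatorics of how $A$ and $B$ interleave. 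If that becomes messy, a fallback is to bound the overlap probability by $\frac{1}{k!}\cdot\frac{1}{(k-j)!}$ after conditioning on $A$. This is valid because, given the full order on $A$, the $k-j$ new elements still need their own internal order fixed. The fallback would then require re-checking that the ratio $\binom{k}{j}\binom{n-k}{k-j}k!/((k-j)!\binom{n}{k})$ still sums to $O(k^3/(n-k))$. Finally, $1-o_n(1)$ for fixed $k,\ep$ is immediate.
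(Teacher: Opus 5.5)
Your proposal is correct and is essentially the paper's proof. Both use Chebyshev, independence for disjoint index sets, the overlap bound $\mathbb{P}(X_A X_B = 1)\le \frac{1}{k!\,(k-j)!}$ obtained by conditioning on the order on $A$ (your ``fallback'' is exactly the paper's justification), the pair count $\binom{n}{k}\binom{k}{j}\binom{n-k}{k-j}$, and a geometric sum under $n-k\ge 2k^3$; the only difference is that the paper bounds the $j$-th term directly by $(k^3/(n-k))^j$ instead of comparing consecutive terms.

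Your ``sharp'' bound $\binom{k}{j}j!/(k!)^2$ is literally equal to $\frac{1}{k!\,(k-j)!}$, so your two routes coincide and no interleaving analysis or re-checking of the sum is needed. The uncorrected heuristic $j!/(k!)^2$, equivalently the claimed conditional factor $j!/k!$, is false in general: for $\tau=12$, $A=\{1,3\}$, $B=\{2,3\}$ one gets $\mathbb{E}[X_AX_B]=1/3>1/4$.
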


\begin{proof}
    We analyze the expectation and variance of the number of copies of $\tau$ in $\sigma$ and apply Chebyshev's inequality.

    First, we can write the number of copies of $\tau$ in $\sigma$ as the following random variable $X$:
    $$X = \sum_{I \subseteq [n], |I| = k} 1(I \equiv \tau),$$
    where $I \equiv \tau$ means that there is a copy of $\tau$ on the indices of $I$.

    Second, $\mathbb{E}[X] = \sum_{I \subseteq [n], |I| = k} \mathbb{P}(I \equiv \tau) = \binom{n}{k} \frac{1}{k!}$.

    Third,
    \begin{align*}
        Var(X)      
        &=\sum_{a=1}^k\sum_{\substack{I,J\subseteq[n]\\|I|=|J|=k\\|I\cap J|=a}}\operatorname{Cov}\left(1(I\equiv\tau),1(J\equiv\tau)\right)\\
        &\leq\sum_{a=1}^k\sum_{\substack{I,J\subseteq[n]\\|I|=|J|=k\\|I\cap J|=a}}\mathbb{P}(I\equiv\tau\land J\equiv\tau).
    \end{align*}
    To bound each term, condition on $I \equiv \tau$, an event of probability $\frac{1}{k!}$ that fixes the relative order of the values on $I$. Given this, the $k-a$ values on $J\setminus I$ must occur in one prescribed relative order for $J\equiv\tau$. This prescribed relative order occurs with probability $1/(k-a)!$, and the additional constraints involving the shared values can only decrease the probability. Thus, $\mathbb{P}(I \equiv \tau \land J \equiv \tau) \leq \frac{1}{k!(k-a)!}$, and since there are $\binom{n}{k}\binom{k}{a}\binom{n-k}{k-a}$ pairs with $|I \cap J| = a$,
    $$Var(X) \leq \sum_{a = 1}^k \binom{n}{k} \binom{k}{a}\binom{n-k}{k-a} \cdot \frac{1}{k!(k-a)!}.$$

    We now apply Chebyshev's inequality:
    $$\mathbb{P}\left(X \not \in (1 \pm \ep) \binom{n}{k} \frac{1}{k!} \right) \leq \frac{Var(X)}{\ep^2 \binom{n}{k}^2 \frac{1}{(k!)^2}} \leq \frac{1}{\ep^2 \binom{n}{k} } \sum_{a = 1}^k \binom{k}{a}\binom{n-k}{k-a} \frac{k!}{(k-a)!}.$$

    We now analyze the expression on the right-hand side. Observe that $\binom{k}{a}\leq k^a$, $\frac{k!}{(k-a)!}\leq k^a$, and $\frac{\binom{n-k}{k-a}}{\binom{n}{k}}\leq\left(\frac{k}{n-k}\right)^a$. Therefore,
    $$\mathbb{P}\left(X \not \in (1 \pm \ep) \binom{n}{k} \frac{1}{k!} \right)\leq\frac{1}{\ep^2}\sum_{a=1}^k\left(\frac{k^3}{n-k}\right)^a.$$
    If $n-k\geq2k^3$, then this is at most
    $$\frac{2k^3}{(n-k)\ep^2}.$$

    Thus, 
    $$\mathbb{P}\left(X \not \in (1 \pm \ep) \binom{n}{k} \frac{1}{k!} \right) \leq \frac{2k^3}{(n-k)\ep^2}=o_n(1)$$
    for fixed $k$, $\tau$, and $\ep$.

\end{proof}

We state a similar result for sequences.

\begin{proposition}[Restatement of \Cref{prop:min-weight-sequence-P-copies}]
    Let $p$ be a distribution over $[m]$ such that the minimum probability of any element is $p_{\min}$. Consider any length-$k$ sequence $P = P_1 P_2 \dots P_k$ over $[m]$, and let $p_{i}$ be the weight that $p$ assigns $P_i$ for all $i \in [k]$.
    
    Let $\sigma \sim p^{\otimes n}$. For $n = \Omega\left(\frac{k^3}{\ep^2 p_{\min}} \right)$, $\sigma$ has $(1 \pm \ep) \binom{n}{k} \prod_{i \in [k]} p_i$ copies of $P$ with probability at least $1 - \frac{1}{k}$. Furthermore, for fixed $k$, $P$, $p$, and $\ep$, this probability is $1-o_n(1)$ as $n \to \infty$.
\end{proposition}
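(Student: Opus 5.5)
We prove this by a second-moment argument, following the proof of \Cref{prop:random-permutation-P-copies} but with the pattern probabilities in place of $1/k!$. Write the number of copies of $P$ as
$$X=\sum_{I\subseteq[n],\,|I|=k}\mathbf{1}(\sigma|_I=P).$$
Since the entries of $\sigma$ are independent with marginal $p$, each indicator has probability $\pi:=\prod_{i\in[k]}p_i$. Hence $\mathbb{E}[X]=\binom{n}{k}\pi$.

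For the variance, we split pairs $(I,J)$ by the overlap size $a=|I\cap J|$. Pairs with $a=0$ depend on disjoint sets of coordinates, so they are independent and contribute zero covariance. For $a\geq 1$ we bound the covariance by $\mathbb{P}(I\equiv P\wedge J\equiv P)$. Conditioned on $I\equiv P$, the values on $J\cap I$ are already fixed, and they either agree with what $J\equiv P$ requires or they do not. The values on $J\setminus I$ are fresh and independent. They must equal $k-a$ prescribed symbols of $P$, which happens with probability equal to a product of $k-a$ of the $p_i$'s. That product is at most $\pi/p_{\min}^{a}$, because removing $a$ factors from $\pi$ divides it by at most $p_{\min}^a$. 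Therefore
$$\mathbb{P}(I\equiv P\wedge J\equiv P)\le \frac{\pi^2}{p_{\min}^{a}}.$$
The number of pairs with overlap $a$ is $\binom{n}{k}\binom{k}{a}\binom{n-k}{k-a}$. This gives
$$\mathrm{Var}(X)\le \binom{n}{k}\pi^2\sum_{a=1}^k\binom{k}{a}\binom{n-k}{k-a}\,p_{\min}^{-a}.$$

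Next we apply Chebyshev's inequality with deviation $\epsilon\binom{n}{k}\pi$, which gives failure probability at most $\mathrm{Var}(X)/(\epsilon^2\binom{n}{k}^2\pi^2)$. We use $\binom{k}{a}\le k^a$ and $\binom{n-k}{k-a}/\binom{n}{k}\le \bigl(\frac{k}{n-k}\bigr)^a$, the same estimate as in the permutation case. The failure probability is then at most
$$\frac{1}{\epsilon^2}\sum_{a\ge1}\Bigl(\frac{k^2}{(n-k)p_{\min}}\Bigr)^a.$$
If $(n-k)p_{\min}\ge 2k^2$, this geometric sum is at most $\frac{2k^2}{\epsilon^2 (n-k)p_{\min}}$. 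For $n\ge Ck^3/(\epsilon^2p_{\min})$ with $C$ large, this is at most $1/k$, and $n-k\ge n/2$ holds in this range. The bound also tends to $0$ as $n\to\infty$ for fixed $k$, $P$, $p$, $\epsilon$, which gives the $1-o_n(1)$ claim.

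The main obstacle is the conditional bound on $\mathbb{P}(J\equiv P\mid I\equiv P)$. In the permutation case the shared values only make things smaller. Here, repeated symbols in $P$ could in principle let the shared positions of $J$ play different roles than they do in $I$. We handle this by the crude step above: we only require the $k-a$ fresh coordinates to match, and we bound the shared part's conditional probability by $1$. That loses at most a factor $p_{\min}^{-a}$, and that loss is exactly what the sample-size assumption on $n$ absorbs.
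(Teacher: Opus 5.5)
Your proposal is correct and follows essentially the same second-moment/Chebyshev argument as the paper. The shared ingredients are the bound $\mathbb{P}(I\equiv P\wedge J\equiv P)\le \pi^2 p_{\min}^{-a}$ for pairs with overlap $a$, the estimate $\binom{k}{a}\binom{n-k}{k-a}/\binom{n}{k}\le \bigl(k^2/(n-k)\bigr)^a$, and the geometric-sum conclusion. The only difference is cosmetic: you state the size condition as $n\ge Ck^3/(\ep^2p_{\min})$ and derive $(n-k)p_{\min}\ge 2k^2$ from it, whereas the paper directly assumes $n-k\ge 2k^3/(\ep^2 p_{\min})$.
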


\begin{proof}
    We analyze the expectation and variance of the number of copies of $P$ in $\sigma \sim p^{\otimes n}$.

    First, we can write the number of copies of $P$ in $\sigma$ as the following random variable $X$:
    $$X = \sum_{I \subseteq [n], |I| = k} 1(I \equiv P),$$
    where $I \equiv P$ means that there is a copy of $P$ on the indices of $I$.

    Second, $\mathbb{E}[X] = \sum_{I \subseteq [n], |I| = k} \mathbb{P}(I \equiv P) = \binom{n}{k} \prod_{i \in [k]} p_i$.

    Third,
    \begin{align*}
        Var(X)
        &= \sum_{a = 1}^k \sum_{\substack{I, J \subseteq [n] \\ |I| = |J| = k \\ |I \cap J| = a}} \operatorname{Cov}\left(1(I \equiv P),1(J \equiv P)\right)\\
        &\leq \sum_{a = 1}^k \sum_{\substack{I, J \subseteq [n] \\ |I| = |J| = k \\ |I \cap J| = a}} \mathbb{P}(I \equiv P \land J \equiv P).
    \end{align*}
    For a fixed pair with $|I \cap J| = a$, realizing both copies pins the value at each of the $2k - a$ distinct positions of $I \cup J$. Since $\sigma$ is i.i.d.\ these positions are independent, and each of the $a$ shared positions is constrained to a single value rather than two (or the probability is $0$ if the two copies disagree there). Thus $\mathbb{P}(I \equiv P \land J \equiv P) \leq \left( \prod_{i \in [k]} p_i\right)^2 \cdot \frac{1}{p_{\min}^a}$, and since there are $\binom{n}{k}\binom{k}{a}\binom{n-k}{k-a}$ pairs with $|I \cap J| = a$,
    $$Var(X) \leq \sum_{a = 1}^k \binom{n}{k} \binom{k}{a}\binom{n-k}{k-a} \cdot \left( \prod_{i \in [k]} p_i\right)^2 \cdot \frac{1}{p_{\min}^a}.$$

    We now apply Chebyshev's inequality:
    $$\mathbb{P}\left(X \not \in (1 \pm \ep) \binom{n}{k} \prod_{i \in [k]} p_i \right) \leq \frac{Var(X)}{\ep^2 \binom{n}{k}^2 \left(\prod_{i \in [k]} p_i\right)^2} \leq \frac{\sum_{a = 1}^k \binom{n}{k} \binom{k}{a}\binom{n-k}{k-a} \cdot \left( \prod_{i \in [k]} p_i\right)^2 \cdot \frac{1}{p_{\min}^a}}{\ep^2 \binom{n}{k}^2 \left(\prod_{i \in [k]} p_i\right)^2}$$
    \begin{equation}\label{eq:prelims}
        = \frac{\sum_{a = 1}^k \binom{k}{a}\binom{n-k}{k-a} \frac{1}{p_{\min}^a}}{\ep^2 \binom{n}{k}}.
    \end{equation}

    We analyze this term. First, $\binom{k}{a}\frac{\binom{n-k}{k-a}}{\binom{n}{k}} \leq k^a \left(\frac{k}{n-k} \right)^a = \left(\frac{k^2}{n-k}\right)^a$, using $\binom{k}{a} \leq k^a$ and $\frac{\binom{n-k}{k-a}}{\binom{n}{k}} \leq \frac{\binom{n}{k-a}}{\binom{n}{k}} \leq \left(\frac{k}{n-k}\right)^a$. Next, assume $n - k \geq \frac{2k^3}{\ep^2 p_{\min}}$. Then \Cref{eq:prelims} is at most:
    $$\leq \frac{1}{\ep^2} \sum_{a = 1}^k \left(\frac{k^2}{(n-k) p_{\min}} \right)^a \leq \frac{1}{\ep^2} \sum_{a = 1}^k \left(\frac{k^2}{\frac{2k^3}{\ep^2 p_{\min}} \cdot p_{\min}} \right)^a = \frac{1}{\ep^2} \sum_{a = 1}^k \left(\frac{\ep^2}{2k}\right)^a \leq \frac{1}{k}.$$

    Thus, we find that
    $$\mathbb{P}\left(X \not \in (1 \pm \ep) \binom{n}{k} \prod_{i \in [k]} p_i \right) \leq \frac{1}{k}.$$

    Furthermore, for fixed $k$, $P$, $p$, and $\ep$, the bound in \Cref{eq:prelims} tends to zero as $n \to \infty$. To see this, for all sufficiently large $n$, $\frac{k^2}{(n-k)p_{\min}} \leq \frac{1}{2}$, and therefore
    \begin{align*}
        \mathbb{P}\left(X \not \in (1 \pm \ep) \binom{n}{k} \prod_{i \in [k]} p_i \right)
        &\leq \frac{1}{\ep^2} \sum_{a=1}^k \left(\frac{k^2}{(n-k)p_{\min}}\right)^a\\
        &\leq \frac{2k^2}{\ep^2(n-k)p_{\min}}\\
        &=o_n(1).
    \end{align*}
    Thus, for fixed $k$, $P$, $p$, and $\ep$, $\sigma$ has $(1 \pm \ep)\binom{n}{k}\prod_{i \in [k]}p_i$ copies of $P$ with probability $1-o_n(1)$.
\end{proof}

\subsubsection{Aggregate subpattern counts}\label{appendix:aggregate-counts}

We first prove the concentration bound for aggregate counts of permutation patterns. The proof follows the same second-moment calculation as for \Cref{prop:random-permutation-P-copies}. Here we include the additional step needed to handle the aggregate count.

Recall the definition of $\textsc{Count}_{\ell}(\tau,\sigma)$: $$
\textsc{Count}_{\ell}(\tau,\sigma)
=
\sum_{\substack{Q \subseteq \tau\\ |Q|=\ell}}
\sum_{\substack{R \subseteq \sigma\\ |R|=\ell}}
1(R\equiv Q),
$$
where subpatterns $Q\subseteq\tau$ are counted with multiplicity and $R\equiv Q$ means that $R$ forms a copy of $Q$.

\begin{proposition}\label{prop:random-permutation-aggregate-copies}
    Let $\sigma$ be a uniformly random permutation over $[N]$, let $\tau$ be a length-$k$ permutation pattern, and fix $\ell \in [k]$. If $N-\ell \geq 2\ell^3$, then
    $$\mathbb{P}\left(\textsc{Count}_{\ell}(\tau,\sigma) \not \in (1\pm\ep)\binom{k}{\ell}\binom{N}{\ell}\frac{1}{\ell!}\right) \leq \frac{2\ell^3}{(N-\ell)\ep^2}.$$
    Thus if $N-k\geq2k^3$, then with probability at least $1-O\left(\frac{k^4}{N\ep^2}\right)$, $\textsc{Count}_{\ell}(\tau,\sigma) \in (1\pm\ep)\binom{k}{\ell}\binom{N}{\ell}\frac{1}{\ell!}$ holds simultaneously for every $\ell \in [k]$.
\end{proposition}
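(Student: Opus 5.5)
The plan is a second-moment argument mirroring the proof of \Cref{prop:random-permutation-P-copies}, followed by a union bound over $\ell$.

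Write $\textsc{Count}_{\ell}(\tau,\sigma)=\sum_{R}w(R)$, where $R$ ranges over $\binom{[N]}{\ell}$ and $w(R)=\sum_{I\in\binom{[k]}{\ell}}1(\sigma|_R\equiv \operatorname{pat}_I(\tau))$. For a uniformly random permutation, $\sigma|_R$ is a uniformly random pattern of length $\ell$. Hence $\mathbb{E}[w(R)]=\binom{k}{\ell}/\ell!$ and $\mathbb{E}[\textsc{Count}_\ell]=\binom{k}{\ell}\binom{N}{\ell}/\ell!$.

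For the variance, split the pairs $(R,R')$ by $a=|R\cap R'|$. Disjoint pairs ($a=0$) have independent relative orders, so their covariance is $0$. For $a\geq1$, bound $\operatorname{Cov}(w(R),w(R'))\leq\mathbb{E}[w(R)w(R')]=\sum_{I,I'}\mathbb{P}(\sigma|_R\equiv Q_I\wedge\sigma|_{R'}\equiv Q_{I'})$. Exactly as before, each joint probability is at most $\frac{1}{\ell!(\ell-a)!}$: conditioning on the order of $R$, the $\ell-a$ new values must still appear in one prescribed relative order. This gives
\[
\operatorname{Var}(\textsc{Count}_\ell)\leq\binom{k}{\ell}^2\sum_{a=1}^{\ell}\binom{N}{\ell}\binom{\ell}{a}\binom{N-\ell}{\ell-a}\frac{1}{\ell!(\ell-a)!}.
\]
Here is the one point requiring care: the factor $\binom{k}{\ell}^2$ from the aggregated subpatterns appears in both the variance bound and the squared mean $\binom{k}{\ell}^2\binom{N}{\ell}^2/(\ell!)^2$. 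It therefore cancels in Chebyshev's inequality. What remains is the same expression as in \Cref{prop:random-permutation-P-copies} with $k$ replaced by $\ell$, namely at most $\frac{1}{\ep^2}\sum_{a\ge1}(\ell^3/(N-\ell))^a\le \frac{2\ell^3}{(N-\ell)\ep^2}$ when $N-\ell\ge 2\ell^3$. I expect this cancellation to be the main (mild) obstacle. One must check that the crude bound on $\mathbb{E}[w(R)w(R')]$ loses no more than the $\binom{k}{\ell}^2$ factor, which it does not, since there are exactly $\binom{k}{\ell}^2$ pairs $(I,I')$.

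For the simultaneous statement, the assumption $N-k\ge2k^3$ gives $N-\ell\ge 2\ell^3$ for every $\ell\le k$. A union bound then gives total failure probability at most $\sum_{\ell\le k}\frac{2\ell^3}{(N-\ell)\ep^2}\le \frac{2k\cdot k^3}{(N-k)\ep^2}$. Since $N-k\ge N/2$ in this regime, this is $O(k^4/(N\ep^2))$.
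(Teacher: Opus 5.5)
Your proposal is correct and follows essentially the same route as the paper: the same second-moment computation with the per-pair bound $\frac{1}{\ell!(\ell-a)!}$, the $\binom{k}{\ell}^2$ factor cancelling against the squared mean in Chebyshev's inequality, and a union bound over $\ell\in[k]$. Your explicit checks that $N-k\ge 2k^3$ implies $N-\ell\ge 2\ell^3$ for every $\ell\le k$, and that $N-k\ge N/2$, fill in small details the paper leaves implicit.
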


\begin{proof}
    Let $X_\ell:=\textsc{Count}_{\ell}(\tau,\sigma)$. By linearity of expectation,
    $$\mathbb{E}[X_\ell]=\binom{k}{\ell}\binom{N}{\ell}\frac{1}{\ell!}.$$
    Consider two sets $A,B\in\binom{[N]}{\ell}$ with $|A\cap B|=a$ and two sets $I,J\in\binom{[k]}{\ell}$. The probability that $\sigma|_A$ forms $\tau|_I$ is $1/\ell!$. Conditioned on this event, the $\ell-a$ entries of $B\setminus A$ must occur in one determined relative order for $\sigma|_B$ to form a copy of $\tau|_J$. This order occurs with probability $1/(\ell-a)!$, and the additional constraints involving the shared entries can only decrease the probability. Therefore,
    $$\mathbb{P}\left(\sigma|_A\equiv\tau|_I ~~ \land ~~ \sigma|_B\equiv\tau|_J\right)\leq\frac{1}{\ell!(\ell-a)!}.$$
    Since indicators on disjoint sets are independent, we get
    $$Var(X_\ell)\leq\binom{k}{\ell}^2\binom{N}{\ell}\sum_{a=1}^{\ell}\binom{\ell}{a}\binom{N-\ell}{\ell-a}\frac{1}{\ell!(\ell-a)!}.$$
    By Chebyshev's inequality,
    \begin{align*}
        \mathbb{P}\left(\left|X_\ell-\mathbb{E}[X_\ell]\right|>\ep\mathbb{E}[X_\ell]\right)
        &\leq\frac{1}{\ep^2}\sum_{a=1}^{\ell}\binom{\ell}{a}\frac{\binom{N-\ell}{\ell-a}}{\binom{N}{\ell}}\frac{\ell!}{(\ell-a)!}\\
        &\leq\frac{1}{\ep^2}\sum_{a=1}^{\ell}\left(\frac{\ell^3}{N-\ell}\right)^a\\
        &\leq\frac{2\ell^3}{(N-\ell)\ep^2}.
    \end{align*}
    The proposition then follows by a union bound over $\ell\in[k]$.
\end{proof}

We next prove the analogous aggregate concentration bound for patterns in sequences.

\begin{proposition}\label{prop:min-weight-sequence-aggregate-copies}
    Let $p$ be a distribution over $[m]$ with minimum weight $p_{\min}$, let $\sigma\sim p^{\otimes N}$, let $P$ be a length-$k$ pattern, and fix $\ell\in[k]$. Let
    $$W_\ell:=\sum_{\substack{I\subseteq[k]\\|I|=\ell}}\prod_{i\in I}p_{P_i}.$$
    If $N-\ell\geq 2\ell^2/p_{\min}$, then
    $$\mathbb{P}\left(\textsc{Count}_{\ell}(P,\sigma)\notin(1\pm\ep)\binom{N}{\ell}W_\ell\right)\leq\frac{2\ell^2}{(N-\ell)p_{\min}\ep^2}.$$
    Thus if $N-k\geq2k^2/p_{\min}$, then with probability at least $1-O\left(\frac{k^3}{Np_{\min}\ep^2}\right)$, $\textsc{Count}_{\ell}(P,\sigma)\in(1\pm\ep)\binom{N}{\ell}W_\ell$ holds simultaneously for every $\ell\in[k]$.
\end{proposition}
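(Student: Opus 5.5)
We will follow the second-moment argument already used for \Cref{prop:random-permutation-aggregate-copies}, now in the value-based setting of \Cref{prop:min-weight-sequence-P-copies}.

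\emph{Expectation.} Write $X_\ell := \textsc{Count}_\ell(P,\sigma) = \sum_{I \in \binom{[k]}{\ell}} \sum_{A \in \binom{[N]}{\ell}} 1(\sigma|_A = P|_I)$. For fixed $A$ and $I$ the probability of a match is $\prod_{i\in I} p_{P_i}$, since the entries of $\sigma$ are independent. Linearity of expectation then gives $\mathbb{E}[X_\ell] = \binom{N}{\ell} W_\ell$.

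\emph{Variance.} Expand $\mathrm{Var}(X_\ell)$ as a sum of covariances over pairs $(A,I),(B,J)$. When $A\cap B=\emptyset$ the two indicators are independent and the covariance vanishes. When $|A\cap B| = a \geq 1$, bound the covariance by $\mathbb{P}(\sigma|_A = P|_I \wedge \sigma|_B = P|_J)$. Both events pin the value at each of the $2\ell-a$ positions of $A\cup B$. At each shared position the two requirements either conflict, giving probability $0$, or coincide. Hence the joint probability is at most $\prod_{i\in I}p_{P_i}\prod_{j\in J}p_{P_j}\cdot p_{\min}^{-a}$.

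The key difference from the single-pattern case is that $I$ and $J$ range independently. Summing over $I$ and $J$ therefore factors as $W_\ell^2$, which matches the square of the mean. The number of pairs $(A,B)$ with $|A\cap B|=a$ is $\binom{N}{\ell}\binom{\ell}{a}\binom{N-\ell}{\ell-a}$, so
\[
\mathrm{Var}(X_\ell) \leq W_\ell^2 \binom{N}{\ell} \sum_{a=1}^{\ell} \binom{\ell}{a}\binom{N-\ell}{\ell-a} p_{\min}^{-a}.
\]

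\emph{Chebyshev.} Dividing by $\ep^2\mathbb{E}[X_\ell]^2$, the factor $W_\ell^2$ cancels. Using $\binom{\ell}{a} \le \ell^a$ and $\binom{N-\ell}{\ell-a}/\binom{N}{\ell} \le (\ell/(N-\ell))^a$, the failure probability is at most
\[
\frac{1}{\ep^2}\sum_{a\ge1} \left(\frac{\ell^2}{(N-\ell)p_{\min}}\right)^a.
\]
The hypothesis $N-\ell \ge 2\ell^2/p_{\min}$ makes the ratio at most $1/2$, so the geometric series is at most twice its first term. This gives the bound $\frac{2\ell^2}{(N-\ell)p_{\min}\ep^2}$.

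\emph{Simultaneous statement.} If $N-k \ge 2k^2/p_{\min}$, the hypothesis holds for every $\ell\le k$. A union bound over $\ell\in[k]$, together with $N-\ell \ge N-k = \Omega(N)$ (since $2k^2/p_{\min} \ge k$, so $N \geq 2k$), gives total failure probability $O(k\cdot k^2/(Np_{\min}\ep^2)) = O(k^3/(Np_{\min}\ep^2))$.

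The one step that needs care is the joint-probability bound. We must check that any overlap configuration, including overlaps that shift positions within $P$, only removes independent factors of size at least $p_{\min}$ per shared position. This holds because each shared position contributes at most one factor of $p_{\min}^{-1}$ relative to the product of the two separate probabilities.
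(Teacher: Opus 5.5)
Your proposal is correct and follows the paper's proof essentially step for step. It uses the same expectation computation, the same $p_{\min}^{-a}$ bound on the joint probability for overlapping index sets, the same variance bound with $W_\ell^2$ cancelling under Chebyshev, the same geometric-series estimate under $N-\ell\geq 2\ell^2/p_{\min}$, and a union bound over $\ell\in[k]$; your added observation that $N\geq 2k$ (so $N-k=\Omega(N)$) fills in the one detail the paper leaves implicit in that final union bound.
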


\begin{proof}
    Let $X_\ell:=\textsc{Count}_{\ell}(P,\sigma)$. By linearity of expectation,
    $$\mathbb{E}[X_\ell]=\binom{N}{\ell}W_\ell.$$
    Consider two sets $A,B\in\binom{[N]}{\ell}$ with $|A\cap B|=a$ and two sets $I,J\in\binom{[k]}{\ell}$. If the requirements $\sigma|_A=P|_I$ and $\sigma|_B=P|_J$ disagree on a shared position, their joint probability is zero. Otherwise, relative to the product of the two individual probabilities, the joint probability counts the probability of each shared symbol only once instead of twice. Since every symbol has probability at least $p_{\min}$,
    $$\mathbb{P}\left(\sigma|_A=P|_I ~~ \land ~~ \sigma|_B=P|_J\right)\leq\frac{1}{p_{\min}^a}\left(\prod_{i\in I}p_{P_i}\right)\left(\prod_{j\in J}p_{P_j}\right).$$
    Since indicators on disjoint sets are independent, we have:
    $$Var(X_\ell)\leq\binom{N}{\ell}W_\ell^2\sum_{a=1}^{\ell}\binom{\ell}{a}\binom{N-\ell}{\ell-a}\frac{1}{p_{\min}^a}.$$
    By Chebyshev's inequality,
    \begin{align*}
        \mathbb{P}\left(\left|X_\ell-\mathbb{E}[X_\ell]\right|>\ep\mathbb{E}[X_\ell]\right)
        &\leq\frac{1}{\ep^2}\sum_{a=1}^{\ell}\binom{\ell}{a}\frac{\binom{N-\ell}{\ell-a}}{\binom{N}{\ell}}\frac{1}{p_{\min}^a}\\
        &\leq\frac{1}{\ep^2}\sum_{a=1}^{\ell}\left(\frac{\ell^2}{(N-\ell)p_{\min}}\right)^a\\
        &\leq\frac{2\ell^2}{(N-\ell)p_{\min}\ep^2}.
    \end{align*}
    The proposition then follows by a union bound over $\ell\in[k]$.
\end{proof}

\end{document}

%% file: Visuals/lowerbound-worstcase-visual.tex
\tikzset{every picture/.style={line width=0.75pt}} 

\begin{tikzpicture}[x=0.75pt,y=0.75pt,yscale=-1,xscale=1]

\draw   (101,495.5) -- (185.5,495.5) -- (185.5,526) -- (101,526) -- cycle ;
\draw  [dash pattern={on 0.84pt off 2.51pt}] (101,495.5) -- (111.5,495.5) -- (111.5,526) -- (101,526) -- cycle ;
\draw  [dash pattern={on 0.84pt off 2.51pt}] (143,495.5) -- (153.5,495.5) -- (153.5,526) -- (143,526) -- cycle ;
\draw  [dash pattern={on 0.84pt off 2.51pt}] (132.5,495.5) -- (143,495.5) -- (143,526) -- (132.5,526) -- cycle ;
\draw  [dash pattern={on 0.84pt off 2.51pt}] (122,495.5) -- (132.5,495.5) -- (132.5,526) -- (122,526) -- cycle ;
\draw  [dash pattern={on 0.84pt off 2.51pt}] (111.5,495.5) -- (122,495.5) -- (122,526) -- (111.5,526) -- cycle ;
\draw  [dash pattern={on 0.84pt off 2.51pt}] (153.5,495.5) -- (164,495.5) -- (164,526) -- (153.5,526) -- cycle ;
\draw  [dash pattern={on 0.84pt off 2.51pt}] (164,495.5) -- (174.5,495.5) -- (174.5,526) -- (164,526) -- cycle ;
\draw  [dash pattern={on 0.84pt off 2.51pt}] (174.5,495.5) -- (185,495.5) -- (185,526) -- (174.5,526) -- cycle ;

\draw    (90.5,8) -- (91.5,477.5) ;
\draw    (91.5,477.5) -- (554.5,477) ;
\draw   (439,495.5) -- (523.5,495.5) -- (523.5,526) -- (439,526) -- cycle ;
\draw  [dash pattern={on 0.84pt off 2.51pt}] (439,495.5) -- (449.5,495.5) -- (449.5,526) -- (439,526) -- cycle ;
\draw  [dash pattern={on 0.84pt off 2.51pt}] (481,495.5) -- (491.5,495.5) -- (491.5,526) -- (481,526) -- cycle ;
\draw  [dash pattern={on 0.84pt off 2.51pt}] (470.5,495.5) -- (481,495.5) -- (481,526) -- (470.5,526) -- cycle ;
\draw  [dash pattern={on 0.84pt off 2.51pt}] (460,495.5) -- (470.5,495.5) -- (470.5,526) -- (460,526) -- cycle ;
\draw  [dash pattern={on 0.84pt off 2.51pt}] (449.5,495.5) -- (460,495.5) -- (460,526) -- (449.5,526) -- cycle ;
\draw  [dash pattern={on 0.84pt off 2.51pt}] (491.5,495.5) -- (502,495.5) -- (502,526) -- (491.5,526) -- cycle ;
\draw  [dash pattern={on 0.84pt off 2.51pt}] (502,495.5) -- (512.5,495.5) -- (512.5,526) -- (502,526) -- cycle ;
\draw  [dash pattern={on 0.84pt off 2.51pt}] (512.5,495.5) -- (523,495.5) -- (523,526) -- (512.5,526) -- cycle ;

\draw   (354,495.5) -- (438.5,495.5) -- (438.5,526) -- (354,526) -- cycle ;
\draw  [dash pattern={on 0.84pt off 2.51pt}] (354,495.5) -- (364.5,495.5) -- (364.5,526) -- (354,526) -- cycle ;
\draw  [dash pattern={on 0.84pt off 2.51pt}] (396,495.5) -- (406.5,495.5) -- (406.5,526) -- (396,526) -- cycle ;
\draw  [dash pattern={on 0.84pt off 2.51pt}] (385.5,495.5) -- (396,495.5) -- (396,526) -- (385.5,526) -- cycle ;
\draw  [dash pattern={on 0.84pt off 2.51pt}] (375,495.5) -- (385.5,495.5) -- (385.5,526) -- (375,526) -- cycle ;
\draw  [dash pattern={on 0.84pt off 2.51pt}] (364.5,495.5) -- (375,495.5) -- (375,526) -- (364.5,526) -- cycle ;
\draw  [dash pattern={on 0.84pt off 2.51pt}] (406.5,495.5) -- (417,495.5) -- (417,526) -- (406.5,526) -- cycle ;
\draw  [dash pattern={on 0.84pt off 2.51pt}] (417,495.5) -- (427.5,495.5) -- (427.5,526) -- (417,526) -- cycle ;
\draw  [dash pattern={on 0.84pt off 2.51pt}] (427.5,495.5) -- (438,495.5) -- (438,526) -- (427.5,526) -- cycle ;

\draw   (270,495.5) -- (354.5,495.5) -- (354.5,526) -- (270,526) -- cycle ;
\draw  [dash pattern={on 0.84pt off 2.51pt}] (270,495.5) -- (280.5,495.5) -- (280.5,526) -- (270,526) -- cycle ;
\draw  [dash pattern={on 0.84pt off 2.51pt}] (312,495.5) -- (322.5,495.5) -- (322.5,526) -- (312,526) -- cycle ;
\draw  [dash pattern={on 0.84pt off 2.51pt}] (301.5,495.5) -- (312,495.5) -- (312,526) -- (301.5,526) -- cycle ;
\draw  [dash pattern={on 0.84pt off 2.51pt}] (291,495.5) -- (301.5,495.5) -- (301.5,526) -- (291,526) -- cycle ;
\draw  [dash pattern={on 0.84pt off 2.51pt}] (280.5,495.5) -- (291,495.5) -- (291,526) -- (280.5,526) -- cycle ;
\draw  [dash pattern={on 0.84pt off 2.51pt}] (322.5,495.5) -- (333,495.5) -- (333,526) -- (322.5,526) -- cycle ;
\draw  [dash pattern={on 0.84pt off 2.51pt}] (333,495.5) -- (343.5,495.5) -- (343.5,526) -- (333,526) -- cycle ;
\draw  [dash pattern={on 0.84pt off 2.51pt}] (343.5,495.5) -- (354,495.5) -- (354,526) -- (343.5,526) -- cycle ;

\draw   (186,495.5) -- (270.5,495.5) -- (270.5,526) -- (186,526) -- cycle ;
\draw  [dash pattern={on 0.84pt off 2.51pt}] (186,495.5) -- (196.5,495.5) -- (196.5,526) -- (186,526) -- cycle ;
\draw  [dash pattern={on 0.84pt off 2.51pt}] (228,495.5) -- (238.5,495.5) -- (238.5,526) -- (228,526) -- cycle ;
\draw  [dash pattern={on 0.84pt off 2.51pt}] (217.5,495.5) -- (228,495.5) -- (228,526) -- (217.5,526) -- cycle ;
\draw  [dash pattern={on 0.84pt off 2.51pt}] (207,495.5) -- (217.5,495.5) -- (217.5,526) -- (207,526) -- cycle ;
\draw  [dash pattern={on 0.84pt off 2.51pt}] (196.5,495.5) -- (207,495.5) -- (207,526) -- (196.5,526) -- cycle ;
\draw  [dash pattern={on 0.84pt off 2.51pt}] (238.5,495.5) -- (249,495.5) -- (249,526) -- (238.5,526) -- cycle ;
\draw  [dash pattern={on 0.84pt off 2.51pt}] (249,495.5) -- (259.5,495.5) -- (259.5,526) -- (249,526) -- cycle ;
\draw  [dash pattern={on 0.84pt off 2.51pt}] (259.5,495.5) -- (270,495.5) -- (270,526) -- (259.5,526) -- cycle ;

\draw  [color={rgb, 255:red, 155; green, 155; blue, 155 }  ,draw opacity=1 ][fill={rgb, 255:red, 155; green, 155; blue, 155 }  ,fill opacity=1 ] (95.61,393.16) .. controls (95.61,392.14) and (96.44,391.32) .. (97.47,391.32) .. controls (98.49,391.32) and (99.32,392.14) .. (99.32,393.16) .. controls (99.32,394.18) and (98.49,395.01) .. (97.47,395.01) .. controls (96.44,395.01) and (95.61,394.18) .. (95.61,393.16) -- cycle ;
\draw  [color={rgb, 255:red, 155; green, 155; blue, 155 }  ,draw opacity=1 ][fill={rgb, 255:red, 155; green, 155; blue, 155 }  ,fill opacity=1 ] (105.61,397.16) .. controls (105.61,396.14) and (106.44,395.32) .. (107.47,395.32) .. controls (108.49,395.32) and (109.32,396.14) .. (109.32,397.16) .. controls (109.32,398.18) and (108.49,399.01) .. (107.47,399.01) .. controls (106.44,399.01) and (105.61,398.18) .. (105.61,397.16) -- cycle ;
\draw  [color={rgb, 255:red, 67; green, 68; blue, 242 }  ,draw opacity=1 ][fill={rgb, 255:red, 67; green, 68; blue, 242 }  ,fill opacity=1 ] (116.61,401.16) .. controls (116.61,400.14) and (117.44,399.32) .. (118.47,399.32) .. controls (119.49,399.32) and (120.32,400.14) .. (120.32,401.16) .. controls (120.32,402.18) and (119.49,403.01) .. (118.47,403.01) .. controls (117.44,403.01) and (116.61,402.18) .. (116.61,401.16) -- cycle ;
\draw  [color={rgb, 255:red, 67; green, 68; blue, 242 }  ,draw opacity=1 ][fill={rgb, 255:red, 67; green, 68; blue, 242 }  ,fill opacity=1 ] (126.61,404.16) .. controls (126.61,403.14) and (127.44,402.32) .. (128.47,402.32) .. controls (129.49,402.32) and (130.32,403.14) .. (130.32,404.16) .. controls (130.32,405.18) and (129.49,406.01) .. (128.47,406.01) .. controls (127.44,406.01) and (126.61,405.18) .. (126.61,404.16) -- cycle ;
\draw  [fill={rgb, 255:red, 0; green, 0; blue, 0 }  ,fill opacity=1 ] (136.61,408.16) .. controls (136.61,407.14) and (137.44,406.32) .. (138.47,406.32) .. controls (139.49,406.32) and (140.32,407.14) .. (140.32,408.16) .. controls (140.32,409.18) and (139.49,410.01) .. (138.47,410.01) .. controls (137.44,410.01) and (136.61,409.18) .. (136.61,408.16) -- cycle ;
\draw  [fill={rgb, 255:red, 0; green, 0; blue, 0 }  ,fill opacity=1 ] (146.61,412.16) .. controls (146.61,411.14) and (147.44,410.32) .. (148.47,410.32) .. controls (149.49,410.32) and (150.32,411.14) .. (150.32,412.16) .. controls (150.32,413.18) and (149.49,414.01) .. (148.47,414.01) .. controls (147.44,414.01) and (146.61,413.18) .. (146.61,412.16) -- cycle ;
\draw  [fill={rgb, 255:red, 0; green, 0; blue, 0 }  ,fill opacity=1 ] (156.61,416.16) .. controls (156.61,415.14) and (157.44,414.32) .. (158.47,414.32) .. controls (159.49,414.32) and (160.32,415.14) .. (160.32,416.16) .. controls (160.32,417.18) and (159.49,418.01) .. (158.47,418.01) .. controls (157.44,418.01) and (156.61,417.18) .. (156.61,416.16) -- cycle ;
\draw  [fill={rgb, 255:red, 0; green, 0; blue, 0 }  ,fill opacity=1 ] (166.61,420.32) .. controls (166.61,419.3) and (167.44,418.47) .. (168.47,418.47) .. controls (169.49,418.47) and (170.32,419.3) .. (170.32,420.32) .. controls (170.32,421.34) and (169.49,422.16) .. (168.47,422.16) .. controls (167.44,422.16) and (166.61,421.34) .. (166.61,420.32) -- cycle ;
\draw  [color={rgb, 255:red, 67; green, 68; blue, 242 }  ,draw opacity=1 ][fill={rgb, 255:red, 67; green, 68; blue, 242 }  ,fill opacity=1 ] (181.61,298.16) .. controls (181.61,297.14) and (182.44,296.32) .. (183.47,296.32) .. controls (184.49,296.32) and (185.32,297.14) .. (185.32,298.16) .. controls (185.32,299.18) and (184.49,300.01) .. (183.47,300.01) .. controls (182.44,300.01) and (181.61,299.18) .. (181.61,298.16) -- cycle ;
\draw  [color={rgb, 255:red, 67; green, 68; blue, 242 }  ,draw opacity=1 ][fill={rgb, 255:red, 67; green, 68; blue, 242 }  ,fill opacity=1 ] (191.61,302.16) .. controls (191.61,301.14) and (192.44,300.32) .. (193.47,300.32) .. controls (194.49,300.32) and (195.32,301.14) .. (195.32,302.16) .. controls (195.32,303.18) and (194.49,304.01) .. (193.47,304.01) .. controls (192.44,304.01) and (191.61,303.18) .. (191.61,302.16) -- cycle ;
\draw  [color={rgb, 255:red, 155; green, 155; blue, 155 }  ,draw opacity=1 ][fill={rgb, 255:red, 155; green, 155; blue, 155 }  ,fill opacity=1 ] (202.61,306.16) .. controls (202.61,305.14) and (203.44,304.32) .. (204.47,304.32) .. controls (205.49,304.32) and (206.32,305.14) .. (206.32,306.16) .. controls (206.32,307.18) and (205.49,308.01) .. (204.47,308.01) .. controls (203.44,308.01) and (202.61,307.18) .. (202.61,306.16) -- cycle ;
\draw  [color={rgb, 255:red, 67; green, 68; blue, 242 }  ,draw opacity=1 ][fill={rgb, 255:red, 67; green, 68; blue, 242 }  ,fill opacity=1 ] (212.61,309.16) .. controls (212.61,308.14) and (213.44,307.32) .. (214.47,307.32) .. controls (215.49,307.32) and (216.32,308.14) .. (216.32,309.16) .. controls (216.32,310.18) and (215.49,311.01) .. (214.47,311.01) .. controls (213.44,311.01) and (212.61,310.18) .. (212.61,309.16) -- cycle ;
\draw  [fill={rgb, 255:red, 0; green, 0; blue, 0 }  ,fill opacity=1 ] (222.61,313.16) .. controls (222.61,312.14) and (223.44,311.32) .. (224.47,311.32) .. controls (225.49,311.32) and (226.32,312.14) .. (226.32,313.16) .. controls (226.32,314.18) and (225.49,315.01) .. (224.47,315.01) .. controls (223.44,315.01) and (222.61,314.18) .. (222.61,313.16) -- cycle ;
\draw  [fill={rgb, 255:red, 0; green, 0; blue, 0 }  ,fill opacity=1 ] (232.61,317.16) .. controls (232.61,316.14) and (233.44,315.32) .. (234.47,315.32) .. controls (235.49,315.32) and (236.32,316.14) .. (236.32,317.16) .. controls (236.32,318.18) and (235.49,319.01) .. (234.47,319.01) .. controls (233.44,319.01) and (232.61,318.18) .. (232.61,317.16) -- cycle ;
\draw  [fill={rgb, 255:red, 0; green, 0; blue, 0 }  ,fill opacity=1 ] (242.61,321.16) .. controls (242.61,320.14) and (243.44,319.32) .. (244.47,319.32) .. controls (245.49,319.32) and (246.32,320.14) .. (246.32,321.16) .. controls (246.32,322.18) and (245.49,323.01) .. (244.47,323.01) .. controls (243.44,323.01) and (242.61,322.18) .. (242.61,321.16) -- cycle ;
\draw  [fill={rgb, 255:red, 0; green, 0; blue, 0 }  ,fill opacity=1 ] (252.61,325.32) .. controls (252.61,324.3) and (253.44,323.47) .. (254.47,323.47) .. controls (255.49,323.47) and (256.32,324.3) .. (256.32,325.32) .. controls (256.32,326.34) and (255.49,327.16) .. (254.47,327.16) .. controls (253.44,327.16) and (252.61,326.34) .. (252.61,325.32) -- cycle ;
\draw  [color={rgb, 255:red, 155; green, 155; blue, 155 }  ,draw opacity=1 ][fill={rgb, 255:red, 155; green, 155; blue, 155 }  ,fill opacity=1 ] (264.61,197.32) .. controls (264.61,196.3) and (265.44,195.47) .. (266.47,195.47) .. controls (267.49,195.47) and (268.32,196.3) .. (268.32,197.32) .. controls (268.32,198.34) and (267.49,199.16) .. (266.47,199.16) .. controls (265.44,199.16) and (264.61,198.34) .. (264.61,197.32) -- cycle ;
\draw  [color={rgb, 255:red, 67; green, 68; blue, 242 }  ,draw opacity=1 ][fill={rgb, 255:red, 67; green, 68; blue, 242 }  ,fill opacity=1 ] (274.61,203.16) .. controls (274.61,202.14) and (275.44,201.32) .. (276.47,201.32) .. controls (277.49,201.32) and (278.32,202.14) .. (278.32,203.16) .. controls (278.32,204.18) and (277.49,205.01) .. (276.47,205.01) .. controls (275.44,205.01) and (274.61,204.18) .. (274.61,203.16) -- cycle ;
\draw  [color={rgb, 255:red, 67; green, 68; blue, 242 }  ,draw opacity=1 ][fill={rgb, 255:red, 67; green, 68; blue, 242 }  ,fill opacity=1 ] (285.61,207.16) .. controls (285.61,206.14) and (286.44,205.32) .. (287.47,205.32) .. controls (288.49,205.32) and (289.32,206.14) .. (289.32,207.16) .. controls (289.32,208.18) and (288.49,209.01) .. (287.47,209.01) .. controls (286.44,209.01) and (285.61,208.18) .. (285.61,207.16) -- cycle ;
\draw  [color={rgb, 255:red, 67; green, 68; blue, 242 }  ,draw opacity=1 ][fill={rgb, 255:red, 67; green, 68; blue, 242 }  ,fill opacity=1 ] (295.61,210.16) .. controls (295.61,209.14) and (296.44,208.32) .. (297.47,208.32) .. controls (298.49,208.32) and (299.32,209.14) .. (299.32,210.16) .. controls (299.32,211.18) and (298.49,212.01) .. (297.47,212.01) .. controls (296.44,212.01) and (295.61,211.18) .. (295.61,210.16) -- cycle ;
\draw  [fill={rgb, 255:red, 0; green, 0; blue, 0 }  ,fill opacity=1 ] (305.61,214.16) .. controls (305.61,213.14) and (306.44,212.32) .. (307.47,212.32) .. controls (308.49,212.32) and (309.32,213.14) .. (309.32,214.16) .. controls (309.32,215.18) and (308.49,216.01) .. (307.47,216.01) .. controls (306.44,216.01) and (305.61,215.18) .. (305.61,214.16) -- cycle ;
\draw  [fill={rgb, 255:red, 0; green, 0; blue, 0 }  ,fill opacity=1 ] (315.61,218.16) .. controls (315.61,217.14) and (316.44,216.32) .. (317.47,216.32) .. controls (318.49,216.32) and (319.32,217.14) .. (319.32,218.16) .. controls (319.32,219.18) and (318.49,220.01) .. (317.47,220.01) .. controls (316.44,220.01) and (315.61,219.18) .. (315.61,218.16) -- cycle ;
\draw  [fill={rgb, 255:red, 0; green, 0; blue, 0 }  ,fill opacity=1 ] (325.61,222.16) .. controls (325.61,221.14) and (326.44,220.32) .. (327.47,220.32) .. controls (328.49,220.32) and (329.32,221.14) .. (329.32,222.16) .. controls (329.32,223.18) and (328.49,224.01) .. (327.47,224.01) .. controls (326.44,224.01) and (325.61,223.18) .. (325.61,222.16) -- cycle ;
\draw  [fill={rgb, 255:red, 0; green, 0; blue, 0 }  ,fill opacity=1 ] (335.61,226.32) .. controls (335.61,225.3) and (336.44,224.47) .. (337.47,224.47) .. controls (338.49,224.47) and (339.32,225.3) .. (339.32,226.32) .. controls (339.32,227.34) and (338.49,228.16) .. (337.47,228.16) .. controls (336.44,228.16) and (335.61,227.34) .. (335.61,226.32) -- cycle ;
\draw  [color={rgb, 255:red, 208; green, 2; blue, 27 }  ,draw opacity=1 ][fill={rgb, 255:red, 208; green, 2; blue, 27 }  ,fill opacity=1 ] (202.61,358.16) .. controls (202.61,357.14) and (203.44,356.32) .. (204.47,356.32) .. controls (205.49,356.32) and (206.32,357.14) .. (206.32,358.16) .. controls (206.32,359.18) and (205.49,360.01) .. (204.47,360.01) .. controls (203.44,360.01) and (202.61,359.18) .. (202.61,358.16) -- cycle ;
\draw  [color={rgb, 255:red, 208; green, 2; blue, 27 }  ,draw opacity=1 ][fill={rgb, 255:red, 208; green, 2; blue, 27 }  ,fill opacity=1 ] (264.76,254.01) .. controls (264.76,252.99) and (265.59,252.16) .. (266.61,252.16) .. controls (267.64,252.16) and (268.47,252.99) .. (268.47,254.01) .. controls (268.47,255.03) and (267.64,255.85) .. (266.61,255.85) .. controls (265.59,255.85) and (264.76,255.03) .. (264.76,254.01) -- cycle ;
\draw [color={rgb, 255:red, 155; green, 155; blue, 155 }  ,draw opacity=1 ]   (204.47,308.01) -- (204.47,354.32) ;
\draw [shift={(204.47,356.32)}, rotate = 270] [color={rgb, 255:red, 155; green, 155; blue, 155 }  ,draw opacity=1 ][line width=0.75]    (10.93,-4.9) .. controls (6.95,-2.3) and (3.31,-0.67) .. (0,0) .. controls (3.31,0.67) and (6.95,2.3) .. (10.93,4.9)   ;
\draw [color={rgb, 255:red, 155; green, 155; blue, 155 }  ,draw opacity=1 ]   (266.47,199.16) -- (266.61,250.16) ;
\draw [shift={(266.61,252.16)}, rotate = 269.84] [color={rgb, 255:red, 155; green, 155; blue, 155 }  ,draw opacity=1 ][line width=0.75]    (10.93,-4.9) .. controls (6.95,-2.3) and (3.31,-0.67) .. (0,0) .. controls (3.31,0.67) and (6.95,2.3) .. (10.93,4.9)   ;
\draw  [dash pattern={on 4.5pt off 4.5pt}]  (92.5,55.5) -- (547.5,56) ;
\draw  [color={rgb, 255:red, 67; green, 68; blue, 242 }  ,draw opacity=1 ][fill={rgb, 255:red, 67; green, 68; blue, 242 }  ,fill opacity=1 ] (348.61,114.16) .. controls (348.61,113.14) and (349.44,112.32) .. (350.47,112.32) .. controls (351.49,112.32) and (352.32,113.14) .. (352.32,114.16) .. controls (352.32,115.18) and (351.49,116.01) .. (350.47,116.01) .. controls (349.44,116.01) and (348.61,115.18) .. (348.61,114.16) -- cycle ;
\draw  [color={rgb, 255:red, 67; green, 68; blue, 242 }  ,draw opacity=1 ][fill={rgb, 255:red, 67; green, 68; blue, 242 }  ,fill opacity=1 ] (358.61,118.16) .. controls (358.61,117.14) and (359.44,116.32) .. (360.47,116.32) .. controls (361.49,116.32) and (362.32,117.14) .. (362.32,118.16) .. controls (362.32,119.18) and (361.49,120.01) .. (360.47,120.01) .. controls (359.44,120.01) and (358.61,119.18) .. (358.61,118.16) -- cycle ;
\draw  [color={rgb, 255:red, 67; green, 68; blue, 242 }  ,draw opacity=1 ][fill={rgb, 255:red, 67; green, 68; blue, 242 }  ,fill opacity=1 ] (369.61,122.16) .. controls (369.61,121.14) and (370.44,120.32) .. (371.47,120.32) .. controls (372.49,120.32) and (373.32,121.14) .. (373.32,122.16) .. controls (373.32,123.18) and (372.49,124.01) .. (371.47,124.01) .. controls (370.44,124.01) and (369.61,123.18) .. (369.61,122.16) -- cycle ;
\draw  [color={rgb, 255:red, 67; green, 68; blue, 242 }  ,draw opacity=1 ][fill={rgb, 255:red, 67; green, 68; blue, 242 }  ,fill opacity=1 ] (379.61,125.16) .. controls (379.61,124.14) and (380.44,123.32) .. (381.47,123.32) .. controls (382.49,123.32) and (383.32,124.14) .. (383.32,125.16) .. controls (383.32,126.18) and (382.49,127.01) .. (381.47,127.01) .. controls (380.44,127.01) and (379.61,126.18) .. (379.61,125.16) -- cycle ;
\draw  [fill={rgb, 255:red, 0; green, 0; blue, 0 }  ,fill opacity=1 ] (389.61,129.16) .. controls (389.61,128.14) and (390.44,127.32) .. (391.47,127.32) .. controls (392.49,127.32) and (393.32,128.14) .. (393.32,129.16) .. controls (393.32,130.18) and (392.49,131.01) .. (391.47,131.01) .. controls (390.44,131.01) and (389.61,130.18) .. (389.61,129.16) -- cycle ;
\draw  [fill={rgb, 255:red, 0; green, 0; blue, 0 }  ,fill opacity=1 ] (399.61,133.16) .. controls (399.61,132.14) and (400.44,131.32) .. (401.47,131.32) .. controls (402.49,131.32) and (403.32,132.14) .. (403.32,133.16) .. controls (403.32,134.18) and (402.49,135.01) .. (401.47,135.01) .. controls (400.44,135.01) and (399.61,134.18) .. (399.61,133.16) -- cycle ;
\draw  [fill={rgb, 255:red, 0; green, 0; blue, 0 }  ,fill opacity=1 ] (409.61,137.16) .. controls (409.61,136.14) and (410.44,135.32) .. (411.47,135.32) .. controls (412.49,135.32) and (413.32,136.14) .. (413.32,137.16) .. controls (413.32,138.18) and (412.49,139.01) .. (411.47,139.01) .. controls (410.44,139.01) and (409.61,138.18) .. (409.61,137.16) -- cycle ;
\draw  [fill={rgb, 255:red, 0; green, 0; blue, 0 }  ,fill opacity=1 ] (419.61,141.32) .. controls (419.61,140.3) and (420.44,139.47) .. (421.47,139.47) .. controls (422.49,139.47) and (423.32,140.3) .. (423.32,141.32) .. controls (423.32,142.34) and (422.49,143.16) .. (421.47,143.16) .. controls (420.44,143.16) and (419.61,142.34) .. (419.61,141.32) -- cycle ;

\draw  [color={rgb, 255:red, 67; green, 68; blue, 242 }  ,draw opacity=1 ][fill={rgb, 255:red, 67; green, 68; blue, 242 }  ,fill opacity=1 ] (434.61,14.16) .. controls (434.61,13.14) and (435.44,12.32) .. (436.47,12.32) .. controls (437.49,12.32) and (438.32,13.14) .. (438.32,14.16) .. controls (438.32,15.18) and (437.49,16.01) .. (436.47,16.01) .. controls (435.44,16.01) and (434.61,15.18) .. (434.61,14.16) -- cycle ;
\draw  [color={rgb, 255:red, 67; green, 68; blue, 242 }  ,draw opacity=1 ][fill={rgb, 255:red, 67; green, 68; blue, 242 }  ,fill opacity=1 ] (444.61,18.16) .. controls (444.61,17.14) and (445.44,16.32) .. (446.47,16.32) .. controls (447.49,16.32) and (448.32,17.14) .. (448.32,18.16) .. controls (448.32,19.18) and (447.49,20.01) .. (446.47,20.01) .. controls (445.44,20.01) and (444.61,19.18) .. (444.61,18.16) -- cycle ;
\draw  [color={rgb, 255:red, 155; green, 155; blue, 155 }  ,draw opacity=1 ][fill={rgb, 255:red, 155; green, 155; blue, 155 }  ,fill opacity=1 ] (455.61,22.16) .. controls (455.61,21.14) and (456.44,20.32) .. (457.47,20.32) .. controls (458.49,20.32) and (459.32,21.14) .. (459.32,22.16) .. controls (459.32,23.18) and (458.49,24.01) .. (457.47,24.01) .. controls (456.44,24.01) and (455.61,23.18) .. (455.61,22.16) -- cycle ;
\draw  [color={rgb, 255:red, 155; green, 155; blue, 155 }  ,draw opacity=1 ][fill={rgb, 255:red, 155; green, 155; blue, 155 }  ,fill opacity=1 ] (465.61,25.16) .. controls (465.61,24.14) and (466.44,23.32) .. (467.47,23.32) .. controls (468.49,23.32) and (469.32,24.14) .. (469.32,25.16) .. controls (469.32,26.18) and (468.49,27.01) .. (467.47,27.01) .. controls (466.44,27.01) and (465.61,26.18) .. (465.61,25.16) -- cycle ;
\draw  [fill={rgb, 255:red, 0; green, 0; blue, 0 }  ,fill opacity=1 ] (475.61,29.16) .. controls (475.61,28.14) and (476.44,27.32) .. (477.47,27.32) .. controls (478.49,27.32) and (479.32,28.14) .. (479.32,29.16) .. controls (479.32,30.18) and (478.49,31.01) .. (477.47,31.01) .. controls (476.44,31.01) and (475.61,30.18) .. (475.61,29.16) -- cycle ;
\draw  [fill={rgb, 255:red, 0; green, 0; blue, 0 }  ,fill opacity=1 ] (485.61,33.16) .. controls (485.61,32.14) and (486.44,31.32) .. (487.47,31.32) .. controls (488.49,31.32) and (489.32,32.14) .. (489.32,33.16) .. controls (489.32,34.18) and (488.49,35.01) .. (487.47,35.01) .. controls (486.44,35.01) and (485.61,34.18) .. (485.61,33.16) -- cycle ;
\draw  [fill={rgb, 255:red, 0; green, 0; blue, 0 }  ,fill opacity=1 ] (495.61,37.16) .. controls (495.61,36.14) and (496.44,35.32) .. (497.47,35.32) .. controls (498.49,35.32) and (499.32,36.14) .. (499.32,37.16) .. controls (499.32,38.18) and (498.49,39.01) .. (497.47,39.01) .. controls (496.44,39.01) and (495.61,38.18) .. (495.61,37.16) -- cycle ;
\draw  [fill={rgb, 255:red, 0; green, 0; blue, 0 }  ,fill opacity=1 ] (505.61,41.32) .. controls (505.61,40.3) and (506.44,39.47) .. (507.47,39.47) .. controls (508.49,39.47) and (509.32,40.3) .. (509.32,41.32) .. controls (509.32,42.34) and (508.49,43.16) .. (507.47,43.16) .. controls (506.44,43.16) and (505.61,42.34) .. (505.61,41.32) -- cycle ;
\draw [color={rgb, 255:red, 155; green, 155; blue, 155 }  ,draw opacity=1 ]   (457.47,22.16) -- (457.61,73.16) ;
\draw [shift={(457.61,75.16)}, rotate = 269.84] [color={rgb, 255:red, 155; green, 155; blue, 155 }  ,draw opacity=1 ][line width=0.75]    (10.93,-4.9) .. controls (6.95,-2.3) and (3.31,-0.67) .. (0,0) .. controls (3.31,0.67) and (6.95,2.3) .. (10.93,4.9)   ;
\draw [color={rgb, 255:red, 155; green, 155; blue, 155 }  ,draw opacity=1 ]   (467.47,25.16) -- (467.61,76.16) ;
\draw [shift={(467.61,78.16)}, rotate = 269.84] [color={rgb, 255:red, 155; green, 155; blue, 155 }  ,draw opacity=1 ][line width=0.75]    (10.93,-4.9) .. controls (6.95,-2.3) and (3.31,-0.67) .. (0,0) .. controls (3.31,0.67) and (6.95,2.3) .. (10.93,4.9)   ;
\draw  [color={rgb, 255:red, 208; green, 2; blue, 27 }  ,draw opacity=1 ][fill={rgb, 255:red, 208; green, 2; blue, 27 }  ,fill opacity=1 ] (455.76,77.01) .. controls (455.76,75.99) and (456.59,75.16) .. (457.61,75.16) .. controls (458.64,75.16) and (459.47,75.99) .. (459.47,77.01) .. controls (459.47,78.03) and (458.64,78.85) .. (457.61,78.85) .. controls (456.59,78.85) and (455.76,78.03) .. (455.76,77.01) -- cycle ;
\draw  [color={rgb, 255:red, 208; green, 2; blue, 27 }  ,draw opacity=1 ][fill={rgb, 255:red, 208; green, 2; blue, 27 }  ,fill opacity=1 ] (465.76,80.01) .. controls (465.76,78.99) and (466.59,78.16) .. (467.61,78.16) .. controls (468.64,78.16) and (469.47,78.99) .. (469.47,80.01) .. controls (469.47,81.03) and (468.64,81.85) .. (467.61,81.85) .. controls (466.59,81.85) and (465.76,81.03) .. (465.76,80.01) -- cycle ;
\draw  [dash pattern={on 4.5pt off 4.5pt}]  (92.5,101.5) -- (547.5,102) ;
\draw  [dash pattern={on 4.5pt off 4.5pt}]  (91.5,190.5) -- (546.5,191) ;
\draw  [dash pattern={on 4.5pt off 4.5pt}]  (92.5,237.5) -- (547.5,238) ;
\draw  [dash pattern={on 4.5pt off 4.5pt}]  (93.5,285.5) -- (548.5,286) ;
\draw  [dash pattern={on 4.5pt off 4.5pt}]  (91.5,334.5) -- (546.5,335) ;
\draw  [dash pattern={on 4.5pt off 4.5pt}]  (92.5,379.5) -- (547.5,380) ;
\draw  [dash pattern={on 4.5pt off 4.5pt}]  (93.5,427.5) -- (548.5,428) ;
\draw  [dash pattern={on 4.5pt off 4.5pt}]  (90.5,8) -- (545.5,8.5) ;
\draw  [dash pattern={on 4.5pt off 4.5pt}]  (94.5,147.5) -- (549.5,148) ;
\draw [color={rgb, 255:red, 155; green, 155; blue, 155 }  ,draw opacity=1 ]   (97.47,393.16) -- (97.61,444.16) ;
\draw [shift={(97.61,446.16)}, rotate = 269.84] [color={rgb, 255:red, 155; green, 155; blue, 155 }  ,draw opacity=1 ][line width=0.75]    (10.93,-4.9) .. controls (6.95,-2.3) and (3.31,-0.67) .. (0,0) .. controls (3.31,0.67) and (6.95,2.3) .. (10.93,4.9)   ;
\draw [color={rgb, 255:red, 155; green, 155; blue, 155 }  ,draw opacity=1 ]   (107.47,396.16) -- (107.61,447.16) ;
\draw [shift={(107.61,449.16)}, rotate = 269.84] [color={rgb, 255:red, 155; green, 155; blue, 155 }  ,draw opacity=1 ][line width=0.75]    (10.93,-4.9) .. controls (6.95,-2.3) and (3.31,-0.67) .. (0,0) .. controls (3.31,0.67) and (6.95,2.3) .. (10.93,4.9)   ;
\draw  [color={rgb, 255:red, 208; green, 2; blue, 27 }  ,draw opacity=1 ][fill={rgb, 255:red, 208; green, 2; blue, 27 }  ,fill opacity=1 ] (95.76,448.01) .. controls (95.76,446.99) and (96.59,446.16) .. (97.61,446.16) .. controls (98.64,446.16) and (99.47,446.99) .. (99.47,448.01) .. controls (99.47,449.03) and (98.64,449.85) .. (97.61,449.85) .. controls (96.59,449.85) and (95.76,449.03) .. (95.76,448.01) -- cycle ;
\draw  [color={rgb, 255:red, 208; green, 2; blue, 27 }  ,draw opacity=1 ][fill={rgb, 255:red, 208; green, 2; blue, 27 }  ,fill opacity=1 ] (105.76,451.01) .. controls (105.76,449.99) and (106.59,449.16) .. (107.61,449.16) .. controls (108.64,449.16) and (109.47,449.99) .. (109.47,451.01) .. controls (109.47,452.03) and (108.64,452.85) .. (107.61,452.85) .. controls (106.59,452.85) and (105.76,452.03) .. (105.76,451.01) -- cycle ;
\draw [color={rgb, 255:red, 67; green, 68; blue, 242 }  ,draw opacity=1 ][line width=1.5]    (95,369) -- (128.5,369) ;
\draw [color={rgb, 255:red, 67; green, 68; blue, 242 }  ,draw opacity=1 ][line width=1.5]    (95,369) -- (94.5,384.5) ;
\draw [color={rgb, 255:red, 67; green, 68; blue, 242 }  ,draw opacity=1 ][line width=1.5]    (128.5,369) -- (129.5,385.5) ;
\draw [color={rgb, 255:red, 0; green, 0; blue, 0 }  ,draw opacity=1 ][line width=1.5]    (140,368) -- (173.5,368) ;
\draw [color={rgb, 255:red, 67; green, 68; blue, 242 }  ,draw opacity=1 ][line width=1.5]    (140,368) -- (139.5,383.5) ;
\draw [color={rgb, 255:red, 0; green, 0; blue, 0 }  ,draw opacity=1 ][line width=1.5]    (173.5,368) -- (174.5,384.5) ;

\draw (136,532.4) node [anchor=north west][inner sep=0.75pt]    {$1$};
\draw (224,532.4) node [anchor=north west][inner sep=0.75pt]    {$2$};
\draw (462,529.4) node [anchor=north west][inner sep=0.75pt]    {$k-1$};
\draw (306,531.4) node [anchor=north west][inner sep=0.75pt]    {$\cdots $};
\draw (101,345.4) node [anchor=north west][inner sep=0.75pt]    {$\textcolor[rgb]{0.26,0.27,0.95}{A_{1}}$};
\draw (148,345.4) node [anchor=north west][inner sep=0.75pt]    {$B_{1}$};

\end{tikzpicture}

%% file: Visuals/subsequence-worst-case-lb-figure.tex
\tikzset{every picture/.style={line width=0.75pt}} 

\begin{tikzpicture}[x=0.75pt,y=0.75pt,yscale=-1,xscale=1]

\draw  [draw opacity=0][fill={rgb, 255:red, 248; green, 231; blue, 28 }  ,fill opacity=0.7 ] (560.2,76) -- (598.21,76) -- (598.21,118) -- (560.2,118) -- cycle ;
\draw  [line width=0.75]  (59.47,76) -- (179.41,76) -- (179.41,118) -- (59.47,118) -- cycle ;
\draw  [line width=0.75]  (179.41,76) -- (299.35,76) -- (299.35,118) -- (179.41,118) -- cycle ;
\draw  [line width=0.75]  (299.35,76) -- (419.29,76) -- (419.29,118) -- (299.35,118) -- cycle ;
\draw  [line width=0.75]  (419.29,76) -- (539.22,76) -- (539.22,118) -- (419.29,118) -- cycle ;
\draw  [line width=0.75]  (539.22,76) -- (638.52,76) -- (638.52,118) -- (539.22,118) -- cycle ;
\draw  [line width=0.75]  (80.45,76) -- (200.38,76) -- (200.38,118) -- (80.45,118) -- cycle ;
\draw  [line width=0.75]  (200.38,76) -- (320.32,76) -- (320.32,118) -- (200.38,118) -- cycle ;
\draw  [line width=0.75]  (320.32,76) -- (440.26,76) -- (440.26,118) -- (320.32,118) -- cycle ;
\draw  [line width=0.75]  (440.26,76) -- (560.2,76) -- (560.2,118) -- (440.26,118) -- cycle ;
\draw  [line width=0.75]  (560.2,76) -- (638.52,76) -- (638.52,118) -- (560.2,118) -- cycle ;
\draw  [line width=0.75]  (98.8,76) -- (218.73,76) -- (218.73,118) -- (98.8,118) -- cycle ;
\draw  [line width=0.75]  (218.73,76) -- (338.67,76) -- (338.67,118) -- (218.73,118) -- cycle ;
\draw  [line width=0.75]  (338.67,76) -- (458.61,76) -- (458.61,118) -- (338.67,118) -- cycle ;
\draw  [line width=0.75]  (458.61,76) -- (578.55,76) -- (578.55,118) -- (458.61,118) -- cycle ;
\draw  [line width=0.75]  (578.55,76) -- (638.5,76) -- (638.5,118) -- (578.55,118) -- cycle ;
\draw  [line width=0.75]  (118.46,76) -- (238.4,76) -- (238.4,118) -- (118.46,118) -- cycle ;
\draw  [line width=0.75]  (238.4,76) -- (358.33,76) -- (358.33,118) -- (238.4,118) -- cycle ;
\draw  [line width=0.75]  (358.33,76) -- (478.27,76) -- (478.27,118) -- (358.33,118) -- cycle ;
\draw  [line width=0.75]  (478.27,76) -- (598.21,76) -- (598.21,118) -- (478.27,118) -- cycle ;
\draw  [line width=0.75]  (598.21,76) -- (638.5,76) -- (638.5,118) -- (598.21,118) -- cycle ;
\draw  [line width=0.75]  (135.5,76) -- (255.44,76) -- (255.44,118) -- (135.5,118) -- cycle ;
\draw  [line width=0.75]  (255.44,76) -- (375.37,76) -- (375.37,118) -- (255.44,118) -- cycle ;
\draw  [line width=0.75]  (375.37,76) -- (495.31,76) -- (495.31,118) -- (375.37,118) -- cycle ;
\draw  [line width=0.75]  (495.31,76) -- (615.25,76) -- (615.25,118) -- (495.31,118) -- cycle ;
\draw  [line width=0.75]  (615.25,76) -- (638.5,76) -- (638.5,118) -- (615.25,118) -- cycle ;
\draw  [color={rgb, 255:red, 74; green, 144; blue, 226 }  ,draw opacity=1 ][line width=1.5]  (38.5,76) -- (158.44,76) -- (158.44,118) -- (38.5,118) -- cycle ;
\draw  [color={rgb, 255:red, 74; green, 144; blue, 226 }  ,draw opacity=1 ][line width=1.5]  (158.44,76) -- (278.38,76) -- (278.38,118) -- (158.44,118) -- cycle ;
\draw  [color={rgb, 255:red, 74; green, 144; blue, 226 }  ,draw opacity=1 ][line width=1.5]  (278.38,76) -- (398.31,76) -- (398.31,118) -- (278.38,118) -- cycle ;
\draw  [color={rgb, 255:red, 74; green, 144; blue, 226 }  ,draw opacity=1 ][line width=1.5]  (398.31,76) -- (518.25,76) -- (518.25,118) -- (398.31,118) -- cycle ;
\draw  [color={rgb, 255:red, 74; green, 144; blue, 226 }  ,draw opacity=1 ][line width=1.5]  (518.25,76) -- (638.19,76) -- (638.19,118) -- (518.25,118) -- cycle ;
\draw [line width=1.5]    (37,161) -- (157.5,161) ;
\draw [line width=1.5]    (37.25,150) -- (37.5,172) ;
\draw [line width=1.5]    (157.38,150) -- (157.63,172) ;
\draw [line width=1.5]    (518,158) -- (559.46,158) ;
\draw [line width=1.5]    (518.09,147) -- (518.17,169) ;
\draw [line width=1.5]    (559.41,147) -- (559.5,169) ;
\draw [line width=1.5]    (595.5,144) -- (580.19,120.51) ;
\draw [shift={(578.55,118)}, rotate = 56.9] [color={rgb, 255:red, 0; green, 0; blue, 0 }  ][line width=1.5]    (14.21,-4.28) .. controls (9.04,-1.82) and (4.3,-0.39) .. (0,0) .. controls (4.3,0.39) and (9.04,1.82) .. (14.21,4.28)   ;
\draw  [line width=0.75]  (60.47,13) -- (180.41,13) -- (180.41,55) -- (60.47,55) -- cycle ;
\draw  [line width=0.75]  (180.41,13) -- (300.35,13) -- (300.35,55) -- (180.41,55) -- cycle ;
\draw  [line width=0.75]  (300.35,13) -- (420.29,13) -- (420.29,55) -- (300.35,55) -- cycle ;
\draw  [line width=0.75]  (420.29,13) -- (540.22,13) -- (540.22,55) -- (420.29,55) -- cycle ;
\draw  [line width=0.75]  (540.22,13) -- (639.52,13) -- (639.52,55) -- (540.22,55) -- cycle ;
\draw  [line width=0.75]  (81.45,13) -- (201.38,13) -- (201.38,55) -- (81.45,55) -- cycle ;
\draw  [line width=0.75]  (201.38,13) -- (321.32,13) -- (321.32,55) -- (201.38,55) -- cycle ;
\draw  [line width=0.75]  (321.32,13) -- (441.26,13) -- (441.26,55) -- (321.32,55) -- cycle ;
\draw  [line width=0.75]  (441.26,13) -- (561.2,13) -- (561.2,55) -- (441.26,55) -- cycle ;
\draw  [line width=0.75]  (561.2,13) -- (639.52,13) -- (639.52,55) -- (561.2,55) -- cycle ;
\draw  [line width=0.75]  (99.8,13) -- (219.73,13) -- (219.73,55) -- (99.8,55) -- cycle ;
\draw  [line width=0.75]  (219.73,13) -- (339.67,13) -- (339.67,55) -- (219.73,55) -- cycle ;
\draw  [line width=0.75]  (339.67,13) -- (459.61,13) -- (459.61,55) -- (339.67,55) -- cycle ;
\draw  [line width=0.75]  (459.61,13) -- (579.55,13) -- (579.55,55) -- (459.61,55) -- cycle ;
\draw  [line width=0.75]  (579.55,13) -- (639.19,13) -- (639.19,55) -- (579.55,55) -- cycle ;
\draw  [line width=0.75]  (119.46,13) -- (239.4,13) -- (239.4,55) -- (119.46,55) -- cycle ;
\draw  [line width=0.75]  (239.4,13) -- (359.33,13) -- (359.33,55) -- (239.4,55) -- cycle ;
\draw  [line width=0.75]  (359.33,13) -- (479.27,13) -- (479.27,55) -- (359.33,55) -- cycle ;
\draw  [line width=0.75]  (479.27,13) -- (599.21,13) -- (599.21,55) -- (479.27,55) -- cycle ;
\draw  [line width=0.75]  (599.21,13) -- (639.5,13) -- (639.5,55) -- (599.21,55) -- cycle ;
\draw  [line width=0.75]  (136.5,13) -- (256.44,13) -- (256.44,55) -- (136.5,55) -- cycle ;
\draw  [line width=0.75]  (256.44,13) -- (376.37,13) -- (376.37,55) -- (256.44,55) -- cycle ;
\draw  [line width=0.75]  (376.37,13) -- (496.31,13) -- (496.31,55) -- (376.37,55) -- cycle ;
\draw  [line width=0.75]  (496.31,13) -- (616.25,13) -- (616.25,55) -- (496.31,55) -- cycle ;
\draw  [line width=0.75]  (616.25,13) -- (639.52,13) -- (639.52,55) -- (616.25,55) -- cycle ;
\draw  [color={rgb, 255:red, 74; green, 144; blue, 226 }  ,draw opacity=1 ][line width=1.5]  (39.5,13) -- (159.44,13) -- (159.44,55) -- (39.5,55) -- cycle ;
\draw  [color={rgb, 255:red, 74; green, 144; blue, 226 }  ,draw opacity=1 ][line width=1.5]  (159.44,13) -- (279.38,13) -- (279.38,55) -- (159.44,55) -- cycle ;
\draw  [color={rgb, 255:red, 74; green, 144; blue, 226 }  ,draw opacity=1 ][line width=1.5]  (279.38,13) -- (399.31,13) -- (399.31,55) -- (279.38,55) -- cycle ;
\draw  [color={rgb, 255:red, 74; green, 144; blue, 226 }  ,draw opacity=1 ][line width=1.5]  (399.31,13) -- (519.25,13) -- (519.25,55) -- (399.31,55) -- cycle ;
\draw  [color={rgb, 255:red, 74; green, 144; blue, 226 }  ,draw opacity=1 ][line width=1.5]  (519.25,13) -- (639.19,13) -- (639.19,55) -- (519.25,55) -- cycle ;

\draw (40,87.4) node [anchor=north west][inner sep=0.75pt]  [color={rgb, 255:red, 144; green, 19; blue, 254 }  ,opacity=1 ]  {$\tau _{1}$};
\draw (62,87.4) node [anchor=north west][inner sep=0.75pt]  [color={rgb, 255:red, 144; green, 19; blue, 254 }  ,opacity=1 ]  {$\tau _{1}$};
\draw (81,87.4) node [anchor=north west][inner sep=0.75pt]  [color={rgb, 255:red, 144; green, 19; blue, 254 }  ,opacity=1 ]  {$\tau _{1}$};
\draw (100,87.4) node [anchor=north west][inner sep=0.75pt]  [color={rgb, 255:red, 144; green, 19; blue, 254 }  ,opacity=1 ]  {$\tau _{1}$};
\draw (119,87.4) node [anchor=north west][inner sep=0.75pt]  [color={rgb, 255:red, 144; green, 19; blue, 254 }  ,opacity=1 ]  {$\tau _{1}$};
\draw (138,86.4) node [anchor=north west][inner sep=0.75pt]  [color={rgb, 255:red, 144; green, 19; blue, 254 }  ,opacity=1 ]  {$\tau _{1}$};
\draw (159,87.4) node [anchor=north west][inner sep=0.75pt]  [color={rgb, 255:red, 208; green, 2; blue, 27 }  ,opacity=1 ]  {$\tau _{2}$};
\draw (181,86.4) node [anchor=north west][inner sep=0.75pt]  [color={rgb, 255:red, 208; green, 2; blue, 27 }  ,opacity=1 ]  {$\tau _{2}$};
\draw (200,87.4) node [anchor=north west][inner sep=0.75pt]  [color={rgb, 255:red, 208; green, 2; blue, 27 }  ,opacity=1 ]  {$\tau _{2}$};
\draw (219,87.4) node [anchor=north west][inner sep=0.75pt]  [color={rgb, 255:red, 208; green, 2; blue, 27 }  ,opacity=1 ]  {$\tau _{2}$};
\draw (238,87.4) node [anchor=north west][inner sep=0.75pt]  [color={rgb, 255:red, 208; green, 2; blue, 27 }  ,opacity=1 ]  {$\tau _{2}$};
\draw (257,86.4) node [anchor=north west][inner sep=0.75pt]  [color={rgb, 255:red, 208; green, 2; blue, 27 }  ,opacity=1 ]  {$\tau _{2}$};
\draw (281,87.4) node [anchor=north west][inner sep=0.75pt]  [color={rgb, 255:red, 65; green, 117; blue, 5 }  ,opacity=1 ]  {$\tau _{3}$};
\draw (303,86.4) node [anchor=north west][inner sep=0.75pt]  [color={rgb, 255:red, 65; green, 117; blue, 5 }  ,opacity=1 ]  {$\tau _{3}$};
\draw (322,87.4) node [anchor=north west][inner sep=0.75pt]  [color={rgb, 255:red, 65; green, 117; blue, 5 }  ,opacity=1 ]  {$\tau _{3}$};
\draw (341,87.4) node [anchor=north west][inner sep=0.75pt]  [color={rgb, 255:red, 65; green, 117; blue, 5 }  ,opacity=1 ]  {$\tau _{3}$};
\draw (360,87.4) node [anchor=north west][inner sep=0.75pt]  [color={rgb, 255:red, 65; green, 117; blue, 5 }  ,opacity=1 ]  {$\tau _{3}$};
\draw (379,86.4) node [anchor=north west][inner sep=0.75pt]  [color={rgb, 255:red, 65; green, 117; blue, 5 }  ,opacity=1 ]  {$\tau _{3}$};
\draw (399,87.4) node [anchor=north west][inner sep=0.75pt]  [color={rgb, 255:red, 139; green, 87; blue, 42 }  ,opacity=1 ]  {$\tau _{4}$};
\draw (421,86.4) node [anchor=north west][inner sep=0.75pt]  [color={rgb, 255:red, 139; green, 87; blue, 42 }  ,opacity=1 ]  {$\tau _{4}$};
\draw (440,87.4) node [anchor=north west][inner sep=0.75pt]  [color={rgb, 255:red, 139; green, 87; blue, 42 }  ,opacity=1 ]  {$\tau _{4}$};
\draw (459,87.4) node [anchor=north west][inner sep=0.75pt]  [color={rgb, 255:red, 139; green, 87; blue, 42 }  ,opacity=1 ]  {$\tau _{4}$};
\draw (478,87.4) node [anchor=north west][inner sep=0.75pt]  [color={rgb, 255:red, 139; green, 87; blue, 42 }  ,opacity=1 ]  {$\tau _{4}$};
\draw (497,86.4) node [anchor=north west][inner sep=0.75pt]  [color={rgb, 255:red, 139; green, 87; blue, 42 }  ,opacity=1 ]  {$\tau _{4}$};
\draw (561,87.4) node [anchor=north west][inner sep=0.75pt]  [color={rgb, 255:red, 189; green, 16; blue, 224 }  ,opacity=1 ]  {$\tau _{6}$};
\draw (580,87.4) node [anchor=north west][inner sep=0.75pt]  [color={rgb, 255:red, 189; green, 16; blue, 224 }  ,opacity=1 ]  {$\tau _{6}$};
\draw (89,168.4) node [anchor=north west][inner sep=0.75pt]    {$\frac{n}{5}$};
\draw (520,88.4) node [anchor=north west][inner sep=0.75pt]  [color={rgb, 255:red, 16; green, 52; blue, 224 }  ,opacity=1 ]  {$\tau _{5}$};
\draw (541,89.4) node [anchor=north west][inner sep=0.75pt]  [color={rgb, 255:red, 16; green, 52; blue, 224 }  ,opacity=1 ]  {$\tau _{5}$};
\draw (599,87.4) node [anchor=north west][inner sep=0.75pt]  [color={rgb, 255:red, 16; green, 52; blue, 224 }  ,opacity=1 ]  {$\tau _{5}$};
\draw (618,88.4) node [anchor=north west][inner sep=0.75pt]  [color={rgb, 255:red, 16; green, 52; blue, 224 }  ,opacity=1 ]  {$\tau _{5}$};
\draw (529,164.4) node [anchor=north west][inner sep=0.75pt]    {$\frac{n}{5\ell }$};
\draw (582,146) node [anchor=north west][inner sep=0.75pt]   [align=left] {Random\\placement };
\draw (41,24.4) node [anchor=north west][inner sep=0.75pt]  [color={rgb, 255:red, 144; green, 19; blue, 254 }  ,opacity=1 ]  {$\tau _{1}$};
\draw (63,24.4) node [anchor=north west][inner sep=0.75pt]  [color={rgb, 255:red, 144; green, 19; blue, 254 }  ,opacity=1 ]  {$\tau _{1}$};
\draw (82,24.4) node [anchor=north west][inner sep=0.75pt]  [color={rgb, 255:red, 144; green, 19; blue, 254 }  ,opacity=1 ]  {$\tau _{1}$};
\draw (101,24.4) node [anchor=north west][inner sep=0.75pt]  [color={rgb, 255:red, 144; green, 19; blue, 254 }  ,opacity=1 ]  {$\tau _{1}$};
\draw (120,24.4) node [anchor=north west][inner sep=0.75pt]  [color={rgb, 255:red, 144; green, 19; blue, 254 }  ,opacity=1 ]  {$\tau _{1}$};
\draw (139,23.4) node [anchor=north west][inner sep=0.75pt]  [color={rgb, 255:red, 144; green, 19; blue, 254 }  ,opacity=1 ]  {$\tau _{1}$};
\draw (160,24.4) node [anchor=north west][inner sep=0.75pt]  [color={rgb, 255:red, 208; green, 2; blue, 27 }  ,opacity=1 ]  {$\tau _{2}$};
\draw (182,23.4) node [anchor=north west][inner sep=0.75pt]  [color={rgb, 255:red, 208; green, 2; blue, 27 }  ,opacity=1 ]  {$\tau _{2}$};
\draw (201,24.4) node [anchor=north west][inner sep=0.75pt]  [color={rgb, 255:red, 208; green, 2; blue, 27 }  ,opacity=1 ]  {$\tau _{2}$};
\draw (220,24.4) node [anchor=north west][inner sep=0.75pt]  [color={rgb, 255:red, 208; green, 2; blue, 27 }  ,opacity=1 ]  {$\tau _{2}$};
\draw (239,24.4) node [anchor=north west][inner sep=0.75pt]  [color={rgb, 255:red, 208; green, 2; blue, 27 }  ,opacity=1 ]  {$\tau _{2}$};
\draw (258,23.4) node [anchor=north west][inner sep=0.75pt]  [color={rgb, 255:red, 208; green, 2; blue, 27 }  ,opacity=1 ]  {$\tau _{2}$};
\draw (282,24.4) node [anchor=north west][inner sep=0.75pt]  [color={rgb, 255:red, 65; green, 117; blue, 5 }  ,opacity=1 ]  {$\tau _{3}$};
\draw (304,23.4) node [anchor=north west][inner sep=0.75pt]  [color={rgb, 255:red, 65; green, 117; blue, 5 }  ,opacity=1 ]  {$\tau _{3}$};
\draw (323,24.4) node [anchor=north west][inner sep=0.75pt]  [color={rgb, 255:red, 65; green, 117; blue, 5 }  ,opacity=1 ]  {$\tau _{3}$};
\draw (342,24.4) node [anchor=north west][inner sep=0.75pt]  [color={rgb, 255:red, 65; green, 117; blue, 5 }  ,opacity=1 ]  {$\tau _{3}$};
\draw (361,24.4) node [anchor=north west][inner sep=0.75pt]  [color={rgb, 255:red, 65; green, 117; blue, 5 }  ,opacity=1 ]  {$\tau _{3}$};
\draw (380,23.4) node [anchor=north west][inner sep=0.75pt]  [color={rgb, 255:red, 65; green, 117; blue, 5 }  ,opacity=1 ]  {$\tau _{3}$};
\draw (400,24.4) node [anchor=north west][inner sep=0.75pt]  [color={rgb, 255:red, 139; green, 87; blue, 42 }  ,opacity=1 ]  {$\tau _{4}$};
\draw (422,23.4) node [anchor=north west][inner sep=0.75pt]  [color={rgb, 255:red, 139; green, 87; blue, 42 }  ,opacity=1 ]  {$\tau _{4}$};
\draw (441,24.4) node [anchor=north west][inner sep=0.75pt]  [color={rgb, 255:red, 139; green, 87; blue, 42 }  ,opacity=1 ]  {$\tau _{4}$};
\draw (460,24.4) node [anchor=north west][inner sep=0.75pt]  [color={rgb, 255:red, 139; green, 87; blue, 42 }  ,opacity=1 ]  {$\tau _{4}$};
\draw (479,24.4) node [anchor=north west][inner sep=0.75pt]  [color={rgb, 255:red, 139; green, 87; blue, 42 }  ,opacity=1 ]  {$\tau _{4}$};
\draw (498,23.4) node [anchor=north west][inner sep=0.75pt]  [color={rgb, 255:red, 139; green, 87; blue, 42 }  ,opacity=1 ]  {$\tau _{4}$};
\draw (521,25.4) node [anchor=north west][inner sep=0.75pt]  [color={rgb, 255:red, 16; green, 52; blue, 224 }  ,opacity=1 ]  {$\tau _{5}$};
\draw (542,26.4) node [anchor=north west][inner sep=0.75pt]  [color={rgb, 255:red, 16; green, 52; blue, 224 }  ,opacity=1 ]  {$\tau _{5}$};
\draw (600,24.4) node [anchor=north west][inner sep=0.75pt]  [color={rgb, 255:red, 16; green, 52; blue, 224 }  ,opacity=1 ]  {$\tau _{5}$};
\draw (621,25.4) node [anchor=north west][inner sep=0.75pt]  [color={rgb, 255:red, 16; green, 52; blue, 224 }  ,opacity=1 ]  {$\tau _{5}$};
\draw (563,26.4) node [anchor=north west][inner sep=0.75pt]  [color={rgb, 255:red, 16; green, 52; blue, 224 }  ,opacity=1 ]  {$\tau _{5}$};
\draw (582,26.4) node [anchor=north west][inner sep=0.75pt]  [color={rgb, 255:red, 16; green, 52; blue, 224 }  ,opacity=1 ]  {$\tau _{5}$};
\draw (8,24.4) node [anchor=north west][inner sep=0.75pt]    {$S:$};
\draw (6,86.4) node [anchor=north west][inner sep=0.75pt]    {$T:$};

\end{tikzpicture}